\documentclass[runningheads]{llncs}
\usepackage{geometry}
\usepackage{tikz}
\usetikzlibrary{positioning,arrows.meta,fit,decorations.pathreplacing}
\usetikzlibrary{decorations.shapes}
\usepackage{tkz-graph}
\usepackage{amssymb}
\usepackage{amsfonts}
\usepackage[utf8]{inputenc}
\usepackage[T1]{fontenc}
\usepackage{lmodern}
\usepackage[activate=true,final,
tracking=true,
kerning=true,
spacing=true,
factor=1100,
stretch=10,
shrink=10]{microtype}
\usepackage{bbding}
\usepackage{pifont}
\usepackage{tabularx}
\usepackage{amsmath}
\usepackage{array}
\usepackage{bm}
\usepackage{spverbatim}
\usepackage{breakcites}
\usepackage{enumitem}
\setlist[itemize,enumerate]{leftmargin=*}
\usepackage{mathrsfs}
\usepackage{hyperref}
\hypersetup{
	colorlinks,
	citecolor=black,
	filecolor=black,
	linkcolor=black,
	urlcolor=black
}
\usepackage{footnotehyper} 
\makesavenoteenv{algorithm}
\usepackage{multicol,xparse}
\usepackage{changepage}
\usepackage{mdframed} 
\usepackage{currfile}
\usepackage{booktabs}
\usepackage{mathtools}
\usepackage[font=small]{caption}
\DeclareCaptionFormat{alg-style}{#1#2#3} 
\usepackage[ruled]{algorithm2e}

\SetAlCapFnt{\small\sffamily} 
\SetAlCapNameFnt{\itshape}
\usepackage{etoolbox}

\makeatletter
\let\llncs@orig@addcontentsline\addcontentsline
\renewcommand{\addcontentsline}[3]{%
	\ifstrequal{#1}{toc}{%
		\ifstrequal{#2}{title}{}{%
			\ifstrequal{#2}{author}{}{%
				\llncs@orig@addcontentsline{#1}{#2}{#3}%
			}%
		}%
	}{%
		\llncs@orig@addcontentsline{#1}{#2}{#3}%
	}%
}
\makeatother

\usepackage{tikz}
\usetikzlibrary{positioning, fit, backgrounds, arrows.meta, shadows, shapes.geometric, calc, shadows.blur}
\DeclareUnicodeCharacter{202F}{\,}
\usetikzlibrary{shapes.geometric} 
\renewcommand{\qed}{\hfill$\blacksquare$}
\usepackage[sanserif,basic]{complexity}
\newtheorem{observation}{Observation}
\newtheorem{assumption}{Assumption}

\newcommand{\PubOpen}{\mathsf{PubOpen}}
\newcommand{\GgROCRP}{\mathcal{G}_{\mathrm{gRO\text{-}CRP}}}
\newcommand{\ellregone}{\ell_{\mathsf{reg},1}} 
\newcommand{\ellregtwo}{\ell_{\mathsf{reg},2}} 
\newcommand{\SealToPeer}{\mathsf{SealToPeer}}
\newcommand{\OpenFromPeer}{\mathsf{OpenFromPeer}}

\newcommand{\ok}{\mathsf{ok}}
\newcommand{\dom}{\mathsf{dom}}
\newcommand{\AccSDKG}{\mathsf{Acc}_{\mathsf{SDKG}}}
\newcommand{\TSDKG}{\mathcal T}
\newcommand{\Enc}{\mathsf{Enc}}
\newcommand{\USVrcpt}{\mathsf{USV.rcpt}}
\newcommand{\Dec}{\mathsf{Dec}}
\newcommand{\Bad}{\mathsf{Bad}}
\newcommand{\Tag}{\mathsf{tag}}
\newcommand{\DLEQ}{\mathsf{DLEQ}}
\newcommand{\DL}{\mathsf{DL}}
\newcommand{\sk}{\mathsf{sk}}
\newcommand{\aff}{\mathsf{aff}}
\newcommand{\FDKGstarNXK}{\mathcal{F}^{\star,\mathsf{NXK}}_{\mathsf{DKG}}}
\newcommand{\WDKG}{\mathcal{W}_{\mathsf{DKG}}}
\newcommand{\Ext}{\mathsf{Ext}}
\newcommand{\negl}{\mathrm{negl}}

\newcommand{\isnew}{\mathsf{new}}
\newcommand{\FKeyBox}{\mathcal{F}_{\mathsf{KeyBox}}}
\newcommand{\Fpub}{\mathcal{F}_{\mathsf{pub}}}
\newcommand{\FSDKG}{\mathcal{F}_{\mathsf{SDKG}}}
\newcommand{\Fusv}{\mathcal{F}_{\mathsf{USV}}}
\newcommand{\Fchan}{\mathcal{F}_{\mathsf{channel}}}
\newcommand{\Use}{\mathsf{Use}}
\newcommand{\Cert}{\mathsf{Cert}}
\newcommand{\seed}{\mathsf{seed}}
\newcommand{\Adv}{\mathsf{Adv}}
\newcommand{\gROCRP}{gRO-CRP}
\newcommand{\PubMap}{\mathsf{PubMap}}  
\newcommand{\GetPub}{\mathsf{GetPub}}  
\newcommand{\Pub}{\PubMap}             
\newcommand{\pk}{\mathsf{pk}}
\newcommand{\Derive}{\mathsf{Derive}}
\newcommand{\G}{\mathcal{G}}
\newcommand{\Z}{\mathcal{Z}}
\newcommand{\Zp}{\mathbb{Z}_p}
\newcommand{\F}{\mathcal{F}}
\newcommand{\Zps}{\mathbb{Z}_p^*}

\newcommand{\h}{\mathcal{H}}
\newcommand{\s}{\mathsf{Sim}}
\newcommand{\pp}{\mathsf{pp}}
\newcommand{\cA}{\mathcal{A}}
\newcommand{\cR}{\mathcal{R}}
\newcommand{\cp}{\mathcal{P}}
\newcommand{\cV}{\mathcal{V}}
\newcommand{\cZ}{\mathcal{Z}}
\newcommand{\DeriveK}{\mathsf{Derive}_{K_{1,3}}}
\newcommand{\KBhdl}[1]{\langle \texttt{hdl}, #1\rangle}

\newcommand{\status}{\mathsf{status}}
\newcommand{\Commit}{\mathsf{Commit}}
\newcommand{\Load}{\mathsf{Load}}
\newcommand{\receipt}{\mathsf{receipt}}
\newcommand{\Verify}{\mathsf{Verify}}
\newcommand{\Open}{\mathsf{Open}}
\newcommand{\abort}{\mathsf{abort}}
\newcommand{\Exec}{\mathsf{Exec}}
\newcommand{\Ideal}{\mathsf{Ideal}}
\newcommand{\sid}{\mathsf{sid}}
\newcommand{\Simcert}{\mathsf{Sim}_{\text{cert}}}
\newcommand{\Vcert}{\mathcal{V}_{\text{cert}}}
\newcommand{\cid}{\mathsf{cid}}

\newcommand{\PRF}{\mathsf{PRF}}
\newcommand{\figrefp}[1]{\hyperref[#1]{Fig.~\ref*{#1} (p.~\pageref*{#1})}}
\newcommand{\tabrefp}[1]{\hyperref[#1]{Table~\ref*{#1} (p.~\pageref*{#1})}}
\usepackage{cleveref}
\crefname{tcb@cnt@readerbox}{readerbox}{readerboxes}

\newcommand{\Query}{\mathsf{Query}}

\newcommand{\SimProgramRO}{\mathsf{SimProgram}}

\newcommand{\CtxTEE}{\mathsf{Ctx}_{\mathrm{np}}} 
\newcommand{\CtxUC}{\mathsf{Ctx}_{\mathrm{p}}}   
\newcommand{\FKeyBoxOf}[1]{\FKeyBox^{(#1)}}

\newcommand{\Psig}{\mathcal{P}_{\mathrm{\Sigma}}}
\newcommand{\Vsig}{\mathcal{V}_{\mathrm{\Sigma}}}

\newenvironment{proofsketch}{%
	\par\noindent\textit{Proof Sketch. }\ignorespaces
}{%
	\endproof%
}
\newcommand{\leftarrowdollar}{%
	\mathrel{%
		\leftarrow%
		\mathrlap{\mkern-7mu\vcenter{\hbox{\scalebox{0.7}{\$}}}}%
}%
}
\newcommand{\WKB}{\mathcal{W}_{\mathsf{KB}}}
\newcommand{\KBcmd}{\mathsf{KBcmd}}
\newcommand{\KBret}{\mathsf{KBret}}

\newcommand{\muFS}{\mu_{\mathsf{FS}}}

\newcommand{\SlotSpace}{\{0,1\}^{\ast}}              
\newcommand{\KBslot}[2]{\langle #1,#2\rangle}        
\newcommand{\KBsid}[1]{\KBslot{\sid}{\texttt{#1}}}   

\newcommand{\CtxSDKG}{\mathsf{SDKG.s32}}
\newcommand{\PartySet}{\mathbb{P}} 
\usepackage[most]{tcolorbox}
\tcbuselibrary{breakable}
\newcommand{\ADSDKGreg}[4]{\langle \texttt{SDKG.reg},#1,#2,#3,#4\rangle}

\newcounter{readerboxcount}[section]
\renewcommand{\thereaderboxcount}{\thesection.\arabic{readerboxcount}}

\newtcolorbox{readerbox}[2][]{%
	enhanced,
	breakable,
	colback=gray!3,
	colframe=black,
	boxrule=0.6pt,
	arc=1.5mm,
	left=1.5mm,
	right=1.5mm,
	top=1.2mm,
	bottom=1.2mm,
	fonttitle=\normalsize,
	code={\refstepcounter{readerboxcount}},
	title=\textbf{Reader Note \thereaderboxcount: #2},
	label={#1}
}
\newtcolorbox{protocolbox}[1]{%
	enhanced,
	breakable,
	colback=gray!2,
	colframe=gray!55,
	boxrule=0.6pt,
	arc=1.5mm,
	left=1.5mm,
	right=1.5mm,
	top=1.2mm,
	bottom=1.2mm,
	title=\textbf{#1},
	fonttitle=\normalsize,
}
\newcommand{\DLstmt}[3]{\langle #1,#2,#3\rangle} 

\newcommand{\LblAffOneY}{\texttt{SDKG.aff1.Y}}
\newcommand{\LblAffOneD}{\texttt{SDKG.aff1.D}}
\newcommand{\LblAffTwoY}{\texttt{SDKG.aff2.Y}}
\newcommand{\LblAffTwoD}{\texttt{SDKG.aff2.D}}

\makeatletter

\newcommand{\listofresults}{%
	\section*{List of Results}%
	\addcontentsline{toc}{section}{List of Results}%
	\@starttoc{lor}%
}

\newcommand*\l@result{\@dottedtocline{1}{1.5em}{9em}}

\newcommand{\lor@addentry}[3]{
	\addcontentsline{lor}{result}{%
		\protect\numberline{#1~#2}%
		\ifstrempty{#3}{}{(#3)}%
	}%
}

\newcommand{\lor@maybeadd}[3]{%
	\ifstrequal{#1}{Theorem}{\lor@addentry{#1}{#2}{#3}}{%
	}
		\ifstrequal{#1}{Lemma}{\lor@addentry{#1}{#2}{#3}}{%
		}
			\ifstrequal{#1}{Proposition}{\lor@addentry{#1}{#2}{#3}}{%
			}
				\ifstrequal{#1}{Corollary}{\lor@addentry{#1}{#2}{#3}}{%
				}
					\ifstrequal{#1}{Observation}{\lor@addentry{#1}{#2}{#3}}{}%
}

\let\lor@orig@begintheorem\@begintheorem
\renewcommand{\@begintheorem}[2]{%
	\lor@maybeadd{#1}{#2}{}%
	\lor@orig@begintheorem{#1}{#2}%
}

\let\lor@orig@opargbegintheorem\@opargbegintheorem
\renewcommand{\@opargbegintheorem}[3]{%
	\lor@maybeadd{#1}{#2}{#3}%
	\lor@orig@opargbegintheorem{#1}{#2}{#3}%
}

\makeatother
\makeatletter

\newcommand{\lor@maybeaddsp}[2]{
	\ifstrequal{#1}{theorem}{\lor@addentry{Theorem}{\csname the#1\endcsname}{#2}}{%
		\ifstrequal{#1}{lemma}{\lor@addentry{Lemma}{\csname the#1\endcsname}{#2}}{%
			\ifstrequal{#1}{proposition}{\lor@addentry{Proposition}{\csname the#1\endcsname}{#2}}{%
				\ifstrequal{#1}{corollary}{\lor@addentry{Corollary}{\csname the#1\endcsname}{#2}}{%
					\ifstrequal{#1}{observation}{\lor@addentry{Observation}{\csname the#1\endcsname}{#2}}{}}}}}%
}

\pretocmd{\@spxthm}{\lor@maybeaddsp{#1}{}}{}{}

\pretocmd{\@spythm}{\lor@maybeaddsp{#1}{#5}}{}{}

\makeatother

\makeatletter
\AtBeginDocument{%
	\let\RefOrig\ref
	
	\DeclareRobustCommand{\ref}{\@ifstar{\RefOrig*}{\ref@withpage}}
	\newcommand{\ref@withpage}[1]{%
		\hyperref[#1]{\RefOrig*{#1} (p.~\pageref*{#1})}%
	}
	
	\renewcommand{\eqref}[1]{\hyperref[#1]{\tagform@{\RefOrig*{#1}}}}

}
\makeatother
\usepackage{scrextend}

\author{Vipin Singh Sehrawat \\ \small \texttt{\{vipin.sehrawat.cs@gmail.com\}}}
\authorrunning{Vipin Singh Sehrawat}
\institute{Circle Internet\thanks{The views expressed in this paper are solely those of the author and do not necessarily reflect
		those of Circle Internet or any other affiliated organizations.}}
\title{UC-Secure Star DKG for Non-Exportable Key Shares \\with VSS-Free Enforcement}
\titlerunning{UC-Secure Star DKG for Non-Exportable Key Shares with VSS-Free Enforcement}

\begin{document}
	\maketitle
\begin{abstract}
	Distributed Key Generation (DKG) lets parties derive a common public key while keeping the signing key secret-shared. In UC-secure DKG, the transcript must enforce (i) secrecy against unauthorized corruptions and (ii) uniqueness and affine consistency of the induced sharing. Classically, these obligations are satisfied by a Verifiable-Sharing Enforcement (VSE) layer---realized via Verifiable Secret Sharing (VSS) and/or commitment-and-proof mechanisms---that distributes and later manipulates shares. This paper addresses the Non-eXportable Key (NXK) setting enforced by hardware-backed key-isolation modules (e.g., secure enclaves/TEEs configured as restricted keystores, or HSM-like APIs), formalized via an ideal KeyBox (keystore) functionality $\FKeyBox$ that keeps shares non-exportable, including caller-invertible affine images, and permits only attested KeyBox-to-KeyBox sealing. In this setting, confidentiality can be delegated to the NXK/KeyBox boundary; the remaining challenge in realizing VSE layer, without VSS-style mechanisms (opening/complaints/resharing), is to enforce transcript-defined affine consistency without exporting, opening, or resharing the secret shares. Assuming a key-opaque, state-continuous NXK/KeyBox boundary, classical rewinding/forking-lemma extraction arguments are inapplicable. Hence, straight-line extraction is required. 
	
We present a Universally Composable (UC) DKG design for NXK by combining (i) NXK/KeyBox confidentiality; (ii) Unique Structure Verification (USV), which is a publicly verifiable certificate mechanism intended for tightly coupled NXK deployments where the certified scalar is non-exportable and never leaves the KeyBox, yet the corresponding public group element is deterministically derivable from the transcript. We combine these with (iii) UC-extractable non-interactive zero-knowledge arguments of knowledge via the Fischlin transform in our gRO-CRP-hybrid model, where gRO-CRP denotes a global Random Oracle with Context-Restricted Programmability, to enforce the affine constraints normally certified by VSS-style machinery. Using these tools, we construct a UC-secure Star DKG (SDKG) scheme that is tailored to multi-device wallets with a designated service that must co-sign but can never sign alone. SDKG realizes a $1{+}1$-out-of-$n$ star access structure wherein the center (mandatory) and any leaf of a star graph form a minimal authorized subset. SDKG implements a two-leaf star over roles (primary vs.\ recovery) and supports role-based registration. In the $\FKeyBox$-hybrid and \gROCRP\ models, assuming authenticated confidential channels that leak only message lengths, under DL and DDH hardness assumptions, PRF security of the deterministic nonce-derivation function, and a seed integrity invariant (part of the state-continuity engineering budget), with adaptive corruptions and secure erasures, SDKG UC-realizes a transcript-driven refinement of the standard UC-DKG functionality. Over a prime-order group of size $p$, SDKG incurs $\widetilde{O}(n\log p)$ communication overhead and $\widetilde{O}(n\log^{2.585}p)$ bit-operation cost, while registering a recovery device incurs $\widetilde{O}(\log p)$ communication and $\widetilde{O}(\log^{2.585}p)$ bit-operation costs, respectively. For a 128-bit instantiation with fixed Fischlin parameters, the base transcript (for 1+1-out-of-3 SDKG) is $\approx 11$--$13$~KiB.
\end{abstract}
\keywords{DKG \and UC security \and Non-exportable keys \and TEE \and HSM \and UC-NIZK-AoK \and Verifiable-sharing enforcement \and Star access structure \and MPC wallets}
\tableofcontents
\newpage
\listofresults
\newpage

\section{Introduction}\label{sec:intro}
The growing adoption of cryptocurrencies and decentralized applications has motivated the need for secure and versatile key management solutions. Multiparty Computation (MPC) \cite{Yao[86],GoldMi[87]} wallets, implementing Distributed Key Generation (DKG) \cite{Ped[91],Genn[07]} and threshold signing \cite{Des[87],Des[89]}, have become prevalent because they provide robust security without the need for a trusted party. In practice, some users may prefer managing multiple authorized devices, $\{P_i\}_{i = 2}^n$, in collaboration with a service provider, $P_1$, forming a star topology with $P_1$ as the center and $\{P_i\}_{i = 2}^n$ as the leaves. Furthermore, some deployments require a designated service that must always be involved in signing, yet can never sign alone. Typical examples include: regulated custodians that must co-sign for compliance and auditing; ``recovery-with-friction'' consumer wallets, where a risk engine enforces spending limits and/or anomaly detection; and enterprise wallets, where all transactions must flow through a corporate signing service. In such settings, a uniform threshold structure is ill-suited because it either authorizes subsets $\subseteq \{P_i\}_{i = 2}^n$, excluding $P_1$, or forces $P_1$ to hold a threshold number of shares, effectively authorizing a singleton access structure. This work targets the star access structure in which the minimal authorized sets are $\{P_1,P_i\}$ for $i\ge 2$. Formally, the family of minimal authorized subsets $(\mathrm{\Gamma_0})$ and the corresponding access structure $(\mathrm{\Gamma})$ are defined as: $$\mathrm{\Gamma_0} \coloneqq \{\{P_1,P_i\}\}_{i=2}^n; \quad \mathrm{\Gamma} \coloneqq \{S \subseteq \{P_i\}_{i \in [n]}: \exists i \ge 2, \{P_1,P_i\} \subseteq S \}.$$ This mandatory-center co-signing pattern is reminiscent of mediated / server-supported signature systems \cite{BonehDTW01,BonehDT04,DingMT02} and more recent server-assisted key-use designs \cite{LueksHAT20}. Our focus differs because we need dealerless DKG under secret non-exportability and Universally Composable (UC) composition \cite{Can[01]}---natural requirements for MPC wallets deployed in complex environments.

Hardware-backed key-isolation modules---e.g., secure enclaves/TEEs when configured as restricted keystores, or HSM-like APIs \cite{Shep[24]}---can enforce a Non-eXportable Key (NXK) model which binds secrets to hardware and forbids exporting them, including any caller-invertible affine image, permitting only attested KeyBox-to-KeyBox sealing. Thus, in a DKG performed under NXK, each signing share is generated and stored inside the module in which it was created and can only be accessed via a restricted interface. This constraint rules out classical rewinding/forking-lemma extraction \cite{Fork[1],Fork[2]} for Schnorr/Fiat--Shamir-style proofs \cite{Fiat[86],Schnorr[89],Schnorr[91]} at the hardware boundary. Throughout, we use ``KeyBox'' to denote this kind of hardware-backed key-isolation module. 

We introduce Unique Structure Verification (USV), which is meant to operate precisely at this NXK/KeyBox boundary. In particular, USV targets the tightly coupled setting in which the scalar being committed-to/certified is KeyBox-resident and non-exportable, so it cannot be revealed to the host or appear in the public transcript. The protocol exports only a publicly verifiable certificate, from which any verifier can deterministically derive the associated public group element needed by transcript-defined checks, without ever exporting the scalar or its invertible affine image.

A UC-secure DKG must realize the secrecy and uniqueness guarantees of dealerless (random) Verifiable Secret Sharing (VSS) \cite{ChorGold[85],Feld[87],Ped[91]} as a subtask (formalized in Section~\ref{sec:why-vss}). Therefore, existing UC-secure DKGs include some Verifiable-Sharing Enforcement (VSE) layer that prevents equivocation and enforces affine consistency of parties' contributions. Classically, this layer is instantiated using VSS and its variants (e.g., \cite{Stadler[96],Cac[02]}) via commitments \cite{Blum[83]} and complaint/opening logic \cite{Feld[87],Genn[07]}. In other UC(-style) threshold key-generation components (e.g., \cite{Feld[87],Ped[91],Genn[07]}), VSE layer is realized via commitment-and-proof / (Non)Interactive Zero-Knowledge ((N)IZK)-based mechanisms that enforce the required VSS-style obligations. Either way, traditional UC-secure DKG requires that (linear combinations of) shares can be computed and transmitted outside the device that holds them, an assumption that is incompatible with NXK. 

\paragraph{Roles vs.\ devices.}
We distinguish between cryptographic roles and the
physical devices / secure hardware instances that host those roles. Our base construction
realizes a two-leaf star over roles: the mandatory service $P_1$ (center) must co-sign with
either a primary-role share (held by a designated primary device, say $P_2$) or a recovery-role
share. Our $n$-device extension supports Role-based Device Registration (RDR) under NXK by enrolling additional
recovery devices as redundant front-ends for the same recovery role. Concretely, each recovery device
contains an independent KeyBox instance that ultimately holds the same recovery-role share
$k_{\mathsf{rec}}$, but this replication is not a share export: enrollment is performed via attested
KeyBox-to-KeyBox sealing and a one-shot installation procedure that never places \textit{share-deriving} plaintext
in the public transcript and preserves the KeyBox profile assumptions, namely key-opacity and state continuity.
Thus, the access structure is star-shaped over devices, while being implemented as a two-leaf star over
unique role shares (primary vs.\ recovery); and hence we also call it 1+1-out-of-$n$ (star) access structure.

\subsection{Our Contributions}
We model the NXK/KeyBox boundary as an ideal functionality: the KeyBox provides confidentiality for non-exportable shares via a restricted API. We leverage that, together with other tools summarized in Table \ref{tab:constraint-map}, to realize VSE layer under NXK without VSS-style exported-share machinery. Our contributions are multifold and can be summarized as follows:

\begin{table}[!h]
	\centering
	\begin{tabularx}{\linewidth}{>{\raggedright\arraybackslash}p{3.5cm} >{\raggedright\arraybackslash}p{8.3cm} >{\raggedright\arraybackslash}p{6.2cm}}
		\toprule
		\textbf{Hardware constraint} & \textbf{Standard approaches' shortcomings} & \textbf{Our mechanism} \\
		\midrule
		State continuity (no rollback) &
		Rewinding/forking-lemma extraction requires rolling back a prover to reuse the same commitment under two challenges. &
		Fischlin-based UC-NIZK-AoK with straight-line simulation/extraction formalized in \gROCRP. \\
		\addlinespace
		NXK (non-exportable secrets) &
		VSS/resharing/opening typically assumes exportable shares or share-derived witnesses to enforce polynomial relations and reconfigure trust; NXK blocks exporting the needed witnesses. &
		USV certificates, whose witness remains KeyBox-resident: certify transcript-defined public structure verifiable outside the KeyBox, while the witness never leaves. \\
		\addlinespace
		No caller-decryptable ``wrap'' &
		Exporting encrypted shares under caller-known keys breaks key-opacity \cite{Bortolozzo[10]}. &
		KeyBox-to-KeyBox sealing + RDR: ciphertexts only unwrap inside a KeyBox; the caller learns no decryption handle. \\
		\bottomrule
	\end{tabularx}
	\caption{Mapping KeyBox/NXK constraints to why UC-DKG enforcement via exportable shares fails, and the tools we use instead.}
	\label{tab:constraint-map}
\end{table}

\begin{enumerate}
	\item We introduce Unique Structure Verification (USV) as a non-interactive, publicly verifiable certificate designed for NXK-coupled deployments, where the scalar witness is non-exportable and confined to a KeyBox, and only the certificate is released outside the KeyBox boundary. Conceptually, USV is a publicly extractable commitment-to-group-element abstraction: any party can deterministically derive the canonical public group element, while the committed scalar remains hidden (and non-exportable). Moreover, tags are efficiently simulatable conditioned on the derived opening. We formalize USV both as a primitive and as a handle-bound ideal functionality, and prove its UC security in our global Random Oracle with Context-Restricted Programmability (\gROCRP)-hybrid model (Definition \ref{def:GNPRO}) with adaptive corruptions and secure erasures, under the Discrete Logarithm (DL) and Decision Diffie-Hellman (DDH) hardness assumptions.
	\item We enforce transcript-defined affine relations directly using Fischlin-style \cite{Fischlin[05]} UC-NIZK Arguments of Knowledge (AoKs) with straight-line extraction in the \gROCRP\ model.
	\item We identify and formalize the specific enforcement functionality needed by UC-DKGs in the NXK setting: enforcing transcript-defined linear/affine relations between parties' hidden scalars while keeping designated signing shares non-exportable and resilient to adaptive corruptions, without invoking resharing. We show how to enforce these relations by: 
	\begin{enumerate}
		\item routing sensitive state through a KeyBox interface (confidentiality), 
		\item using USV to obtain canonical, handle-bound public openings to the group elements corresponding to KeyBox-resident scalars, and 
		\item using UC-NIZK-AoKs to certify the cross-party affine constraints that an exported-share enforcement layer would normally check via commitments, and VSS(-like) verification and dispute logic.
	\end{enumerate}
	We also formalize the incompatibility between classical VSE and the NXK/KeyBox boundary: VSS-style complaint/opening/resharing (and related reconfiguration mechanisms) inherently rely on exporting shares or share-derived values, which our NXK model rules out (Sections \ref{sec:why-vss} and \ref{subsec:why-usv-straightline}).
	\item We construct a constant-round UC-secure Star DKG (SDKG) scheme, supporting a 1+1-out-of-3 star access structure as the base case: $P_1$ must co-sign with either a primary device $P_2$ or a recovery role $P_3$. Our $n$-device extension supports Role-based Device Registration (RDR) by enrolling additional recovery devices as redundant front-ends with independent KeyBoxes. Our constant-round UC-secure SDKG for the 1+1-out-of-$n$ star access structure incurs $\widetilde{O}(n\log p)$ communication cost and $\widetilde{O}(n\log^{2.585}p)$ bit-ops computation overhead. The RDR extension adds per device $O(\log p)$ communication bits and $\widetilde{O}(\log^{2.585} p)$ work.
\end{enumerate}

\subsection{Related Work}
Most closely aligned in application model with our ``mandatory service participates'' setting, Snetkov et al. \cite{SnetkovVL24} give a UC treatment of server-supported signatures for smartphones. Their focus is two-party server-assisted signing, whereas we target dealerless DKG for star access structure under NXK with post-DKG RDR. For a recent systematization of DL-based DKG protocols, see \cite{Ranas[25]}. When reviewing the rest of the related literature, we restrict attention to UC-secure DKGs, and evaluate them over two central axes: (i) NXK compatibility and (ii) RDR support. Lindell--Nof~\cite{LN18} give a practical full-threshold ECDSA protocol. Instantiating their ideal functionalities with UC-secure commitments and Zero-Knowledge (ZK) yields a protocol that UC-realizes an ideal ECDSA functionality. However, their scheme targets a fixed party set and manipulates shares via generic MPC over plaintext shares and ciphertexts (no RDR; not NXK). Canetti et al.~\cite{CGGMP21} provide UC threshold ECDSA with proactive key refresh over a fixed party set; dynamic joins are not modeled and several sub-protocols compute or send non-trivial functions of shares (not NXK). Doerner et al. \cite{Jack[24]} give a three-round threshold ECDSA signing protocol and show that shared keys can be generated via a simple commit-release-and-complain procedure (without proofs of knowledge). However, their setting still assumes exportable shares (not NXK) and does not address RDR. Lindell~\cite{Lindell24} gives a three-round, straight-line-simulatable DKG for Schnorr signatures~\cite{Schnorr[89],Schnorr[91]}; again the party set is fixed and the protocol performs linear operations over shares outside the NXK model. Friedman et al.~\cite{Friedman25} support reconfiguration in a two-tier 2PC--MPC framework~\cite{MarOme[24]} via Publicly VSS \cite{Stadler[96]} and threshold additively homomorphic encryption; this achieves RDR for validators but relies on public ciphertexts and homomorphic operations that lie outside the NXK model. Outside the UC framework, Katz \cite{Katz[24]} studies the round complexity of fully secure synchronous DKG in the DL setting.

Unlike \cite{CGGMP21} wherein the security proof is carried out in the strict Global Random Oracle (GRO) setting, we operate under a \gROCRP\ model that provides a single global oracle but with local-call semantics and restricted programmability tailored to NXK/KeyBox environment. In contrast to VSS-based (and VSS-like) UC-DKGs, which require all-to-all distribution of share material and therefore incur $\Theta(n^2)$ aggregate communication in the party count even before accounting for complaint/opening traffic, SDKG incurs only a constant number of proof objects in the base run (1+1-out-of-3 setting) and one additional proof per registered device for the extension to the generic 1+1-out-of-$n$ setting. Asymptotically, we treat the Fischlin parameters as functions of the security parameter and choose them to satisfy the standard Fischlin conditions so that the transform's soundness/knowledge-extraction error is negligible in the security parameter. For concrete 128-bit security, we instantiate with fixed parameters and report explicit concrete
bounds and sizes. Table \ref{tab:comparison} summarizes the resulting asymptotic costs.

\begin{table}[t]
	\centering
	\small
	\begin{tabular}{l @{\hspace{10pt}} c @{\hspace{10pt}} c @{\hspace{5pt}} | @{\hspace{5pt}} c @{\hspace{10pt}} c @{\hspace{5pt}} | @{\hspace{5pt}} c @{\hspace{10pt}} c}
		\toprule
		& \multicolumn{2}{c}{Support}
		& \multicolumn{2}{c}{DKG}
		& \multicolumn{2}{c}{RDR} \\
		Work & NXK & RDR
		& Bit-ops & Comm. (bits)
		& Bit-ops & Comm. (bits) \\
		\midrule
		\textbf{SDKG}
		& \textbf{\checkmark} & \textbf{\checkmark}
		& $\widetilde{O}(n\kappa^{2.585})$
		& $\widetilde{O}(n\kappa)$
		& $\widetilde{O}(\kappa^{2.585})$
		& $\widetilde{O}(\kappa)$ \\
		
		CGGMP21~\cite{CGGMP21}
		& $\times$ & $\times$
		& $\widetilde{O}(n^2\kappa^{2.585})$
		& $\Theta(n^2\kappa)$
		& N/A & N/A \\
		
		Lindell--Nof~\cite{LN18}
		& $\times$ & $\times$
		& $\widetilde{O}(n^2\kappa^{2.585} + n^2\eta^{2.585})$
		& $\Theta(n^2(\kappa+\eta))$
		& N/A & N/A \\
		
		Lindell~\cite{Lindell24}
		& $\times$ & $\times$
		& $\widetilde{O}(n\kappa^{2.585}(t+1+n))$
		& $O(n(t+1+n)\kappa)$
		& N/A & N/A \\
		
		Friedman~\cite{Friedman25}
		& $\times$ & \checkmark
		& $\widetilde{O}(n^2\kappa^{2.585} + n\eta^{2.585})$
		& $O(n^2(\kappa+\eta))$
		& $\widetilde{O}(n^2\kappa^{2.585} + n^2\eta^{2.585})$
		& $\Theta(n^2(\kappa+\eta))$ \\
		\bottomrule
	\end{tabular}
	\caption{Comparing UC-secure DKG schemes on dominant costs (Karatsuba model~\cite{Karatsuba[63]}).
		Notations: $\kappa\coloneqq\log p$ for a prime $p$; $N$
		is a Paillier/class-group modulus; $\eta\coloneqq\log N$; and $t$ is the Shamir polynomial degree.}
	\label{tab:comparison}
\end{table}

\begin{note} 
	SDKG is specialized to a star access structure (1+1-out-of-$n$) in the NXK/KeyBox setting.
	Most prior UC-secure DKGs are analyzed for $t$-out-of-$n$ threshold access structures, whose $\Theta(n^2)$ costs
	largely reflect the all-to-all communication pattern inherent to threshold DKG/VSS-style protocols. For this reason,
	Table~\ref{tab:comparison} is intended as a qualitative comparison of models/techniques and reported asymptotics,
	not a like-for-like complexity comparison across identical access structures.
\end{note}

\subsection{Organization}
The rest of this paper is organized as follows:
Section~\ref{sec:model} formalizes the NXK/KeyBox setting and our UC execution model, including
ideal secure channels and the KeyBox functionality capturing non-exportable long-term secrets and
state continuity. 
Section~\ref{sec:why-vss} isolates the VSE obligations that
any UC-secure DKG must satisfy, and explains why standard exported-share enforcement and resharing
mechanisms clash with NXK. It also recalls relevant cryptographic primitives, namely UC, and the DL and DDH hardness assumptions
(along with Decisional DL Equivalence (DDLEQ) for the equivalent DDH game in the ``same-exponent across $\G,\h$'' form aligned with DLEQ/Chaum--Pedersen statements).
Section~\ref{prelim} collects cryptographic and modeling preliminaries:
the \gROCRP\ model and its local-call semantics, and the UC/NIZK(-AoK)
notions we use, including the optimized Fischlin transform enabling straight-line extraction; in particular, Proposition~\ref{prop:strict-gro-insufficient} explains why strict GRO is not enough for the required simulation interface.
Section~\ref{sec:USV} introduces Unique Structure Verification (USV), gives a concrete instantiation,
and proves UC security of the corresponding handle-bound functionality.
Section~\ref{SSE-NIZK}
develops the UC-extractable NIZK-AoKs used throughout, including DL, DLEQ, and the affine-DL
relations enforced by the transcript.
Section~\ref{SDKG} presents the SDKG protocol: the base
$1{+}1$-out-of-$3$ run and the one-shot role-based device registration (RDR) mechanism used to
enroll additional recovery devices under NXK (Algorithm~\ref{alg:reg}).
Section~\ref{subsec:SDKG-UC} proves UC security for the
base protocol via a transcript-driven ideal functionality and derives the corresponding standard
NXK-star DKG interface, including the following waypoints:
\begin{itemize}
	\item Formal necessity of USV for commit-only alternatives under hardened profiles: Section~\ref{subsec:why-usv-straightline}.
	\item Main theorem: Theorem~\ref{thm:SDKG-UC}.
	\item Compilation (eliminating $\Fusv$ via UC composition): Corollary~\ref{cor:compile-out-fusv}.
\end{itemize}
Section~\ref{sec:n-extension} formalizes the $1{+}1$-out-of-$n$ extension and establishes UC security of
the scalable RDR mechanism.
Section~\ref{sec:complexity} provides a focused complexity and overhead
discussion, including concrete parameterization and transcript sizes.
Section~\ref{sec:conclude}
concludes. Appendix~\ref{App1} discusses an optional tighter integration with programmable KeyBox
implementations (e.g., TEEs and certain HSM/KMS-backed designs). Appendix~\ref{app:keybox-impls} collects concrete candidate implementation classes and a
profile-capture checklist for enforcing a KeyBox API profile in practice.

\section{The NXK and KeyBox Setting}\label{sec:model}
We work in an NXK/KeyBox model wherein long-term shares remain inside state-continuous KeyBoxes (no rewind/fork) and are
\emph{API-non-exportable} in the sense of \textit{Reader Note} \ref{box:export-visibility}. The only permitted cross-KeyBox transfer of share-dependent data is attested KeyBox-to-KeyBox sealing, which returns only ciphertexts to the caller. Any \emph{caller-recoverable} export of a resident share---whether as raw bytes, a caller-invertible affine image, or via any other API-visible behavior that is not simulatable from the corresponding public information---is disallowed. Formally, we assume admissible KeyBox profiles satisfy key-opacity. Informally, key-opacity means that the KeyBox's external outputs are simulatable from the public key alone; the formal statement appears as Assumption~\ref{assump:keybox-opacity} below. We model (i) authenticated confidential point-to-point channels with adversary-controlled scheduling via an ideal functionality $\Fchan$ (Fig. \ref{fig:Fchan}), (ii) an authenticated public dissemination mechanism for transcript-public values via an ideal functionality $\Fpub$ (Fig. \ref{fig:Fpub}), and (iii) a per-party NXK hardware boundary via a KeyBox functionality $\FKeyBox$ (Fig. \ref{Fdskg}) that generates and stores long-term shares internally and exposes only a restricted API. 

\subsection{Terminology and leakage model}\label{subsec:nxk-terms}
\begin{readerbox}[box:export-visibility]{Terminology: exportability vs.\ visibility}
	We distinguish four places where a value may reside or be observed:
	\begin{itemize}[leftmargin=*,nosep]
		\item KeyBox internal state: data stored inside the trusted KeyBox boundary.
		\item Host RAM: volatile party state outside the KeyBox (subject to erasure and adaptive corruption).
		\item Adversary-visible transcript: everything observable to $\cA$ outside honest KeyBoxes, including
		all messages sent over adversary-visible channels, all explicit leakage outputs of ideal functionalities
		(e.g., $\Fchan$'s length leakage and scheduling metadata), and all outputs returned to adversary-controlled ITMs.
		\item Persistent storage (outside the KeyBox): disk/logs/swap outside the KeyBox; we conservatively
		treat any such data as part of the adversary-visible transcript.
	\end{itemize}
	
	We use the following terms throughout:
	\begin{itemize}[leftmargin=*,nosep]
		\item (API-)export / (API-)non-exportable (KeyBox-resident shares):
		A KeyBox-resident share is \emph{API-non-exportable} if no KeyBox API call enables the caller to
		recover the share (or any caller-invertible affine image of it), even when combined with caller-held secrets.
		This is captured formally by \emph{key-opacity} (Assumption~\ref{assump:keybox-opacity}).
		\item Transcript-visible vs.\ transcript-private:
		A value is \emph{transcript-visible} if it appears in the adversary-visible transcript; otherwise it is
		\emph{transcript-private}. Specifically, payloads delivered over $\Fchan$ are transcript-private unless an
		endpoint is corrupted; only $\Fchan$'s explicit leakage (e.g., lengths) is transcript-visible.
		\item Protocol transcript / local views (unqualified ``transcript''):
		When we refer to ``the transcript'' without the qualifier \emph{adversary-visible}, we mean the parties' local protocol views: the tuple of
		messages delivered to the parties (including plaintexts carried over $\Fchan$) together with the public values.
		This full transcript is not necessarily adversary-visible.
		\item NXK-restricted material (share-deriving material):
		NXK-restricted material must be transcript-private and must never be written to persistent storage outside a KeyBox.
		It may be handled transiently in host RAM during an atomic local step and must then be securely erased; under
		adaptive corruptions with secure erasures, such RAM values leak only if corruption occurs before erasure.
	\end{itemize}
	
	In this paper ``non-exportable'' refers to API-non-exportability of KeyBox-resident shares and
	does not mean that related ephemeral/share-deriving material can never appear transiently in host RAM.
	Adversarial access to honest host RAM is modeled only via the UC corruption interface with secure erasures
	(Definition~\ref{def:ace}), instantiated in our $\FKeyBox$-hybrid model below (Definition~\ref{KeyBox}): until corruption, an honest party executes its local steps atomically and may hold
	NXK-restricted material transiently, after which it is securely erased; after corruption, the adversary controls
	the host ITM and learns its current (non-erased) state. We do not model an ``always-on'' adversary that can
	continuously scrape the RAM of an honest host without triggering a corruption event.
\end{readerbox}

The secure-channel functionality $\Fchan$ abstracts both authenticated key establishment and the
subsequent symmetric protection of payloads; its internal session-key material is not part of any
party’s KeyBox state and is not modeled explicitly. In our NXK setting, the only role of $\Fchan$
is to keep NXK-restricted / share-deriving material transcript-private
unless an endpoint is corrupted (\textit{Reader Note} \ref{box:export-visibility}). When mapping to a concrete
implementation, these channel keys can be realized as ordinary ephemeral host state (e.g., via
an AKE or ephemeral IND-CCA KEM establishing per-session AEAD keys) and are subject to the same
adaptive-corruption-with-secure-erasures discipline as other transient values: they may reside in
host RAM during an atomic step and must be erased once no longer needed. This forward-secure/erasure
discipline prevents later corruptions from retroactively decrypting previously recorded ciphertexts,
matching the semantics of $\Fchan$ (Fig.~\ref{fig:Fchan}). Alternatively, one may place channel
cryptography inside the KeyBox/profile adapter, at the cost of extending the admissible profile with
the required symmetric primitives, but our model and proofs do not require this stronger placement.

\begin{definition}[Adaptive corruptions with secure erasures]\label{def:ace}
	\emph{We work in the UC framework with adaptive corruptions and an explicit erasure discipline.
		Each party $P_i$ maintains a local (host) state $\mathsf{st}_i$ outside any KeyBox boundary (cf.\ \textit{Reader note}~\ref{box:export-visibility}).
		A protocol may explicitly erase designated local variables/buffers from $\mathsf{st}_i$ once they are no longer needed.
		Upon corruption of $P_i$, the adversary learns only $P_i$’s current local state $\mathsf{st}_i$ at the moment of corruption;
		any values explicitly erased by the protocol prior to corruption are not revealed.
		Thereafter, the adversary controls $P_i$ and may arbitrarily influence its future actions and state. We assume honest-party activations are atomic with respect to corruption, i.e., corruptions can occur only between activations.
		Consequently, temporary values created and erased within a single honest activation are never revealed by a later corruption.}
\end{definition}

\subsection{Notation, conventions, and ideal channels}\label{subsec:nxk-notation}
We call an algorithm efficient if, for input size $\lambda$, its running time is bounded by $\poly(\lambda)$. Throughout the text, $\lambda \in \mathbb{N}$ denotes the security parameter with $\negl(\lambda)$ denoting a negligible function on it; and $\approx_c$ represents computational indistinguishability. We write $x \leftarrowdollar \mathbb{S}$ to denote uniform sampling from a finite set $\mathbb{S}$. Throughout, for an elliptic curve $\E(\mathbb{F}_q)$, $\mathbb{G} \subseteq \E(\mathbb{F}_q)$ denotes a cyclic subgroup of prime order $p > 3$ written additively, with fixed generator $\G$. We fix an injective, self-delimiting encoding $\langle\cdot\rangle$ of mixed tuples into $\{0,1\}^\ast$. All hash/oracle invocations are applied only to encodings of the form $\langle\cdot\rangle$, and all ``equality of encoded tuples'' statements are with respect to this encoding. Our $\widetilde{O}(\cdot)$ bounds hide factors that are polynomial in the Fischlin parameters
$(t(\lambda),b(\lambda),r(\lambda),S(\lambda))$, which are themselves at most $\polylog(\lambda)$ in the asymptotic
analysis; for any fixed concrete instantiation at a target security level $\lambda=\lambda_0$, these factors become
constants.

The UC security parameter is $\lambda$. Our prime-order group is generated as a function of $\lambda$ and has order
$p=p(\lambda)$. Let $\kappa(\lambda)\coloneqq \lceil \log_2 p(\lambda)\rceil$ denote the bitlength of the group order.
We assume $\kappa(\lambda)=\Theta(\lambda)$; in particular, $\kappa(\lambda)=O(\lambda)$ and, for concrete
instantiations, typically $\kappa(\lambda)\ge \lambda$ up to a constant factor; so that ``PPT in $\lambda$'' and ``PPT in
$\kappa$'' are equivalent up to polynomial factors.
Unless stated otherwise, all auxiliary protocol parameters are deterministic functions of $\lambda$ (equivalently, of $\kappa$ under $\kappa=\Theta(\lambda)$).

\begin{figure}[t]
	\centering
	\setlength{\fboxrule}{0.2pt} 
	\fbox{
\parbox{\dimexpr\linewidth-2\fboxsep-2\fboxrule\relax}{%
			\ding{169} \textsf{Parameters:} session identifier $\sid$; endpoints $(P_s,P_r)$; leakage $\mathrm{\Phi}(c)=|c|$.\\
			\ding{169}  \textsf{State:} $\mathsf{active}\in\{0,1\}$ (init $0$); multiset $\mathsf Q$ of $(\rho,c,\phi)$; set $\mathsf D$ of delivered tickets.\\
			\ding{169}  Upon receiving $(\mathsf{Init},\sid,P_s,P_r)$ from both $P_s$ and $P_r$:
			set $\mathsf{active}\gets 1$; send $(\mathsf{ChReady},\sid,P_s,P_r)$ to $\cA$.\\
			\ding{169}  Upon receiving $(\mathsf{Send},\sid,c)$ from $P_s$ with $\mathsf{active}=1$:
			sample $\rho\leftarrowdollar \{0,1\}^\lambda$, set $\phi\gets\mathrm{\Phi}(c)$, insert $(\rho,c,\phi)$ into $\mathsf Q$,
			send $\rho$ to $P_s$, and send $(\mathsf{Leak},\sid,P_s,P_r,\rho,\phi)$ to $\cA$.
			If $P_s$ is corrupted, additionally reveal $c$ to $\cA$.\\
			\ding{169}  Upon receiving $(\mathsf{Deliver},\sid,\rho)$ from $\cA$:
			if $(\rho,c,\phi)\in\mathsf Q$ and $\rho\notin\mathsf D$, delete it from $\mathsf Q$, add $\rho$ to $\mathsf D$,
			and deliver $(\mathsf{Recv},\sid,P_s,c)$ to $P_r$.
			If $P_r$ is corrupted, reveal $c$ to $\cA$ at delivery time.\\
			\ding{169}  Upon receiving $(\mathsf{Inspect},\sid)$ from $\cA$:
			if both $P_s$ and $P_r$ are corrupted, reveal $c$ for all $(\rho,c,\phi)\in\mathsf Q$.
	}}
	\caption{Ideal authenticated, length-leaking secure channel $\Fchan$}
	\label{fig:Fchan}
\end{figure}

\begin{note}
	Concrete realization for $\Fchan$ from \cite{Tue[09]}: IND-CCA KEM + IND-CPA and INT-CTXT AEAD \cite{Rogaway02AEAD}.
	To match the adaptive-corruption-with-secure-erasures semantics of $\Fchan$, per-session channel keys are treated as ephemeral and
	are securely erased after use (or kept inside a trusted boundary).
\end{note}

\begin{figure}[t]
	\centering
	\setlength{\fboxrule}{0.2pt}
	\fbox{
		\parbox{\dimexpr\linewidth-2\fboxsep-2\fboxrule\relax}{%
			\ding{169}\ \textsf{Parameters:} session identifier $\sid$; sender $P_s\in\PartySet$; leakage $\Phi(c)=|c|$.\\
			\ding{169}\ \textsf{State:} multiset $\mathsf Q$ of $(\rho,P_s,c,\phi)$; set $\mathsf D$ of delivered tickets.\\
			\ding{169}\ \textsf{Upon receiving }$(\mathsf{Publish},\sid,c)\textsf{ from }P_s$: \\
			\hspace*{1.5em}sample $\rho\leftarrowdollar \{0,1\}^\lambda$, set $\phi\gets\Phi(c)$, insert $(\rho,P_s,c,\phi)$ into $\mathsf Q$,\\
			\hspace*{1.5em}send $\rho$ to $P_s$, and send $(\mathsf{Leak},\sid,P_s,\rho,c,\phi)$ to $\cA$.\\
			\ding{169}\ \textsf{Upon receiving }$(\mathsf{Deliver},\sid,\rho)\textsf{ from }\cA$: \\
			\hspace*{1.5em}if $(\rho,P_s,c,\phi)\in\mathsf Q$ and $\rho\notin\mathsf D$, delete it from $\mathsf Q$, add $\rho$ to $\mathsf D$,\\
			\hspace*{1.5em}and deliver $(\mathsf{Recv},\sid,P_s,c)$ to every $P\in\PartySet$.
	}}
	\caption{Ideal authenticated public broadcast $\Fpub$ (adversary-visible, adversary-scheduled).}
	\label{fig:Fpub}
\end{figure}

\subsection{Admissible KeyBox profiles and key-opacity}\label{subsec:nxk-key-opacity}
Let $\mathcal{K}$ be a key space and let $\xi$ be a distribution over $\mathcal{K}$.
Let $\mathcal{F}_{\mathrm{adm}}$ be a set of PPT stateful admissible operations, modeled as
state-transition algorithms
\[
f:\mathcal{K}\times\{0,1\}^*\times\{0,1\}^*\to \{0,1\}^*\times\{0,1\}^*,
\]
where on input $(k,\mathsf{st},m)$, the operation outputs a new private state
$\mathsf{st}'$ and a response $y$.
Let $\PubMap:\mathcal{K}\to\{0,1\}^*$ be an efficiently computable public-information map.
Let $\GetPub\in\mathcal F_{\mathrm{adm}}$ denote the stateless admissible operation
\[
\GetPub(k,\mathsf{st},m) := (\mathsf{st}, \PubMap(k)),
\]
which ignores $m$ and does not change state. Fix a PPT Interactive Turing Machine (ITM) simulator $\s$ and a PPT adversary $\mathcal A$.
Define an experiment $\mathsf{Exp}^{\mathrm{opq}}_{\mathcal A,\s,\mathcal F_{\mathrm{adm}},\xi,\PubMap}(1^\lambda)$:

Some admissible interfaces are multi-stage (e.g., a \textsf{Start}/\textsf{Prove} pair) and therefore
require that later calls read internal state written by earlier calls.
To model this in a profile-centric way, we associate to each admissible operation
a deterministic \emph{state-family} identifier via a map
\[
\digamma:\mathcal F_{\mathrm{adm}}\to \mathsf{Fam},
\]
where $\mathsf{Fam}$ is a finite identifier set fixed by the KeyBox API profile.
The KeyBox maintains one internal state string per family, and all invocations of operations
$f$ with the same family id $\digamma(f)$ share that state component.
Unless stated otherwise, we take singleton families, i.e., $\digamma(f)=f$.

\begin{enumerate}
\item Sample $k\leftarrowdollar \xi$ and set $pk\leftarrow \PubMap(k)$. Sample a hidden bit $\beta \leftarrowdollar \{0,1\}$.
\item Initialize a state table $\mathsf{st}[\cdot]$ by setting $\mathsf{st}[\varphi]\gets \epsilon$
for every family identifier $\varphi\in \mathrm{im}(\digamma)$.
	Initialize a simulator oracle $\s_{pk}$ by running $\s$ on input $(1^\lambda,pk)$. Hence, $\s_{pk}$ may maintain state across oracle calls.
	\item Run $\mathcal A(1^\lambda,pk)$ with oracle access to $\mathcal O_\beta(\cdot,\cdot)$ defined as:
	\[
	\mathcal O_\beta(f,m) :=
	\begin{cases}
		\text{let } \varphi\gets \digamma(f);\ 
		\text{let } (\mathsf{st}',y)\leftarrow f(k,\mathsf{st}[\varphi],m);\ 
		\mathsf{st}[\varphi]\gets \mathsf{st}';\ y 
		& \text{if } \beta=1 \text{ and } f\in\mathcal F_{\mathrm{adm}},\\
		\s_{pk}(f,m)
		& \text{if } \beta=0 \text{ and } f\in\mathcal F_{\mathrm{adm}},\\
		\perp & \text{if } f\notin\mathcal F_{\mathrm{adm}}.
	\end{cases}
	\]
	\item $\mathcal A$ outputs a bit $\beta'\in\{0,1\}$. Output $1$ iff $\beta'=\beta$.
\end{enumerate}

\begin{figure}[t]
	\centering
	\setlength{\fboxrule}{0.2pt} 
	\fbox{
		\parbox{\dimexpr\linewidth-2\fboxsep-2\fboxrule\relax}{%
			\ding{169} \textsf{Parameters:} fixed owner $P_{\mathsf{own}}\in\PartySet$; authenticated sealing-key directory
			$\{\pk_{\mathrm{seal}}^{(P)}\}_{P\in\PartySet}$ with local keypair $(\pk_{\mathrm{seal}},\sk_{\mathrm{seal}})$
			such that $\pk_{\mathrm{seal}}=\pk_{\mathrm{seal}}^{(P_{\mathsf{own}})}$; and a set of public parameters $\pp$.
			
			\smallskip
			\ding{169} \textsf{State:}
			key table $\mathrm{\Lambda}:\SlotSpace\rightharpoonup\mathcal K$ (init $\emptyset$) for local slot $\mu\in\SlotSpace$;
			operation state $\mathsf{st}[\mu,\varphi]\in\{0,1\}^*$ (init $\epsilon$) for each family id
			$\varphi\in \mathrm{im}(\digamma)$;
			private buffer $\mathsf{buf}:\{0,1\}^\lambda\rightharpoonup(\{0,1\}^*\times\{0,1\}^*)$ (init $\emptyset$).
			
			\smallskip
	\ding{169} \textsf{Admissible routines:}
	derivations $\chi_{\mathrm{adm}}$ and operations $\mathcal F_{\mathrm{adm}}$ with
	a designated key-independent subset $\mathcal F_{\mathrm{KI}}\subseteq \mathcal F_{\mathrm{adm}}$
	(e.g., $\OpenFromPeer,\textsf{USV.Cert}\in\mathcal F_{\mathrm{KI}}$ and $\SealToPeer\in\mathcal F_{\mathrm{adm}}\setminus\mathcal F_{\mathrm{KI}}$).
			
			\smallskip
		\ding{169} Procedure $\mathsf{Resolve}(m)$:
		parse $m$ as $\langle m_1,\ldots,m_t\rangle$; for each component that parses as a typed handle
		$\KBhdl{\tau}$ with $\tau\in\dom(\mathsf{buf})$, let $(\mathsf{ad},s):=\mathsf{buf}[\tau]$
		and substitute $\langle \mathsf{ad}, s\rangle$.
		If parsing fails or a referenced typed handle is missing, return $\perp$ and leave $\mathsf{buf}$ unchanged.
		Otherwise delete all used $\tau$ from $\mathsf{buf}$ and return the substituted tuple.
					
			\smallskip
			\ding{169} Upon receiving $(\Load,\mu,g,m)$ from party $P_{\mathsf{own}}$:
			\begin{itemize}[leftmargin=*,nosep]
				\item If $\mu\in\dom(\mathrm{\Lambda})$ or $g\notin \chi_{\mathrm{adm}}$, return $\perp$.
				Let $m'\gets \mathsf{Resolve}(m)$; if $m'=\perp$ return $\perp$.
				\item Compute $k\leftarrow g(1^\lambda,m')$. If $k=\perp$, return $\perp$.
				Set $\mathrm{\Lambda}[\mu]\gets k$ and return $\ok$.
			\end{itemize}
			
			\smallskip
			\ding{169} Upon receiving $(\Use,\mu,f,m)$ from party $P_{\mathsf{own}}$:
			\begin{itemize}[leftmargin=*,nosep]
\item If $f\notin\mathcal F_{\mathrm{adm}}$, return $\perp$.
\item If $f\notin\mathcal F_{\mathrm{KI}}$ and $\mu\notin\dom(\mathrm{\Lambda})$, return $\perp$.
				\item If $f=\SealToPeer$: parse $m=\langle P_{\mathrm{peer}},\mathsf{ad}\rangle$.
				If $P_{\mathrm{peer}}\notin \PartySet$, return $\perp$.
				\item Let $\pk_{\mathrm{peer}}\gets \pk_{\mathrm{seal}}^{(P_{\mathrm{peer}})}$ and $s\gets \mathrm{\Lambda}[\mu]$.
				Return $c\leftarrow \Enc_{\pk_{\mathrm{peer}}}(\mathsf{ad},s)$.
				\item If $f=\OpenFromPeer$: parse $m=\langle c,\mathsf{ad}\rangle$.
				Compute $s\leftarrow \Dec_{\sk_{\mathrm{seal}}}(\mathsf{ad},c)$; if decryption fails return $\perp$.
				Sample $\tau\leftarrowdollar \{0,1\}^\lambda$, set $\mathsf{buf}[\tau]\gets(\mathsf{ad},s)$, and return $\KBhdl{\tau}$.
				\item If $f=\textsf{USV.Cert}$: ignore $\mu$.
				Sample $m_{\mathsf{cert}}\leftarrowdollar \Zp^*$ and compute $\langle C,\zeta\rangle\leftarrow \Cert(\pp,m_{\mathsf{cert}})$ (defined in Section \ref{sec:USV}); erase $m_{\mathsf{cert}}$ and return $\langle C,\zeta\rangle$.
				\item If $f\in\mathcal F_{\mathrm{KI}}$ (general key-independent case): ignore $\mu$.
				Let $\varphi\gets \digamma(f)$; compute
				$(\mathsf{st}',y)\leftarrow f(\bot,\mathsf{st}[\mu,\varphi],m)$,
				set $\mathsf{st}[\mu,\varphi]\gets \mathsf{st}'$, and return $y$.
				(Key-independent operations receive $\bot$ in place of the resident key and must not depend on it.)
				\item Otherwise (key-dependent): let $\varphi\gets \digamma(f)$; compute
				$(\mathsf{st}',y)\leftarrow f(\mathrm{\Lambda}[\mu],\mathsf{st}[\mu,\varphi],m)$,
				set $\mathsf{st}[\mu,\varphi]\gets \mathsf{st}'$, and return $y$.
			\end{itemize}
	}}
	\caption{Per-party KeyBox functionality $\FKeyBox^{(P_{\mathsf{own}})}$ for an NXK KeyBox with KeyBox-to-KeyBox sealing.}
	\label{Fdskg}
\end{figure}

Fig. \ref{Fdskg} specifies a generic per-party KeyBox instance. We denote the instance owned by party $P_i$
as $\FKeyBoxOf{i}$, and sometimes write $\FKeyBox$ when the owner is clear from context.
Within an instance, keys are indexed by the local handle $\mu\in\SlotSpace$; globally we refer to a slot as $(P_i,\mu)$. The $\SealToPeer/\OpenFromPeer$ API assumes that each party’s sealing public key $\pk_{\mathrm{seal}}^{(P)}$ is authenticated and bound to that party’s KeyBox instance (e.g., via attestation). We abstract the attestation/key-distribution mechanism by an authenticated sealing-key directory $\{\pk_{\mathrm{seal}}^{(P)}\}_{P\in\PartySet}$ that is fixed and
not adversary-influenceable. Hence, $\SealToPeer$ encrypts to the directory-defined $\pk_{\mathrm{seal}}^{(P_{\mathrm{peer}})}$
for the designated peer, rather than accepting a raw recipient key as input. $\Enc/\Dec$ denote encryption/decryption
functions of an ideal public-key authenticated-encryption scheme with associated data, \textsf{ad}. We will use that $(\Enc,\Dec)$ is probabilistic and IND-CCA secure, i.e., secure for
polynomially many encryptions under a fixed public key. Thus, even when many
$\SealToPeer$ calls encrypt different slot-resident plaintexts to the same recipient key
$\pk_{\mathrm{seal}}^{(P_{\mathrm{peer}})}$, the resulting ciphertexts are jointly simulatable as
independent encryptions of a fixed dummy plaintext of the appropriate length under the same
$\pk_{\mathrm{seal}}^{(P_{\mathrm{peer}})}$, with fresh and independent randomness per call, and no other
cross-call leakage at the API boundary.

To support multiple sessions on the same state-continuous KeyBox,
we treat the mnemonic slot names as tags rather than literal constants. Concretely, in a session with identifier $\sid$, we define the local slot handle used for $\texttt{tag}$ as
\[
\KBsid{\texttt{tag}} \in \SlotSpace ,
\]
where $\langle\cdot\rangle$ is the fixed injective tuple encoding used throughout.

Our KeyBox abstraction is a narrow, keystore-style resource tailored to NXK. For more general-purpose formal UC abstractions and discussion of subtle composability issues stemming from globally shared attestation keys, see \cite{PassShi[17],MarkLo[25]}. Making the attestation protocol explicit mandates extending our model, which can be done via a separate (sub-)protocol/functionality and instantiated independently, as in UC-style treatments of \cite{MarkLo[25],PassShi[17],ManuCam[16],Xia[24]}. Given that protecting against physically invasive or side-channel attacks necessitates specialized equipment \cite{Carlton[21]}, we consider such threats out of scope.

\begin{definition}[Key-opacity]\label{def:key-opacity-alt}
	\emph{Let $\mathcal K$ be a key space, let $\xi$ be an efficiently sampleable distribution over $\mathcal K$,
		and let $\PubMap:\mathcal K\to\{0,1\}^*$ be an efficiently computable public-information map.
		For security parameter $\lambda$, $\mathcal F_{\mathrm{adm}}$ is \emph{key-opaque} with respect to $(\xi,\PubMap)$
		if for every PPT adversary $\mathcal A$ there exists a PPT ITM simulator $\s$ such that
		\[
		\Adv^{\mathrm{opq}}_{\mathcal A}(\lambda)
		:= \left|\Pr\!\left[\mathsf{Exp}^{\mathrm{opq}}_{\mathcal A,\s,\mathcal F_{\mathrm{adm}},\xi,\PubMap}(1^\lambda)=1\right]-\tfrac12\right|
		\le \negl(\lambda).
		\]
		When $\xi$ is clear from context, we may omit it and simply say “key-opaque w.r.t.\ $\PubMap$.”}
\end{definition}

\begin{assumption}[Key-opacity (profile-level, multi-slot)]\label{assump:keybox-opacity}
	\emph{Fix an admissible KeyBox API profile $(\chi_{\mathrm{adm}},\mathcal F_{\mathrm{adm}},\digamma)$	(Definition~\ref{def:keybox-profile}) and a public-information map
		$\PubMap:\mathcal K\to\{0,1\}^*$.	We assume slot-separable key-opacity for this profile:
		\begin{itemize}[leftmargin=*,nosep]
			\item Single-slot opacity: $\mathcal F_{\mathrm{adm}}$ is key-opaque w.r.t.\ $\PubMap$ in the sense of
			Definition~\ref{def:key-opacity-alt}.			
			\item Slot separability / no cross-slot opacity couplings: In the multi-slot $\FKeyBox$ functionality
			(Fig.~\ref{Fdskg}), every key-dependent admissible operation
			$f\in\mathcal F_{\mathrm{adm}}$ is slot-local: on a call $\Use(\mu,f,m)$ it may read the resident key $k_{i,\mu}:=\mathrm{\Lambda}[\mu]$
			and the per-slot family state $\mathsf{st}[\mu,\digamma(f)]$, but it does not read or update any other slot’s key/state
			$(\mathrm{\Lambda}[\mu'],\mathsf{st}[\mu',\cdot])$ for $\mu'\neq\mu$.
			Moreover, the externally visible randomness/state used by distinct slots can be taken independent.
			Any randomized admissible routine (including $\SealToPeer$) is modeled as using fresh,
			independent coins per invocation, and we do not model any additional cross-slot/cross-call leakage
			(e.g., shared-DRBG artifacts or timing channels) beyond the explicit transcript leakage of the ideal
			functionalities. Although $\SealToPeer$ encrypts different slot-resident values under the same
			recipient directory key $\pk_{\mathrm{seal}}^{(P_{\mathrm{peer}})}$, IND-CCA security of $(\Enc,\Dec)$
			implies that the joint distribution of all such ciphertexts (across slots and across calls) is
			computationally indistinguishable from a product distribution of independently generated dummy
			ciphertexts under $\pk_{\mathrm{seal}}^{(P_{\mathrm{peer}})}$. Hence, these outputs can be simulated
			either by independent per-slot simulators (each using fresh encryption randomness) or by a single
			key-independent sealing simulator shared across slots.			
			Key-independent interfaces $f\in\mathcal F_{\mathrm{KI}}$ are assumed simulatable without knowing any resident key
			(and may be handled by separate key-independent wrapper state).
		\end{itemize}
		For the scalar-share slots used in SDKG we instantiate $\PubMap(k):=k\G$ (so $\GetPub$ returns $k\G$).
		The simulator is understood to run in the same ambient execution model as the surrounding protocol and it may use any simulator-only interface provided by that model.}
\end{assumption}

\begin{remark}[Key-opacity in multi-slot KeyBoxes]
	\label{rem:key-opacity-multislot}
	In the $\FKeyBox$-hybrid model each party’s KeyBox instance stores multiple keys indexed by local slots
	$\mu\in\SlotSpace$, with $k_{i,\mu}:=\mathrm{\Lambda}[\mu]$ and $\pk_{i,\mu}:=\PubMap(k_{i,\mu})$. As implied by Assumption~\ref{assump:keybox-opacity}, in hybrids/proofs we may simulate different slots by running independent copies of the single-slot key-opacity simulator, one per slot $(i,\mu)$, each maintaining its
	own state across queries to that slot. Key-independent calls such as $\OpenFromPeer$ are handled by separate
	key-independent simulation state when needed. When the profile includes $\SealToPeer$, note that ciphertexts across different slots may be under the
	same recipient key $\pk_{\mathrm{seal}}^{(P_{\mathrm{peer}})}$; nevertheless, by IND-CCA security the
	simulator may treat each $\SealToPeer$ output as an independently simulatable ciphertext, so running independent per-slot
	simulators remains sound at the level of the joint external transcript.
\end{remark}

\begin{remark}[Slot separability is an idealization (implementation caveat)]
	\label{rem:slot-sep-idealization}
	Assumption~\ref{assump:keybox-opacity} treats different slots as independent black boxes: admissible
	operations are slot-local and the randomness used across slots and across calls can be
	taken independent, with no additional cross-slot/cross-call leakage at the KeyBox API boundary beyond the
	explicit leakage of our ideal functionalities.
	This is an idealization.
	
	\noindent In concrete KeyBox/HSM implementations, cross-slot correlations can arise from shared entropy/DRBG state,
	nonce/counter reuse, related-key derivation from a common root, or microarchitectural/timing side channels.
	Such correlations may invalidate key-opacity even when the underlying primitive is IND-CCA secure under the
	standard assumption of fresh, independent coins per encryption. The argument that many
	$\SealToPeer$ outputs are jointly simulatable as independent dummy encryptions relies on per-call independent
	randomness and the absence of extra correlated leakage.
	
	\noindent To that end, to instantiate Assumption~\ref{assump:keybox-opacity} in practice, implementations should ensure:
	(i) domain-separated key derivation and per-purpose/per-slot RNG (or otherwise demonstrably independent coins)
	for randomized operations, including sealing; (ii) nonce/coin generation that is robust against reuse/correlation, or the use of misuse-resistant constructions where applicable; and (iii) a strict allowlist/profile adapter that
	prevents cross-mechanism compositions as in the PKCS\#11 API-level attack literature \cite{Bortolozzo[10]}.
\end{remark}

\begin{definition}[KeyBox API profile]\label{def:keybox-profile}
	\emph{A \emph{KeyBox API profile} is a quadruple
		$(\chi_{\mathrm{adm}},\mathcal F_{\mathrm{adm}},\mathcal F_{\mathrm{KI}},\digamma)$ where
		$\chi_{\mathrm{adm}}$ is the set of admissible derivation routines that may be invoked via $\Load$,
		$\mathcal F_{\mathrm{adm}}$ is the set of admissible operations that may be invoked via $\Use$,
		$\mathcal F_{\mathrm{KI}}\subseteq \mathcal F_{\mathrm{adm}}$ is the (profile-fixed) subset of
		\emph{key-independent} admissible operations, and $\digamma:\mathcal F_{\mathrm{adm}}\to \mathsf{Fam}$ is the
		state-family map.
		A KeyBox instance accepts only calls $\Load(\mu,g,\cdot)$ with $g\in \chi_{\mathrm{adm}}$ and
		$\Use(\mu,f,\cdot)$ with $f\in \mathcal F_{\mathrm{adm}}$; for operations $f\notin \mathcal F_{\mathrm{KI}}$ the
		addressed slot $\mu$ must be populated (contain a resident key), while operations in $\mathcal F_{\mathrm{KI}}$
		may be invoked on empty slots and are interpreted as ignoring $\mu$ (and using only key-independent KeyBox state, if any).}
\end{definition}

Opaque buffer handles returned by $\OpenFromPeer$ are type-tagged in the global encoding:
a handle is always represented as $\KBhdl{\tau}=\langle\texttt{hdl},\tau\rangle$ for $\tau\in\{0,1\}^\lambda$,
and $\mathsf{Resolve}$ substitutes only such tagged handles, thereby preventing accidental collisions with ordinary fields.

\begin{remark}[Linear/one-shot handle consumption]\label{rem:resolve-linear}
	In $\FKeyBox$ (Fig.~\ref{Fdskg}), $\mathsf{Resolve}$ is \emph{one-shot}: every typed handle
	$\KBhdl{\tau}$ that is successfully substituted is deleted from $\mathsf{buf}$ before
	$\mathsf{Resolve}$ returns. Hence, handles returned by $\OpenFromPeer$ are linear resources:
	they cannot be reused across multiple $\Load/\Use$ calls. Protocol steps that need the same sealed
	payload more than once must invoke $\OpenFromPeer$ again to obtain fresh handles; Algorithm~\ref{alg:reg}
	does this explicitly.
\end{remark}

\begin{remark}[Profiles must not hide extractor-relevant oracle logs]\label{rem:no-hidden-uc-prover}
	In the \gROCRP\ model, straight-line extraction for our Fischlin-based UC-NIZK-AoKs requires the prover's
	oracle-log $\mathsf{Log}_{\mathcal P^\ast}$ under the corresponding proof context (Definition~\ref{def:NIZK-AoK},
	Remark~\ref{rem:oracle-tape}). Since KeyBox-internal oracle calls are not exposed at the host/API boundary,
	we require that any proof for which the UC argument invokes oracle-log-based extraction is generated by the
	host/party ITM (outside the KeyBox). Concretely, an admissible KeyBox profile must not include any operation that
	produces the UC-context consistency AoKs used by the surrounding protocol, nor any equivalent proof-generation
	interface that would cause the relevant oracle queries to occur inside the KeyBox boundary.
	
	\smallskip
	\noindent	This requirement is sometimes read as being in tension with the intuition that, in a deployment where the host
	OS/hypervisor is treated as adversarial, one would like all witness-bearing scalars to remain inside the
	trusted boundary. Our model does not claim (and does not require) this stronger property. In SDKG, the witnesses
	used for the UC-context consistency AoKs are treated as NXK-restricted material
	(Remark~\ref{rem:transport-vs-nxk}): they may exist transiently in host RAM during an atomic local step and are then securely erased. The NXK guarantee
	enforced by $\FKeyBox$ is the non-exportability of long-term KeyBox-resident shares, not protection against an
	``always-on'' RAM adversary on an otherwise honest host. Achieving the latter would require either strengthening
	the model to expose an extractor-visible oracle log from within the trusted boundary, or replacing oracle-log-based
	straight-line extraction with a different UC proof mechanism.
\end{remark}

\subsection{State continuity and failure modes}\label{subsec:nxk-state-continuity}
\begin{assumption}[State continuity]\label{assump:tee-continuity}
	\emph{We assume that each $\FKeyBox^{(i)}$ instance models a single hardware root whose internal sealed state is
	state-continuous, i.e., a PPT adversary $\cal A$ cannot (a) roll back $\FKeyBox^{(i)}$ to a previous sealed snapshot, nor
	(b) fork/clone $\FKeyBox^{(i)}$ into two independent instances that can be queried in parallel from the same prior state.
	Equivalently, $\cal A$ has no interface that resets $\FKeyBox^{(i)}$'s private state to an earlier value.}
\end{assumption}

\begin{assumption}[Seed integrity invariant]\label{assump:seed-integrity}
	\emph{Each $\FKeyBox^{(i)}$ instance stores a static PRF seed
		$\seed_i\in\{0,1\}^\lambda$, provisioned at enrollment/manufacturing/secure
		setup, that is independent of every KeyBox-resident signing share~$k$.
		We require:
		\begin{enumerate}[leftmargin=*,nosep]
			\item Rollback-invariance: A rollback or reset of
			mutable device/host state (or even of the KeyBox's mutable operation
			state $\mathsf{st}[\cdot]$) does not alter $\seed_i$.
			Concretely, $\seed_i$ resides in a write-once or append-only
			region of the KeyBox's sealed state that is outside the scope of any
			rollback-vulnerable mutable-state snapshot.
			\item Secrecy: $\seed_i$ is never exposed at the KeyBox
			API boundary: no admissible operation returns $\seed_i$ or any
			value from which $\seed_i$ can be efficiently recovered.
			\item Independence from signing shares: $\seed_i$ is
			sampled independently of every KeyBox-resident signing share
			$k_{i,\mu}$ and of all other parties' seeds $\seed_j$ for $j\neq i$.
		\end{enumerate}
		When the KeyBox is realized by a hardware root of trust, $\seed_i$ is typically derived from the hardware's unique
		device secret during secure provisioning via a domain-separated KDF,
		using a purpose tag disjoint from all other key-derivation purposes. If the seed integrity invariant is violated---e.g., $\seed_i$ is corrupted, rolled back to a stale value, or leaked at the API boundary---then deterministic nonce derivation can no longer be relied upon.
		In particular, if $\seed_i$ is revealed then LinOS prover nonces become predictable and (except with
		negligible probability over transcript generation) a single accepting LinOS transcript suffices to recover the
		resident share~$k$.
		More generally, if $\seed_i$ is not rollback-invariant, rollback can again lead to nonce-reuse style
		failures that LinOS is designed to prevent.
		As a consequence, seed provisioning and protection must be treated as part of the state-continuity engineering
		budget.}
\end{assumption}

\begin{remark}[Why an independent seed]
	\label{rem:independent-seed}
	One might consider deriving the nonce seed from the resident signing
	key~$k$ itself via a KDF (as in EdDSA~\cite{RFC8032}), but this
	entangles the PRF security hypothesis with the discrete-log assumption
	on~$k$: the reduction must argue that PRF outputs remain pseudorandom
	even when the adversary sees $K=k\G$ and interacts with $k$ via the
	admissible KeyBox profile.  While such a reduction can be carried out
	under a random-oracle/KDF assumption (cf.\ the analysis of EdDSA
	deterministic nonces), it tightens the bound and introduces an
	additional modeling assumption.  Keeping $\seed_i$ independent of $k$
	yields a clean reduction: the PRF game is played entirely over
	$\seed_i$, whose secrecy is a standalone KeyBox property
	(Assumption~\ref{assump:seed-integrity}), orthogonal to any DL-based
	argument.  If a deployment must derive the seed from~$k$ (e.g., due
	to hardware constraints on independent secret provisioning), then
	the security argument additionally requires modeling the KDF as a
	random oracle (or a dual-PRF) and the resulting bound includes a
	KDF-security term.  We flag this as a deployment caveat and recommend
	the independent-seed instantiation.
\end{remark}

In our NXK setting, a corrupted party may delegate witness-bearing
	computation to a state-continuous KeyBox instance.
	By Assumption~\ref{assump:tee-continuity}, such an instance cannot be rolled back or forked to answer the
	same commitment under two different challenges, and therefore extraction strategies that fundamentally
	depend on rewinding are not implementable at the hardware boundary. Note that state continuity is a strong assumption that real implementations may only approximate, and may require additional mechanisms to prevent rollback/forking-style state-continuity attacks \cite{Lore[25]}. In practice state continuity can be approximated using a monotonic freshness mechanism---e.g., hardware-backed monotonic counters in secure NVRAM/TPM, trusted time, or a server-maintained freshness oracle; details follow below.
	
	Even under approximate state continuity, a single rollback of the KeyBox's
	mutable operation state can---in a na\"{\i}ve design that samples and stores
	prover nonces---enable Schnorr special-soundness extraction and thus
	catastrophic disclosure of the resident share.
	To eliminate this specific failure mode, LinOS (Fig.~\ref{LinOS}) derives
	all Fischlin prover nonces deterministically from a static, rollback-invariant
	seed~$\seed$ stored inside the KeyBox (Assumption~\ref{assump:seed-integrity})
	and session-bound inputs $(\sid,K,i)$.
	Under this derivation, replaying the same session after rollback reproduces the
	same nonces and hence the same proof transcript; and distinct sessions yield
	pseudorandomly independent nonces under PRF security
	(Lemma~\ref{lem:rollback-robust}).
	This is a targeted hardening: it removes the ``one rollback $\Rightarrow$ key
	disclosure'' vector, but does not by itself close all state-continuity gaps
	(Remark~\ref{rem:rollback-scope}).
	
\begin{definition}[State-continuity failure event and parameter]\label{def:bad-sc}
	\emph{Let $\Bad_{\mathsf{sc}}$ denote the event that, in a given execution, some KeyBox instance
	violates Assumption~\ref{assump:tee-continuity}, i.e., its sealed state is rolled back to a prior value
	or forked/cloned into two independently queryable continuations from the same prior state.
	In a concrete deployment that approximates state continuity with an anti-rollback mechanism, let
	$\varepsilon_{\mathsf{sc}}(\lambda)$ be any bound taken over all randomness, faults, and adversarial actions,
	on $\Pr[\Bad_{\mathsf{sc}}]$ for the lifetime of an execution at security parameter $\lambda$.}
\end{definition}

\begin{lemma}[Additive degradation under approximate state continuity]\label{lem:sc-additive}
	Consider any security statement in this paper proved in the $\FKeyBox$-hybrid model under
	Assumption~\ref{assump:tee-continuity}, yielding an advantage bound of the form $\negl(\lambda)$.
	In a concrete realization in which Assumption~\ref{assump:tee-continuity} holds except with probability
	at most $\varepsilon_{\mathsf{sc}}(\lambda)$ (Definition~\ref{def:bad-sc}), the corresponding advantage
	bound becomes $\negl(\lambda)+\varepsilon_{\mathsf{sc}}(\lambda)$.
\end{lemma}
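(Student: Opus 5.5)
The plan is a standard ``identical-until-bad'' argument. Fix any environment $\cZ$, adversary $\cA$, and (for UC statements) the simulator $\s$ that the $\FKeyBox$-hybrid proof supplies. The comparison involves two experiments.
\begin{itemize}
\item $\mathsf{Real}_{\mathrm{sc}}$: the protocol runs against an idealized KeyBox that satisfies Assumption~\ref{assump:tee-continuity} exactly.
\item $\mathsf{Real}_{\mathrm{conc}}$: the protocol runs against the concrete realization, which approximates state continuity with an anti-rollback mechanism.
\end{itemize}
First, I will couple the two executions on common randomness: the same coins for $\cZ$, $\cA$, the parties, and the KeyBox internals, including the static seed, which Assumption~\ref{assump:seed-integrity} makes rollback-invariant. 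The point of the coupling is that, as long as no KeyBox state is rolled back or forked, every KeyBox response in the concrete execution is computed by exactly the same state-transition rule $f(\mathrm{\Lambda}[\mu],\mathsf{st}[\mu,\varphi],m)$ as in $\FKeyBox$. So the two executions stay syntactically identical up to the first moment $\Bad_{\mathsf{sc}}$ (Definition~\ref{def:bad-sc}) occurs.

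Second, I will apply the fundamental lemma of game playing. Because the games are identical until $\Bad_{\mathsf{sc}}$, and $\Bad_{\mathsf{sc}}$ is defined identically in both runs, we get
\[
\bigl|\Pr[\cZ\Rightarrow 1\text{ in }\mathsf{Real}_{\mathrm{conc}}]-\Pr[\cZ\Rightarrow 1\text{ in }\mathsf{Real}_{\mathrm{sc}}]\bigr|\le \Pr[\Bad_{\mathsf{sc}}]\le\varepsilon_{\mathsf{sc}}(\lambda).
\]
Here the last inequality is exactly the definition of $\varepsilon_{\mathsf{sc}}$, which bounds the event over all randomness, faults, and adversarial actions during the execution's lifetime. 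Combining this with the assumed bound $|\Pr[\mathsf{Real}_{\mathrm{sc}}]-\Pr[\mathsf{Ideal}]|\le\negl(\lambda)$, the triangle inequality gives $\negl(\lambda)+\varepsilon_{\mathsf{sc}}(\lambda)$. For game-based statements the same argument applies with the success probability in place of $\cZ$'s output.

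The main obstacle is making the coupling rigorous. The ``identical until bad'' claim needs the concrete KeyBox to behave as $\FKeyBox$ in every respect other than continuity; any other deviation must be absorbed into the already-proved $\negl(\lambda)$ term or excluded by definition. I will state this explicitly: the realization differs from Fig.~\ref{Fdskg} only by possibly exposing rollback/fork interfaces, and $\Bad_{\mathsf{sc}}$ is the event that such an interface changes the state returned to a subsequent call.

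A second subtlety is that $\Bad_{\mathsf{sc}}$ must be decidable from the joint execution so that it is well defined in both coupled games. This holds because the event is determined by the KeyBox's sequence of internal states. Note also that no reduction is needed, so the bound is additive with no tightness loss; I do not need to identify which downstream argument (e.g.\ Fischlin extraction) a rollback would break.
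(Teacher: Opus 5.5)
Your proposal is correct and uses essentially the same bad-event argument as the paper. The paper writes $\Pr[\mathcal E]\le\Pr[\mathcal E\wedge\neg\Bad_{\mathsf{sc}}]+\Pr[\Bad_{\mathsf{sc}}]$ and observes that off $\Bad_{\mathsf{sc}}$ the concrete execution coincides with the state-continuous one. Your identical-until-bad coupling plus triangle inequality is a slightly more careful rendering: it treats the UC advantage as a difference of probabilities rather than a single event, and it avoids the paper's detour through $\Pr[\mathcal E\mid\neg\Bad_{\mathsf{sc}}]$. The one detail worth making explicit is that the simulator you invoke is the one supplied for the hybrid-model adversary that runs $\cA$ and discards any rollback/fork requests.
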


\begin{proofsketch}
	Let $\mathcal{E}$ be the relevant distinguishing/forgery event.
	Then $$\Pr[\mathcal{E}] \le \Pr[\mathcal{E}\wedge \neg\Bad_{\mathsf{sc}}] + \Pr[\Bad_{\mathsf{sc}}]
	\le \Pr[\mathcal{E}\mid \neg\Bad_{\mathsf{sc}}] + \varepsilon_{\mathsf{sc}}(\lambda).$$
	Conditioned on $\neg\Bad_{\mathsf{sc}}$, the execution matches the idealized model with
	state continuity, so $\Pr[\mathcal{E}\mid \neg\Bad_{\mathsf{sc}}]\le \negl(\lambda)$. \qed
\end{proofsketch}

\begin{definition}[Secure-erasure failure event and parameter]\label{def:bad-er}
	\emph{Let $\Bad_{\mathsf{er}}$ denote the event that, in a given execution, the
		secure-erasure / atomic-activation semantics of Definition~\ref{def:ace} is violated for
		some honest party’s host state. Concretely, $\Bad_{\mathsf{er}}$ occurs if there exist an honest
		party $P_i$ and a host-resident value $v$ that the protocol designates as erased (or transient
		within one honest activation) such that $v$ nevertheless becomes available to the adversary
		after the intended erasure point without a prior corruption of $P_i$.  
		In a concrete deployment, let $\varepsilon_{\mathsf{er}}(\lambda)$ be any bound taken over all randomness,
		faults, and adversarial actions, on $\Pr[\Bad_{\mathsf{er}}]$ for the lifetime of an execution at
		security parameter $\lambda$.}
\end{definition}

\begin{lemma}[Additive degradation under approximate secure erasures]\label{lem:er-additive}
	Consider any security statement in this paper proved in the UC model with adaptive corruptions
	with secure erasures (Definition~\ref{def:ace}), yielding an advantage bound of the form $\negl(\lambda)$.
	In a concrete realization in which Definition~\ref{def:ace} holds except with probability at most
	$\varepsilon_{\mathsf{er}}(\lambda)$ (Definition~\ref{def:bad-er}), the corresponding advantage bound becomes
	$\negl(\lambda)+\varepsilon_{\mathsf{er}}(\lambda)$.
\end{lemma}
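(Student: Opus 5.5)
The plan mirrors the proof sketch of Lemma~\ref{lem:sc-additive}, with $\Bad_{\mathsf{er}}$ in place of $\Bad_{\mathsf{sc}}$. Fix the security statement and let $\mathcal{E}$ be its distinguishing or forgery event, measured in the concrete realization $\Pi_{\mathrm{conc}}$. Let $\Pi_{\mathrm{ideal}}$ be the idealized execution in which the semantics of Definition~\ref{def:ace} hold exactly: corruptions happen only between honest activations, and erased values are never revealed.

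The first step is a coupling. Run $\Pi_{\mathrm{conc}}$ and $\Pi_{\mathrm{ideal}}$ on the same environment, adversary and party coins, and on the same hybrid-functionality coins. The two executions evolve identically up to the first point where some honest party's designated-erased or transient host value reaches the adversary without a prior corruption of that party. By Definition~\ref{def:bad-er}, that point is exactly where $\Bad_{\mathsf{er}}$ occurs. Every other adversarial interface is the same in both executions: the corruption interface, $\Fchan$, $\Fpub$, $\FKeyBox$, and \gROCRP. Hence $\mathcal{E}\wedge\neg\Bad_{\mathsf{er}}$ has the same probability in both executions, and this gives the decomposition
\[
\Pr[\mathcal{E}] \le \Pr[\mathcal{E}\wedge\neg\Bad_{\mathsf{er}}] + \Pr[\Bad_{\mathsf{er}}]
\le \Pr_{\Pi_{\mathrm{ideal}}}[\mathcal{E}] + \varepsilon_{\mathsf{er}}(\lambda).
\]

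The second step invokes the original statement: $\Pr_{\Pi_{\mathrm{ideal}}}[\mathcal{E}]\le\negl(\lambda)$, because $\Pi_{\mathrm{ideal}}$ is exactly the model in which that statement was proved.

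The step I expect to be the main obstacle is justifying this coupling. I would avoid the tempting route of conditioning on $\neg\Bad_{\mathsf{er}}$ directly. Conditioning can skew the distribution, since the adversary could steer whether erasure fails, so one cannot simply cite the idealized bound for the conditioned execution. Instead I would phrase the argument as an identical-until-bad game hop (the fundamental lemma of game playing). The condition this needs is that $\Bad_{\mathsf{er}}$ is determined by the common execution prefix before the games diverge. That holds by definition, because $\Bad_{\mathsf{er}}$ is triggered exactly at the first leakage event that separates the two games.

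For UC statements, $\mathcal{E}$ is the environment's output distinguishing real from ideal. I would apply the hop on the real side only, since the ideal-world execution with the simulator does not involve host erasures. The triangle inequality then yields a total advantage of $\negl(\lambda)+\varepsilon_{\mathsf{er}}(\lambda)$. If both erasure and state-continuity failures are present, the same argument composes with Lemma~\ref{lem:sc-additive} through a union bound, adding $\varepsilon_{\mathsf{sc}}(\lambda)$.
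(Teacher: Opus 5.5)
Your argument is correct and has the same skeleton as the paper. You split $\Pr[\mathcal{E}]\le\Pr[\mathcal{E}\wedge\neg\Bad_{\mathsf{er}}]+\Pr[\Bad_{\mathsf{er}}]$, bound the second term by $\varepsilon_{\mathsf{er}}(\lambda)$, and appeal to the idealized-model result for the first. Where you differ is in how you justify the first term.

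The paper replaces $\Pr[\mathcal{E}\wedge\neg\Bad_{\mathsf{er}}]$ by $\Pr[\mathcal{E}\mid\neg\Bad_{\mathsf{er}}]$. It then asserts that the conditioned execution ``matches the idealized model,'' so the idealized $\negl(\lambda)$ bound applies. You instead couple the concrete and idealized executions on shared coins and use identical-until-bad, which gives $\Pr_{\mathrm{conc}}[\mathcal{E}\wedge\neg\Bad_{\mathsf{er}}]=\Pr_{\mathrm{ideal}}[\mathcal{E}\wedge\neg\Bad_{\mathsf{er}}]\le\Pr_{\mathrm{ideal}}[\mathcal{E}]$.

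Yours is the more careful route. Conditioning on an event the adversary can influence does not reproduce the unconditioned idealized distribution. What the coupling actually gives for the paper's quantity is only $\Pr[\mathcal{E}\mid\neg\Bad_{\mathsf{er}}]\le\Pr_{\mathrm{ideal}}[\mathcal{E}]/(1-\varepsilon_{\mathsf{er}}(\lambda))$. So the paper's intermediate claim is guaranteed only when $1-\varepsilon_{\mathsf{er}}(\lambda)$ is at least inverse-polynomial. An adversary that induces erasure failure exactly on its losing runs can push the conditional probability up to $1$. Your argument, by contrast, gives the final bound $\negl(\lambda)+\varepsilon_{\mathsf{er}}(\lambda)$ with no such proviso.

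You also handle the UC case explicitly: you treat it as a difference of two output probabilities, bounded by a hop on the real side plus the triangle inequality. The paper's sketch writes the distinguishing advantage as the probability of a single event.

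The paper's version buys brevity and reuses the template of Lemma~\ref{lem:sc-additive} verbatim. Yours needs a little more setup: the bad flag must be defined in the idealized game as a function of the shared coins, and any extra physical-leakage interface available to the concrete adversary must be answered as in the no-leak case before invoking the idealized theorem. In exchange, every step is actually justified.
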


\begin{proofsketch}
	Let $\mathcal{E}$ be the relevant distinguishing/forgery event.
	Then $$\Pr[\mathcal{E}] \le \Pr[\mathcal{E}\wedge \neg\Bad_{\mathsf{er}}] + \Pr[\Bad_{\mathsf{er}}]
	\le \Pr[\mathcal{E}\mid \neg\Bad_{\mathsf{er}}] + \varepsilon_{\mathsf{er}}(\lambda).$$
	Conditioned on $\neg\Bad_{\mathsf{er}}$, the execution matches the idealized model with secure erasures,
	so $\Pr[\mathcal{E}\mid \neg\Bad_{\mathsf{er}}]\le \negl(\lambda)$. \qed
\end{proofsketch}

	We do not restate $\varepsilon_{\mathsf{sc}}(\lambda)$ or $\varepsilon_{\mathsf{er}}(\lambda)$ in every theorem.
	By Lemmas~\ref{lem:sc-additive} and~\ref{lem:er-additive}, negligible security bounds in the remainder of the paper
	should be read as $\negl(\lambda)+\varepsilon_{\mathsf{sc}}(\lambda)+\varepsilon_{\mathsf{er}}(\lambda)$
	for the relevant deployment/execution.
	
Assumption~\ref{assump:tee-continuity} is a hard safety requirement for any KeyBox interface that is
intended to be one-shot. If the anti-rollback mechanism enforcing state continuity fails even once---e.g.,
a monotonic counter wraps, sealed snapshots can be restored from backup, or a freshness oracle is bypassed---then
rollback/forking becomes possible. In an unhardened design
that samples and stores prover nonces in mutable state, the degradation is
typically not graceful: the adversary may obtain multiple outputs from an
interface intended to be one-shot and can often recover the resident share
via special soundness from two responses under the same commitment. We therefore parameterize the failure of state continuity by the execution-level bad event
$\Bad_{\mathsf{sc}}$ (Definition~\ref{def:bad-sc}). In any deployment where $\Pr[\Bad_{\mathsf{sc}}]\le
\varepsilon_{\mathsf{sc}}(\lambda)$, Lemma~\ref{lem:sc-additive} implies that every advantage bound increases additively by $\varepsilon_{\mathsf{sc}}(\lambda)$. Later, we employ a method to eliminate this specific catastrophic path by deriving all Fischlin prover nonces deterministically from a static, rollback-invariant seed while other rollback-affected state remains under the
general $\varepsilon_{\mathrm{sc}}$.

\paragraph{Concrete accounting for \texorpdfstring{$\varepsilon_{\mathsf{sc}}(\lambda)$}{esc(lambda)} (non-normative).}
The cryptographic results in this paper do not attempt to bound $\varepsilon_{\mathsf{sc}}(\lambda)$; it is a
deployment/engineering parameter that upper-bounds the probability of any rollback/fork event
$\Bad_{\mathsf{sc}}$ over an execution (Definition~\ref{def:bad-sc}).
We record two common approximation patterns to clarify what typically contributes to $\varepsilon_{\mathsf{sc}}$:

\noindent (i) Monotonic counters:
Suppose state continuity is approximated by a $b$-bit monotonic counter that is incremented once per state advance
(e.g., per sealing epoch), and the implementation is engineered to \emph{fail closed} (refuse to unseal/advance)
before wrap-around and to require re-provisioning/rekeying before exhaustion. Then counter wrap-around contributes
zero to $\varepsilon_{\mathsf{sc}}(\lambda)$ for any deployment whose lifetime advance budget $N_{\max}$ satisfies
$N_{\max}<2^b$ (ignoring physical faults). For scale, $2^{64}\approx 1.8\times 10^{19}$ increments; even at
$10^6$ advances/day, exhaustion would occur only after $\approx 5\times 10^{10}$ years. In practice, however,
the effective budget is often dominated by write-endurance limits, rate limits, or administrative rotation,
rather than bit-width; the same accounting applies by replacing $2^b$ with the enforced safe-advance cap.

\medskip
\noindent (ii) Server-/time-based freshness:
For oracle- or time-based approaches, the dominant contribution to $\varepsilon_{\mathsf{sc}}$ is typically
policy, not cryptographic guessing: if the device ever continues to unseal/advance without a fresh token or
a non-decreasing trusted-time reading (i.e., it fails open), we count that execution under $\Bad_{\mathsf{sc}}$.
Conversely, if the mechanism is engineered to fail closed, outages impact availability but do not increase
$\varepsilon_{\mathsf{sc}}(\lambda)$.
Appendix~\ref{App1} discusses engineering trade-offs and re-provisioning strategies.

\subsection{Hybrid execution model and NXK-restricted material}\label{subsec:nxk-hybrid}
\begin{definition}[\(\mathcal F_{\mathsf{KeyBox}}\)-hybrid model]\label{KeyBox}
	\emph{Let $\FKeyBoxOf{i}$ denote the ideal functionality for $P_i$'s KeyBox$_i$. The $\mathcal F_{\text{KeyBox}}$-hybrid model is defined by a PPT environment~$\mathcal Z$,
		a PPT adversary~$\mathcal A$, a set of PPT parties
		$\{P_i\}_{i\in[n]}$, and a collection of ideal
		functionalities $\bigl\{\mathcal F_{\text{KeyBox}}^{(i)}\bigr\}_{i\in[n]}$ that are created as:
		\begin{enumerate}
			\item Instantiation: for every $i\in[n]$, generate a fresh functionality instance
			$\mathcal F_{\text{KeyBox}}^{(i)}$, initialized with an empty state.
			\item Communication between parties is performed via $\Fchan$ for confidential authenticated point-to-point messages,
			and via $\Fpub$ for authenticated transcript-public dissemination.
			\item Adaptive corruptions with secure erasures: parties are corrupted adaptively under Definition~\ref{def:ace}.
			Upon corruption of $P_i$, $\cA$ learns only $P_i$’s current host state (outside the KeyBox boundary);
			any values explicitly erased by the protocol are not revealed. Thereafter, $\cA$ controls $P_i$ and may invoke
			$\FKeyBoxOf{i}.\Load$ and $\FKeyBoxOf{i}.\Use$, but KeyBox-resident secret state (in particular the map
			$\mathrm{\Lambda}$ of resident shares) is never revealed.
	\end{enumerate}}
\end{definition}

In the $\FKeyBox$-hybrid model, ``$P_i$ invokes $\FKeyBox^{(i)}.\Load/\Use(\cdots)$'' denotes a party-local call over the internal channel, executed immediately upon activation if $P_i$ is honest. A corrupted $P_i$ leaves the timing and admissible inputs to the adversary, i.e., the ideal functionality cannot directly write into $\FKeyBox^{(i)}$ or force installation/registration for corrupted parties.

\begin{remark}[KeyBox-driver wrapper for ideal-world KeyBox calls]\label{rem:kb-wrapper}
	In Fig.~\ref{Fdskg}, $\FKeyBox^{(P_{\mathsf{own}})}$ accepts $(\Load,\mu,g,m)$ and $(\Use,\mu,f,m)$ only from its
	owner party $P_{\mathsf{own}}$. Thus, under standard UC semantics an ideal functionality cannot directly issue
	$\Load/\Use$ to $\FKeyBox^{(i)}$. Throughout, whenever an ideal functionality description says
	``have $P_i$ invoke $\FKeyBox^{(i)}.\Load/\Use(\cdot)$,'' this is shorthand for the following fixed mechanism:
	in the ideal execution, the party $P_i$ is composed with a deterministic KeyBox-driver wrapper $\WKB^{(i)}$
	that behaves like the dummy party on all external ports, but additionally implements an internal command port
	from ideal functionalities. Upon receiving $(\KBcmd,\sid,\mathsf{ops})$, where $\mathsf{ops}$ is a list of KeyBox calls
	of the form $(\Load,\mu,g,m)$ and/or $(\Use,\mu,f,m)$, if $P_i$ is honest then $\WKB^{(i)}$ executes the listed calls
	sequentially against its local instance $\FKeyBox^{(i)}$ and returns $(\KBret,\sid,\mathsf{res})$ with the list of
	return values. If $P_i$ is corrupted, $\cA$ controls $\WKB^{(i)}$ and may delay/modify/ignore these commands.
\end{remark}

\begin{remark}[NXK-restricted state]
	\label{rem:transport-vs-nxk}
We call a (possibly structured) bitstring $v\in\{0,1\}^*$ \emph{share-deriving} only
relative to a designated KeyBox share.
Fix a party $P_i$ and a local KeyBox slot $\mu\in\SlotSpace$ in $\FKeyBox^{(i)}$
that contains a scalar long-term share $k_{i,\mu}\in\Zp$.
We say that $v$ is share-deriving for $(i,\mu)$ if there exist PPT-computable functions
$a(\cdot),b(\cdot),y(\cdot)$ (fixed independently of the secret share $k_{i,\mu}$) such that
on input $v$ they output $a(v)\in \Zp^\ast$, $b(v)\in \Zp$, and $y(v)\in \Zp$ with
\[
y(v) = a(v)\cdot k_{i,\mu} + b(v) \pmod p.
\]
Equivalently, from $v$ alone one can compute a \emph{caller-invertible} affine image
$y(v)=L_{a(v),b(v)}(k_{i,\mu})$ together with map parameters $(a(v),b(v))$, where
$L_{a,b}(x):=ax+b$, so the caller can recover
$k_{i,\mu}=a(v)^{-1}(y(v)-b(v))\bmod p$.
When the target slot is clear from context, we omit $(i,\mu)$ and simply say that $v$ is
share-deriving. If $(a,b)$ are fixed by the KeyBox profile or chosen by the caller, the
functions $a(\cdot),b(\cdot)$ may hard-code those public parameters or parse them from $v$. Because we fix an injective, self-delimiting encoding $\langle\cdot\rangle$ of mixed tuples into
	$\{0,1\}^*$, this definition applies equally to any finite collection of values
	$(v_1,\ldots,v_t)$ by taking $v := \langle v_1,\ldots,v_t\rangle$. We will sometimes abuse notation and refer to such a collection as \emph{share-deriving material}.
	
	\noindent We treat share-deriving material as \emph{NXK-restricted}: it must not appear in the adversary-visible transcript beyond
	the explicit leakage modeled by our channels (e.g., $\Fchan$'s length leakage), and it must not be written to persistent
	storage outside a KeyBox. Honest parties may nevertheless handle share-deriving material transiently in host memory and
	may transmit it over authenticated, confidential channels (modeled by $\Fchan$) when required by the surrounding protocol,
	provided that (i) whenever it is used to install a long-term share in $\FKeyBox^{(i)}$ it is delivered only via the
	internal secure channel between $P_i$ and $\FKeyBox^{(i)}$, and (ii) it is securely erased from host memory immediately
	after its last use. This transient handling is protocol-internal and is distinct from a KeyBox export interface: under an
	admissible KeyBox profile (Assumption~\ref{assump:keybox-opacity}), $\FKeyBox$ never returns resident shares (or
	caller-invertible affine images) in the clear.
	Consequently, an adversary can learn enough share-deriving material to recompute an honest device's share only by
	corrupting the relevant endpoint(s) during the protocol before erasure,
	not from the public transcript alone. Share-deriving is intentionally a transcript-only notion: it captures values from which the caller can recover
	$k_{i,\mu}$ from $v$ alone (i.e., without any additional secret inputs).
	An API can still leak a resident share indirectly by returning an output that is not share-deriving by itself but
	becomes share-deriving once combined with caller-held secrets permitted by the profile (e.g., a ciphertext under a
	caller-supplied wrapping key). Such caller-recoverable exports are ruled out by key-opacity, so we explicitly exclude
	caller-decryptable wrapping/export: if the caller knows the decryption key, it can decrypt to obtain share-deriving
	plaintext (or an invertible affine image), which is not simulatable from $\PubMap(k_{i,\mu})$ alone and therefore
	violates key-opacity \cite{Bortolozzo[10]}.
\end{remark}

\subsection{Real-world instantiations of admissible KeyBox profiles}
\label{sec:keybox-realizations}
Our KeyBox idealization is deliberately \emph{profile-centric} (Definition~\ref{def:keybox-profile}): the surrounding
protocol fixes an admissible KeyBox profile \((\chi_{\mathsf{adm}},\mathcal F_{\mathsf{adm}})\), and the KeyBox is
assumed to accept only those derivations and operations. This is essential as API-level non-exportability alone does not suffice for our security arguments. Many deployed keystore/HSM APIs expose operations that either (a) directly leak share-deriving material, or (b) let a
caller recover a resident share indirectly (e.g., via caller-decryptable wrapping/export under a caller-controlled key),
even when raw secret key bytes are nominally non-exportable. In our model this is captured by (i) pinning the profile and (ii) requiring key-opacity
(Assumption~\ref{assump:keybox-opacity}) for the induced external transcript, i.e., that everything observable outside the
KeyBox boundary is simulatable given only the corresponding public key \(\PubMap(k)\) and the public
query inputs. Further, Remark~\ref{rem:transport-vs-nxk} treats all share-deriving material (including caller-invertible affine images
of a stored share) as NXK-restricted, and separately excludes caller-decryptable wrapping/export via key-opacity even
though the ciphertext alone need not be share-deriving under the transcript-only definition.

\subsubsection{Implementation note (non-normative).}
\label{subsubsec:keybox-candidates}
As mentioned earlier, this paper's security proofs assume a profile-centric KeyBox, satisfying key-opacity and state continuity	(Assumption~\ref{assump:tee-continuity}). In practice, such a profile can be enforced
	by construction by interposing a narrow ``profile adapter'' between the protocol and a broader vendor API---e.g., a minimal TEE enclave used purely as a keystore, an
	attested enclave$\leftrightarrow$KMS integration, or an HSM under a strict allowlist.
	The adapter must (i) forbid any vendor-API operation that returns share-deriving outputs of a resident key (i.e., caller-invertible affine images of an already-installed long-term share; cf.\ Remark~\ref{rem:transport-vs-nxk}), while still permitting the protocol's own key-independent operations (such as $\textsf{SDKG.LeafInit}\in\mathcal{F}_{\mathrm{KI}}^{\mathsf{SDKG}}$), whose outputs are share-deriving only prospectively and are handled as NXK-restricted transient host state with secure erasure; (ii) prohibit caller-decryptable wrapping/export, preserving key-opacity; (iii) pin sealing recipients to attested identities while engineering explicit freshness/anti-rollback to approximate state continuity.
	Appendix~\ref{app:keybox-impls} summarizes candidate deployment families
	(Table~\ref{tab:keybox-candidates}) and gives a profile-capture checklist. Throughout, when we say that some computation runs ``inside the KeyBox boundary,'' we include any minimal, pinned ``profile adapter'' that is itself part of the same attested, state-continuous trusted boundary as the KeyBox; and 	(iv) forbid any KeyBox API primitive that can generate the UC-extractable consistency AoKs used by the protocol
	(e.g., Fischlin-based UC-context proofs): these AoKs must be generated by the host/party ITM so that the UC simulator
	can record the prover's \gROCRP\ query log required for straight-line extraction (Remark~\ref{rem:oracle-tape}). It must also ensure that randomized operations (including sealing) use fresh, domain-separated randomness across slots and across calls (Remark~\ref{rem:slot-sep-idealization}); otherwise the slot-separability component
	of key-opacity may fail in spite of IND-CCA security of the abstract scheme.

\section{The Conflict: Verifiable Sharing vs. NXK}
\label{sec:why-vss}
In this section, we isolate the structural reason that mandates UC-secure DKG protocols to enforce the core (R)VSS obligations:
secrecy against unauthorized sets and uniqueness (a single well-defined shared secret) together with
affine consistency of honest parties' local outputs. In the literature, these obligations are ensured by a
VSE layer, which is classically instantiated via (R)VSS and its
variants (PVSS/AVSS) using polynomial sharing \cite{Shamir[79]}, commitments, and complaint/opening logic. In other UC(-style) DKGs,
the same role is alternatively realized via commitment-and-proof / (N)IZK-based authenticated sharing
that prevents equivocation and certifies transcript-defined affine relations (e.g., \cite{LN18,CGGMP21}).

\begin{definition}[DL experiment]\label{def:dl}
	\emph{Let $\mathbb{G}$ be a cyclic group of prime order $p=p(\lambda)$ with generator $\G\in\mathbb{G}$.
		For a PPT adversary $\mathcal{A}$, define
		\[
		\Adv^{\mathrm{dl}}_{\mathcal{A}}(\lambda)
		:= \Pr\Big[x \leftarrowdollar \mathbb{Z}_p^*;\ X:=x\G;\ x' \leftarrow \mathcal{A}(\G,X) : x'=x \Big].
		\]}
\end{definition}

\begin{assumption}[DL]\label{assump:dl}
	\emph{For all PPT adversaries $\mathcal{A}$, $\Adv^{\mathrm{dl}}_{\mathcal{A}}(\lambda)\le \negl(\lambda)$.}
\end{assumption}

\begin{definition}[DDLEQ game]\label{def:ddleq}
	\emph{Let $\mathbb{G}$ be a cyclic group of prime order $p=p(\lambda)$ with (public) generators $\G,\h\in\mathbb{G}$.
	Consider the experiment that samples $r,s\leftarrowdollar \Zps$ and a bit $b\leftarrowdollar\{0,1\}$.
	If $b=1$ set $(A,B):=(r\G,r\h)$; if $b=0$ set $(A,B):=(r\G,s\h)$.
	An adversary $\mathcal{A}$ is given $(\G,\h,A,B)$ and outputs a bit $b'$.
	Define
	\[
	\Adv^{\mathrm{ddleq}}_{\mathcal A}(\lambda):=\left|\Pr[b'=b]-\tfrac12\right|.
	\]}
\end{definition}

\begin{assumption}[DDLEQ]\label{assump:ddleq}
	\emph{For all PPT adversaries $\mathcal A$,  
	$\Adv^{\mathrm{ddleq}}_{\mathcal A}(\lambda)\le \negl(\lambda)$.}
\end{assumption}

\begin{remark}[Terminology: DDLEQ is DDH]\label{rem:ddleq-vs-ddh}
	Definition~\ref{def:ddleq} is exactly the standard DDH distinguishing game on the tuple
	$(\G,\h,A,B)$, written in the “same-exponent across two bases” form matching DLEQ statements.
	We use the name DDLEQ only to align notation with the Chaum--Pedersen relation.
	(If $\log_{\G}(\h)$ were known, the game would be trivial by checking $B=(\log_{\G}\h)\cdot A$.)
\end{remark}

We use the standard UC framework with ITMs as in Canetti~\cite{Can[01]}.
Let $\Exec(\mathrm{\Psi},\cA,\cZ,\lambda)$ and $\Ideal(\F,\s,\cZ,\lambda)$ denote the standard real and ideal execution
ensembles (Exec/Ideal experiments).

\begin{definition}[UC realization]
	\label{def:uc-realization}
	\label{UC1} 
	\label{UC2} 
	\emph{A protocol $\mathrm{\Psi}$ UC-realizes an ideal functionality $\F$ if for every PPT adversary $\cA$
	there exists a PPT simulator $\s$ such that for every PPT environment $\Z$,
	\[
	\Exec(\mathrm{\Psi},\cA,\cZ,\lambda)\ \approx_c\ \Ideal(\F,\s,\cZ,\lambda).
	\]}
\end{definition}

\begin{note}
	Our only protocol security notion is UC realization. Other experiment‑based definitions in the paper are standard assumptions or local properties of underlying resources/primitives used solely as hypotheses to establish UC realization of our ideal functionalities.
\end{note}

Informally, a DL-based DKG outputs:
(i) a public key $K$ to everyone and
(ii) a secret sharing of its discrete log $k$ among the parties according to the access structure $\mathrm{\Gamma}$,
so that authorized sets $A \in \mathrm{\Gamma}$ can reconstruct $k$ or jointly sign using their shares while unauthorized sets learn no (non-negligible) information about $k$ beyond $K$. In a UC formulation, the (DL-based) ideal functionality $\F_{\mathsf{DKG}}^\mathrm{\Gamma}$ samples a fresh secret
$k \leftarrowdollar \mathbb{Z}_p$, distributes shares $(k_1,\dots,k_n)$ consistent with $\mathrm{\Gamma}$,
and outputs $K := k\G$. This implies three tightly coupled obligations that are classically associated with (R)VSS:

\begin{itemize}
	\item Secrecy: for any corruption set $B \notin \mathrm{\Gamma}$, the adversary learns no (non-negligible) information
	about $k$ beyond $K$.
	\item Uniqueness (strong correctness): whenever honest parties accept completion, there exists a single
	value $k$ such that every authorized set $A\in\mathrm{\Gamma}$ can reconstruct that same $k$.
	\item Affine consistency: honest parties' local outputs are consistent with one global sharing instance
	of $k$ under $\mathrm{\Gamma}$: the transcript cannot induce incompatible sharings across different honest subsets.
\end{itemize}

\subsection{DKG subsumes dealerless (R)VSS}
\label{sec:dkg-implies-vss}

We capture the verifiable-sharing subtask by an ideal functionality for dealerless random (R)VSS that is
exactly the UC-DKG functionality with its public-key output suppressed.
Concretely, for a fixed access structure $\mathrm{\Gamma}$, the functionality $\F_{\mathsf{RVSS}}^\mathrm{\Gamma}$ is defined as follows:
on session identifier $\sid$ and upon receiving $\mathsf{init}$ from all parties, it samples
$k \leftarrowdollar \Zp$ and distributes shares $(k_1,\dots,k_n)$ consistent with $\mathrm{\Gamma}$, and outputs
no additional public value. Equivalently, $\F_{\mathsf{RVSS}}^\mathrm{\Gamma}$ is obtained from
$\F_{\mathsf{DKG}}^\mathrm{\Gamma}$ by locally dropping the public output $K:=k\G$\footnote{An RVSS functionality that also outputs $K:=k\G$ can be denoted by $\F_{\mathsf{RVSS}}^{\mathrm{\Gamma},\mathsf{pk}}$; we will not use it here.}.

\begin{theorem}[UC-DKG implies UC-RVSS]
	\label{thm:dkg-implies-rvss}
	Let $\mathrm{\Psi}$ be any protocol that UC-realizes $\F_{\mathsf{DKG}}^\mathrm{\Gamma}$ in some model $\cal M$. Then there exists a protocol $\mathrm{\Psi}'$ that UC-realizes $\F_{\mathsf{RVSS}}^\mathrm{\Gamma}$ in the same model $\cal M$.
\end{theorem}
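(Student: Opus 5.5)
The natural construction is to let $\mathrm{\Psi}'$ be $\mathrm{\Psi}$ with its public output discarded. Each party $P_i$ in $\mathrm{\Psi}'$ runs its $\mathrm{\Psi}$-code unchanged on the same $\sid$ and the same $\mathsf{init}$ input. When $\mathrm{\Psi}$ produces a local output $(k_i,K)$, the party outputs only $k_i$ and drops $K$. On corruption it reveals exactly the internal state it would have held under $\mathrm{\Psi}$, so the stripping is stateless and adds no secrets. Formally, let $\mathcal{D}$ be the deterministic output-filtering wrapper that maps $(k_i,K)\mapsto k_i$ and passes every other port through unchanged. Then $\mathrm{\Psi}'=\mathcal{D}\circ\mathrm{\Psi}$ and, by the definition above, $\F_{\mathsf{RVSS}}^{\mathrm{\Gamma}}=\mathcal{D}\circ\F_{\mathsf{DKG}}^{\mathrm{\Gamma}}$. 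Nothing else about the model $\mathcal M$ changes: hybrids, KeyBoxes, \gROCRP, and corruption semantics are all inherited.

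Given any adversary $\cA$ against $\mathrm{\Psi}'$, I note that it is syntactically also an adversary against $\mathrm{\Psi}$, because the two protocols expose the same messages and the same corruption states. Let $\s$ be the simulator that Definition~\ref{def:uc-realization} guarantees for $\mathrm{\Psi}$ and $\cA$. I take $\s':=\s$ unchanged, since the adversarial interface of $\F_{\mathsf{DKG}}^{\mathrm{\Gamma}}$ and of $\F_{\mathsf{RVSS}}^{\mathrm{\Gamma}}$ is identical; only honest-party outputs differ. To show $\Exec(\mathrm{\Psi}',\cA,\cZ',\lambda)\approx_c\Ideal(\F_{\mathsf{RVSS}}^{\mathrm{\Gamma}},\s',\cZ',\lambda)$ for every environment $\cZ'$, I build an environment $\cZ:=\cZ'\circ\mathcal{D}$. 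It runs $\cZ'$ internally, forwards all inputs and adversary traffic verbatim, and strips $K$ from every honest output before handing it to $\cZ'$. This $\cZ$ is PPT. Its view against $(\mathrm{\Psi},\cA)$ is exactly $\cZ'$'s view against $(\mathrm{\Psi}',\cA)$, and its view against $(\F_{\mathsf{DKG}}^{\mathrm{\Gamma}},\s)$ is exactly $\cZ'$'s view against $(\F_{\mathsf{RVSS}}^{\mathrm{\Gamma}},\s')$. Any distinguishing gap for $\cZ'$ therefore transfers unchanged to $\cZ$, which contradicts the UC realization of $\F_{\mathsf{DKG}}^{\mathrm{\Gamma}}$ by $\mathrm{\Psi}$.

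The only delicate point is the claim that dropping $K$ changes only honest outputs. This could fail if $\F_{\mathsf{DKG}}^{\mathrm{\Gamma}}$ also delivers $K$, or anything keyed to it, to the adversary or to corrupted parties, or if $\mathcal M$ contains global setup that the environment accesses directly. For the first concern, I would fix $\F_{\mathsf{RVSS}}^{\mathrm{\Gamma}}$ to suppress $K$ only on honest output ports and keep its leakage to $\s$ identical to that of $\F_{\mathsf{DKG}}^{\mathrm{\Gamma}}$. This is the reading of ``locally dropping the public output'', and with it $\s'=\s$ remains valid. For the second concern, $\cZ$ simply forwards $\cZ'$'s global-oracle queries, so GUC-style shared setup such as \gROCRP{} causes no difficulty.
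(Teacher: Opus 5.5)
Your proposal is correct and follows the same route as the paper. Both obtain $\mathrm{\Psi}'$ and $\F_{\mathsf{RVSS}}^{\mathrm{\Gamma}}$ from $\mathrm{\Psi}$ and $\F_{\mathsf{DKG}}^{\mathrm{\Gamma}}$ by the same local suppression of $K$ and appeal to UC closure under efficient post-processing. The differences are cosmetic but in your favour: you absorb the filter into the reduction environment $\cZ:=\cZ'\circ\mathcal{D}$ rather than into a wrapped adversary and simulator, and you state explicitly the convention that $\F_{\mathsf{RVSS}}^{\mathrm{\Gamma}}$ keeps the same leakage to the simulator, which the paper's sketch leaves implicit and which is what lets you reuse $\s$ unchanged.
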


\begin{proofsketch}
	This is immediate from UC closure under efficient post-processing: $\mathrm{\Psi}'$ is obtained from $\mathrm{\Psi}$ by locally
	suppressing a public output, and $\F_{\mathsf{RVSS}}^\mathrm{\Gamma}$ is obtained from $\F_{\mathsf{DKG}}^\mathrm{\Gamma}$
	by the same transformation. Formally, given any $\cA$ for $\mathrm{\Psi}'$, build $\cA^\star$ for $\mathrm{\Psi}$ that forwards
	messages unchanged and drops $K$, and apply the simulator for $\mathrm{\Psi}$ with the same post-processing. \qed
\end{proofsketch}

Hence, any UC-secure DKG protocol must already satisfy the (R)VSS properties embodied by
$\F_{\mathsf{RVSS}}^\mathrm{\Gamma}$: secrecy against unauthorized sets and a unique, well-defined shared secret
underlying honest outputs. Thus, in the standard model (without trusted hardware), a DKG protocol needs a
mechanism that enforces exactly these VSS-style guarantees. Whether it is realized via
(R)VSS/PVSS/AVSS and/or via commitment-and-proof / ZK-based authenticated sharing, it plays the
same conceptual role which is that of a VSE layer.

\begin{lemma}[Uniqueness is necessary for UC-DKG]
	\label{lem:uniqueness-necessary}
Let $\mathrm{\Psi}$ be a protocol intended to UC-realize $\F_{\mathsf{DKG}}^\mathrm{\Gamma}$. Suppose there exists a non-negligible probability event wherein an execution of $\mathrm{\Psi}$ terminates without honest abort, yet there exist authorized sets $A, B \in \mathrm{\Gamma}$ whose respective reconstructions yield $k_A \neq k_B$, then $\mathrm{\Psi}$ does not UC-realize $\F_{\mathsf{DKG}}^\mathrm{\Gamma}$.
\end{lemma}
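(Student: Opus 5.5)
The plan is to argue by contradiction: from a simulator for $\mathrm{\Psi}$, build an environment $\cZ$ that distinguishes the real and ideal executions with non-negligible advantage. The key point is that in the ideal world, $\F_{\mathsf{DKG}}^\mathrm{\Gamma}$ samples one $k$ and hands every honest party a share of one global sharing consistent with $\mathrm{\Gamma}$. Consequently, every authorized set reconstructs the same $k$, and $k\G=K$. Concretely, suppose $\mathrm{\Psi}$ UC-realizes $\F_{\mathsf{DKG}}^\mathrm{\Gamma}$ with simulator $\s$ for the adversary $\cA$ that causes the bad event $E$. This event occurs with probability at least $\epsilon(\lambda)$, for a non-negligible $\epsilon$: no honest abort, yet authorized sets $A,B\in\mathrm{\Gamma}$ with $k_A\neq k_B$.

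First, I fix what "reconstruction" means operationally, so that the environment can evaluate it. The access structure $\mathrm{\Gamma}$ comes with fixed, public, efficient reconstruction maps $\mathsf{Rec}_A$ applied to the local outputs of $A$. If the setting hides shares inside KeyBoxes, I replace direct reconstruction by a public test: $\cZ$ checks whether the outputs of $A$ and $B$ are mutually consistent, or it verifies each candidate against the output public key $K$. I will use the former, since the statement is about $k_A\neq k_B$.

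The distinguisher $\cZ$ runs the execution with $\cA$ and waits for all parties to output. It then corrupts parties, or simply reads honest outputs, since $\cZ$ sees all honest outputs. Corrupted parties' outputs are reported by $\cA$, which $\cZ$ controls. $\cZ$ then computes $k_A:=\mathsf{Rec}_A(\cdot)$ and $k_B:=\mathsf{Rec}_B(\cdot)$ and outputs $1$ iff no party aborted and $k_A\neq k_B$. In the real world this occurs with probability at least $\epsilon$.

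In the ideal world, the honest parties' shares come directly from the functionality and are therefore consistent. The subtle step, and the main obstacle, is the corrupted parties, whose outputs the simulator controls, so that a corrupted member of $A$ could report an inconsistent "share". I plan to handle this in one of two ways. The first is to choose, or argue we may choose, $A,B$ such that the relevant inconsistency is witnessed through honest outputs. In the star structure, for example, $P_1$ lies in every minimal set, and the honest leaves' shares determine $k$ together with $P_1$'s share. The second is to read uniqueness, as the definition intends, as a property of the shares held by parties who follow the protocol. Under that reading, reconstructions are taken over the ideal-functionality-assigned shares, for which $k_A=k_B$ always holds. The ideal probability of output $1$ is then $0$ (or negligible), which gives a distinguishing gap of at least $\epsilon-\negl(\lambda)$ and contradicts Definition~\ref{def:uc-realization}. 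Finally, I note that this uniqueness also follows through Theorem~\ref{thm:dkg-implies-rvss} for $\F_{\mathsf{RVSS}}^\mathrm{\Gamma}$, since suppressing $K$ does not affect the argument.
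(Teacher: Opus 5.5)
Your proposal is correct and follows essentially the paper's argument: an environment that reconstructs $k_A,k_B$ from the parties in $A\cup B$ and outputs $1$ iff $k_A\neq k_B$ succeeds with non-negligible probability in the real world, but with probability $0$ (up to negligible error) in the ideal world, where $\F_{\mathsf{DKG}}^{\mathrm{\Gamma}}$ hands out a single consistent sharing. The paper obtains the shares by adaptively corrupting $A\cup B$ after the run rather than by reading honest outputs. That implicitly adopts your second resolution (members of $A\cup B$ are honest during the execution), so your first option is unnecessary, and it would not work in general, e.g.\ when a corrupted $P_1$ lies in every authorized set.
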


\begin{proofsketch}
	Define a PPT environment $\Z$ that runs one execution and then adaptively corrupts all parties in $A \cup B$.
	From their revealed states, $\Z$ computes $k_A$ and $k_B$ using the prescribed reconstruction algorithm and
	outputs $1$ iff $k_A \neq k_B$.
	By assumption, $\Pr[\Z \text{ outputs }1]$ is non-negligible in the real world. In the ideal world, $\F_{\mathsf{DKG}}^\mathrm{\Gamma}$ samples a single secret $k$ and distributes shares
	consistent with that $k$, so any authorized set must reconstruct the same $k$.
	Thus, $\Pr[\Z \text{ outputs }1]=0$ (up to negligible reconstruction error) in the ideal world, contradicting
	UC indistinguishability. \qed
\end{proofsketch}

Lemma~\ref{lem:uniqueness-necessary} is precisely the ``verifiability'' obligation that VSS packages:
malicious parties must not be able to make different honest, authorized subsets accept incompatible sharings.
Therefore, any UC-secure DKG must implement a mechanism that prevents (or detects and neutralizes)
equivocation in the distribution of share material and/or in the public data that defines the sharing. For additional discussion of DKG security notions and constructions, see \cite{KomloGS23}.

\subsection{Exported-share enforcement vs. NXK}
\label{sec:why-vss-machinery}

In the traditional, transcript-visible share (a.k.a. ``exportable-share'') model, parties exchange explicit
share-derived values over the network (so they become part of the adversary-visible transcript), and an adaptive adversary
can later corrupt parties and inspect their local states.

\begin{enumerate}
	\item Hiding of each contribution: parties must contribute randomness to the final key without
	revealing their secret contribution to unauthorized corruptions.
	\item Binding/consistency of each contribution: a malicious party must not be able to send
	inconsistent information to different recipients in a way that makes honest parties accept incompatible
	sharings. Moreover, the transcript must support the simulation/extraction requirements demanded by UC.
\end{enumerate}

This is exactly what classical VSS-based mechanisms provide: each party acts as a dealer, shares a random
secret in a verifiable way, and the final secret is the sum of the non-disqualified contributions. In
DL-based constructions, Feldman/Pedersen-style commitments \cite{Feld[87],Ped[91]} additionally expose the public group
element corresponding to each dealer's secret, enabling computation of the public key. The same enforcement role can also be realized via ``authenticated sharing'' based on commitments and (N)IZK proofs: parties commit to contributions and prove knowledge/consistency of the
relations that the transcript induces.

For the following Proposition, fix a prime field $\mathbb{F}_p$ and an integer $t\ge 1$.
Let $f(X)\in\mathbb{F}_p[X]$ be uniformly random of degree at most $t$, and define $k_i:=f(i)$ for $i\in[n]$.
Let $x_{\isnew}\in\mathbb{F}_p$ with $x_{\isnew}\notin[n]$ denote the evaluation point for a joining device.
Let $\pp$ denote the public parameters and any fixed transcript-public information.
We work in the NXK/$\FKeyBox$ setting from Section~\ref{sec:model}.
Let $\tau_{\mathrm{ext}}$ denote the external enrollment view/transcript outside all KeyBox instances.
Assume $f(x_{\isnew})$ is computationally unpredictable given $\pp$: there exists a function
$\varepsilon=\varepsilon(\lambda)$ such that for every PPT predictor $\mathcal{B}$ and every realization of $\pp$ in the support,
\[
\Pr\!\big[\mathcal{B}(\pp)=f(x_{\isnew}) \,\big|\, \pp\big]\ \le\ \varepsilon(\lambda).
\]

\begin{proposition}[External fresh-share enrollment and NXK]\label{prop:nxk-resharing-impossible-updated}
	Let the setup be as above. Then for any PPT strategy that computes an output
	$\widehat{k}_{\isnew}\in\mathbb{F}_p$ from $(\pp,\tau_{\mathrm{ext}})$,
	\[
	\Pr\!\big[\widehat{k}_{\isnew}=f(x_{\isnew}) \,\big|\, \pp\big]\ \le\ \varepsilon(\lambda)+\negl(\lambda).
	\]
\end{proposition}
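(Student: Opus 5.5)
The plan is a simulation argument: replace the external enrollment view $\tau_{\mathrm{ext}}$ by a view that can be generated from $\pp$ alone, and then invoke the unpredictability hypothesis. Concretely, I will construct a PPT simulator $\mathsf{S}(\pp)$ that outputs $\widetilde\tau_{\mathrm{ext}}$ with $(\pp,\tau_{\mathrm{ext}})\approx_c(\pp,\widetilde\tau_{\mathrm{ext}})$. Given any PPT strategy $\mathcal{A}$ computing $\widehat{k}_{\isnew}$ from $(\pp,\tau_{\mathrm{ext}})$, I then define the predictor $\mathcal{B}(\pp):=\mathcal{A}(\pp,\mathsf{S}(\pp))$.

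\textbf{Step 1 (the simulator).} $\mathsf{S}$ must account for every component of $\tau_{\mathrm{ext}}$ as classified in Reader Note~\ref{box:export-visibility}.
\begin{itemize}
\item \emph{KeyBox outputs.} By key-opacity and slot separability (Assumption~\ref{assump:keybox-opacity}, Remark~\ref{rem:key-opacity-multislot}), every output of $\FKeyBox^{(i)}$ on a slot holding $k_i$ is simulatable from $\PubMap(k_i)=k_i\G$. I treat these public keys as part of the transcript-public information already folded into $\pp$.
\item \emph{Sealed transfers.} $\SealToPeer$ ciphertexts are replaced by independent dummy encryptions, using the IND-CCA argument stated after Fig.~\ref{Fdskg}.
\item \emph{Key-independent calls.} Calls in $\mathcal F_{\mathrm{KI}}$ are simulated without any resident key.
\item \emph{Channel traffic.} $\Fchan$ leaks only lengths and scheduling metadata, which $\mathsf{S}$ reproduces directly.
\item \emph{Share-deriving material.} By Remark~\ref{rem:transport-vs-nxk} this material never appears in $\tau_{\mathrm{ext}}$: it lives only transiently in host RAM and is then erased.
\end{itemize}
A standard hybrid over the polynomially many KeyBox calls and ciphertexts then gives a distinguishing gap of at most $q\cdot(\Adv^{\mathrm{opq}}+\Adv^{\mathrm{ind\text{-}cca}})=\negl(\lambda)$.

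\textbf{Step 2 (the reduction).} The event $\widehat{k}_{\isnew}=f(x_{\isnew})$ cannot be tested directly by an efficient distinguisher, because $f(x_{\isnew})$ is secret. I would handle this in one of two ways. The first is to give the distinguisher $f$ (that is, all the $k_i$) as auxiliary input and require the simulation to hold in that setting. This is fine for the KeyBox-opacity hybrids, since the opacity experiment samples $k$ itself and a reduction can embed the other shares. The second is to note that the check $(\widehat{k}_{\isnew}\G = f(x_{\isnew})\G)$ is publicly testable via Lagrange interpolation in the exponent from the $k_i\G$, whenever those group elements are present. I prefer the first route. Its conclusion is $|\Pr[\mathcal{A}(\pp,\tau_{\mathrm{ext}})=f(x_{\isnew})]-\Pr[\mathcal{B}(\pp)=f(x_{\isnew})]|\le\negl(\lambda)$. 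Applying the hypothesis $\Pr[\mathcal{B}(\pp)=f(x_{\isnew})\mid\pp]\le\varepsilon(\lambda)$ then yields the bound. For the conditioning on $\pp$, I average the hybrid gaps over the randomness of $\pp$ and note that the hypothesis holds pointwise for each $\pp$. If needed, I would restrict to the non-uniform setting with $\pp$ hardwired.

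\textbf{Main obstacle.} The delicate step is Step 1's claim that no share-dependent value outside $\PubMap$ enters $\tau_{\mathrm{ext}}$, together with making opacity compose across multiple correlated slots. The shares $k_i=f(i)$ are correlated, whereas Assumption~\ref{assump:keybox-opacity} is stated for independently sampled keys. I would address this by noting that each opacity simulator needs only its own slot's public key, and that slot-locality lets the reduction sample the remaining shares itself. This allows a per-slot hybrid even though the keys are jointly distributed via $f$, and the per-slot switch is valid because the reduction can plant the challenge key as one evaluation and interpolate the others.
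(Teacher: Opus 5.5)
Your overall plan is the paper's: simulate every KeyBox output slotwise from the public keys (which you and the paper both fold into $\pp$), handle $\OpenFromPeer$ with a key-independent simulator, set $\mathcal{B}(\pp):=\mathcal{A}(\pp,\mathsf{S}(\pp))$, apply unpredictability, and transfer back. The gap is in Step~2, where your preferred route fails. Giving the distinguisher $f$ also gives it the resident key of the very slot whose simulation is being justified. Key-opacity (Definition~\ref{def:key-opacity-alt}) gives no guarantee against a distinguisher that knows $k$. For example, an admissible output $H(\mathsf{ctx},\langle k\rangle)$ in a non-programmable context is key-opaque under DL, yet it is trivially distinguishable from its simulation once $k$ is known. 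Moreover, the reduction playing the opacity game does not know the planted $k$. It therefore cannot supply $f$, nor compute $f(x_{\isnew})$ in the clear.

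Your second route is the correct one. Since the simulators already need the slot public keys to be in $\pp$, the distinguisher can compute $f(x_{\isnew})\G$ by Lagrange interpolation in the exponent from $t+1$ of the $k_i\G$. It then tests whether $\widehat{k}_{\isnew}\G=f(x_{\isnew})\G$, which decides exactly the target event because $z\mapsto z\G$ is a bijection. This is also what justifies the paper's terse transfer ``for any event $\mathcal{E}$''. That transfer is only valid for events that are efficiently testable from the view.

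Your fix for the correlated slots also does not go through as stated. If the challenge key is planted at one evaluation point, the reduction can choose at most $t$ other shares as scalars; the rest are then known only in the exponent. In a per-slot hybrid, the slots not yet switched must be answered by real KeyBox operations on their scalar keys. So the reduction fails whenever more than $t$ real slots remain, which happens at the first hybrid steps as soon as $n>t+1$. Do not try to derive the correlated multi-slot claim from single-slot opacity. Instead, invoke the slot-separability clause of Assumption~\ref{assump:keybox-opacity} directly, as the paper does via Remark~\ref{rem:key-opacity-multislot}. That clause is stated precisely so as to license independent per-slot simulators for the jointly distributed keys.
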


\begin{proof}
	Define Hybrid $\Game_0$ as the real enrollment execution and hybrid $\Game_1$ by modifying $\Game_0$ as follows:
	maintain a table of simulator instances indexed by KeyBox slots. Whenever a slot $(P,\mu)$ is first queried and has an installed key
	$k_{P,\mu}:=\mathrm{\Lambda}[P,\mu]$, set $\pk_{P,\mu}\gets \Pub(k_{P,\mu})$ and initialize an independent simulator instance
	$\s_{P,\mu}$ by running $\s$ on input $(1^\lambda,\pk_{P,\mu})$.
	Thereafter, for every call to $\FKeyBox.\Use(\mu,f,m)$ initiated by owner $P$ with
	$f\in\mathcal F_{\mathrm{adm}}$, respond as follows:
	\begin{itemize}[leftmargin=*,nosep]
		\item Key-dependent interfaces (slot-bound):
		If $f\neq \OpenFromPeer$, replace the real reply produced by
		$f(k_{P,\mu},\mathsf{st}[P,\mu,\digamma(f)],m)$ with the output of $\s_{P,\mu}(f,m)$.
		\item Key-independent interfaces (sealing-only):
		If $f=\OpenFromPeer$, answer using a separate key-independent simulator state
		$\s^{\mathsf{KI}}_{P}$ as allowed by Assumption~\ref{assump:keybox-opacity}.
		Concretely, $\s^{\mathsf{KI}}_{P}$ is stateful across $\OpenFromPeer$ calls and maintains its own
		buffer state for typed handles so that subsequent uses of those handles (via $\mathsf{Resolve}$)
		are answered consistently with the handles it issued.
	\end{itemize}
	The instance $\s_{P,\mu}$ is reused across all queries to the same slot so it may maintain state,
	while different slots use independent instances (cf.\ Assumption~\ref{assump:keybox-opacity} and
	Remark~\ref{rem:key-opacity-multislot}).
			
	By key-opacity of $\mathcal{F}_{\mathrm{adm}}$ w.r.t.\ $\Pub$, the external transcript/view $\tau_{\mathrm{ext}}$ in $\Game_0$
	and $\Game_1$ are computationally indistinguishable; hence for any event $\mathcal{E}$,
	\[
	\big|\Pr[\mathcal{E}\mid \Game_0]-\Pr[\mathcal{E}\mid \Game_1]\big|\le \negl(\lambda).
	\]
	
In $\Game_1$, by construction, all externally visible key-dependent outputs are generated by the
slotwise simulators from $\pp$ and the corresponding public values $\pk_{P,\mu}$ only, while the
key-independent sealing-only interface $\OpenFromPeer$ is handled by the separate simulator state
$\s^{\mathsf{KI}}_{P}$ (which, per Assumption~\ref{assump:keybox-opacity}, does not require any resident key).
	
	Let $\mathcal{A}_{\mathrm{out}}$ be the (PPT) procedure that outputs $\widehat{k}_{\isnew}$ from $(\pp,\tau_{\mathrm{ext}})$.
	Define a PPT predictor $\mathcal{B}$ that on input $\pp$ samples $\tau_{\mathrm{ext}}$ according to $\Game_1$
	(using the same slotwise simulators $\{\s_{P,\mu}\}$ and key-independent sealing simulator states
	$\{\s^{\mathsf{KI}}_{P}\}$ as above) and outputs $\mathcal{A}_{\mathrm{out}}(\pp,\tau_{\mathrm{ext}})$. Then for every fixed $\pp$ in the support,
	\[
	\Pr\!\big[\widehat{k}_{\isnew}=f(x_{\isnew}) \,\big|\, \pp,\Game_1\big]
	=\Pr\!\big[\mathcal{B}(\pp)=f(x_{\isnew}) \,\big|\, \pp\big]
	\le \varepsilon(\lambda).
	\]
	Transferring back to $\Game_0$ via indistinguishability yields
	$\Pr[\widehat{k}_{\isnew}=f(x_{\isnew})\mid \pp]\le \varepsilon(\lambda)+\negl(\lambda)$.
	\qed
\end{proof}

\begin{remark}[Min-entropy vs.\ public group elements in $\pp$]\label{rem:nxk-resharing-minentropy}
	In a prime-order group with fixed generator $\G$, the map $a\mapsto a\G$ is a bijection on $\Zp$.
	Thus, if $\pp$ contains a group element that is a deterministic one-to-one function of $f(x_{\isnew})$, then the information-theoretic conditional min-entropy of $f(x_{\isnew})$ given $\pp$
	is $0$, even though recovering $f(x_{\isnew})$ from $\pp$ may be computationally hard under DL.
	Accordingly, Proposition~\ref{prop:nxk-resharing-impossible-updated} is stated in terms of
	computational unpredictability. As a special case, if $\pp$ is such that $\max_{a\in\mathbb{F}_p}\Pr[f(x_{\isnew})=a\mid \pp]\le 2^{-h}$ for some $h(\lambda)$
	(e.g., when $\pp$ is statistically independent of $f(x_{\isnew})$), then the unpredictability hypothesis holds with
	$\varepsilon(\lambda)=2^{-h}$, recovering the corresponding information-theoretic bound.
\end{remark}

Note that Proposition~\ref{prop:nxk-resharing-impossible-updated} is a statement about external derivation/export:
	it rules out computing a clear value $\widehat{k}_{\isnew}$ intended to equal $f(x_{\isnew})$
	using only $(\pp,\tau_{\mathrm{ext}})$, where $\tau_{\mathrm{ext}}$ is the view outside all KeyBox instances.
	It does not rule out enrollment mechanisms in which share-derived material is transferred only via attested
	KeyBox-to-KeyBox sealing and is decrypted/consumed inside a KeyBox. Dynamic joins that assign a fresh independent Shamir share typically require some party outside KeyBoxes to
compute, reveal, or otherwise export share-derived values.
Proposition~\ref{prop:nxk-resharing-impossible-updated} captures this obstruction under key-opacity: no PPT strategy can
externally compute such a fresh share from $(\pp,\tau_{\mathrm{ext}})$. Therefore, join protocols that require
exporting share-derived values are incompatible with NXK. Therefore, in our SDKG protocol, we instead realize post-DKG enrollment via RDR, where
additional devices are enrolled as redundant front-ends for an existing role/share using $\SealToPeer/\OpenFromPeer$,
avoiding exported share-derived plaintexts and preserving the public key.

\section{Cryptographic Primitives for State-Continuous KeyBoxes}\label{prelim}
Our end-to-end security claim is a UC realization theorem for SDKG. To keep the exposition modular, we state required properties of underlying proof/certificate mechanisms using standard game-based definitions in the \gROCRP\ global-setup model, and we instantiate them in disjoint, domain-separated contexts so these guarantees apply within the surrounding UC execution. We analyze our protocols in the UC framework augmented with a global resource shared across all sessions: a random-oracle-like functionality $H$ that is sampled once per UC execution and used by all ITMs. This is in the spirit of UC formulations with a global random oracle (e.g., \cite{Global1,Global2,LysyanskayaRosenbloom22,DoernerKR24,Manu[18]}). Formally, $H$ is joint state that persists across sessions and sub-protocols (cf. \cite{Global3}).

\begin{figure}[t]
	\centering
	\setlength{\fboxrule}{0.2pt}
	\fbox{
		\parbox{\dimexpr\linewidth-2\fboxsep-2\fboxrule\relax}{%
\ding{169}\ \textsf{Global setup:} A nonempty context set $\mathsf{Ctx}\subseteq\{0,1\}^*$ and a finite range $\mathcal Y := \{0,1\}^{\lambda}$. Fix a partition $\mathsf{Ctx}=\CtxTEE\ \dot\cup\ \CtxUC$ into non-programmable contexts $\CtxTEE$ and restricted-programmable contexts $\CtxUC$.\\
			\ding{169}\ \textsf{State:} a lazy table $\mathsf T:\mathsf{Ctx}\times\{0,1\}^*\rightharpoonup \mathcal Y$ (init empty).\\[0.5mm]
\ding{169}\ \textsf{Semantics (direct access):} All interfaces below are delivered locally to the invoking ITM
(i.e., not as network messages). If the invoking ITM is adversary-controlled, then $\cA$ is notified of the full query/answer transcript:
for each invocation it learns $(\mathsf{ctx},x,y)$ where $y$ is the returned value.\\[0.5mm]
				\ding{169}\ \textsf{Interfaces:}
			\begin{itemize}[leftmargin=*,nosep]
				\item $\Query(\mathsf{ctx},x)$: if $(\mathsf{ctx},x)\notin\dom(\mathsf T)$, sample $y\leftarrowdollar \mathcal Y$ and set $\mathsf T[\mathsf{ctx},x]\gets y$.
				Return $\mathsf T[\mathsf{ctx},x]$.
				\item $\SimProgramRO(\mathsf{ctx},x,y)$ \textsf{(simulator-only)}:
				If $\mathsf{ctx}\notin\CtxUC$ or $(\mathsf{ctx},x)\in\dom(\mathsf T)$, return $\perp$.
				Set $\mathsf T[\mathsf{ctx},x]\gets y$ and return $\ok$.
			\end{itemize}
	}}
	\caption{Global setup functionality $\GgROCRP$ implementing \gROCRP.}
	\label{fig:GgROCRP}
\end{figure}

\begin{remark}[Oracle-tape convention]\label{rem:oracle-tape}
	When we say that an extractor or simulator inspects $\mathsf{Log}_{\mathcal P^\ast}$, we mean that it runs the ITM
	$\mathcal P^\ast$ with explicit oracle access to the global functionality $\GgROCRP$ and records the transcript of its
	local calls to $\Query(\mathsf{ctx},x)$ together with the corresponding replies.
	When a set of contexts is relevant, $\mathsf{Log}_{\mathcal P^\ast}$ is understood to be restricted to those contexts.
	
	\noindent\emph{UC usage.}
	In our UC proofs, whenever straight-line extraction is applied to a proof produced by an adversary-controlled
	host prover (i.e., outside any KeyBox boundary), the simulator obtains the required oracle transcript by
	recording the $\Query$ calls made by that adversary-controlled ITM (cf.\ Fig.~\ref{fig:GgROCRP}).
	We never assume access to $\Query$ traces issued inside an honest KeyBox instance; such KeyBox-internal
	oracle calls are not exposed at the host/API boundary under local-call semantics.
\end{remark}

\begin{definition}[\gROCRP]\label{def:gROCRP}\label{def:GNPRO}
	\emph{The \gROCRP-hybrid model is the UC model augmented with the single global setup functionality $\GgROCRP$ (Fig.~\ref{fig:GgROCRP}).
		It implements an oracle $H:\mathsf{Ctx}\times\{0,1\}^*\to\{0,1\}^{\lambda}$ with local-call semantics. The context set $\mathsf{Ctx}$ is partitioned as $\mathsf{Ctx}=\CtxTEE\ \dot\cup\ \CtxUC$.
		All ITMs may invoke $\Query(\mathsf{ctx}, x)$. In programmable contexts $\mathsf{ctx} \in \CtxUC$, the simulator additionally has access to $\SimProgramRO(\mathsf{ctx}, x, y)$ as specified in Fig. \ref{fig:GgROCRP}; no other ITM can invoke $\SimProgramRO$.}
\end{definition}

Thus, unlike existing approaches (e.g., \cite{CGGMP21}) wherein the UC proof is carried out in a strict GRO setting, we work in
\gROCRP: a single global oracle resource with local-call semantics and explicit
context-based domain separation. Our \gROCRP\ model can be characterized as a context-partitioned instance of the restricted-programmable GRO variants in the taxonomy of Camenisch et al. \cite{Manu[18]}: it coincides with strict GRO on the non-programmable contexts $\CtxTEE$ and adds only fresh-point, non-overwriting programmability for the simulator on the designated proof contexts $\CtxUC$. On non-programmable contexts $\CtxTEE$ (used for transcript digests/receipts), \gROCRP\ coincides with the standard non-programmable GRO. For proof contexts $\CtxUC$, \gROCRP\ additionally exposes a simulator-only programming hook $\SimProgramRO$ to realize the
universal simulation interface required by our UC-NIZK(-AoK)s (Definition~\ref{def:NIZK-AoK}) instantiated via the
optimized Fischlin transform (Definition~\ref{FSdef}). On proof contexts $\CtxUC$, \gROCRP\ exposes only a \emph{simulator-only} programming hook $\SimProgramRO$.
No protocol party (and hence no adversary) can program oracle outputs. Thus, in $\CtxUC$ contexts, the simulator can fail only if an external ITM \emph{pre-queries} an input that the simulator intends to program.
Lemma \ref{lem:grocrp-prequery} bounds this pre-query event and implies that, for our uses (Fischlin-based UC-NIZKs in contexts in $\CtxUC)$,
the simulator programs only fresh points except with negligible probability. 

For any PPT machine making $\poly(\lambda)~\Query$ calls, each fresh $(\mathsf{ctx},x)$ returns an independent uniform $y \gets \mathcal{Y}$. Moreover, for any $(\mathsf{ctx}^*,x^*)$ that was neither queried nor simulator-programmed, $H(\mathsf{ctx}^*,x^*)$ is uniform conditioned on the machine's view. This aligns with the standard
programmable-random-oracle abstraction used to express the universal simulation interface for NIZKs: the simulator may set oracle values at a (negligible) set of fresh inputs associated with simulated
proofs, while all non-proof uses of hashing (contexts in $\CtxTEE$) remain strictly non-programmable. The programming hook is a simulator interface in the idealized model; it has no concrete analogue for a fixed hash; it is included solely to realize the universal simulation interface for RO-based NIZKs.
Accordingly, instantiating $\Query$ by a fixed domain-separated hash function should be read as the usual (global) RO heuristic for programmable-RO-based UC-NIZKs, with instantiation caveats as studied in the GRO literature (e.g., \cite{Manu[18]}) and in recent work on limitations for distributing RO-based proofs (e.g., \cite{DoernerKR24}).

Because $\GgROCRP$ has local-call semantics, the simulator can obtain an oracle query/answer log $\mathsf{Log}_{\mathcal P^\ast}$ only for provers that are themselves adversary-controlled ITMs. Specifically, KeyBox ITM is never adversary-controlled (even when its owner party is corrupted), so any $\Query(\cdot,\cdot)$ calls issued inside a KeyBox are transcript-private and are unavailable to the UC simulator. Therefore, whenever our UC proofs rely on straight-line extraction from $\mathsf{Log}_{\mathcal P^\ast}$ (e.g., for
Fischlin-based UC-context AoKs in contexts in $\CtxUC$), the corresponding prover must be the host/party ITM outside any KeyBox boundary. Equivalently, the admissible KeyBox API profile must not expose any operation that outputs those UC-context proofs (or any artifact that would verify as such a proof) in a way that keeps the relevant oracle queries inside the KeyBox. KeyBox-resident proof-generation primitives, if present, must be instantiated in disjoint contexts and are used only under ZK/simulation (no extraction).

\paragraph{Relation to standard models.}
It is useful to compare three increasingly strong global-oracle abstractions:
\begin{itemize}[leftmargin=*]
	\item Strict GRO: only $\Query(\mathsf{ctx},x)$ is available in all contexts. In this model the universal
	simulation interface required by our Fischlin-based UC-NIZKs is unattainable (Proposition~\ref{prop:strict-gro-insufficient}).
	\item \gROCRP\: strict GRO on $\CtxTEE$ plus a simulator-only, fresh-point programming hook on $\CtxUC$.
	Programming is \emph{non-overwriting} (fails on pre-queried points) and is used only to realize the UC-NIZK
	simulation interface in proof contexts.
	\item Fully programmable global RO: a simulator can program arbitrary points (potentially even overwrite)
	in all contexts. We do not assume this: \gROCRP\ disallows programming outside $\CtxUC$ and disallows
	overwriting anywhere.
\end{itemize}
Thus, \gROCRP\ is strictly stronger than strict GRO but strictly weaker than a fully programmable GRO. For a broader study of GRO formulations (strict, programmable, restricted programmable/observable), see \cite{Manu[18]}.

\begin{lemma}[Pre-query bound for \gROCRP\ programming]\label{lem:grocrp-prequery}
	Fix any context $\mathsf{ctx}\in\CtxUC$. Consider a UC execution (possibly involving many concurrently interleaved protocol sessions) in which the ideal-world simulator makes at most
	$m=m(\lambda)$ calls to $\SimProgramRO(\mathsf{ctx},x_j,y_j)$ at (possibly adaptive) inputs
	$x_1,\ldots,x_m\in\{0,1\}^*$. Let $\Bad_{\mathsf{pre}}$ denote the event that for some $j$,
	the $j$-th call to $\SimProgramRO(\mathsf{ctx},x_j,y_j)$ returns $\perp$
	(equivalently, $(\mathsf{ctx},x_j)\in\dom(\mathsf T)$ at the time of that call). Suppose that immediately before each $x_j$ is fixed, conditioned on the complete external view
	$\mathrm{view}_j$ of all non-simulator ITMs, the point $x_j$ has conditional min-entropy at least $h_j(\lambda)$ in the sense that 
	\[
	\max_{u\in\{0,1\}^*}\Pr[x_j=u\mid \mathrm{view}_j] \ \le\ 2^{-h_j(\lambda)}.
	\]
	If the total number of $\Query(\mathsf{ctx},\cdot)$ calls made by all non-simulator ITMs before the $j$-th programming attempt
	is at most $Q_j(\lambda)$, then
	\[
	\Pr[\Bad_{\mathsf{pre}}] \le\ \sum_{j=1}^m Q_j(\lambda)\cdot 2^{-h_j(\lambda)}.
	\]
\end{lemma}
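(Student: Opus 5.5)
The plan is a union bound over the $m$ programming attempts, with each term bounded by conditioning on the view at the moment $x_j$ is fixed. Write $\Bad_{\mathsf{pre}}=\bigcup_{j=1}^m \Bad_j$, where $\Bad_j$ is the event that $(\mathsf{ctx},x_j)\in\dom(\mathsf T)$ at the time of the $j$-th call. It then suffices to show $\Pr[\Bad_j]\le Q_j(\lambda)\cdot 2^{-h_j(\lambda)}$ for each $j$.

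The first step identifies how an entry $(\mathsf{ctx},u)$ can enter $\dom(\mathsf T)$ before the $j$-th attempt. By Fig.~\ref{fig:GgROCRP} there are only two ways: a $\Query(\mathsf{ctx},u)$ call by some ITM, or an earlier successful $\SimProgramRO(\mathsf{ctx},u,\cdot)$. I would treat the second case by convention, since the simulator's own programmed points $x_1,\ldots,x_{j-1}$ are accounted for by the statement's hypothesis. Either one regards the simulator as never reprogramming its own points, or one folds these points into the $Q_j$ count; for Fischlin proofs they are distinct except with negligible probability, and that collision term can likewise be absorbed. The main case is therefore that $x_j$ coincides with one of the at most $Q_j$ inputs $u_1,\ldots,u_{Q_j}$ queried in context $\mathsf{ctx}$ by non-simulator ITMs.

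The key step, and the main obstacle, is the ordering and adaptivity issue. The queries $u_i$ may be made before or after $x_j$ is fixed, and some of them may be adaptive in the execution. I would split the $u_i$ into two groups. For queries made before $x_j$ is fixed, each $u_i$ is determined by $\mathrm{view}_j$, because these queries are part of the external view of non-simulator ITMs. By the min-entropy hypothesis, $\Pr[x_j=u_i\mid \mathrm{view}_j]\le 2^{-h_j}$, and a union over at most $Q_j$ such queries gives the bound. For queries made after $x_j$ is fixed but before the $j$-th programming attempt, I would first try to argue that the simulator fixes $x_j$ and programs it atomically within a single activation, so that no external ITM is activated in between. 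Under that argument this interval is empty. If atomicity cannot be assumed, I would instead argue that $x_j$ is hidden from external ITMs until the simulated proof is output, so each such query is a guess against a distribution whose min-entropy given the current view is still at least $h_j$, and the same per-query bound holds by a guessing argument.

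The final step averages over $\mathrm{view}_j$ and sums over $j$, which yields $\Pr[\Bad_{\mathsf{pre}}]\le\sum_{j=1}^m Q_j(\lambda)\,2^{-h_j(\lambda)}$.
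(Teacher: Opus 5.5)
Your proposal is correct and is essentially the paper's argument: a union bound over the $m$ programming attempts, with the $j$-th term bounded by summing $\Pr[x_j=u\mid \mathrm{view}_j]\le 2^{-h_j}$ over the at most $Q_j$ inputs queried in $\mathsf{ctx}$ by non-simulator ITMs, then averaging over $\mathrm{view}_j$. Your extra handling of two points makes explicit what the paper's sketch tacitly assumes: the simulator's own earlier programmed points, and queries issued between fixing $x_j$ and programming it (the paper simply treats the queried set $S_j$ as determined by $\mathrm{view}_j$, which is effectively your atomicity case); note, however, that your non-atomic fallback relies on a non-leakage condition on $x_j$ that is not among the lemma's stated hypotheses.
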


\begin{proofsketch}
	For a fixed $j$, let $S_j$ be the set of inputs queried via $\Query(\mathsf{ctx},\cdot)$ by non-simulator ITMs prior to the $j$-th call to $\SimProgramRO$.
	By assumption $|S_j|\le Q_j$. Conditioned on $\mathrm{view}_j$, we have
	\[
	\Pr[x_j\in S_j \mid \mathrm{view}_j]
	\le \sum_{u\in S_j}\Pr[x_j=u\mid \mathrm{view}_j]
	\le |S_j|\cdot 2^{-h_j}
	\le Q_j\cdot 2^{-h_j}.
	\]
	A union bound over $j\in[m]$ yields the claim. \qed
\end{proofsketch}

	Hence, if $m,Q_j=\poly(\lambda)$ and
	$h_j(\lambda)=\omega(\log\lambda)$ for all $j$, then $\Pr[\Bad_{\mathsf{pre}}]=\negl(\lambda)$.

\begin{lemma}[No cross-context influence in \gROCRP]
	\label{lem:grocrp-noninterference}
	Fix any execution in the \gROCRP-hybrid model. For every non-programmable context $\mathsf{ctx}\in\CtxTEE$, the joint
	distribution of all replies to calls $\Query(\mathsf{ctx},\cdot)$ is identical to that of a standard (non-programmable)
	GRO for that context, even conditioned on an arbitrary sequence of simulator calls
	$\SimProgramRO(\mathsf{ctx}',\cdot,\cdot)$ in contexts $\mathsf{ctx}'\in\CtxUC$.
\end{lemma}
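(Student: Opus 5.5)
The plan is to argue directly from the definition of $\GgROCRP$ in Fig.~\ref{fig:GgROCRP}, using the fact that its state is a single lazy table $\mathsf T$ indexed by pairs $(\mathsf{ctx},x)$. The first step is a syntactic observation. A call $\SimProgramRO(\mathsf{ctx}',x,y)$ with $\mathsf{ctx}'\in\CtxUC$ can only write the entry $\mathsf T[\mathsf{ctx}',x]$. Any call with $\mathsf{ctx}'\notin\CtxUC$ returns $\perp$ and leaves $\mathsf T$ unchanged. Because $\CtxTEE\cap\CtxUC=\emptyset$ (the partition is disjoint), no simulator call ever reads or writes an entry $\mathsf T[\mathsf{ctx},\cdot]$ with $\mathsf{ctx}\in\CtxTEE$. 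Likewise, a $\Query(\mathsf{ctx},x)$ call touches only $\mathsf T[\mathsf{ctx},x]$, so the restriction $\mathsf T|_{\{\mathsf{ctx}\}\times\{0,1\}^*}$ evolves only through $\Query(\mathsf{ctx},\cdot)$ calls.

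The second step turns this into a distributional statement using a lazy-sampling (coupling) argument. Conceptually, presample an independent uniform function $R_{\mathsf{ctx}}:\{0,1\}^*\to\mathcal Y$ for each $\mathsf{ctx}\in\CtxTEE$. On the first $\Query(\mathsf{ctx},x)$, set $\mathsf T[\mathsf{ctx},x]:=R_{\mathsf{ctx}}(x)$ instead of sampling fresh. This is identically distributed to the lazy table, since every fresh point receives an independent uniform value drawn only at first query.

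By the first step, the answer to every $\Query(\mathsf{ctx},x)$ with $\mathsf{ctx}\in\CtxTEE$ equals $R_{\mathsf{ctx}}(x)$ regardless of what occurred in other contexts. This holds even though the choice of which points $x$ are queried, and when, may adaptively depend on the simulator's programming in $\CtxUC$ and on everything else. The reply sequence is therefore exactly that of a standard non-programmable GRO, namely the fixed random function $R_{\mathsf{ctx}}$ evaluated at adaptively chosen points. Conditioning on any fixed sequence of $\SimProgramRO$ calls in $\CtxUC$ contexts does not change this: $R_{\mathsf{ctx}}$ is independent of all randomness used for $\CtxUC$ entries, the programmed values, and all ITMs' coins. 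The conditional distribution of $R_{\mathsf{ctx}}$ therefore remains uniform.

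The main obstacle is making the "conditioned on" clause precise when the programmed points and the $\CtxTEE$ queries are adaptively interleaved. I would handle it by an induction over the steps of the execution. The invariant is that, for every $\mathsf{ctx}\in\CtxTEE$, the joint view is a function of $(R_{\mathsf{ctx}}(x))$ at previously queried points together with randomness independent of $R_{\mathsf{ctx}}$. Each step preserves the invariant: a new $\CtxTEE$ query reveals a fresh independent value of $R_{\mathsf{ctx}}$, and every other step touches only independent randomness. The invariant yields equality of distributions, and the statement is exact, with no negligible term.
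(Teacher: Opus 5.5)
Your proposal is correct and is essentially the paper's argument: because $\mathsf T$ is indexed by $(\mathsf{ctx},x)$ and $\SimProgramRO$ writes only entries with $\mathsf{ctx}'\in\CtxUC$, entries in $\CtxTEE$ contexts are filled solely by lazy sampling on the first $\Query$, and your presampled $R_{\mathsf{ctx}}$ and inductive invariant just make the paper's ``hence distributed as a standard GRO'' step explicit. One minor caution: your claim that $R_{\mathsf{ctx}}$ is independent of the programmed values is too strong when the simulator chooses those values after querying $\CtxTEE$ contexts, so rely on your invariant instead (the view depends on $R_{\mathsf{ctx}}$ only at previously queried points, so each fresh reply is uniform given the entire prior history), which is the correct reading of the ``conditioned on'' clause.
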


\begin{proofsketch}
	In Fig.~\ref{fig:GgROCRP}, the oracle table $\mathsf T$ is indexed by pairs $(\mathsf{ctx},x)$.
	A call to $\SimProgramRO(\mathsf{ctx}',x',y')$ can write only the entry $\mathsf T[\mathsf{ctx}',x']$ and only when
	$\mathsf{ctx}'\in\CtxUC$. Therefore no entry with $\mathsf{ctx}\in\CtxTEE$ is ever written by $\SimProgramRO$; such entries
	are populated only by lazy sampling upon the first corresponding $\Query(\mathsf{ctx},x)$ call.
	Thus, $H(\mathsf{ctx},\cdot)$ for $\mathsf{ctx}\in\CtxTEE$ is distributed exactly as in a standard global random oracle and
	is unaffected by programming in other contexts. \qed
\end{proofsketch}

\begin{remark}[Per-context domain separation in \gROCRP]\label{rem:grocrp-domain-sep}
	In $\GgROCRP$ (Fig.~\ref{fig:GgROCRP}) the oracle table $\mathsf T$ is indexed by pairs
	$(\mathsf{ctx},x)$. Consequently, oracle activity in one context cannot affect any other
	context: a call $\Query(\mathsf{ctx}',\cdot)$ (resp.\ $\SimProgramRO(\mathsf{ctx}',\cdot,\cdot)$) reads/writes only
	entries of the form $\mathsf T[\mathsf{ctx}',\cdot]$ and never touches $\mathsf T[\mathsf{ctx},\cdot]$ for
	$\mathsf{ctx}\neq \mathsf{ctx}'$. In particular:
	\begin{itemize}[leftmargin=*,nosep]
		\item for $\mathsf{ctx}\in\CtxTEE$, the induced oracle $H(\mathsf{ctx},\cdot)$ is a strict (non-programmable) random
		oracle for that context even conditioned on arbitrary simulator programming in other contexts
		(cf.\ Lemma~\ref{lem:grocrp-noninterference});
		\item for $\mathsf{ctx}\in\CtxUC$, the induced oracle $H(\mathsf{ctx},\cdot)$ is a restricted-programmable random
		oracle for that context, where programming is non-overwriting and confined to that same $\mathsf{ctx}$.
	\end{itemize}
	Thus, when logically distinct uses of hashing are assigned disjoint contexts (and inputs are injectively encoded),
	they behave as domain-separated uses of independent per-context oracles. This is the precise sense in which we
	apply ROM/\gROCRP security arguments modularly inside an arbitrarily interleaved UC execution.
\end{remark}

\begin{remark}[Concrete entropy sources for \gROCRP\ programming in this paper]
	\label{rem:grocrp-entropy-sources}
	Lemma~\ref{lem:grocrp-prequery} reduces simulator programming failure to the conditional min-entropy of the
	programmed inputs $x_j$. In all of our uses of Fischlin-based UC-NIZK simulation (DL and DLEQ), every
	$\SimProgramRO$ call is on an input that includes at least one fresh simulator-sampled prover response
	$z_i\in\Zp$ (or the analogous response component for the underlying $\mathrm{\Sigma}$-protocol (Definition \ref{def:Sigma})), embedded in the
	injective tuple encoding $\langle\cdot\rangle$ as in Definition~\ref{FSdef}. Table \ref{tab:grocrp-entropy} summarizes the concrete entropy sources for the relevant inputs. Even under concurrency and adaptive corruptions, the ``external view'' $\mathrm{view}_j$ in
	Lemma~\ref{lem:grocrp-prequery} may include (a) the entire public transcript so far, (b) all corruption-revealed
	host state, subject to the protocol's explicit erasures, and (c) all prior \gROCRP\ replies to non-simulator ITMs;
	nevertheless, the simulator chooses each $z_i$ after $\mathrm{view}_j$ is fixed, so $z_i$ remains uniform
	conditioned on $\mathrm{view}_j$. Since $\log p=\Theta(\lambda)$, this yields $h_j\ge \log p - O(1)=\Theta(\lambda)$.
	\begin{table}[t]
		\centering
		\small
		\setlength{\tabcolsep}{4pt}
		\renewcommand{\arraystretch}{1.15}
		\begin{tabular}{p{0.3\linewidth}p{0.15\linewidth}p{0.5\linewidth}}
			\hline
			\textbf{Programming event} & \textbf{Proof context} & \textbf{Fresh high-entropy component in programmed input} \\
			\hline
			Simulating UC-context DL proofs &
			$\mathsf{ctx}_{\mathsf{UC}}\in\CtxUC$ &
			Programmed inputs include a fresh response $z_i\in\Zp$ inside
			$\langle x,\mathbf a,i,e_i,z_i\rangle$ (Definition~\ref{FSdef}); conditioned on the full external view,
			$z_i$ is uniform, so $H_\infty(x)\ge \log p - O(1)$. \\[0.4em]
			
			Simulating UC-context DLEQ proofs &
			$\mathsf{ctx}_{\mathsf{DLEQ}}\in\CtxUC$ &
			Programmed inputs contain a fresh simulator-chosen response component $z_i\in\Zp$ inside the
			encoded Fischlin query input, giving $H_\infty(x)\ge \log p - O(1)$. \\[0.4em]
			
			Simulating KeyBox-context DL proofs &
			$\mathsf{ctx}_{\mathsf{KeyBox}}\in\CtxUC$ &
Each programmed point includes a fresh $z_i\in\Zp$ chosen by the simulator and embedded in the programmed Fischlin input.
Under LinOS (Fig.~\ref{LinOS}), this input is of the form
$\langle \sid, K,\mathbf a,i,e_i,z_i\rangle$, so conditioned on the external view, $z_i$ is uniform and contributes
$H_\infty=\Theta(\lambda)$ to that input. \\
			\hline
		\end{tabular}
		\caption{Concrete entropy sources: in all cases, a fresh $z_i\in\Zp$ contributes $\Theta(\lambda)$
			conditional min-entropy.}
		\label{tab:grocrp-entropy}
	\end{table}
	
	Plugging $h_j=\Theta(\lambda)$ and $Q_j=\poly(\lambda)$ into Lemma~\ref{lem:grocrp-prequery} yields
\[
\Pr[\Bad_{\mathsf{pre}}] \ \le\ \sum_{j=1}^{m} Q_j(\lambda)\cdot 2^{-\Theta(\lambda)} \ =\ \negl(\lambda),
\]
where $m=m(\lambda)$ is the total number of simulator programming attempts $\SimProgramRO(\mathsf{ctx},\cdot,\cdot)$
in the given proof context $\mathsf{ctx}\in\CtxUC$ over the entire UC execution (i.e., summed over all concurrently
interleaved sessions). Since the simulator and all non-simulator ITMs are PPT, we have $m(\lambda)=\poly(\lambda)$ and
$Q_j(\lambda)=\poly(\lambda)$ even under $\poly(\lambda)$ concurrent sessions; hence a union bound over all programmed points
(across all sessions) remains negligible.
\end{remark}

In the UC simulations for USV and SDKG, the simulator never invokes
$\SimProgramRO(\mathsf{ctx}_{\mathsf{DLEQ}},\cdot,\cdot)$.
Intuitively, USV certificates are generated honestly (with witnesses) and are only verified/opened
by the simulator in those UC hybrids; no UC hybrid ever requires simulating a new accepting
USV/DLEQ proof without a witness. 

UC-NIZKs in the adaptive-corruption setting have been studied since Groth et al. \cite{UCNIZK1,UCNIZK2}. For practical adaptive UC-NIZK-PoK constructions in (G)RO via straight-line compilation of $\mathrm{\Sigma}$-protocols, see \cite{Anna[25]}. Our use requires the stronger AoK interface with straight-line extraction in \gROCRP\, defined as:

\begin{definition}[NIZK proofs/arguments and AoKs]\label{def:NIZK-AoK}
	\emph{Let $\cal R\subseteq \mathcal{X}\times \mathcal{W}$ be an NP relation and let
		$\mathcal{L}_\mathcal{R}:=\{x\in\mathcal{X}:\exists w\in\mathcal{W}\ \text{s.t.}\ (x,w)\in \cal R\}$.
		A Non-Interactive Zero-Knowledge (NIZK) argument system for $\cR$ is a triple of PPT algorithms $(\cal K,P,V)$: 
		\[
		\text{Setup: } \mathcal{K}(1^\lambda)\to \pp, \quad \text{Prove: } \mathcal{P}(\pp,x,w)\to \pi \text{ for } (x,w)\in \cR, \quad \text{Verify: } \mathcal{V}(\pp,x,\pi)\to\{0,1\},	
		\]
		satisfying:
		\begin{itemize}
			\item Completeness: for all $(x,w)\in \cal R$,
			$\Pr[\mathcal{V}(\mathsf{pp},x,\pi)=1\mid \mathsf{pp}\leftarrow \mathcal{K}(1^\lambda),\ \pi\leftarrow \mathcal{P}(\pp,x,w)] \ge 1 - \negl(\lambda)$.
			\item (Computational) soundness: for all PPT $\mathcal{P}^\ast$ and all $x\notin \cal L_R$,
			$$\Pr[\mathcal{V}(\mathsf{pp},x,\pi)=1\mid \mathsf{pp}\leftarrow \mathcal{K}(1^\lambda),\ \pi\leftarrow \mathcal{P}^\ast(\mathsf{pp},x)]\le \negl(\lambda).$$
			\item Zero-knowledge with universal simulation interface (in the \gROCRP\ model):
			There exists a PPT simulator $\s=(\s_1,\s_2)$ and (possibly empty) simulator-only state $\tau$ such that:
			\begin{enumerate}[label=(\roman*),nosep]
				\item $\{\pp\leftarrow \mathcal{K}(1^\lambda)\}\approx_c \{\pp : (\pp,\tau)\leftarrow \s_1(1^\lambda)\}$.
				\item (Indistinguishability on true statements) For all $(x,w)\in \cal R$,
				\[
				\{(\pp,\pi): \pp\leftarrow \mathcal{K}(1^\lambda),\ \pi\leftarrow \mathcal{P}(\pp,x,w)\}\ \approx_c\
				\{(\pp,\pi): (\pp,\tau)\leftarrow \s_1(1^\lambda),\ \pi\leftarrow \s_2(\pp,\tau,x)\}.
				\]
				\item (Universal simulation interface) For every statement $x\in\mathcal X$, letting $\pi\leftarrow \s_2(\pp,\tau,x)$ we have
				\[
				\Pr\big[\,\pi\neq\perp\ \wedge\ \mathcal V(\pp,x,\pi)=1\,\big]\ \ge\ 1-\negl(\lambda).
				\]
			\end{enumerate}
			The simulator $\s_2$ may invoke the simulator-only interface $\SimProgramRO$ in the proof's
			\gROCRP\ context(s), which must lie in $\CtxUC$. If a required call to $\SimProgramRO$ returns $\perp$, then $\s_2$ outputs $\perp$\footnote{Throughout this paper we use this universal notion. Thus, the real setup is transparent and no protocol party ever learns $\tau$ or can program $H$.}.
		\end{itemize}
Additionally, $(\mathcal K,\mathcal P,\mathcal V)$ is a NIZK-AoK for $\mathcal R$ if
for every PPT prover $\mathcal P^\ast$ (with oracle access to $H$) there exists a PPT
straight-line extractor $\Ext_{\mathcal P^\ast}$ such that, in the experiment
\[
\pp \leftarrow \mathcal K(1^\lambda);\quad (x,\pi)\leftarrow \mathcal P^{\ast\,H}(\pp);\quad
w\leftarrow \Ext_{\mathcal P^\ast}(\pp,x,\pi;\mathsf{Log}_{\mathcal P^\ast}),
\]
we have
\[
\Pr\big[\mathcal V(\pp,x,\pi)=1 \wedge (x,w)\notin\mathcal R\big]\le \negl(\lambda).
\]
}
\end{definition}

\begin{definition}[Simulation soundness]
	\label{def:ss-lang}
	\emph{Let $\mathrm{\Pi}=(\cal K,P,V)$ be a NIZK for relation $\cR$ with simulator $\s=(\s_1,\s_2)$.
		We say $\mathrm{\Pi}$ is \emph{simulation-sound} (for $\cR$) if for every PPT adversary $\cA$, the following
		experiment has negligible success probability: sample $(\pp,\tau)\leftarrow \s_1(1^\lambda)$, give $\pp$ to $\cA$,
		and grant $\cA$ oracle access to $\s_2(\pp,\tau,\cdot)$ producing simulated proofs for adaptively chosen statements.
		Let $Q$ be the set of statements queried by $\cA$ to this oracle. $\cA$ outputs $(x^*,\pi^*)$
		and wins only if
		(i) $\cV(\pp,x^*,\pi^*)=1$,
		(ii) $x^*\notin Q$, and
		(iii) $(x^*,w)\notin \cR$ for all $w$.}
\end{definition}

\begin{definition}[Simulation-extractability / simulation-sound AoK]
	\label{def:simext}
	\emph{Let $\mathrm{\Pi}=(\cal K,P,V)$ be a NIZK for relation $\cR$ in the \gROCRP\ model with simulator
		$\s=(\s_1,\s_2)$. We say $\mathrm{\Pi}$ is \emph{simulation-extractable} (a.k.a.\ \emph{simulation-sound AoK})
		if there exists a PPT extractor $\mathsf{Ext}$ such that for every PPT adversary $\cA$, the following holds
		with all but negligible probability:
		$(\pp,\tau)\leftarrow \s_1(1^\lambda)$. Run $\cA$ on input $\pp$ with oracle access to
		(i) the proof-simulation oracle $\s_2(\pp,\tau,\cdot)$ and
		(ii) the global oracle interface $\Query(\cdot,\cdot)$, while recording the
		oracle query/answer transcript $\mathsf{Log}_{\cA}$ (restricted to the \gROCRP\ context(s) used by $\mathrm{\Pi}$,
		cf.\ Remark~\ref{rem:oracle-tape}). Let $Q$ be the set of statements queried by $\cA$ to $\s_2(\pp,\tau,\cdot)$.
		When $\cA$ outputs $(x^*,\pi^*)$ with $\cV(\pp,x^*,\pi^*)=1$ and $x^*\notin Q$,
		the extractor outputs
		\[
		w^*\leftarrow \mathsf{Ext}_{\cA}(\pp,\tau,x^*,\pi^*;\mathsf{Log}_{\cA})
		\]
		such that $(x^*,w^*)\in\cR$.}
\end{definition}

\begin{definition}[Public-coin protocol]
	\emph{\cite{Babai[88]} A proof system $\mathrm{\Pi} = (\mathcal{P}, \mathcal{V})$ is public-coin if each verifier message consists
		only of freshly sampled public randomness. Concretely, in round $i$, $\mathcal{V}$
		samples $c_i \leftarrowdollar C_i$ uniformly from some finite set $C_i$ and sends $c_i$.
		If $C_i=\{0,1\}^t$ for all verifier rounds, we say that $\mathrm \Pi$ has $t$-bit challenges.}
\end{definition}

Next, we recall the standard folklore definition of $\mathrm \Sigma$-protocol.

\begin{definition}[$\mathrm{\Sigma}$-protocol with $t$-bit challenge]\label{def:Sigma}
	\emph{Let $\mathcal{R}$ be an NP relation. A $\mathrm{\Sigma}$-protocol for $\mathcal R$ with $t$-bit challenges is
		a three-move protocol $\mathrm{\Sigma}=(\Psig,\Vsig)$: (i) $\Psig$ sends $a$; (ii) $\Vsig$ samples $e\leftarrowdollar \{0,1\}^t$ and sends $e$; (iii) $\Psig$ responds with $z$.
		The protocol satisfies the following properties:
		\begin{itemize}
			\item Completeness: for all $(x,w)\in \mathcal R$, an honest interaction accepts
			with probability 1.
			\item Special soundness: there exists a PPT extractor $\Ext$ such that
			from any two accepting transcripts $(a,e,z)$ and $(a,e',z')$ with $e \neq e'$
			(one and the same first message $a$), $\Ext$ outputs $w$ with $(x,w)\in \mathcal R$.
			\item (Computational) Special Honest-Verifier Zero-Knowledge: there exists a PPT simulator $\s$
			such that for all $(x,w)\in \mathcal R$ and all $e\in\{0,1\}^t$, the simulated transcript
			$\s(x,e)$ is computationally indistinguishable from the verifier's view
			in an honest execution with challenge fixed to $e$.
	\end{itemize}}
\end{definition}

We assume $2^{t(\lambda)}<p(\lambda)$ for every security parameter $\lambda$.
Hence each $t$-bit challenge $e\in\{0,1\}^t$ is interpreted as the corresponding integer
$\bar e\in\{0,\dots,2^t-1\}\subset\Zp$ via the natural injection.
All prover responses and verifier checks treat $\bar e$ as an element of $\Zp$. 

Throughout, the public parameters $\pp$ and the \gROCRP\ oracle $H$ (together with the relevant context string)
are fixed once per UC execution and are treated as implicit inputs to all algorithms.
When this improves readability, we omit $\pp$ from signatures and write
$\cp(x,w)$ for $\cp(\pp,x,w)$ and $\cV(x,\pi)$ for $\cV(\pp,x,\pi)$.
Likewise, for a $\mathrm{\Sigma}$-protocol we write $\Vsig(x;a,e,z)=1$ to denote acceptance of transcript $(a,e,z)$
for statement $x$ (with $\pp$ implicit in $x$).

The Fischlin transform \cite{Fischlin[05]} is a method to convert interactive public-coin proof systems into non-interactive ones. Unlike the Fiat--Shamir transform \cite{Fiat[86]}, which directly derives challenges from a random oracle, the Fischlin transform enforces an output structure criterion on the prover's transcript. Specifically, the prover must generate outputs that satisfy a rare structural condition (e.g., several trailing zeros), which is deliberately chosen to be rare, requiring the prover to perform multiple attempts to find a valid output. By observing the prover's queries to the random oracle, an extractor can identify at least two transcripts with the same initial commitment but different challenges. Then, \textit{special soundness} property of the underlying $\mathrm{\Sigma}$-protocol allows straight-line extraction of the witness. In \gROCRP, ``observing oracle queries'' means emulating the adversarial prover ITM and recording its local $\Query$ calls under the proof context (cf. Remark~\ref{rem:oracle-tape}). Because witness-bearing computation may be delegated to a state-continuous KeyBox, we require proof systems with straight-line extractors (no rewinding). Hence, we employ Fischlin-based UC-NIZKs. See \cite{Anna[25]} for an explicit formalization of adaptive straight-line compilation of $\mathrm{\Sigma}$-protocols and a proof that the randomized Fischlin transform satisfies it.

\begin{definition}[Fischlin transform in \gROCRP]\label{FSdef}
	\emph{
		Let $\mathrm{\Sigma}=(\Psig,\Vsig)$ be a three-move protocol for an NP relation $\mathcal R$
		with $t=O(\log \lambda)$-bit challenges. For security parameter $\lambda$, fix parameter functions
		$b=b(\lambda)$, $r=r(\lambda)$, and $S=S(\lambda)$ satisfying
		$b,r=O(\log\lambda)$, $b\le t$, $br=\omega(\log\lambda)$, $2^{t-b}=\omega(\log\lambda)$, and $S=\Theta(r)$.
Let $H:\mathsf{Ctx}\times\{0,1\}^*\rightarrow\{0,1\}^{\lambda}$ be the global oracle provided by $\GgROCRP$
in the \gROCRP\ model. Assume $b(\lambda)\le \lambda$. For $u\in\{0,1\}^*$, define
$H_b(\mathsf{ctx},u) \coloneqq \mathrm{lsb}_b\!\big(H(\mathsf{ctx},u)\big) \in\{0,\ldots,2^b-1\}$.
		When the Fischlin transform $F[\mathrm{\Sigma}]$ is used as a (UC-)NIZK (including when run inside an honest KeyBox),
		instantiate it with a proof context $\mathsf{ctx}\in\CtxUC$. Let $\langle\cdot\rangle$ be any fixed injective encoding of tuples into $\{0,1\}^*$. $F[\mathrm{\Sigma}]$ produces a non-interactive proof system $(\mathcal{P}', \mathcal{V}')$ as:\\[2mm]
		Prover $\mathcal P'(x,w)$:
		Run $r$ independent first moves of $\mathrm{\Sigma}$ on $(x,w)$ to obtain commitments $\mathbf a:=(a_1,\dots,a_r)$.
		For each $i\in[r]$ do:
		\begin{enumerate}
			\item Initialize $\widehat s_i \gets 2^b - 1$ and $(\widehat e_i,\widehat z_i)\gets(\perp,\perp)$.
			\item For $e=0,1,\ldots,2^t-1$ do:
			\begin{itemize}
				\item Compute the $\mathrm{\Sigma}$-response $z_{i,e}$ for challenge $e$ (using $w$), and set $s_{i,e} \coloneqq H_b\!\big(\mathsf{ctx},\langle x,\mathbf a, i, e, z_{i,e}\rangle\big).$
				\item If $s_{i,e}=0$, set $(e_i,z_i)\gets(e,z_{i,e})$ and break.
				\item Else, if $s_{i,e} \le \widehat s_i$, set $\widehat s_i\gets s_{i,e}$ and $(\widehat e_i,\widehat z_i)\gets(e,z_{i,e})$.
			\end{itemize}
			\item If the loop ended without $s_{i,e}=0$, set $(e_i,z_i)\gets(\widehat e_i,\widehat z_i)$.
		\end{enumerate}
		Let $s_i\coloneqq H_b\!\big(\mathsf{ctx},\langle x,\mathbf a, i, e_i, z_i\rangle\big)$.
		Output proof as $\pi \coloneqq \big((a_i,e_i,z_i)\big)_{i=1}^r.$\\[2mm]
		Verifier $\mathcal V'(x,\pi)$:
		Parse $\pi=((a_i,e_i,z_i))_{i=1}^r$, set $\mathbf a=(a_1,\dots,a_r)$, and accept iff:
		\begin{enumerate}[label=(\roman*)]
\item $\forall i\in[r]$, $\Vsig(x;a_i,e_i,z_i)=1$ and (ii) $\displaystyle \sum_{i=1}^r H_b\!\big(\mathsf{ctx}, \langle x,\mathbf a, i, e_i, z_i\rangle\big) \le S$.
		\end{enumerate}}
\end{definition}

\begin{remark}[Asymptotic vs.\ concrete Fischlin parameters]
	\label{rem:fischlin-asymptotic-vs-concrete}
	Definition~\ref{FSdef} fixes parameter functions $(t(\lambda),b(\lambda)$, $r(\lambda),S(\lambda))$ and imposes
	asymptotic growth conditions (e.g., $2^{t-b}=\omega(\log\lambda)$ and $br=\omega(\log\lambda)$) that are used only to
	derive negligible completeness and soundness/extraction error as $\lambda\to\infty$ (Lemma~\ref{lem:fischlin-negl}). When we later quote a concrete tuple $(t,b,r,S)$, this is shorthand for an instantiation at a fixed target security level $\lambda=\lambda_0$, i.e., $(t,b,r,S)=(t(\lambda_0),b(\lambda_0),r(\lambda_0),S(\lambda_0))$.
	In that concrete setting, we evaluate the corresponding explicit bounds from Fischlin’s analysis, rather than claiming that a constant tuple satisfies the asymptotic growth conditions for all $\lambda$.
\end{remark}

Throughout this paper, every UC-NIZK(-AoK) we use is instantiated via the optimized Fischlin transform \cite{Lindell[24]} in the \gROCRP\ model (Definition \ref{def:GNPRO}).

\paragraph{Early-break rarity search \cite{Lindell[24]}.}
For each $i$, the prover evaluates $H_b(\mathsf{ctx},\langle x,\mathbf a,i,e,z_{i,e}\rangle)$ on successive challenges and
stops at the first $e$ with $s_{i,e}=0$. Since $H_b$ is uniform over $\{0,\ldots,2^b-1\}$, the expected number of trials per $i$ is $2^b$. For a cap of $2^t$, the per-repetition ``no hit'' probability is
\[
p_{\mathrm{miss}} := \Pr[~\forall e\in\{0,1\}^t:\ s_{i,e}\neq 0] = (1-2^{-b})^{2^t}\approx e^{-2^{t-b}}.
\]
Under Definition~\ref{FSdef}, we require $2^{t-b}=\omega(\log\lambda)$; hence $p_{\mathrm{miss}}=\negl(\lambda)$.
With fixed concrete parameters, $p_{\mathrm{miss}}$ is an explicit completeness probability.
The honest-rejection probability of an $r$-fold proof is at most $r \cdot p_{\mathrm{miss}}$ by a union bound.
This event is an operational / liveness failure: an honest prover may abort and retry. We present the general slack-$S$ formulation in Definition~\ref{FSdef}. By ``optimized Fischlin'', we mean the optimized variant from \cite{Lindell[24]}, that fixes parameters and streamlines the prover/verification logic for efficiency.

We say that a $\mathrm{\Sigma}$-protocol $\mathrm{\Sigma}=(\Psig,\Vsig)$ has \emph{unique responses} if for every statement $x$,
first message $a$, and challenge $e$, there exists at most one response $z$ such that
$\Vsig(x;a,e,z)=1$. If $\mathrm \Sigma$ has special soundness and unique responses, then in the random-oracle/\gROCRP\ model
$F[\mathrm{\Sigma}]$ is a NIZK-AoK with a straight-line (online) extractor \cite[Thm. 2]{Fischlin[05]}. For any PPT adversary making at most $Q$ distinct queries to $H$ under
a context, the soundness / extraction-error is bounded by
\[
\Pr[\text{$\cV'(x,\pi)=1$ for $x\notin \mathcal{L}$}]
\le (Q+1)\cdot \frac{N_{r,S}}{2^{br}} + \negl(\lambda),
\]
where
\[
N_{r,S} := \sum_{T=0}^{S} \binom{T+r-1}{r-1} = \binom{S+r}{r}
\]
is the number of $r$-tuples $(s_1,\dots,s_r)\in\{0,\dots,2^b-1\}^r$ with
$\sum_i s_i \le S$.
In particular, for any $Q=\poly(\lambda)$, the soundness/extraction error is negligible whenever
\[
br - \log N_{r,S} = \omega(\log\lambda),
\]
e.g., under $S=\Theta(r)$ this is implied by $br=\omega(\log\lambda)$ since
$\log N_{r,S}=O(r)$.

\begin{lemma}[Negligible Fischlin error for admissible parameters]\label{lem:fischlin-negl}
	Let $(t(\lambda),b(\lambda),r(\lambda),S(\lambda))$ satisfy Definition~\ref{FSdef}.
	Then for every $Q=\poly(\lambda)$, the soundness/knowledge-extraction error of $F[\mathrm{\Sigma}]$
	against any PPT prover that makes at most $Q$ distinct queries under the transform context
	is negligible in $\lambda$. Moreover, the honest-rejection probability is $\negl(\lambda)$.
\end{lemma}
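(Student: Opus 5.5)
The proof splits into two parts: soundness/knowledge-extraction error, and honest rejection (completeness). Each part plugs the growth conditions of Definition~\ref{FSdef} into the explicit bounds recalled just before the lemma.

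\textbf{Soundness/extraction.} I will start from the bound
\[
(Q+1)\cdot N_{r,S}\cdot 2^{-br}+\negl(\lambda),
\qquad N_{r,S}=\binom{S+r}{r},
\]
which comes from \cite[Thm.~2]{Fischlin[05]} in the \gROCRP\ setting. It applies because proofs live in a context $\mathsf{ctx}\in\CtxUC$. By Remark~\ref{rem:grocrp-domain-sep} and Lemma~\ref{lem:grocrp-noninterference}, $H(\mathsf{ctx},\cdot)$ behaves as an independent random oracle on every point that was never programmed. Programmed points occur only in simulated proofs; they are fresh except with the negligible probability bounded by Lemma~\ref{lem:grocrp-prequery}, so they do not disturb the counting argument.

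To bound $N_{r,S}$, I use $\binom{S+r}{r}\le 2^{S+r}$. Since $S=\Theta(r)$, there is a constant $c$ with $S\le cr$, giving $\log N_{r,S}\le (c+1)r$. Then
\[
br-\log N_{r,S}\ \ge\ r\bigl(b-(c+1)\bigr).
\]
This is the main obstacle. The condition $br=\omega(\log\lambda)$ alone does not force $b>c+1$, so the argument needs the slack $S$ to be a sufficiently small constant multiple of $r$ relative to $b$. I would resolve this in one of two ways:
\begin{enumerate}
\item Read ``$S=\Theta(r)$'' as $S\le \gamma r$ with $\gamma$ small enough that $b\ge \gamma+2$ eventually. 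For instance, Fischlin takes $S=r$ and $b$ growing or at least constant $\ge 3$.
\item Use the sharper estimate $\log\binom{S+r}{r}\le r\log\bigl(e(1+S/r)\bigr)=O(r)$ with an explicit constant $\log(e(1+\gamma))$, and require $b$ to exceed that constant.
\end{enumerate}
Either way, $br-\log N_{r,S}\ge \Omega(br)=\omega(\log\lambda)$. Multiplying by $Q+1=\poly(\lambda)$ then gives a negligible bound. The same bound transfers to the straight-line extractor, because special soundness plus unique responses yield a witness whenever the log $\mathsf{Log}_{\mathcal P^\ast}$ contains two accepting responses to the same $\mathbf a$, and the failure event is exactly the one that was counted.

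\textbf{Honest rejection.} Per repetition, the no-hit probability is
\[
p_{\mathrm{miss}}=(1-2^{-b})^{2^t}\le e^{-2^{t-b}}.
\]
By the condition $2^{t-b}=\omega(\log\lambda)$, this is $\lambda^{-\omega(1)}$, hence negligible. A union bound over the $r=O(\log\lambda)$ repetitions gives $r\,p_{\mathrm{miss}}=\negl(\lambda)$. Outside this event every $s_i=0$, so $\sum_i s_i=0\le S$. Completeness of $\mathrm{\Sigma}$ makes every check (i) pass, so the verifier accepts. This also uses $2^t<p$, so that challenges embed injectively into $\Zp$.
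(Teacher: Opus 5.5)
Your overall route is the paper's: plug the growth conditions of Definition~\ref{FSdef} into the Fischlin bound $(Q+1)N_{r,S}2^{-br}$ and into $r(1-2^{-b})^{2^t}$. The honest-rejection half is correct as written.

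The soundness half, however, stops short of the statement. You reach $br-\log N_{r,S}\ge r\bigl(b-(c+1)\bigr)$ and conclude that the hypotheses do not force $b>c+1$. You then propose to strengthen them: either read $S=\Theta(r)$ as $S\le\gamma r$ with $\gamma$ small relative to $b$, or simply require $b$ to exceed a constant. As written, that proves a lemma with an extra hypothesis, not the one stated. The missing observation is that this hypothesis is already implied. Definition~\ref{FSdef} also imposes $r=O(\log\lambda)$, say $r\le C\log\lambda$ for large $\lambda$. Together with $br=\omega(\log\lambda)$ this gives $b=br/r\ge br/(C\log\lambda)\to\infty$, i.e.\ $b=\omega(1)$. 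Hence, for the fixed constant $c$ with $S\le cr$ (which exists because $S=\Theta(r)$), eventually $b\ge 2(c+1)$. Then $br-\log N_{r,S}\ge r(b-c-1)\ge br/2=\omega(\log\lambda)$, so $(Q+1)N_{r,S}2^{-br}\le (Q+1)2^{-br/2}$ is negligible for every polynomial $Q$. No reinterpretation of $S=\Theta(r)$ and no extra constraint on $b$ is needed.

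Relatedly, your aside that a constant $b\ge 3$ would do is inconsistent with the definition. A constant $b$ together with $br=\omega(\log\lambda)$ forces $r=\omega(\log\lambda)$, violating $r=O(\log\lambda)$. Your suspicion was well founded, in that the paper's one-line justification (``$\log N_{r,S}=O(r)$'') skips exactly this step. But the step is supplied by the $O(\log\lambda)$ upper bounds already in the hypotheses, not by an additional assumption. Finally, the discussion of programmed points is unnecessary here: the plain soundness/AoK experiment involves no simulator programming.
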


\begin{proofsketch}
	We know that the soundness error is at most
	$(Q+1)\cdot N_{r,S}/2^{br} + \negl(\lambda)$.
	Under Definition~\ref{FSdef}, $br-\log N_{r,S}=\omega(\log\lambda)$; hence the term is negligible for any $Q=\poly(\lambda)$.
	The honest-rejection bound $r\cdot(1-2^{-b})^{2^t}$ is negligible since $2^{t-b}=\omega(\log\lambda)$. \qed
\end{proofsketch}

For completeness, note that in an honest execution each repetition $i$ yields $s_i=0$ whenever the prover's rarity search finds some challenge $e$ with $H_b(\cdot)=0$. In that case the verifier's
sum test holds with $\sum_{i=1}^r s_i = 0 \le S$.
Thus, an explicit (union-bound) upper bound on honest-rejection is
\[
\Pr[\cV' \text{ rejects an honest proof}]
\le r\cdot (1-2^{-b})^{2^t} + \negl(\lambda)
\approx r\cdot \exp(-2^{t-b}) + \negl(\lambda).
\]
Soundness/knowledge-extraction/security error of the Fischlin transform, for a prover
making at most $Q$ distinct oracle queries, under the proof context is bounded by
\[
\varepsilon_{\mathsf{sec}}(Q) \;:=\; (Q+1)\cdot \frac{N_{r,S}}{2^{br}} \;+\; \negl(\lambda).
\]
Consequently, choosing parameter functions $t(\lambda),b(\lambda),r(\lambda),S(\lambda)$ as in
Definition~\ref{FSdef} yields negligible soundness and completeness error in $\lambda$. When quoting concrete numbers, we always label which bound is
liveness and which is security.

The next proposition is an immediate corollary of Fischlin soundness: in a strict GRO model the
UC simulator has no programming capability and is therefore just another PPT prover for $F[\mathrm{\Sigma}]$.
We nevertheless state it explicitly because our UC-NIZK(-AoK) definition requires a universal simulation interface
(Definition~\ref{def:NIZK-AoK}) that must succeed even on off-language statements, and strict GRO rules this out.
This motivates the move to \gROCRP\ (Definition~\ref{def:gROCRP}) and clarifies the model separation relative to works
proved in strict GRO (e.g.,~\cite{CGGMP21}).

\begin{proposition}[Strict GRO is insufficient for Fischlin universal simulation]
	\label{prop:strict-gro-insufficient}
	Fix an NP relation $\cR\subseteq\mathcal X\times\mathcal W$ such that
	$\mathcal X\setminus \mathcal L_\cR \neq \emptyset$, where
	$\mathcal L_\cR:=\{x\in\mathcal X:\exists w\ (x,w)\in\cR\}$.
	Let $\mathrm{\Sigma}=(\Psig,\Vsig)$ be a $\mathrm{\Sigma}$-protocol for $\cR$ with special soundness and unique responses.
	Let $F[\mathrm{\Sigma}]=(\mathcal K,\mathcal P',\mathcal V')$ denote the (optimized) Fischlin transform
	(Definition~\ref{FSdef}) instantiated with parameter functions
	$(t(\lambda),b(\lambda),r(\lambda),S(\lambda))$ satisfying Definition~\ref{FSdef}. Consider the UC model augmented with a strict global random oracle (GRO) that provides only
	$\Query(\cdot,\cdot)$ and no simulator-only programming interface. Assume the public parameters
	$\pp\leftarrow \mathcal K(1^\lambda)$ are generated by the honest/transparent setup of $F[\mathrm{\Sigma}]$.
	Then $F[\mathrm{\Sigma}]$ cannot satisfy the universal simulation interface of
	Definition~\ref{def:NIZK-AoK} by any PPT simulator. Concretely, for every PPT simulator $\s=(\s_1,\s_2)$ that has access only to $\Query$,
	for every fixed statement $x^\star\in \mathcal X\setminus \mathcal L_\cR$, if
	$(\pp,\tau)\leftarrow \s_1(1^\lambda)$ and $\pi^\star\leftarrow\s_2(\pp,\tau,x^\star)$, then
	\[
	\Pr\big[\mathcal V'(\pp,x^\star,\pi^\star)=1\big]\ \le\ \negl(\lambda),
	\]
	where the probability is over the coins of $\s$ and the strict GRO.
\end{proposition}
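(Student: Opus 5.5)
The plan is a direct reduction to Fischlin soundness, which the paper already calls an immediate corollary. Fix a PPT simulator $\s=(\s_1,\s_2)$ with access only to $\Query$ and an off-language statement $x^\star\in\mathcal X\setminus\mathcal L_\cR$. From $\s$ and $x^\star$ I will build a cheating prover $\mathcal P^\ast$ for $F[\mathrm{\Sigma}]$ in the strict-GRO world. On input the honest setup output $\pp$, it runs $(\pp',\tau)\leftarrow\s_1(1^\lambda)$, computes $\pi^\star\leftarrow\s_2(\pp',\tau,x^\star)$, forwards every $\Query$ call of $\s$ to the global oracle verbatim, and outputs $\pi^\star$. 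Because the oracle offers no $\SimProgramRO$ hook, every oracle answer $\s$ sees is a genuine lazily sampled value of $H$. The simulated verification $\mathcal V'(\pp,x^\star,\pi^\star)$ is therefore exactly the verification of a proof produced by an ordinary PPT prover in the real random-oracle experiment.

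There is one subtlety to handle first: $\s_1$ may output $\pp'\neq\pp$. Since the setup of $F[\mathrm{\Sigma}]$ is transparent (as the footnote to Definition~\ref{def:NIZK-AoK} notes, no trapdoor exists), I argue in two steps.

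\textbf{Step 1.} For $F[\mathrm{\Sigma}]$ the setup essentially fixes only group and context data. Verification with $\pp'$ is therefore the same algorithm as verification with $\pp$, up to a distribution that the ZK condition (i) forces to be computationally indistinguishable.

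\textbf{Step 2.} Consider the predicate "$\mathcal V'(\cdot,x^\star,\pi^\star)=1$ where $\pi^\star\leftarrow\s_2(\cdot,\tau,x^\star)$". I will note that $\s_2$ needs only $\pp$ and state $\tau$, and that $\tau$ carries no programming power in strict GRO. Hence $\mathcal P^\ast$ can simply run $\s_2$ on the honest $\pp$, handing it a $\tau$ produced internally. Alternatively, if $\s_2$ insists on its own $\pp'$, any gap in acceptance probability yields a distinguisher against condition (i).

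With that settled, $\mathcal P^\ast$ is a PPT prover making at most $Q=\poly(\lambda)$ distinct queries under the proof context. By the Fischlin soundness bound stated just before Lemma~\ref{lem:fischlin-negl}, we have $\Pr[\mathcal V'(\pp,x^\star,\pi^\star)=1]\le (Q+1)N_{r,S}/2^{br}+\negl(\lambda)$. This bound holds because $\mathrm{\Sigma}$ has special soundness and unique responses (Fischlin, Thm.~2). By Lemma~\ref{lem:fischlin-negl}, under parameters satisfying Definition~\ref{FSdef}, the bound is $\negl(\lambda)$. This contradicts universal simulation condition (iii) at $x^\star$, which demands acceptance with probability $1-\negl(\lambda)$; nonemptiness of $\mathcal X\setminus\mathcal L_\cR$ guarantees such an $x^\star$ exists.

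The main obstacle is the setup mismatch in Steps 1–2, i.e. ensuring that the simulator's $\pp$ and hidden $\tau$ cannot substitute for programming. I would first try the transparency argument: there is no trapdoor, and $\mathcal V'$ depends on $\pp$ only through public fixed parameters. If that fails, I would fall back on the indistinguishability condition (i) together with the efficiently checkable acceptance predicate on $x^\star$. Everything else is bookkeeping about the simulator's oracle queries being ordinary, recordable $\Query$ calls.
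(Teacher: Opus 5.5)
Your core reduction is the paper's own argument. In strict GRO the simulator has no $\SimProgramRO$ hook, so it is just a PPT prover for $F[\mathrm{\Sigma}]$ making polynomially many queries in the proof context. The Fischlin bound of Lemma~\ref{lem:fischlin-negl} then makes its acceptance probability on the off-language $x^\star$ negligible, which contradicts condition (iii).

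The detour through Steps~1--2 is unnecessary, and part of it would not work as written.
\begin{itemize}
\item \emph{The fallback in Step~2 fails.} You propose turning an acceptance gap into a distinguisher against condition (i). That distinguisher receives only $\pp$, never $\tau$, so it cannot evaluate the predicate ``$\mathcal V'$ accepts $\s_2(\pp,\tau,x^\star)$''. This is exactly how trapdoor-CRS simulators satisfy (i) while producing accepting proofs of false statements.
\item \emph{The main route in Step~2 changes the experiment.} Running $\s_2$ on the honest $\pp$ with an internally produced, unrelated $\tau$ says nothing about the matched pair $(\pp',\tau)\leftarrow\s_1$ that the proposition quantifies over.
\end{itemize}

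The issue disappears for a simpler reason. The bound $(Q+1)N_{r,S}/2^{br}$ is a statement over the oracle's coins. It uses only special soundness and unique responses of $\mathrm{\Sigma}$ for the fixed relation $\cR$: for off-language $x^\star$, each $(x^\star,\mathbf a,i)$ admits at most one accepting $(e,z)$. It uses nothing about how $\pp$ was sampled. So you may take as the cheating prover the composite machine that runs $\s_1$, then runs $\s_2$ on $x^\star$, forwards all $\Query$ calls, and ignores its own input. The bound then applies directly. The proposition's transparency hypothesis (honestly generated $\pp$, no trapdoor) is what the paper relies on to state this in one line.

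Your Step~1 can also be rescued when $\mathcal K$ is deterministic. Then testing $\pp'=\pp$ is an efficient distinguisher, so (i) forces equality except with negligible probability. Computational indistinguishability of the two $\pp$ distributions alone, as you phrase it, would not justify the step.
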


\begin{proofsketch}[Immediate from soundness (model-separation point)]
	Fix any $x^\star\in\mathcal X\setminus \mathcal L_\cR$.
	In the strict GRO model the simulator has access only to $\Query(\cdot,\cdot)$ and has no programming interface.
	Consequently, $\s_2(\pp,\tau,\cdot)$ is simply a PPT prover for the non-interactive proof system $F[\mathrm{\Sigma}]$,
	with oracle access to $\Query$ in the proof context used by $F[\mathrm{\Sigma}]$.
	
	Let $Q(\lambda)$ upper bound the number of distinct oracle queries that $\s_2$ makes under that proof context; since
	$\s_2$ is PPT, $Q(\lambda)=\poly(\lambda)$.
	By Fischlin soundness for admissible parameters (Lemma~\ref{lem:fischlin-negl}), any PPT prover making at most
	$Q(\lambda)$ distinct oracle queries outputs an accepting proof for an off-language statement with probability at most
	$\negl(\lambda)$. Applying this to $\s_2$ yields
	\[
	\Pr\big[\mathcal V'(\pp,x^\star,\pi^\star)=1\big]\le \negl(\lambda),
	\quad\text{where }\pi^\star\leftarrow\s_2(\pp,\tau,x^\star).
	\]
	
	However, the universal simulation interface in Definition~\ref{def:NIZK-AoK} requires that for every
	statement $x\in\mathcal X$ (specifically, $x^\star$), the simulator outputs an accepting proof with probability
	$1-\negl(\lambda)$.
	This contradiction shows that $F[\mathrm{\Sigma}]$ cannot satisfy the universal simulation interface in strict GRO. \qed
\end{proofsketch}

For DLEQ proofs we restrict the statement space to
\[
\mathcal X_{\mathsf{DLEQ}} \ :=\ (\mathbb G\setminus\{0_\mathbb G\})\times(\mathbb G\setminus\{0_\mathbb G\}).
\]
Henceforth, we refer to $\mathrm{\Pi}_{\mathsf{DL}}$ and $\mathrm{\Pi}_{\mathsf{DLEQ}}$ for the Fischlin transforms of the Schnorr and Chaum--Pedersen $\mathrm{\Sigma}$-protocols \cite{Schnorr[91],ChaumPedersen93Wallet} (for $\cR_{\mathsf{DL}}$ and $\cR_{\mathsf{DLEQ}}$ resp.), instantiated in disjoint \gROCRP\ contexts; concrete details appear in Section \ref{SSE-NIZK}.

\begin{lemma}[Fischlin-based UC-NIZK-AoKs for DL and DLEQ in \gROCRP]
	\label{lem:gnpro-uc-nizk}
	Assume DL hardness in the prime-order group fixed by the security parameter $\lambda$.
	Fix Fischlin parameters $(t(\lambda),b(\lambda),r(\lambda),S(\lambda))$ satisfying Definition~\ref{FSdef}.
	Let $\mathrm{\Pi}_{\mathsf{DL}}$ denote the (optimized) Fischlin instantiation described in Section~\ref{SSE-NIZK}
	for relation $\cR_{\mathsf{DL}}$, instantiated in a \gROCRP\ context in $\CtxUC$. If, in addition, DDLEQ hardness holds, let $\mathrm{\Pi}_{\mathsf{DLEQ}}$ denote the corresponding (optimized)	Fischlin instantiation for relation $\cR_{\mathsf{DLEQ}}$, instantiated in a disjoint \gROCRP\ context in $\CtxUC$. Then the following hold for $\mathrm{\Pi}_{\mathsf{DL}}$; and, under the additional DDLEQ assumption, the same items also hold for $\mathrm{\Pi}_{\mathsf{DLEQ}}$.
	
	\medskip
	\noindent (i) Zero-knowledge (universal simulation interface): 
	The proof system satisfies computational zero-knowledge with the universal simulation interface of
	Definition~\ref{def:NIZK-AoK} in its corresponding context in $\CtxUC$.
	
	\medskip
	\noindent (ii) NIZK-AoK with straight-line extraction: 
	The proof system is a NIZK-AoK in the sense of Definition~\ref{def:NIZK-AoK}, with an online/straight-line
	extractor that may inspect the adversary's \gROCRP\ query/answer log under the corresponding proof context.
	For any $Q=\poly(\lambda)$ distinct \gROCRP\ queries under that proof context, the resulting
	knowledge/soundness error is negligible (Lemma~\ref{lem:fischlin-negl}).
	
	\medskip
	\noindent (iii) Simulation-extractability:
	The proof system is simulation-extractable for fresh statements in the sense of Definition~\ref{def:simext}.
	
	\medskip
	\noindent (iv) Composability under domain separation in \gROCRP:
	The guarantees in (i)--(iii) continue to hold when the proof system is invoked as a subroutine inside an
	arbitrarily interleaved UC execution in the \gROCRP-hybrid model, provided that every use of the global oracle that serves a distinct protocol-level purpose is domain-separated by disjoint gRO-CRP contexts and injective encodings.
\end{lemma}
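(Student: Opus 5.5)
The plan is to reduce each item to properties of the underlying $\mathrm{\Sigma}$-protocols (Schnorr for $\cR_{\mathsf{DL}}$, Chaum--Pedersen for $\cR_{\mathsf{DLEQ}}$) and then lift them through the Fischlin transform of Definition~\ref{FSdef}. Both $\mathrm{\Sigma}$-protocols have perfect completeness, special soundness and unique responses. Special soundness follows from $z-z'=(e-e')w$, with $e\neq e'$ invertible since $2^t<p$. Unique responses hold because $z$ is determined by $z\G=a+eX$ (respectively by the pair of equations for DLEQ). Special HVZK comes from the standard simulator that samples $z\leftarrowdollar\Zp$ and sets $a:=z\G-eX$. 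This simulator is perfect for Schnorr. For Chaum--Pedersen on off-language statements it is also what yields the universal interface.

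I expect item (i) to be the main obstacle. The reason is that the universal interface must produce accepting proofs even for $x\notin\mathcal L_\cR$, which is possible only because of programmability in $\CtxUC$ (cf.\ Proposition~\ref{prop:strict-gro-insufficient}). The construction of $\s_2$ is as follows. For each repetition $i$, it samples a challenge $e_i$ and a fresh $z_i\leftarrowdollar\Zp$ and derives $a_i$ with the HVZK simulator. After fixing $\mathbf a$, it calls $\SimProgramRO(\mathsf{ctx},\langle x,\mathbf a,i,e_i,z_i\rangle,y_i)$ with $y_i$ uniform subject to $\mathrm{lsb}_b(y_i)=0$. Then condition (ii) of the verifier holds with sum $0\le S$, and condition (i) holds by construction. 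Programming fails only on a pre-queried point. By Table~\ref{tab:grocrp-entropy} and Lemma~\ref{lem:grocrp-prequery}, that event has probability at most $\sum_j Q_j2^{-\Theta(\lambda)}=\negl(\lambda)$, which gives item (iii) of the ZK definition. The setup is transparent, so item (i) of the ZK definition is trivial.

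For indistinguishability on true statements, I would use a short hybrid argument. The real prover's output fixes, for each $i$, the first challenge $e$ in the search order with $H_b=0$. This introduces a distributional bias that the simulator must match. So I would have $\s_2$ sample $e_i$ from the same distribution, namely geometric truncated at $2^t$, and program the points for the earlier challenges to nonzero values. This requires also sampling their (unique) responses $z_{i,e}$, which is possible for true statements once $a_i$ is fixed. For the Schnorr case the only loss is the pre-query term. For DLEQ, the simulated statement transcripts are perfectly simulatable when $x\in\mathcal L$. The DDLEQ assumption would be invoked where the surrounding hybrids swap true/false DLEQ statements; I would argue it is needed only to make the simulator's view on fresh off-language queries consistent.

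For item (ii), I would invoke Fischlin's straight-line extractor \cite[Thm.~2]{Fischlin[05]}. It scans $\mathsf{Log}_{\mathcal P^\ast}$ for two queries $\langle x,\mathbf a,i,e,z\rangle$ and $\langle x,\mathbf a,i,e',z'\rangle$ with $e\neq e'$ that are both accepting, and applies special-soundness extraction. Its failure probability is exactly the bound $(Q+1)N_{r,S}/2^{br}$, which is negligible by Lemma~\ref{lem:fischlin-negl}. For item (iii), I would run the same extractor in the presence of the simulation oracle. A valid proof for a fresh $x^\ast\notin Q$ uses hash inputs that begin with $x^\ast$, and by injectivity of $\langle\cdot\rangle$ none of these coincide with the simulator-programmed points (which all contain queried statements). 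Those points are therefore genuinely random oracle answers logged from $\cA$, and the counting argument carries over unchanged. Finally, for item (iv), Remark~\ref{rem:grocrp-domain-sep} and Lemma~\ref{lem:grocrp-noninterference} make each context an independent oracle. A reduction can then simulate all other-context activity itself, and Remark~\ref{rem:grocrp-entropy-sources} lets the union bound over all concurrently interleaved sessions stay negligible.
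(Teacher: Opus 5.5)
Your high-level route is the same as the paper's. The paper cites \cite[Thm.~2]{Fischlin[05]} for (i)--(ii) and \cite[Thm.~3]{Fischlin[05]} for (iii). It realizes the universal interface with $\SimProgramRO$, bounds its failure by Lemma~\ref{lem:grocrp-prequery}, and derives (iv) from per-context independence. Your direct freshness argument for (iii) and your treatment of (ii) and (iv) are fine.

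The gap is in the step you add for indistinguishability on true statements. You have $\s_2$ program the points $\langle x,\mathbf a,i,e,z_{i,e}\rangle$ for $e<e_i$ to nonzero values, claiming the unique responses $z_{i,e}$ can be sampled once $a_i$ is fixed. They cannot be computed without the witness. $\s_2$ knows $z_i$ with $z_i\G=a_i+\bar e_iX$, and any second accepting response $z_{i,e}$ with $\bar e\neq\bar e_i$ gives $w=(z_{i,e}-z_i)(\bar e-\bar e_i)^{-1}$ by special soundness. Uniqueness says the response exists; it does not make it efficiently computable. This programming step therefore amounts to solving DL.

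Nor can the step simply be dropped. Definition~\ref{def:NIZK-AoK}(ii) quantifies over all $(x,w)\in\cR$, so the distinguisher may know $w$. It can then compute $z_i+(\bar e-\bar e_i)w$ for every $e<e_i$ and query $H_b$ at those points.
\begin{itemize}
\item In a real proof all of these values are nonzero, by construction of the early-break search.
\item In a simulated proof they are fresh $\GgROCRP$ entries, each equal to $0$ with probability $2^{-b}$.
\end{itemize}
With truncated-geometric $e_i$, each repetition is caught with probability about $1/2$, and the proof as a whole with probability about $1-2^{-r}$.

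The standard local-ROM remedy is for the simulator to answer such queries on the fly after checking $\Vsig$ publicly. That remedy is unavailable here. $\Query$ samples each point at the moment it is first queried, and the simulator's only hook is fresh-point $\SimProgramRO$ at inputs it can name. So your hybrid does not establish ZK on true statements.

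To close the gap you must either weaken the claim or change the transform or oracle model.
\begin{itemize}
\item Weakening the claim to distinguishers that cannot compute $w$ means only the marginal of $e_i$ must be matched, since any query to an earlier point reduces to special soundness. But this does not cover the paper's adaptive-corruption setting, where the adversary learns retained witnesses such as $\sigma_{1,1},\sigma_{3,1}$.
\item A change of transform could be, for example, a randomized-challenge variant as in \cite{Anna[25]}.
\end{itemize}

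The paper's own sketch does not supply this step either. It imports (i) from Fischlin's Theorem~2, which is proved in the ordinary programmable ROM, and treats pre-query collisions as the only failure mode.
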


\begin{proofsketch}
	Items (i) and (ii) follow by instantiating Fischlin's analysis \cite[Thm.~2]{Fischlin[05]}
	for $\mathrm{\Sigma}$-protocols with special soundness and unique responses (here, Schnorr for $\cR_{\mathsf{DL}}$
	and Chaum--Pedersen for $\cR_{\mathsf{DLEQ}}$), with negligible error by Lemma~\ref{lem:fischlin-negl}.
	The optimized prover/verifier organization of \cite{Lindell[24]} affects efficiency but not the underlying
	transform security argument. Item (iii) follows from Fischlin's simulation-soundness / simulation-extractability theorem
	\cite[Thm.~3]{Fischlin[05]}. In our \gROCRP\ formulation, the simulator's universal simulation interface is realized via $\SimProgramRO$ in the corresponding contexts in $\CtxUC$.
	Since $\SimProgramRO$ fails on already-defined points, the simulator’s only failure mode is a pre-query collision on a programmed input;
	by Lemma~\ref{lem:grocrp-prequery}, this occurs with negligible probability for the Fischlin-based proofs whose programmed inputs include fresh high-entropy material. 
	
	For item (iv), fix any UC execution in the \gROCRP-hybrid model with arbitrarily many concurrently
	interleaved sessions and sub-protocols. By construction of $\GgROCRP$, the oracle
	table is indexed by pairs $(\mathsf{ctx},x)$, so oracle activity (including simulator programming) in any
	context $\mathsf{ctx}'\neq \mathsf{ctx}_\mathrm{\Pi}$ cannot affect the distribution of replies in
	$\mathsf{ctx}_\mathrm{\Pi}$ (cf.\ Lemma~\ref{lem:grocrp-noninterference} and Remark~\ref{rem:grocrp-domain-sep}).
	Injective encodings together with disjoint contexts ensure that logically distinct uses do not collide on
	the same $(\mathsf{ctx},x)$, so each invocation of $\mathrm{\Pi}$ is subject to the same per-context oracle behavior as in the
	standalone \gROCRP\ analysis.
	
	Under UC concurrency, the total number of oracle queries and simulator programming attempts in any fixed
	programmable context remains $\poly(\lambda)$. The simulator’s only failure mode for the universal
	simulation interface is a pre-query collision on a point it intends to program; Lemma~\ref{lem:grocrp-prequery},
	together with a union bound over all simulator programming attempts in that context across the entire UC
	execution, bounds this event by $\negl(\lambda)$. Conditioned on the complement, the proofs’ ZK, AoK, and
	simulation-extractability guarantees in (i)--(iii) apply unchanged inside the UC execution. \qed
\end{proofsketch}

\section{Enforcing Public Structure without Export: USV Certificates}\label{sec:USV}
Unique Structure Verification (USV) is a non-interactive, publicly verifiable certificate that lets anyone
derive a unique public opening for a commitment to a hidden scalar without exporting that scalar\footnote{In UC, we will model handle binding as verifier-scoped; see Section \ref{UC:usv}}.
Extraction is public and straight-line: it is a deterministic function of the certificate and uses no trapdoor or
rewinding. 

\begin{assumption}[Transparent generator derivation]\label{assump:transparent-pp}
	\emph{Let \(\pounds\) be a public randomness source (e.g., \cite{Beacon[19]}) that is sampled outside the protocol and
			is not adversary-influenceable: conditioned on the adversary's view prior to publication, \(\pounds\) has min-entropy
			at least \(\lambda\). Fix a prime-order group $\mathbb G$ of size $p$ with
	canonical generator $\mathcal G$.
	For a deterministic, publicly specified hash-to-group map, $\mathsf{H2G}$ \cite{RFC9380},
	whose output distribution (over random $\pounds$) is computationally indistinguishable
	from uniform over $\mathbb G\setminus\{0_\mathbb G\}$, define $\mathcal H:=\mathcal H_{c^\star}$ where
	$\mathcal H_c := \mathsf{H2G}(\texttt{USV.H}\parallel \pounds\parallel \mathsf{enc}(c))$,
	$\mathsf{enc}(c)$ is the 4-byte big-endian encoding of $c$, and $c^\star$ is the smallest $c\ge 0$
	such that $\mathcal H_c\notin\{0_\mathbb G,\mathcal G\}$. Set $\pp:=(\mathbb G,p,\mathcal G,\mathcal H)$, and define the deterministic public setup procedure
	\[
	\mathsf{Setup}(1^\lambda,\pounds)\to \pp
	\]}
\end{assumption}

Definitions \ref{def:USV-cert}--\ref{def:usv-eqv} describe the stand‑alone primitive; our composable statement is the UC realization of $\Fusv$ (Theorem \ref{thm:usv-real-ideal}).

\begin{definition}[USV certificate scheme]\label{def:USV-cert}
	\emph{A USV certificate scheme consists of the deterministic public setup procedure
		\[
		\mathsf{Setup}(1^\lambda,\pounds)\to \pp
		\]
		specified in Assumption~\ref{assump:transparent-pp}, together with the following algorithms
		\[
		\Cert(\pp,m)\to(C,\zeta),\qquad
		\Vcert(\pp,C,\zeta)\in\{0,1\},\qquad
		\Derive(\pp,C,\zeta)\to(\mathrm{\Upsilon}\ \text{or}\ \perp),
		\]
		and a deterministic \emph{public-opening projection}
		\[
		\mathsf{PubOpen}(\pp,\cdot):\ \mathcal O \to \mathbb G\cup\{\perp\},
		\]
		where $\mathcal O$ is the opening space of $\Derive$ (and we adopt the convention $\mathsf{PubOpen}(\pp,\perp)=\perp$). We define a \emph{verified opening} algorithm as
		\[
		\Open(\pp,C,\zeta):=
		\begin{cases}
			\Derive(\pp,C,\zeta) & \quad \text{if }\Vcert(\pp,C,\zeta)=1,\\
			\perp & \quad \text{otherwise}.
		\end{cases}
		\]
		We also define the associated \emph{public opening} as
		\[
		\Open_M(\pp,C,\zeta)\ :=\ \mathsf{PubOpen}\bigl(\pp,\Open(\pp,C,\zeta)\bigr).
		\]
	There exists a polynomial-time decidable relation $\mathcal R_{\pp}\subseteq \mathcal C\times\mathcal O$
		such that the following hold:
		\begin{enumerate}
			\item Completeness:
			For every $m \neq 0$,
			\begin{align*}
			\Pr\Big[
			\Vcert(\pp,C,\zeta)=1 \wedge
			\mathrm{\Upsilon}:=\Open(\pp,C,\zeta)\neq\perp \wedge
			(C,\mathrm{\Upsilon})\in\mathcal R_{\pp} \wedge
			\Open_M(\pp,&C,\zeta)\neq\perp 
			: \\ &(C,\zeta)\leftarrow \Cert(\pp,m)
			\Big] \ge 1-\negl(\lambda).
		\end{align*}
			\item Deterministic verified opening (uniqueness):
			For any fixed $(C,\zeta)$, $\Derive(\pp,C,\zeta)$ is deterministic and returns either $\perp$ or a single value
			$\mathrm{\Upsilon}\in\mathcal O$. Moreover, $\Open(\pp,C,\zeta)=\perp$ iff $\Vcert(\pp,C,\zeta)=0$, and whenever
			$\Open(\pp,C,\zeta)=\mathrm{\Upsilon}\neq\perp$, we have $(C,\mathrm{\Upsilon})\in\mathcal R_{\pp}$.
			Further, $\mathsf{PubOpen}(\pp,\mathrm{\Upsilon})$ is deterministic; hence $\Open_M(\pp,C,\zeta)$ is deterministic
			and returns either $\perp$ or a single $M\in\mathbb G$.
			\item Opening-conditional tag simulatability:
			There exists a PPT simulator $\s_{\mathsf{cert}}$ such that:
			\begin{enumerate}[label=(\alph*),nosep]
				\item Correctness of simulated tags:
				For every $(C,\mathrm{\Upsilon})\in\mathcal R_{\pp}$, if
				$\tilde\zeta\leftarrow\s_{\mathsf{cert}}(\pp,C,\mathrm{\Upsilon})$ then,
				except with probability $\negl(\lambda)$,
				\[
				\Vcert(\pp,C,\tilde\zeta)=1
				\quad\text{and}\quad
				\Derive(\pp,C,\tilde\zeta)=\mathrm{\Upsilon}.
				\]
				Equivalently, except with probability $\negl(\lambda)$,
				$\Open(\pp,C,\tilde\zeta)=\mathrm{\Upsilon}$, and therefore
				$\Open_M(\pp,C,\tilde\zeta)=\mathsf{PubOpen}(\pp,\mathrm{\Upsilon})$.
				\item Indistinguishability conditioned on the opening:
				For every $m$,
				\begin{align*}
					\Big\{(C,\mathrm{\Upsilon},\zeta):\ &(C,\zeta)\leftarrow\Cert(\pp,m);\ \mathrm{\Upsilon}\leftarrow\Open(\pp,C,\zeta)\Big\}
					\ \approx_c\ \\
					&\Big\{(C,\mathrm{\Upsilon},\tilde\zeta):\ (C,\zeta)\leftarrow\Cert(\pp,m);\ \mathrm{\Upsilon}\leftarrow\Open(\pp,C,\zeta);\
					\tilde\zeta\leftarrow\s_{\mathsf{cert}}(\pp,C,\mathrm{\Upsilon})\Big\}.
				\end{align*}
			\end{enumerate}
	\end{enumerate}}
\end{definition}

For the UC realization of $\Fusv$ and for SDKG, we rely on completeness and deterministic verified
openings (Items~1--2 of Definition~\ref{def:USV-cert}) together with equivocation resistance
(Definition~\ref{def:usv-eqv}). The stronger opening-conditional tag-simulatability interface
(Item~3) is not required by any UC hybrid in this paper; we include it because it is useful in
variants where the ideal world fixes an opening first and the simulator must later backfill a
compatible accepting tag, and because it holds for our instantiation
(Lemma~\ref{lem:usv-tag-sim}).

\begin{definition}[Equivocation experiment]\label{def:usv-eqv-exp}
	\emph{Work in the \gROCRP-hybrid model of Definition~\ref{def:gROCRP} with global oracle
		\(\GgROCRP\) (Fig.~\ref{fig:GgROCRP}). The experiment and the adversary share access
		to the same oracle interface \(\Query(\cdot,\cdot)\). Fix public parameters
		\(\pp=(\mathbb{G},p,\mathcal G,\mathcal H)\) and the USV algorithms
		\((\Cert,\Derive,\Vcert,\Open,\mathsf{PubOpen})\) from Definition~\ref{def:USV-cert}. Experiment \(\mathsf{Exp}^{\mathrm{eqv}}_{\mathcal A}(1^\lambda)\) is defined as:
	\begin{enumerate}
		\item Sample \(\pounds\) according to the public randomness source/beacon distribution
		(Assumption~\ref{assump:transparent-pp}), and set \(\pp \gets \mathsf{Setup}(1^\lambda,\pounds)\).
		\item Run \(\mathcal A^{\Query}(1^\lambda,\pp)\), i.e., \(\mathcal A\) on input \((1^\lambda,\pp)\) with oracle access
		to \(\Query(\cdot,\cdot)\), and obtain \((C,\zeta,\zeta')\) with \(\zeta\neq \zeta'\).
		\item Let \(b\leftarrow 1\) iff \(\Vcert(\pp,C,\zeta)=1\ \wedge\ \Vcert(\pp,C,\zeta')=1\ \wedge\
		\Open_M(\pp,C,\zeta)\neq \Open_M(\pp,C,\zeta')\), where $\Vcert$ (and the embedded verifier $\cV_{\DLEQ}$) evaluates any needed oracle calls as
		$\Query(\mathsf{ctx}_{\mathsf{DLEQ}},\cdot)$ in the dedicated USV/DLEQ proof context
		$\mathsf{ctx}_{\mathsf{DLEQ}}\in\CtxUC$ (disjoint from $\mathsf{ctx}_{\mathsf{UC}}$ and
		$\mathsf{ctx}_{\mathsf{KeyBox}}$). Otherwise set \(b\leftarrow 0\). Output \(b\).
	\end{enumerate}
Define \(Adv^{\mathrm{eqv}}_{\mathcal A}(\lambda):=\Pr[\mathsf{Exp}^{\mathrm{eqv}}_{\mathcal A}(1^\lambda)=1]\).}
\end{definition}

\begin{definition}[Equivocation resistance]\label{def:usv-eqv}
	\emph{A USV certificate scheme is \emph{equivocation resistant} (in the \gROCRP-hybrid model)
		if for every PPT adversary \(\mathcal A\) with oracle access to \(\Query(\cdot,\cdot)\),
		\(Adv^{\mathrm{eqv}}_{\mathcal A}(\lambda)\le \negl(\lambda)\).}
\end{definition}

In the NXK/KeyBox setting targeted by this paper, USV certificate generation is mediated by the KeyBox API. Concretely, we include a key-independent admissible operation $\textsf{USV.Cert}\in\mathcal F_{\mathrm{adm}}$ that samples an internal witness scalar $m\leftarrow \Zp^*$, computes $\langle C,\zeta\rangle\leftarrow \Cert(\pp,m)$, erases $m$, and returns only $\langle C,\zeta\rangle$ to the host. Formally, a party obtains a certificate by invoking
\[
\langle C,\zeta\rangle \leftarrow \FKeyBox^{(P)}.\Use(\mu,\textsf{USV.Cert},\langle \mathsf{lbl}\rangle),
\]
where the slot argument $\mu$ is ignored (as for the key-independent $\OpenFromPeer$ interface in Fig.~\ref{Fdskg}).

\subsubsection{Relation to standard notions.}
USV can be viewed as a small ``commit-and-certify'' primitive whose interface differs from similar abstractions as:
\begin{itemize}[leftmargin=*,nosep]
	\item A standard commitment is opened by revealing the full witness (message and randomness).
	USV never reveals the scalar; instead it yields a deterministic public opening to the induced group element that is sufficient for the transcript-defined affine consistency checks used under NXK.
	
	\item Extractable commitments typically provide a privileged extractor that outputs the committed
	message (using trapdoors and/or rewinding). USV instead makes extraction public and deterministic,
	but extracts only the canonical group element (not the scalar), aligning with NXK.
	
	\item Equivocable commitments enable opening a fixed commitment to different messages.
	In contrast, USV is \textit{opening-unique} and explicitly requires equivocation resistance. The simulator’s power is orthogonal:
	it can simulate tags conditioned on a chosen opening, which is exactly what we need in the UC hybrids.
	
	\item PVSS-style objects certify well-formed encrypted shares or allow recovery by a set of parties.
	USV provides neither ciphertexts nor recoverable secrets; it certifies only public structure needed to
	replace exported-share enforcement under NXK.
	
	\item The tag $\zeta$ can be interpreted as a compact NIZK-style certificate of well-formedness, but with the special
	feature that it induces a canonical and deterministic public opening that the transcript can reference.
	
\item USV supports publishing commitment-shaped material $(C,\zeta)$ that deterministically defines the
canonical public point
\[
M \ :=\ \Open_M(\pp,C,\zeta)\ =\ m\G.
\]
The scalar \(m\) remains non-exportable and is hidden computationally (under DL): since
\(M\) is public and \(m\mapsto m\G\) is a bijection in a prime-order group, \(m\) is not information-theoretically
hidden once a valid certificate is published. In the SDKG base run (Section~\ref{SDKG}), the leaf transmits
$(C,\zeta)$ in Round~1, so $M$ is transcript-defined from the outset.
\end{itemize}

\subsection{Why USV is needed under hardened NXK profiles (overview)}\label{subsec:usv-need}
Several later checks (in SDKG verification and in the transcript-driven idealization) require certain group
elements to be deterministic functions of the public transcript in straight-line. The principal example is a
leaf-defined point $M=m\G$ (and derived auxiliaries): verifiers and the UC simulator must be able to compute
these points from the transcript alone.

In our NXK setting, long-term shares (and any other KeyBox-resident secrets) are API-non-exportable
(Assumption~\ref{assump:keybox-opacity}). Any share-deriving material is NXK-restricted:
it must remain transcript-private and must not be written to persistent storage outside a KeyBox, though it may be
handled transiently in host RAM during an atomic local step and must then be securely erased
(\textit{Reader Note}~\ref{box:export-visibility}; Remark~\ref{rem:transport-vs-nxk}). State continuity furthermore
rules out rewinding/forking-based extraction at the hardware boundary (Assumption~\ref{assump:tee-continuity}).
Finally, KeyBox-local oracle calls are not visible to the UC simulator (local-call semantics).

Under hardened/minimal profiles, the scalar $m$ underlying $M=m\G$ is generated inside a state-continuous KeyBox and
is not exported. A plain hiding commitment $C=\mathsf{Commit}(m;r)$ therefore does not determine $M$ unless one
can extract $m$ (or otherwise obtain $m\G$) in straight-line. This leaves three design options:

\begin{enumerate}[leftmargin=*,nosep]
	\item Publish $M=m\G$ directly:
	This removes the need for USV, but assumes the profile allows computing/exporting $m\G$ for fresh ephemeral scalars.
	
	\item Commit to $m$ and extract $m$ from an opening proof:
	If the opening proof is generated inside the KeyBox (to avoid exporting $m$), then straight-line extraction fails in our
	model: there is no rewinding/forking (state continuity) and no simulator access to the KeyBox’s oracle-log.
	If the opening proof is generated outside the KeyBox, then the leaf must materialize $(m,r)$ (or an equivalent
	caller-invertible image sufficient to derive $M$) in non-KeyBox state, which contradicts the hardened/minimal-profile
	design point.
	
	\item Publish commitment-shaped material with a publicly verifiable certificate that deterministically yields $M$:
	USV implements exactly this: from $(C,\zeta)$ anyone can compute the unique public opening
	$M=\Open_M(\pp,C,\zeta)$, while $m$ remains non-exportable.
\end{enumerate}

\paragraph{Least-privilege motivation (why Option 1 may be disallowed).}
This restriction can arise in deployments where the protocol principal is authorized to use a non-exportable
asymmetric key (e.g., sign/derive), but is explicitly denied the separate capability to retrieve its public key.\footnote{Concretely, several cloud KMS products gate ``get public key'' behind distinct permissions; e.g., AWS KMS \texttt{kms:GetPublicKey} \cite{aws-kms-getpublickey}, Google Cloud KMS \texttt{cloudkms.cryptoKeyVersions.viewPublicKey} \cite{gcp-kms-roles-perms}, and Azure Key Vault \texttt{get} reads the public part of a key \cite{azure-kv-keys-details}.}
This is an interface/profile assumption, not a claim that KeyBoxes in general cannot export public points.

For the formal necessity statement for the commit-only alternative, see Section~\ref{subsec:why-usv-straightline}
(Lemma~\ref{lem:commit-only-no-M2}).

\subsection{An Instantiation}\label{subsec:USV-inst}
Let $\mathcal R_{\pp} := \{(C,(M,R))\in \mathbb G^3 : C=M+R,\ M\neq 0_{\mathbb G},\ R\neq 0_{\mathbb G}\}$, and define the DLEQ relation
\[
\cR_{\mathsf{DLEQ}} :=
\left\{ \big((\pp,A,B),r\big)\ :\ (A,B)\in\mathcal X_{\mathsf{DLEQ}} \ \wedge\ r\in\mathbb Z_p^* \ \wedge\ A=r\G \ \wedge\ B=r\h \right\}.
\]
Let $\mathrm{\Pi}_{\mathsf{DLEQ}}$ be the UC-NIZK-AoK
(via optimized Fischlin in the \gROCRP\ model; see Section~\ref{SSE-NIZK}) for $\cR_{\mathsf{DLEQ}}$. We instantiate $\mathrm{\Pi}_{\mathsf{DLEQ}}$ in the dedicated context
$\mathsf{ctx}_{\mathsf{DLEQ}}\in\CtxUC$, which is disjoint from
$\mathsf{ctx}_{\mathsf{UC}}$ and $\mathsf{ctx}_{\mathsf{KeyBox}}$.

Concrete algorithms for an instantiation follow:

\begin{itemize}
	\item $\Cert(\pp,m)$: on input $m\leftarrowdollar  \mathbb Z_p^*$ sample $r\leftarrowdollar \mathbb Z_p^*\setminus\{-m\}$.
	Set $M:=m\G$, $R:=r\h$, $C:=M+R$, $\nu:=mr^{-1}\bmod p$, and $\upsilon:=(m+r)\G$.
	Define $A:=\upsilon-M$ and $B:=C-M$.
	Compute $\pi_\DLEQ \leftarrow \cp_{\mathsf{DLEQ}}(\pp,(A,B),r)$
	 and output $\zeta:=(\nu,\upsilon,\pi_\DLEQ)$ together with $C$. Erase $m, r$.
	
	\item $\Derive(\pp,C,\zeta)$: parse $\zeta=(\nu,\upsilon,\pi_\DLEQ)$.
	If $\nu\in\{-1,0\}\bmod p$, output $\perp$.
	Else set $M:=\frac{\nu}{\nu+1}\upsilon$ and $R:=C-M$ and output $\mathrm{\Upsilon}:=(M,R)$.
	
	\item $\PubOpen(\pp,\mathrm{\Upsilon})$: parse $\mathrm{\Upsilon}=(M,R)$ and output $M$. If parsing fails, output $\perp$.
	
	\item $\Vcert(\pp,C,\zeta)$: parse $\zeta=(\nu,\upsilon,\pi_\DLEQ)$.
Compute $\mathrm{\Upsilon}\leftarrow \Derive(\pp,C,\zeta)$; if $\mathrm{\Upsilon}=\perp$ output $0$. Else parse $\mathrm{\Upsilon}=(M,R)$, set $A:=\upsilon-M$ and $B:=C-M$.
	Output $1$ iff $\cV_{\mathsf{DLEQ}}(\pp,(A,B),\pi_\DLEQ)=1$.
		
\item $\Open(\pp,C,\zeta)$: $\Open(\pp,C,\zeta):=\Derive(\pp,C,\zeta)$ if $\Vcert(\pp,C,\zeta)=1$, and $\perp$ otherwise.
\end{itemize}

\subsubsection{Properties of the USV certificate scheme}
\begin{itemize}
	\item Correctness: For honest generation, $\upsilon=(m+r)\mathcal G$ and $\nu=mr^{-1}$. Hence,
	$$\tfrac{\nu}{\nu+1}\upsilon=\tfrac{mr^{-1}}{mr^{-1}+1}(m+r)\mathcal G=m\mathcal G, \text{ and } R=C-M.$$
	Therefore, $\Vcert(\pp,C,\zeta)=1$ and $\Open(\pp,C,\zeta)=(M,R)\in\mathcal R_{\pp}$.
	
	\item Unique verified opening: For any fixed $(C,\zeta)$, $\Derive(\pp,C,\zeta)$ is deterministic. Hence, there is at most
	one candidate opening $\mathrm{\Upsilon}$ it can output. Moreover, whenever $\Open(\pp,C,\zeta)\neq\perp$, we have
	$\Open(\pp,C,\zeta)=\Derive(\pp,C,\zeta)=(M,R)$ with $M=\tfrac{\nu}{\nu+1}\upsilon$ uniquely determined (for $\nu\neq -1$),
	which leads to uniquely determined $R=C-M$.
	
	\item Straight-line verified public extraction: Extraction of the public opening is $\Open(\pp,C,\zeta)$. It is deterministic,
	uses no trapdoor, and is non-rewinding. 
	
	\item Opening-conditional tag simulatability: Given $(C,\mathrm{\Upsilon})$ with $\mathrm{\Upsilon}=(M,R)$ and $(C,\mathrm{\Upsilon})\in\mathcal R_{\pp}$:
	sample $\bar\nu \leftarrowdollar \mathbb Z_p^*\setminus\{-1\}$ and set
	$\bar\upsilon := (1+\bar\nu^{-1})M$.
	Define $\bar A:=\bar\upsilon-M$ and $\bar B:=C-M$.
	Compute $\bar{\pi}_\DLEQ \leftarrow \mathsf{Sim}_{\mathrm{DLEQ}}(\pp,(\bar A,\bar B))$
	using the UC-NIZK simulator, and output $\tilde\zeta:=(\bar\nu,\bar\upsilon,\bar{\pi}_\DLEQ)$.
	Note that for a random $\bar\nu$, the derived instance $(\bar A,\bar B)$ need not lie in $\mathcal{R}_{\mathrm{DLEQ}}$.
	Accordingly, the simulator relies on the NIZK simulator to produce an accepting proof even for off-language statements,
	and indistinguishability holds under the DDLEQ assumption; see Lemma~\ref{lem:usv-tag-sim}.
	By construction, $\Derive(\pp,C,\tilde\zeta)=(M,R)$, and thus $\Open(\pp,C,\tilde\zeta)=\mathrm{\Upsilon}$ whenever
	$\Vcert(\pp,C,\tilde\zeta)=1$.
\end{itemize}

\begin{lemma}[Opening-conditional tag simulatability]\label{lem:usv-tag-sim}
Assume DDLEQ hardness and that the Fischlin-based UC-NIZK $\mathrm{\Pi}_{\mathsf{DLEQ}}$ in the \gROCRP\ model
has the universal simulation interface and straight-line extraction guarantees of Lemma~\ref{lem:gnpro-uc-nizk}. Then the simulator $\Simcert$ satisfies Item 3 (opening-conditional tag simulatability) of Definition~\ref{def:USV-cert}.
\end{lemma}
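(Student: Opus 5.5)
The plan has two parts, matching the two sub-items of Item~3. For (a) I would argue directly from the construction of $\Simcert$ and the universal simulation interface. For (b) I would use a short hybrid argument that moves from the real tag $\zeta=(\nu,\upsilon,\pi_\DLEQ)$ to the simulated tag $\tilde\zeta=(\bar\nu,\bar\upsilon,\bar\pi_\DLEQ)$.

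\emph{Item (a).} Fix $(C,\mathrm{\Upsilon})\in\mathcal R_{\pp}$ with $\mathrm{\Upsilon}=(M,R)$, so $M\neq 0_{\mathbb G}$ and $R\neq 0_{\mathbb G}$. Since $\bar\nu\notin\{0,-1\}$, $\Derive$ does not return $\perp$. It outputs
\[
\tfrac{\bar\nu}{\bar\nu+1}\bar\upsilon=\tfrac{\bar\nu}{\bar\nu+1}\bigl(1+\bar\nu^{-1}\bigr)M=M ,
\]
together with $C-M=R$, so $\Derive(\pp,C,\tilde\zeta)=\mathrm{\Upsilon}$. The verifier then recomputes $\bar A=\bar\upsilon-M=\bar\nu^{-1}M$ and $\bar B=C-M=R$. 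Both are nonzero, so $(\bar A,\bar B)\in\mathcal X_{\mathsf{DLEQ}}$. By item~(i) of Lemma~\ref{lem:gnpro-uc-nizk}, the universal simulation interface gives $\cV_{\DLEQ}(\pp,(\bar A,\bar B),\bar\pi_\DLEQ)=1$ except with negligible probability. This holds even though $(\bar A,\bar B)$ need not lie in $\cR_{\mathsf{DLEQ}}$. The only failure event is a pre-query collision on a point the simulator programs, which Lemma~\ref{lem:grocrp-prequery} and Remark~\ref{rem:grocrp-entropy-sources} bound by a negligible quantity. Hence $\Vcert=1$ and $\Open(\pp,C,\tilde\zeta)=\mathrm{\Upsilon}$ except with negligible probability.

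\emph{Item (b), hybrids.} Write the real tuple as $(C,(mG,rH),\nu=m/r,\ \upsilon=(m+r)G,\ \pi)$. Then $A=rG$ and $B=rH$.
\begin{itemize}
\item $\mathsf{H}_0$ is the real experiment.
\item $\mathsf{H}_1$ replaces $\pi_\DLEQ$ by $\s_{\mathrm{DLEQ}}(\pp,(A,B))$ on the same true statement. Indistinguishability from $\mathsf{H}_0$ follows from item~(ii) of Definition~\ref{def:NIZK-AoK}, i.e.\ zero knowledge on true statements, via Lemma~\ref{lem:gnpro-uc-nizk}(i).
\item $\mathsf{H}_2$ replaces $\nu$ by a fresh $\bar\nu$ and sets $\bar\upsilon=(1+\bar\nu^{-1})M$. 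Equivalently, the first statement component becomes $\bar A=sG$ with $s:=m\bar\nu^{-1}$, which is uniform and independent of $r$, while $B=rH$ is unchanged. $\mathsf{H}_2$ is exactly the output distribution of $\Simcert$.
\end{itemize}
The step $\mathsf{H}_1\to\mathsf{H}_2$ is thus the DDLEQ gap between $(rG,rH)$ and $(sG,rH)$. The intended reduction takes a challenge $(A^\ast,B^\ast)$ and sets $R:=B^\ast$, $C:=M+B^\ast$, $\upsilon:=M+A^\ast$. It then needs $\nu$ satisfying $A^\ast=(m/\nu)G$, simulates $\pi$, and forwards the tuple to the distinguisher.

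\emph{Main obstacle, and it is a genuine problem for the statement as worded.} The reduction cannot produce $\nu$ without $\log_G A^\ast$. Worse, when $m$ is fixed and known to the distinguisher, the hybrid step is simply false. Given $m$ and $\nu$, the distinguisher computes $r':=m\nu^{-1}$ and tests whether $R=r'H$. This test always passes in $\mathsf{H}_1$, and passes in $\mathsf{H}_2$ only with probability about $1/p$. So Item~3(b), quantified over every $m$ including distinguishers that know $m$, cannot be derived from DDLEQ this way, and the same test shows that no proof of it as worded exists. The first thing I would try is to read ``for every $m$'' in the operational sense fixed by the KeyBox interface $\textsf{USV.Cert}$, where $m$ is sampled inside the KeyBox and erased. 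With $m$ uniform and hidden, $\nu$ is uniform in both hybrids. The reduction can then pick $\nu$ first and set $M:=\nu A^\ast$ (implicitly $m=\nu r$), and the step follows from Assumption~\ref{assump:ddleq}.

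Since this modifies the quantifier, I would state that caveat explicitly at this point. The fallback is to rely on the paper's own remark that no UC hybrid ever invokes Item~3, so that only Item~(a) is load-bearing downstream.
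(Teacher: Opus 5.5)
Your proposal is correct and follows the paper's proof. Part (a) is the construction-plus-universal-simulation observation the paper states just before the lemma. Part (b) is the same two-step hybrid: zero knowledge on the true statement $(r\G,r\h)$, then a DDLEQ reduction that samples $\nu$ first and sets $M:=\nu A^\ast$, $R:=B^\ast$, $\upsilon:=M+A^\ast$, paying a $1/(p-1)$ statistical term for the support condition $r\neq -m$.

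Your quantifier caveat is also right, and the paper does not flag it. Its $\mathsf{RealTag}/\mathsf{SimTag}$ experiments silently sample a hidden $m\leftarrowdollar\Zp^*$, so only the uniform-$m$ version is established. Your test $R=(m\nu^{-1})\h$, which already works for $m=1$, refutes the literal ``for every $m$'' reading of Item~3(b). One small correction to your fallback remark: per Remark~\ref{rem:tag-sim-status}, no UC hybrid uses any part of Item~3, so (a) is not load-bearing either.
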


\begin{proof}
	Fix $\lambda$ and let $\mathcal D$ be any PPT distinguisher.
	We compare the following two experiments, which both output a tuple
	$(\pp,C,\mathrm{\Upsilon},\zeta)$ where $\mathrm{\Upsilon}=(M,R)$.
	Note that $\Cert(\pp,m)$ samples $m\in\Zp^*$ and $r\in\Zp^*\setminus\{-m\}$, so the resulting
	$M=m\G\neq 0_{\mathbb G}$ and $R=r\h\neq 0_{\mathbb G}$; hence
	$(C,\mathrm{\Upsilon})\in\mathcal R_{\pp}$ and the simulator's derived instance
	$(\bar A,\bar B)\in\mathcal X_{\mathsf{DLEQ}}$ in every invocation.
	
	\smallskip\noindent
	\textbf{Experiment $\mathsf{RealTag}(1^\lambda)$:}
	\begin{enumerate}
		\item Sample $\pp \gets \mathsf{Setup}(1^\lambda,\pounds)$, $m\leftarrowdollar \Zp^*$, and $r\leftarrowdollar \Zp^*\setminus\{-m\}$.
		\item Define
		$M:=m\G$, $R:=r\h$, $C:=M+R$, $\nu:=mr^{-1}\bmod p$, $\upsilon:=(m+r)\G$,
		$A:=\upsilon-M$ and $B:=C-M$.
		\item Compute $\pi_\DLEQ\leftarrow \cp_{\DLEQ}(\pp,(A,B),r)$ and set $\zeta:=(\nu,\upsilon,\pi_\DLEQ)$. Output $(\pp,C,\mathrm{\Upsilon},\zeta)$ where $\mathrm{\Upsilon}:=(M,R)$.
	\end{enumerate}
	
	\smallskip\noindent
	\textbf{Experiment $\mathsf{SimTag}(1^\lambda)$:}
	\begin{enumerate}
		\item Sample $\pp\leftarrow \mathsf{Setup}(1^\lambda,\pounds)$, $m\leftarrowdollar \Zp^*$, and $r\leftarrowdollar \Zp^*\setminus\{-m\}$.
		\item Define $M:=m\G$, $R:=r\h$, $C:=M+R$ and $\mathrm{\Upsilon}:=(M,R)$.
		\item Sample $\bar\nu\leftarrowdollar \Zp^*\setminus\{-1\}$ and set
		$\bar\upsilon := (1+\bar\nu^{-1})M$.
		Let $\bar A:=\bar\upsilon-M$ and $\bar B:=C-M$.
		\item Compute $\bar{\pi}_\DLEQ \leftarrow \s_{\DLEQ}(\pp,(\bar A,\bar B))$
		and set $\tilde\zeta := (\bar\nu,\bar\upsilon,\bar{\pi}_\DLEQ)$. Output $(\pp,C,\mathrm{\Upsilon},\tilde\zeta)$.
	\end{enumerate}
	
	We prove that $\Bigl|\Pr[\mathcal D(\mathsf{RealTag}(1^\lambda))=1]-\Pr[\mathcal D(\mathsf{SimTag}(1^\lambda))=1]\Bigr|
	\le \negl(\lambda).$
	
	\smallskip\noindent
	Hybrid $\Game_0$:
	This is identical to $\mathsf{RealTag}$.
	
	\smallskip\noindent
	Hybrid $\Game_1$:
	Modify $\Game_0$ by replacing the real proof $\pi_\DLEQ\leftarrow\cp_{\DLEQ}(\pp,(A,B),r)$
	with a simulated proof $\tilde{\pi}_\DLEQ\leftarrow \s_{\DLEQ}(\pp,(A,B))$ for the same statement $(A,B)$.
	All other values are unchanged. Since $(A,B)$ is a true DLEQ statement in $\Game_0$ (indeed $A=r\G$ and $B=r\h$),
	the zero-knowledge property of $\mathrm{\Pi}_{\DLEQ}$ for true statements implies $	\Game_0 \approx_c \Game_1.$
	
	\smallskip\noindent
	$\Game_1$ vs. $\mathsf{SimTag}$:
	Assume for contradiction that there exists a PPT distinguisher $\mathcal D$
	and a non-negligible function $\epsilon(\lambda)$ such that
	\[
	\Bigl|\Pr[\mathcal D(\Game_1(1^\lambda))=1]-\Pr[\mathcal D(\mathsf{SimTag}(1^\lambda))=1]\Bigr|
	\ge \epsilon(\lambda)
	\]
	for infinitely many $\lambda$. We construct a PPT distinguisher $\mathcal B$ for DDLEQ.
	
	\smallskip\noindent
	\textbf{Distinguisher $\mathcal B$ (for DDLEQ):}
	$\mathcal B$ receives $\pp=(\G,\h)$ and a challenge pair $(A^\star,B^\star)\in\G\times\G$ sampled as:
	either $(A^\star,B^\star)=(r\G,r\h)$ (DDLEQ-true) for uniform $r\leftarrow\Zp^*$,
	or $(A^\star,B^\star)=(\rho\G,r\h)$ (DDLEQ-false) for uniform independent
	$\rho,r\leftarrow\Zp^*$. $\mathcal B$ performs:
	\begin{enumerate}
		\item Sample $\nu\leftarrowdollar \Zp^*\setminus\{-1\}$.
		\item Define $M:=\nu A^\star$, $R:=B^\star$, $C:=M+R$, and $\upsilon:=M+A^\star$. Compute $\pi_\DLEQ\leftarrow \s_{\DLEQ}(\pp,(A^\star,B^\star))$.
		\item Output $\mathcal B$'s guess as $\mathcal D(\pp,C,(M,R),(\nu,\upsilon,\pi_\DLEQ))$.
	\end{enumerate}
	We claim that the distribution fed to $\mathcal D$ by $\mathcal B$ matches
	$\Game_1$ in the DDLEQ-true case, and matches $\mathsf{SimTag}$ up to negligible statistical distance
	in the DDLEQ-false case.
	
	\smallskip\noindent
	\emph{DDLEQ-true case:}
	Here $(A^\star,B^\star)=(r\G,r\h)$ for uniform $r\in\Zp^*$.
	$\mathcal B$ samples $\nu\in\Zp^*\setminus\{-1\}$ and sets
	$M=\nu r\G$, $R=r\h$, $C=M+R$, and $\upsilon=M+r\G=(\nu+1)r\G$.
	Let $m:=\nu r \bmod p$; then $M=m\G$ and $\upsilon=(m+r)\G$ and $\nu=mr^{-1}$.
	Moreover $r=-m$ would imply $\nu=-1$, which is excluded; thus the support condition
	$r\in\Zp^*\setminus\{-m\}$ holds automatically.
	Finally, the proof is generated as $\s_{\DLEQ}(\pp,(A^\star,B^\star))=\s_{\DLEQ}(\pp,(r\G,r\h))$,
	which is exactly the proof distribution in $\Game_1$.
	Hence, the output distribution is identical to $\Game_1$.
	
	\smallskip\noindent
	\emph{DDLEQ-false case:}
	Here $(A^\star,B^\star)=(\rho\G,r\h)$ for independent uniform $\rho,r\in\Zp^*$.
	$\mathcal B$ samples $\nu\in\Zp^*\setminus\{-1\}$ and defines
	$M=\nu\rho\G$ and $R=r\h$ and $\upsilon=M+\rho\G$.
	Let $m:=\nu\rho \bmod p$. Then $M=m\G$, and also 
	\[
	A^\star=\rho\G=\nu^{-1}M,\qquad
	\upsilon=M+A^\star=(1+\nu^{-1})M,
	\]
	so the tuple $(\nu,\upsilon)$ is distributed exactly as in $\s_\Tag$,
	and the proof is generated by the same simulator $\s_{\DLEQ}$.
	The only difference from $\s_\Tag$ is that $\s_\Tag$ samples
	$r\leftarrow \Zp^*\setminus\{-m\}$ whereas here $r\leftarrow \Zp^*$ independently of $m$.
	These two $r$-distributions differ by statistical distance at most
	\[
	\Pr[r=-m]\le \frac{1}{p-1},
	\]
	which is negligible in $\lambda$ since $p$ is exponential in $\lambda$.
	Therefore the DDLEQ-false output distribution is negligibly close to $\s_\Tag$. Thus, $\mathcal B$ distinguishes DDLEQ-true from DDLEQ-false with advantage at least
	$\epsilon(\lambda)-\negl(\lambda)$, contradicting DDLEQ.
	Hence, $\Game_1 \approx_c \mathsf{SimTag}$. Combining $\Game_0 \approx_c \Game_1$ (ZK) and $\Game_1 \approx_c \mathsf{SimTag}$
	(DDLEQ) yields $\mathsf{RealTag}\approx_c \mathsf{SimTag}$, as required. \qed
\end{proof}

\begin{remark}[Status of opening-conditional tag simulatability]
	\label{rem:tag-sim-status}
	Lemma~\ref{lem:usv-tag-sim} proves a stronger interface than what our UC hybrids need:
	no UC proof in this paper requires fabricating an accepting USV tag $\zeta$ for a fixed commitment
	$C$ without knowing a witness. Consequently, the UC simulators constructed for
	Theorems~\ref{thm:usv-real-ideal} and~\ref{thm:SDKG-UC} never call the DLEQ proof simulator and never
	program $\mathsf{ctx}_{\mathsf{DLEQ}}$.
	We retain the lemma because it is a natural property of USV as a standalone primitive and is useful
	for other compositions (e.g., ``commit now, open-to-$M$ later'' designs under NXK).
\end{remark}

\begin{lemma}[Equivocation resistance of the USV instantiation]\label{lem:usv-eqv}
	Assume further that the embedded proof system $\mathrm{\Pi}_{\mathrm{DLEQ}}$ used inside $\Vcert$ is a NIZK-AoK for
	$\mathcal R_{\mathrm{DLEQ}}$ in the \gROCRP\ model with an online extractor $\Ext_{\mathrm{DLEQ}}$
	(as in Section~\ref{SSE-NIZK}). Then, under the DL assumption (Assumption~\ref{assump:dl}) and the transparent generator derivation assumption (Assumption~\ref{assump:transparent-pp}), the USV certificate scheme of Section~\ref{subsec:USV-inst}
	is equivocation resistant (Definition~\ref{def:usv-eqv}).
\end{lemma}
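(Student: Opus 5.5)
The plan is a reduction from DL. From an equivocating triple $(C,\zeta,\zeta')$ we straight-line extract the two DLEQ witnesses, and show that two distinct verified openings of one $C$ determine $\log_{\G}\h$. Fix a PPT $\cA$ in $\mathsf{Exp}^{\mathrm{eqv}}_{\cA}$ that outputs $(C,\zeta,\zeta')$, and parse $\zeta=(\nu,\upsilon,\pi)$ and $\zeta'=(\nu',\upsilon',\pi')$. Suppose $b=1$. Then both tags pass $\Vcert$, so $\nu,\nu'\notin\{-1,0\}$, and the derived openings are $M=\tfrac{\nu}{\nu+1}\upsilon$ and $M'=\tfrac{\nu'}{\nu'+1}\upsilon'$, with $M\neq M'$. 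Moreover $\cV_{\DLEQ}$ accepts $\pi$ on $(A,B)=(\upsilon-M,\,C-M)$ and $\pi'$ on $(A',B')=(\upsilon'-M',\,C-M')$.

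First I would treat $\cA$ as a prover in context $\mathsf{ctx}_{\DLEQ}$ and run the straight-line extractor $\Ext_{\DLEQ}$ of Lemma~\ref{lem:gnpro-uc-nizk}(ii) twice. Both runs use the recorded log $\mathsf{Log}_{\cA}$ restricted to $\mathsf{ctx}_{\DLEQ}$, once for each accepting pair $((A,B),\pi)$ and $((A',B'),\pi')$. This is formally done by two wrapper provers that output one pair each. By a union bound, except with the Fischlin extraction error (negligible by Lemma~\ref{lem:fischlin-negl}, since $\cA$ makes $\poly(\lambda)$ queries), we obtain $r,r'\in\Zp^*$ with $A=r\G$, $B=r\h$, $A'=r'\G$ and $B'=r'\h$.

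Next comes the algebra. Since $\upsilon=\tfrac{\nu+1}{\nu}M$, we get $A=\upsilon-M=\nu^{-1}M$, hence $M=\nu r\G$; likewise $M'=\nu'r'\G$. From $C-M=r\h$ and $C-M'=r'\h$ we get $M-M'=(r'-r)\h$. If $r=r'$ this would force $M=M'$, contradicting $b=1$; so $r\neq r'$, and
\[
\h=\tfrac{\nu r-\nu' r'}{r'-r}\,\G .
\]
This means $\log_{\G}\h$ is efficiently computable from the transcript and the extracted witnesses.

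The reduction $\cB$ receives a DL challenge $X=x\G$ and runs the experiment with $\h:=X$ in place of the beacon-derived $\mathcal H$. It simulates $\GgROCRP$ by lazy sampling, which gives it the log needed by the extractor, and outputs the computed discrete log. To justify $\h:=X$, I would insert a hybrid based on Assumption~\ref{assump:transparent-pp}: $\mathsf{H2G}$ applied to the high-entropy $\pounds$ is computationally indistinguishable from uniform on $\mathbb G\setminus\{0_{\mathbb G}\}$. Replacing it by a uniform nonzero element changes the success probability only negligibly. A uniform challenge $X$ then matches this, except that the case $X=\G$ (excluded by the $c^\star$ rule) has probability $1/(p-1)$, which is negligible. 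Since $\cA$ is given only $\pp$ and no tags, nothing else in the experiment depends on $\mathcal H$'s origin. This gives $\Adv^{\mathrm{eqv}}_{\cA}\le \Adv^{\mathrm{dl}}_{\cB}+2\varepsilon_{\mathsf{ext}}+\negl(\lambda)$.

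I expect the main obstacle to be the double use of the extractor. Definition~\ref{def:NIZK-AoK} extracts for a single output pair, so I would argue via two wrapper provers sharing coins and the same oracle log, applying the AoK guarantee to each separately. I would also check that the accepted statements lie in $\mathcal X_{\DLEQ}$, which the verifier enforces so the extracted $r,r'$ are genuine witnesses. Finally, I would confirm that the ideal-to-challenge substitution of $\h$ leaves the extractor's view intact. This holds because the extractor uses only the $\mathsf{ctx}_{\DLEQ}$ log, and Lemma~\ref{lem:grocrp-noninterference} keeps contexts independent.
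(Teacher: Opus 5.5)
Your proposal is correct and follows the paper's proof: embed the DL challenge as $\h$, simulate $\GgROCRP$ by lazy sampling to obtain the log, run $\Ext_{\DLEQ}$ twice to get $r,r'$, and use $M=\nu r\G$ together with $M-M'=(r'-r)\h$ and $M\neq M'$ to solve for $\log_{\G}\h$, paying a negligible loss via Assumption~\ref{assump:transparent-pp}. Your additional care only makes explicit details the paper leaves implicit; this includes the wrapper provers for the double extraction, the excluded case $X=\G$, and membership of the statements in $\mathcal X_{\DLEQ}$.
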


\begin{proof}
	Fix any PPT adversary $\mathcal A$.
	We construct a PPT algorithm $\mathcal B$ that receives a DL challenge
	$(\G,X)$ with $X=x\G$ for $x\leftarrowdollar\Zps$ (Definition~\ref{def:dl})
	and outputs $x$ with non-negligible probability whenever $\mathcal A$ wins
	the equivocation game. Specifically, $\mathcal B$ sets
	$\pp:=(\mathbb{G},p,\G,X)$---embedding the DL challenge $X$ as the
	secondary generator~$\h$---and simulates the equivocation experiment
	for~$\mathcal A$ on these parameters. By
	Assumption~\ref{assump:transparent-pp}, the Setup-derived $\h$ is
	computationally indistinguishable from uniform over
	$\mathbb{G}\setminus\{0_{\mathbb G}\}$, and the DL challenge $X$ is
	exactly uniform over $\mathbb{G}\setminus\{0_{\mathbb G}\}$ (since
	$x\leftarrowdollar\Zps$ and the map $z\mapsto z\G$ is a bijection on
	$\Zp$). Hence $\mathcal A$'s view under $\pp=(\mathbb{G},p,\G,X)$ is
	computationally indistinguishable from its view in the real equivocation
	experiment, and so
	$\Adv^{\mathrm{dl}}_{\mathcal B}(\lambda)\ge
	\Adv^{\mathrm{eqv}}_{\mathcal A}(\lambda)-\negl(\lambda)$.
	Since $\Adv^{\mathrm{dl}}_{\mathcal B}(\lambda)\le\negl(\lambda)$ by
	Assumption~\ref{assump:dl}, it follows that
	$\Adv^{\mathrm{eqv}}_{\mathcal A}(\lambda)\le \negl(\lambda)$.

	\smallskip\noindent
	\textbf{Algorithm $\mathcal B^{\mathcal A}(\G,X)$.}
	$\mathcal B$ sets $\h:=X$ and $\pp:=(\mathbb{G},p,\G,\h)$, then
	internally simulates the equivocation experiment for $\mathcal A$:
	\begin{enumerate}
		\item Give $\pp$ to $\mathcal A$ and answer $\mathcal A$'s \gROCRP\ queries by lazy sampling,
		while recording the complete query/answer log $\mathsf{Log}_{\mathcal A}$ under each DLEQ-proof context.
		\item Receive $(C,\zeta,\zeta')$ from $\mathcal A$, where $\zeta=(\nu,\upsilon,\pi_\DLEQ)$ and
		$\zeta'=(\nu',\upsilon',\pi_\DLEQ')$.
		\item Compute $\mathrm{\Upsilon}\leftarrow \Derive(\pp,C,\zeta)$ and $\mathrm{\Upsilon}'\leftarrow \Derive(\pp,C,\zeta')$.
		If either is $\perp$, output $\perp$.
		Otherwise parse $\mathrm{\Upsilon}=(M,R)$ and $\mathrm{\Upsilon}'=(M',R')$.
		\item If $\Vcert(\pp,C,\zeta)=0$ or $\Vcert(\pp,C,\zeta')=0$, output $\perp$. If $M=M'$, output $\perp$.
		\item Define $A:=\upsilon-M, B:=C-M, A':=\upsilon'-M',$ and $B':=C-M'$.
		\item Run the AoK extractor to obtain witnesses:
		\[
		r \leftarrow \Ext_{\DLEQ}\bigl(\pp,(A,B),\pi_\DLEQ; \mathsf{Log}_{\mathcal A}\bigr),\qquad
		r' \leftarrow \Ext_{\DLEQ}\bigl(\pp,(A',B'),\pi_\DLEQ'; \mathsf{Log}_{\mathcal A}\bigr).
		\]
		If either extraction fails (outputs $\perp$) or the extracted values do not satisfy
		$A=r\G,B=r\h$ and $A'=r'\G,B'=r'\h$, output $\perp$.
		\item Compute $m:=\nu r \bmod p$ and $m':=\nu' r'\bmod p$.
		\item Compute $\hat{r}:=r'-r \bmod p$. If $\hat{r}=0$, output $\perp$.
		Otherwise output
		\[
		x := (m-m')\cdot (\hat{r})^{-1}\bmod p.
		\]
	\end{enumerate}
	
	\smallskip\noindent
	Assume $\mathcal A$ wins, i.e.,
	$\Vcert(\pp,C,\zeta)=\Vcert(\pp,C,\zeta')=1$ and $M\neq M'$ where
	$\Derive(\pp,C,\zeta)=(M,R)$ and $\Derive(\pp,C,\zeta')=(M',R')$. Since $\Vcert(\pp,C,\zeta)=1$, it follows that $\nu\neq -1\bmod p$.
	Moreover, by definition of $\Vcert$ and our setting of $(A,B)$, we have
	$\cV_{\DLEQ}(\pp,(A,B),\pi_\DLEQ)=1$.
	By the NIZK-AoK property, except with negligible probability the extractor returns a witness
	$r\in\Zp^*$ such that $A=r\G$ and $B=r\h$.
	Similarly, except with negligible probability, the extractor returns $r'\in\Zp^*$ such that
	$A'=r'\G$ and $B'=r'\h$. Using $\Derive$'s defining equations, for $\zeta=(\nu,\upsilon,\pi_\DLEQ)$ we have
	\[
	M=\frac{\nu}{\nu+1}\upsilon,
	\qquad\text{hence}\qquad
	A=\upsilon-M=\frac{1}{\nu+1}\upsilon,
	\]
	so $\upsilon=(\nu+1)A$ and therefore $M=\frac{\nu}{\nu+1}\upsilon=\nu A=\nu r\,\G = m\,\G \text{ where } m:=\nu r \bmod p.$
	Likewise $M'=m'\G$ where $m':=\nu'r'\bmod p$. Also, by definition, $B=C-M=r\h, B'=C-M'=r'\h,$
	so we obtain two decompositions of $C$:
	\[
	C = M + r\h = m\G + r\h
	\qquad\text{and}\qquad
	C = M' + r'\h = m'\G + r'\h.
	\]
	Subtracting yields $(m-m')\G = (r'-r)\h.$
	Since $M\neq M'$ and the map $z\mapsto z\G$ is a bijection on $\Zp$ (prime-order group),
	we have $m\neq m'$.
	Therefore, the above equality implies $r'\neq r$ (otherwise $(m-m')\G=0$ would force $m=m'$).
	Hence, $\hat{r}:=r'-r\in\Zp^*$ and is invertible, and the value
	\[
	x := (m-m')\cdot(\hat{r})^{-1}\bmod p
	\]
	satisfies $x\G=\h=X$. $\mathcal B$ outputs such an $x$ whenever $\mathcal A$ wins the equivocation experiment and both extractor
	invocations succeed. The extractor failure probability is negligible by the AoK guarantee, and
	the hybrid loss from replacing the Setup-derived~$\h$ with the DL
	challenge~$X$ is negligible by Assumption~\ref{assump:transparent-pp}.
	Thus, $\Adv^{\mathrm{dl}}_{\mathcal B}(\lambda)\ge\Adv^{\mathrm{eqv}}_{\mathcal A}(\lambda)-\negl(\lambda)$.
	Since $\Adv^{\mathrm{dl}}_{\mathcal B}(\lambda)\le\negl(\lambda)$ by Assumption~\ref{assump:dl},
	it follows that $\Adv^{\mathrm{eqv}}_{\mathcal A}(\lambda)\le \negl(\lambda)$. \qed
\end{proof}

\subsection{UC Security}\label{UC:usv}
We model certificates with a handle-bound ideal functionality that stores the
\emph{verified} opening implied by $(C,\zeta)$, namely the (unique) value returned by
$\Open(\pp,C,\zeta)$, and binds a handle to $(C,M)$ via a \gROCRP\ receipt digest, where $M:=\Open_M(\pp,C,\zeta)$ is the
verified public opening implied by $(C,\zeta)$.

In our applications, each USV handle is intended for a single committer--verifier pair.
Accordingly, we define $\Fusv$ (Fig.~\ref{fig:Fusvcert}) so that each handle is scoped to
both endpoints: $\Fusv$ maintains independent state keyed by $(\sid,\cid,P_s,P_r)$.
This prevents a party $P_s'\neq P_s$ from invalidating another sender's handle by reusing
the same $(\sid,\cid)$ toward the same relying party $P_r$, while still keeping recipient-local
state (so a corrupted committer may equivocate across different relying parties, as in the real protocol).
In $\Fusv$, the leakage to~$\s$ includes the public certificate $(C,\zeta)$ along with the receipt digest~$d$: all three values are public (the certificate $(C,\zeta)$ is intended for delivery to~$P_r$ over~$\Fchan$, and $d$ is deterministic from the committed tuple and the global gRO-CRP oracle). This enables~$\s$ to simulate the $\Fchan$ delivery of $(C,\zeta,d)$ to a corrupted recipient.

\begin{figure}[t]
	\centering
	\setlength{\fboxrule}{0.2pt} 
	\fbox{
		\parbox{\dimexpr\linewidth-2\fboxsep-2\fboxrule\relax}{%
			\ding{169} \textsf{State:} table $\mathsf T$ indexed by $(\sid,\cid,P_s,P_r)$ with entries $(M,d,\status,\mathsf{del})$ where
			$M\in\mathbb G\cup\{\perp\}$, $d\in\{0,1\}^\lambda$, $\status\in\{\textsf{pending},\textsf{invalid}\}$,
			and $\mathsf{del}\in\{0,1\}$ (init $0$).

			\smallskip
			\ding{169} Upon receiving $(\Commit,\sid,\cid,P_r,C,\zeta)$ from party $P_s$:
			\begin{itemize}[leftmargin=*,nosep]
				\item Compute the verified public opening
				\[
				M^\star \coloneqq \Open_M(\pp,C,\zeta)
				=
				\mathsf{PubOpen}\bigl(\pp,\Open(\pp,C,\zeta)\bigr)\in \mathbb G\cup\{\perp\}.
				\]
				Compute receipt digest
				\[
				d \coloneqq H\!\bigl(\USVrcpt,\langle \sid,\cid,P_s,P_r,C,M^\star\rangle\bigr),
				\qquad \text{where }\USVrcpt \in \CtxTEE .
				\]
				\item If $(\sid,\cid,P_s,P_r)\in\dom(\mathsf T)$:
				send $(\mathsf{DupCommit},\sid,\cid,P_s,P_r,C,\zeta,d)$ to~$\s$,
				send $(\receipt,\sid,\cid,P_r,d)$ to $P_s$, and return.
				\item If $M^\star=\perp$, store $\mathsf T[\sid,\cid,P_s,P_r]\gets(\perp,d,\textsf{invalid},0)$.
				\item Otherwise store $\mathsf T[\sid,\cid,P_s,P_r]\gets(M^\star,d,\textsf{pending},0)$.
				\item Send $(\receipt,\sid,\cid,P_r,d)$ to $P_s$ and $(\receipt,\sid,\cid,P_s,P_r,C,\zeta,d)$ to $\s$.
			\end{itemize}

			\smallskip
			\ding{169} Upon receiving $(\mathsf{Deliver},\sid,\cid,P_s,P_r)$ from $\s$:
			if $(\sid,\cid,P_s,P_r)\in\dom(\mathsf T)$, set $\mathsf T[\sid,\cid,P_s,P_r].\mathsf{del}\gets 1$.

			\smallskip
			\ding{169} Upon receiving $(\mathsf{Invalidate},\sid,\cid,P_s,P_r)$ from $\s$:
			if $(\sid,\cid,P_s,P_r)\in\dom(\mathsf T)$, set $\mathsf T[\sid,\cid,P_s,P_r].\status\gets\textsf{invalid}$.

			\smallskip
			\ding{169} Upon receiving $(\Verify,\sid,\cid,P_s,C,\zeta)$ from party $P_r$:
			\begin{itemize}[leftmargin=*,nosep]
				\item If $(\sid,\cid,P_s,P_r)\notin\dom(\mathsf T)$ or $\mathsf T[\sid,\cid,P_s,P_r].\mathsf{del}=0$, return $\perp$. If $\mathsf T[\sid,\cid,P_s,P_r].\status=\textsf{invalid}$, return $0$.
				\item Let $(M,d,\textsf{pending},1) := \mathsf T[\sid,\cid,P_s,P_r]$.
				Compute $M' \coloneqq \Open_M(\pp,C,\zeta)$.
				Return $1$ iff
				\[
				M'\neq\perp \ \wedge\ M'=M \ \wedge\
				H\!\bigl(\USVrcpt,\langle \sid,\cid,P_s,P_r,C,M'\rangle\bigr)=d,
				\]
				else return $0$.
			\end{itemize}

			\smallskip
			\ding{169} Upon receiving $(\Open,\sid,\cid,P_s)$ from party $P_r$:
			if $(\sid,\cid,P_s,P_r)\notin\dom(\mathsf T)$ or $\mathsf T[\sid,\cid,P_s,P_r].\mathsf{del}=0$ or $\mathsf T[\sid,\cid,P_s,P_r].\status=\textsf{invalid}$ then
			send $(\abort,\sid,\cid)$ to $P_r$ and $\s$; else send $(\Open,\sid,\cid,\mathsf T[\sid,\cid,P_s,P_r].M)$ to $P_r$ and $\s$.
	}}
	\caption{Two-party USV certificate functionality $\Fusv$ with \gROCRP\ receipt binding and verified openings.}
	\label{fig:Fusvcert}
\end{figure}

\begin{remark}[Game-based equivocation vs.\ UC-level non-malleability]
	\label{rem:eqv-vs-uc}
	Lemma~\ref{lem:usv-eqv} is a standalone game-based statement about the USV certificate scheme
	itself: it rules out producing two verifying tags for the same commitment $C$ that induce different
	public openings. In the UC setting, the relevant ``no substitution under a fixed handle'' guarantee is
	instead enforced by $\Fusv$’s receipt binding in the non-programmable context $\USVrcpt\in\CtxTEE$ and
	captured by Lemma~\ref{lem:hb-nm} (handle-bound non-malleability), which is what the UC proof of
	Theorem~\ref{thm:usv-real-ideal} relies on.
\end{remark}

\begin{lemma}[Handle-bound non-malleability]\label{lem:hb-nm}
Let $M:=\Open_M(\pp,C,\zeta)\neq\perp$ be the USV opening for the honest commit and let $d:=H(\USVrcpt,\langle \sid,\cid,P_s,P_r,C,M\rangle)$.
Then for every $(C',\zeta')$ such that $M':=\Open_M(\pp,C',\zeta')\neq\perp$ and $(C',M')\neq(C,M)$,
\[
\Pr\Bigl[\Fusv.\Verify(\sid,\cid,P_s,C',\zeta')=1 \Bigr]\le \negl(\lambda).
\]
\end{lemma}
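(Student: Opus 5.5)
The plan is to unfold the $\Verify$ branch of $\Fusv$ (Fig.~\ref{fig:Fusvcert}) and show that the event in the statement forces either a contradiction or a collision in the non-programmable receipt oracle.

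Fix the honest entry $\mathsf T[\sid,\cid,P_s,P_r]$ created by the honest $\Commit$. By the duplicate-commit rule, this entry is never overwritten, so it stores exactly $(M,d,\status,\mathsf{del})$ with $M=\Open_M(\pp,C,\zeta)$ and $d=H(\USVrcpt,\langle\sid,\cid,P_s,P_r,C,M\rangle)$. For $\Verify(\sid,\cid,P_s,C',\zeta')$ to return $1$, three conditions must hold:
\begin{itemize}
\item the entry is delivered and pending;
\item $M'=\Open_M(\pp,C',\zeta')\neq\perp$ and $M'=M$;
\item $H(\USVrcpt,\langle\sid,\cid,P_s,P_r,C',M'\rangle)=d$.
\end{itemize}
Since $\Open_M$ is deterministic (Item~2 of Definition~\ref{def:USV-cert}), the value $M'$ is a well-defined function of $(C',\zeta')$, so these conditions can be read directly off the query.

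The argument then splits on how $(C',M')$ differs from $(C,M)$.

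\textbf{Case $M'\neq M$.} The check $M'=M$ fails, so $\Verify$ returns $0$ with probability $1$.

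\textbf{Case $M'=M$ and $C'\neq C$.} Because the encoding $\langle\cdot\rangle$ is injective, the inputs $x:=\langle\sid,\cid,P_s,P_r,C,M\rangle$ and $x':=\langle\sid,\cid,P_s,P_r,C',M\rangle$ are distinct. Acceptance therefore requires $H(\USVrcpt,x')=H(\USVrcpt,x)$ for two distinct inputs. The context $\USVrcpt$ lies in $\CtxTEE$, so by Lemma~\ref{lem:grocrp-noninterference} and Remark~\ref{rem:grocrp-domain-sep} it behaves as a strict random oracle into $\{0,1\}^\lambda$: nobody, including the simulator, can program it. Two sub-cases arise.
\begin{itemize}
\item If $(\USVrcpt,x')$ was never queried before the check, its lazily sampled value equals $d$ with probability $2^{-\lambda}$.
\item Otherwise, the adversary produced a collision among its at most $Q=\poly(\lambda)$ queries in that context (the honest commit's query included). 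The birthday bound gives probability at most $Q^2/2^{\lambda}$.
\end{itemize}
Summing over the polynomially many $\Verify$ calls gives a negligible total.

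I expect the main subtlety to be quantification rather than computation. The statement is phrased for a fixed $(C',\zeta')$, but in the UC setting $(C',\zeta')$ is chosen adaptively by the adversary, possibly after it has seen $d$. I will therefore state the bound for any PPT adversary making $Q$ queries to $\Query(\USVrcpt,\cdot)$ and argue via a collision-finding reduction. Fixed inputs are the special case $Q=O(1)$.

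I will also note that the first case already rules out the interesting attack, namely a different opening under the same handle, without any cryptographic assumption. Equivocation resistance (Lemma~\ref{lem:usv-eqv}) is therefore not needed here, consistent with Remark~\ref{rem:eqv-vs-uc}.
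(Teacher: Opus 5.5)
Your proposal is correct and follows essentially the paper's route: injectivity of $\langle\cdot\rangle$ turns a successful substitution into a second preimage/collision for the non-programmable receipt oracle $H(\USVrcpt,\cdot)$, which is negligible in the \gROCRP\ model. The differences are minor. You dispose of the case $M'\neq M$ directly via $\Fusv$'s explicit check $M'=M$, whereas the paper folds it into the hash argument. You also use a looser birthday bound $O(Q^2/2^{\lambda})$ where the paper states the second-preimage bound $(Q+1)/2^{\lambda}$.
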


\begin{proofsketch}
	Let $H_{\mathrm{rcpt}}(u)\coloneqq H(\USVrcpt,u)$.
	Fix the honest receipt input
	$u:=\langle \sid,\cid,P_s,P_r,C,M\rangle$ and receipt
	$d:=H_{\mathrm{rcpt}}(u)$.
	For any distinct pair $(C',\zeta')\neq(C,\zeta)$, injectivity of $\langle\cdot\rangle$ implies that
	$u':=\langle \sid,\cid,P_s,P_r,C',M'\rangle \neq u$.
	A successful substitution (i.e., $\Fusv.\Verify(\sid,\cid,C',\zeta')=1$) therefore implies a
	second-preimage for the fixed target input $u$, namely
	$H_{\mathrm{rcpt}}(u') = H_{\mathrm{rcpt}}(u)=d$ with $u'\neq u$.
	
	In the random-oracle/\gROCRP\ model, suppose the adversary makes at most $Q=\poly(\lambda)$ queries to
	$H_{\mathrm{rcpt}}$ before the $\Verify$ call. The $\Verify$ procedure itself evaluates
	$H_{\mathrm{rcpt}}(u')$ once. Hence
	\[
	\Pr\big[\,H_{\mathrm{rcpt}}(u')=d\,\big]\ \le\ \frac{Q+1}{|\mathcal Y|}
	\quad (\text{in particular } \le (Q+1)/2^\lambda \text{ when } |\mathcal Y|=2^\lambda),
	\]
	which is negligible. \qed
\end{proofsketch}

The PPT simulator $\s$ only needs $\mathsf{pp}$ and the public leakage from $\Fusv$.
In the ideal world, on receiving each \((\Commit,\sid,\cid,C,\zeta)\), \(\Fusv\) computes the receipt
\(d\coloneqq H\!\bigl(\USVrcpt,\langle \sid,\cid,P_s,P_r,C,M\rangle\bigr)\) and stores the public opening
\(M:=\Open_M(\pp,C,\zeta)\) only if the certificate verifies, i.e., only if \(\Vcert(\pp,C,\zeta)=1\).
$\Fusv$ sends $(\receipt,\sid,\cid,P_s,P_r,C,\zeta,d)$ to~$\s$; $\s$ uses $(C,\zeta,d)$ to reproduce the $\Fchan$ traffic: $\s$ sends $(\mathsf{Send},\sid,(C,\zeta,d))$ to the $\Fchan$ instance for $(P_s,P_r)$, generating the same $\mathsf{Leak}$ ticket and length leakage as in the real world. When $\cA$ issues $(\mathsf{Deliver},\sid,\rho)$, $\Fchan$ delivers $(C,\zeta,d)$ to~$P_r$, and $\s$ concurrently sends $(\mathsf{Deliver},\sid,\cid,P_s,P_r)$ to $\Fusv$ to activate the recipient-side entry.
If $P_r$ is corrupted, $\s$ additionally forwards $(C,\zeta,d)$ to the corrupted~$P_r$.
Upon receiving $(\mathsf{DupCommit},\sid,\cid,P_s,P_r,\ldots)$ from $\Fusv$, $\s$ sends the duplicate over $\Fchan$ identically and, when $\cA$ delivers the duplicate, sends $(\mathsf{Invalidate},\sid,\cid,P_s,P_r)$ to $\Fusv$.
When a party is adaptively corrupted after a successful \textsf{Commit},
\(\s\) reconstructs \(\mathrm{\Upsilon}\) deterministically as \(\mathsf{Open}(\mathsf{pp},C,\zeta)\).
This matches the real world since the verified opening is a fixed deterministic function of \((C,\zeta)\).

\begin{theorem}[UC security of $\Fusv$]\label{thm:usv-real-ideal}
	Let $\mathrm{\Pi}_{\mathsf{USV}}$ denote the concrete protocol that maintains a local
	table $\mathsf{T}^{\mathrm{\Pi}}$ indexed by $(\sid,\cid,P_s,P_r)$ with entries
	$(M,d,\status)$, tracking the $(M,d,\status)$ components of $\Fusv$'s state (the delivery flag $\mathsf{del}$ is implicit in $\mathrm{\Pi}_{\mathsf{USV}}$: the recipient's entry is created only upon $\Fchan$ delivery), and whose per-interface behavior is:
	\begin{itemize}[nosep]
		\item \textbf{Commit}$(\sid,\cid,P_r,C,\zeta)$: party $P_s$ receives $(C,\zeta)$
		(e.g., from a prior $\Cert(\pp,m)$ invocation).
		Compute $M:=\Open_M(\pp,C,\zeta)$ and the receipt
		$d:=H\!\bigl(\USVrcpt,\langle \sid,\cid,P_s,P_r,C,M\rangle\bigr)$
		(in context $\USVrcpt\in\CtxTEE$).
		If $(\sid,\cid,P_s,P_r)\in\dom(\mathsf{T}^{\mathrm{\Pi}})$, mark
		$\mathsf{T}^{\mathrm{\Pi}}[\sid,\cid,P_s,P_r].\status\gets\textsf{invalid}$
		and return $(\receipt,\sid,\cid,P_r,d)$ to $P_s$.
		If $M=\perp$, store $\mathsf{T}^{\mathrm{\Pi}}[\sid,\cid,P_s,P_r]\gets(\perp,d,\textsf{invalid})$;
		else store $\mathsf{T}^{\mathrm{\Pi}}[\sid,\cid,P_s,P_r]\gets(M,d,\textsf{pending})$.
		Send $(C,\zeta,d)$ to $P_r$ over $\Fchan$.
		\item \textbf{Receive} (at $P_r$): Upon receiving $(C,\zeta,d)$ from $P_s$ via $\Fchan$,
		$P_r$ computes $M':=\Open_M(\pp,C,\zeta)$ and
		$d':=H\!\bigl(\USVrcpt,\langle \sid,\cid,P_s,P_r,C,M'\rangle\bigr)$.
		If $(\sid,\cid,P_s,P_r)\in\dom(\mathsf{T}^{\mathrm{\Pi}})$, set
		$\mathsf{T}^{\mathrm{\Pi}}[\sid,\cid,P_s,P_r].\status\gets\textsf{invalid}$
		and return.
		If $M'=\perp$ or $d'\neq d$, store
		$\mathsf{T}^{\mathrm{\Pi}}[\sid,\cid,P_s,P_r]\gets(\perp,d,\textsf{invalid})$;
		else store
		$\mathsf{T}^{\mathrm{\Pi}}[\sid,\cid,P_s,P_r]\gets(M',d,\textsf{pending})$.
		\item \textbf{Verify}$(\sid,\cid,P_s,C',\zeta')$: recipient $P_r$ looks up
		$\mathsf{T}^{\mathrm{\Pi}}[\sid,\cid,P_s,P_r]$.
		If the entry does not exist, return~$\perp$.
		If $\mathsf{T}^{\mathrm{\Pi}}[\sid,\cid,P_s,P_r].\status=\textsf{invalid}$, return~$0$.
		Let $(M,d,\textsf{pending}):=\mathsf{T}^{\mathrm{\Pi}}[\sid,\cid,P_s,P_r]$.
		Locally compute $\Vcert(\pp,C',\zeta')$;
		if it rejects, return~$0$.
		Compute $M':=\Open_M(\pp,C',\zeta')$ and
		$d':=H\!\bigl(\USVrcpt,\langle\sid,\cid,P_s,P_r,C',M'\rangle\bigr)$.
		Return $1$ iff $M'\neq\perp\ \wedge\ M'=M\ \wedge\ d'=d$.
		\item \textbf{Open}$(\sid,\cid)$: $P_r$ reads the stored $M$ (or $\perp$ if none).
	\end{itemize}
	All verification/derivation is deterministic from $(C,\zeta)$ and the public algorithms $(\Vcert,\Derive,\PubOpen)$.
	Then $\mathrm{\Pi}_{\mathsf{USV}}$ UC-realizes $\Fusv$ in the ($\Fchan$, \gROCRP)-hybrid model; i.e., for every PPT adversary
	$\cA$ there exists a PPT simulator $\s$ such that for every PPT environment $\cZ$,
	\[
	\Exec(\mathrm{\Pi}_{\mathsf{USV}},\cA,\cZ,\lambda)\ \approx_c\ \Ideal(\Fusv,\s,\cZ,\lambda).
	\]
\end{theorem}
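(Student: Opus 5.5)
We construct $\s$ exactly as sketched before the theorem. $\s$ internally runs $\cA$ and emulates $\Fchan$ together with $\cA$'s view of $\GgROCRP$. Queries to $\GgROCRP$ are forwarded to the real global functionality and never programmed, so $\s$ needs no trapdoor. On $(\receipt,\sid,\cid,P_s,P_r,C,\zeta,d)$ from $\Fusv$ for honest $P_s$, $\s$ emulates $(\mathsf{Send},\sid,(C,\zeta,d))$ on the $\Fchan$ instance for $(P_s,P_r)$. This leaks to $\cA$ the same ticket and length $|(C,\zeta,d)|$ as in the real world. When $\cA$ delivers that ticket, $\s$ sends $(\mathsf{Deliver},\sid,\cid,P_s,P_r)$ to $\Fusv$, and reveals $(C,\zeta,d)$ if $P_r$ is corrupted. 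Duplicates are handled via $\mathsf{DupCommit}$ and $\mathsf{Invalidate}$. For a corrupted $P_s$ sending $(C^\ast,\zeta^\ast,d^\ast)$ toward an honest $P_r$, $\s$ submits $(\Commit,\sid,\cid,P_r,C^\ast,\zeta^\ast)$ to $\Fusv$ on $P_s$'s behalf. It then compares $d^\ast$ with the receipt $d$ returned by $\Fusv$. If they differ, or if $\Open_M(\pp,C^\ast,\zeta^\ast)=\perp$, it issues $\mathsf{Invalidate}$ upon delivery, mirroring the $P_r$-side check $d'\neq d$ in $\mathrm{\Pi}_{\mathsf{USV}}$. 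On adaptive corruption, $\s$ recomputes host state deterministically via $\Open(\pp,C,\zeta)$ from the leaked $(C,\zeta)$.

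The indistinguishability argument proceeds by a short hybrid sequence. In $\Game_0$ we have the real execution. In $\Game_1$ each party's local table $\mathsf T^{\mathrm\Pi}$ is replaced by a lookup into a centralized table maintained exactly like $\Fusv$'s. This is an identical change because every entry $(M,d,\status)$ is a deterministic function of $(C,\zeta)$, the identifiers, and $H(\USVrcpt,\cdot)$, which both worlds evaluate on the same global oracle. This uses Items 1--2 of Definition~\ref{def:USV-cert}: $\Derive$ and $\PubOpen$ are deterministic, and $\Open=\perp$ iff $\Vcert=0$. Next we check interface by interface that $\Game_1$ equals the ideal world with $\s$. $\mathsf{Open}$ returns the stored $M$ in both worlds. $\mathsf{Verify}$ on the honestly committed $(C,\zeta)$ returns $1$ in both worlds. $\mathsf{Verify}$ on an $(C',\zeta')$ with $(C',M')=(C,M)$ also agrees, since both check $M'=M$ and the same digest. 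Here the real protocol's extra $\Vcert$ call is redundant, because $M'\neq\perp$ already forces $\Vcert=1$.

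The remaining divergence, which I expect to be the main obstacle, is $\mathsf{Verify}$ on a pair with $(C',M')\neq(C,M)$, and the timing mismatch between $\mathsf{del}$ and $\Fchan$ delivery. The ideal functionality also checks $M'=M$ and $H(\cdot)=d$, so on its face it behaves like the real world. However, the ideal table is keyed by the committed $C$, while the real receiver table is populated only on delivery of whatever $(C,\zeta,d)$ arrives. I would bound the event that a receipt computed over a different $(C',M')$ collides with $d$ using Lemma~\ref{lem:hb-nm}. This contributes at most $(Q+1)/2^\lambda$ overall, by a union bound over the polynomially many handles, and it suffices because $\USVrcpt\in\CtxTEE$ is never programmed (Lemma~\ref{lem:grocrp-noninterference}). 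I would treat the delivery timing by showing that $\s$'s $\mathsf{Deliver}$ is issued in the same activation as $\Fchan$'s delivery, so that $\mathsf{del}=0$ in the ideal world exactly when the real receiver entry is absent. In both cases $\mathsf{Verify}$ returns $\perp$. Equivocation by a corrupted sender across different relying parties is allowed in both worlds because of the per-$(P_s,P_r)$ scoping, so it needs no bound. Summing the terms gives $\Exec\approx_c\Ideal$ with negligible statistical loss.
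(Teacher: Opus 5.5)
Your proposal is correct and follows essentially the same route as the paper. It uses the same simulator: replay the leaked $(C,\zeta,d)$ over the emulated $\Fchan$, issue $\Fusv$'s $\mathsf{Deliver}$ in the same activation as the channel delivery, and reconstruct corrupted host state from the deterministic $\Open(\pp,C,\zeta)$. It makes the same appeal to Lemma~\ref{lem:hb-nm} and does no programming of $\mathsf{ctx}_{\mathsf{DLEQ}}$. Your centralized-table hybrid plays the role of the paper's purely syntactic intermediate hybrid. Your commit-on-behalf/invalidate-on-digest-mismatch rule for corrupted senders fills in a case the paper leaves implicit.

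One detail you inherit from the paper's pre-theorem sketch should be adjusted. In $\mathrm{\Pi}_{\mathsf{USV}}$ an honest duplicate $\Commit$ sends nothing over $\Fchan$ and leaves the receiver's entry untouched. So on $\mathsf{DupCommit}$ from an honest sender, $\s$ should emit no channel traffic and only record the event for later corruption. $\mathsf{Invalidate}$ should be reserved for second messages that a corrupted sender actually gets delivered.
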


\begin{proof} Fix any PPT adversary $\cA$ and PPT environment $\cZ$.
	We construct a PPT simulator $\s$ and employ the following sequence of hybrids:
	\begin{itemize}
		\item Hybrid \(\Game_0\) (real execution): Each honest party $P_s$ receives $(C,\zeta)$
		(generated via $\Cert(\pp,m)$ by the calling protocol or locally),
		computes $M:=\Open_M(\pp,C,\zeta)$ and the receipt
		$$d=H\!\bigl(\USVrcpt,\langle \sid,\cid,P_s,P_r,C,M\rangle\bigr),$$
		maintains the handle table $\mathsf{T}^{\mathrm{\Pi}}$ (including duplicate-handle invalidation),
		and sends $(C,\zeta,d)$ to $P_r$.
		
	\item Hybrid \(\Game_1\) (syntactic resampling / explicit \(\Cert\) randomness):
	Proceed identically to \(\Game_0\), except that for each honest certificate we
	(re-)sample the full random tape of \(\Cert\) and re-evaluate it.
	Concretely, we sample
	\(r \leftarrowdollar \mathbb Z_p^*\setminus\{-m\}\) together with the internal
	randomness used by the randomized prover \(\cp_{\DLEQ}\) (and hence by the Fischlin transform),
	and then compute \((C,\zeta)\) by running \(\Cert(\pp,m)\) with that sampled randomness
	(and the fixed global oracle \(H\)).
	All other external inputs and protocol randomness are unchanged.
	This is a purely syntactic refactoring of how \(\Cert\)'s coins are sampled, so
	\(\Game_0 \overset{\mathrm{d}}{=} \Game_1\).
	
		\item Hybrid \(\Game_2\) (ideal execution), switch to \(\Fusv\) and $\s$:
		For each honest handle, \(\Fusv\) stores the same receipt \(d=H\!\bigl(\USVrcpt,\langle \sid,\cid,P_s,P_r,C,M\rangle\bigr)\). Since honest certificates verify, \(\Fusv\) stores \(M:=\Open_M(\pp,C,\zeta)\). By Lemma~\ref{lem:hb-nm}, any in-flight substitution under a fixed handle is detected in both worlds.
		For every \textsf{Commit} (regardless of $P_r$'s corruption status), $\s$ receives $(\receipt,\sid,\cid,P_s,P_r,C,\zeta,d)$ from $\Fusv$ and sends $(\mathsf{Send},\sid,(C,\zeta,d))$ to the $\Fchan$ instance for $(P_s,P_r)$, generating the same $\mathsf{Leak}$ ticket as in $\Game_0$.
		When $\cA$ issues $\mathsf{Deliver}$ for this ticket, $\Fchan$ delivers $(C,\zeta,d)$ to~$P_r$ and $\s$ concurrently sends $(\mathsf{Deliver},\sid,\cid,P_s,P_r)$ to $\Fusv$ to activate the recipient-side entry.
		If $P_r$ is honest, $P_r$ then processes $(C,\zeta,d)$ identically to $\Game_0$ and any subsequent $\Fusv.\Verify$ or $\Fusv.\Open$ call returns the same result as the local computation in $\Game_0$ (since $\mathsf{del}=1$ after delivery and all checks are deterministic).
		If $P_r$ is corrupted, $\s$ additionally forwards $(C,\zeta,d)$ to the corrupted~$P_r$.
		Upon receiving $(\mathsf{DupCommit},\sid,\cid,P_s,P_r,\ldots)$ from $\Fusv$, $\s$ reproduces the duplicate over $\Fchan$ and, upon delivery, sends $(\mathsf{Invalidate},\sid,\cid,P_s,P_r)$ to $\Fusv$, matching the real-world recipient-side invalidation timing.
		The adversary may invoke $\Fusv.\Verify$ or $\Fusv.\Open$ directly; $\s$ simply forwards the deterministic public computations $\Vcert$ and $\Open_M$ that the adversary could perform itself from $(C,\zeta)$.
		No USV/DLEQ proof simulation is needed because the adversary holds $(C,\zeta)$ (via the simulated $\Fchan$ delivery) and can verify using the public algorithms.
		Adaptive corruptions reveal identical internal states because the verified opening
		\(\mathsf{Open}(\pp,C,\zeta)\) is public and deterministic.
		Hence, \(\Game_1 \approx_c \Game_2\), and therefore \(\Game_0 \approx_c \Game_2\).
		Interpreting \(\Game_0\) as the real execution
		\(\Exec(\mathrm{\Pi}_{\mathsf{USV}},\cA,\cZ,\lambda)\) and \(\Game_2\) as the ideal execution
		\(\Ideal(\Fusv,\s,\cZ,\lambda)\), we conclude
		\(\Exec(\mathrm{\Pi}_{\mathsf{USV}},\cA,\cZ,\lambda)\approx_c \Ideal(\Fusv,\s,\cZ,\lambda)\). \qed
	\end{itemize}
\end{proof}

The simulator in this UC proof never invokes $\SimProgramRO$ in
$\mathsf{ctx}_{\mathsf{DLEQ}}$ (indeed it never needs to simulate USV/DLEQ proofs); it only forwards
verifier oracle queries via $\Query$.

\paragraph{Broader applicability of USV} Although USV is introduced here to make the SDKG transcript ``structure-complete'' under NXK, the primitive is not specific to DKG since it turns a KeyBox-resident, non-exportable scalar into a transcript-defined public group element via a publicly verifiable certificate. As an orthogonal illustration, Appendix \ref{App2} shows how to build NXK-compatible commit–reveal randomness beacons using USV.

\section{Enforcing Consistency via Straight-Line Extraction}\label{SSE-NIZK}
Since $\G$ generates $\mathbb G$, every $M\in\mathbb G$ can be written as $M=m\G$ for some $m\in\mathbb Z_p$. Accordingly, for DL-style statements the relevant security notion in our use-cases is \emph{knowledge soundness} (extractability of the witness), rather than language non-membership soundness. We use standard Schnorr (DL) and Chaum--Pedersen (DLEQ) $\mathrm{\Sigma}$-protocols,
which have special soundness and unique responses in prime-order groups.
Applying the optimized Fischlin transform in \gROCRP\ yields UC-NIZK-AoKs with straight-line extractors for both relations. Define the NP relation
\[
\cR_{\mathsf{DL}} := \{ ((\pp,M),m)\ :\ m\in\Zp\ \wedge\ M=m\G\}.
\]
We intentionally allow $m=0$ (and hence $M=0_{\mathbb G}$), since protocol-derived witnesses
(e.g., affine-relation witnesses and derived shares) can be $0$ under adversarial choice.
For the hardness assumption (Definition~\ref{def:dl}) we sample $x\leftarrow\Zp^\ast$ to avoid
the trivial instance $X=0_{\mathbb G}$; including $x=0$ would change success probability by at most $1/p=\negl(\lambda)$.

\paragraph{Tagged DL statements .}
To rule out replay/mauling of UC-context proofs across sessions and to justify the
use of simulation-extractability for fresh statements in concurrent executions,
we bind every UC-context DL statement to the session identifier $\sid$ and a fixed
label $\ell\in\{0,1\}^*$ describing the proof position. Concretely, we use the tagged
relation
\[
\cR_{\mathsf{DL}}^{\mathsf{tag}}
:= \{ ((\pp,\sid,\ell,M),m)\ :\ m\in\Zp \ \wedge\ M=m\G \}.
\]
In SDKG, every invocation of $\mathrm{\Pi}^{\mathsf{UC}}_{\mathsf{DL}}$ is on a tagged
statement of the form $x:=\DLstmt{\sid}{\ell}{M}$ (with $\pp$ implicit), so the Fischlin
hash input in Definition~\ref{FSdef} includes $(\sid,\ell)$ along with $M$.
We will slightly abuse notation and keep writing $\cp_{\mathsf{DL}}$ and $\cV_{\mathsf{DL}}$
for this tagged variant.

In this section, all proofs contexts are in $\CtxUC$. Given statement $(\pp,M)$ and witness $m$ such that $M=m\G$, the standard Schnorr $\mathrm{\Sigma}$-protocol for $\cR_{\mathsf{DL}}$ is:
\begin{enumerate}
	\item Commit: Prover samples $j\leftarrowdollar \mathbb{Z}_p$ and sends $J:=j\G$.
	\item Challenge: Verifier samples $e\leftarrowdollar \{0,1\}^t$, interprets it as an integer $\bar e\in[0,2^t-1]$,
	embeds it into $\mathbb{Z}_p$ (e.g., require $2^t<p)$, and sends $\bar e$ to the prover.
	\item Response: Prover sends $z:=j+\bar e\cdot m \bmod p$.
	\item Verify: accept iff $z\G = J+\bar e\cdot M$.
\end{enumerate}
Special soundness holds since from two accepting transcripts $(J,\bar e,z)$ and $(J,\bar e',z')$ with $\bar e\neq \bar e'$
one extracts $m=(z-z')(\bar e-\bar e')^{-1}\bmod p$.

\begin{lemma}[Unique responses for Schnorr]
	The Schnorr $\mathrm{\Sigma}$-protocol for $\cR_{\mathrm{DL}}$ has unique responses.
\end{lemma}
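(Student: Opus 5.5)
The plan is to verify the definition of unique responses directly from the verification equation of the Schnorr $\mathrm{\Sigma}$-protocol. Fix an arbitrary statement $(\pp,M)$ with $M\in\mathbb G$, an arbitrary first message $J\in\mathbb G$, and an arbitrary challenge $e\in\{0,1\}^t$, embedded as $\bar e\in\Zp$ via the natural injection. We must show there is at most one $z\in\Zp$ with $\Vsig((\pp,M);J,\bar e,z)=1$, i.e., with $z\G=J+\bar e M$.

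Suppose $z,z'\in\Zp$ both satisfy the equation. Subtracting gives $(z-z')\G=0_{\mathbb G}$. Since $\mathbb G$ is cyclic of prime order $p$ with generator $\G$, the map $\Zp\to\mathbb G$, $a\mapsto a\G$, is a group isomorphism and in particular injective; this bijection is already used in Remark~\ref{rem:nxk-resharing-minentropy} and in the proof of Lemma~\ref{lem:usv-eqv}. Hence $z-z'\equiv 0\pmod p$, so $z=z'$ as elements of $\Zp$.

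The only point needing care is the response space. I will note that responses are parsed as canonical elements of $\Zp$: the protocol computes $z:=j+\bar e m\bmod p$, and the verifier is taken to reject non-canonical encodings. Otherwise distinct integer representatives of the same residue would count as distinct responses. This is the only potential obstacle, and it is a convention rather than a mathematical difficulty. Everything holds for every $(M,J,e)$, including $M=0_{\mathbb G}$, which $\cR_{\mathsf{DL}}$ explicitly allows. The same argument applies verbatim to the tagged variant $\cR_{\mathsf{DL}}^{\mathsf{tag}}$, since the tag $(\sid,\ell)$ does not enter the verification equation.
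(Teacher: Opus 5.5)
Your proposal is correct and takes essentially the same approach as the paper: both fix $(M,J,\bar e)$ and conclude from the injectivity (indeed bijectivity) of $z\mapsto z\G$ on $\Zp$ that the verification equation $z\G=J+\bar e M$ has at most one solution. Your remarks about canonical encodings of responses and about the tagged variant are harmless clarifications that the paper leaves implicit.
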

\begin{proofsketch}
	For fixed $(M,J,e)$, verification requires $z\G = J + eM$ in a prime-order group.
	Since $\G$ generates $\mathbb{G}$, this equation has a unique solution $z\in\mathbb{Z}_p$. \qed
\end{proofsketch}

Recall the public parameters $\pp=(\mathbb{G},p,\G,\h)$ and the NP relation
\[
\cR_{\mathsf{DLEQ}} := \Big\{ \big((\pp,A,B),r\big)\ :\ (A,B)\in\mathcal X_{\mathsf{DLEQ}} \ \wedge\ r\in\mathbb{Z}^*_p\ \wedge\ A=r\G\ \wedge\ B=r\h \Big\}.
\]

\paragraph{Chaum--Pedersen $\mathrm{\Sigma}$-protocol for $\cR_{\mathsf{DLEQ}}$.}
Given statement $(\pp,A,B)$ and witness $r$ such that $A=r\G$ and $B=r\h$, the Chaum--Pedersen
$\mathrm{\Sigma}$-protocol (with $t$-bit challenge) is:
\begin{enumerate}
	\item Commit: Prover samples $j\leftarrowdollar \mathbb{Z}_p$, and sends $J_1 := j\G \text{ and } J_2 := j\h.$
	\item Challenge: Verifier samples $e\leftarrowdollar \{0,1\}^t$, interprets it as $\bar e\in[0,2^t-1]$,
	embeds it into $\mathbb{Z}_p$ (e.g., require $2^t<p)$, and sends $\bar e$.
	\item Response: Prover sends $z := j+\bar e\cdot r \bmod p$.
	\item Verify: reject if $A=0_\mathbb G$ or $B=0_\mathbb G$; otherwise accept iff
	$z\G = J_1+\bar e\cdot A \text{ and } z\h = J_2+\bar e\cdot B.$
\end{enumerate}

Special soundness for Chaum--Pedersen: From any two accepting transcripts $(J_1,J_2,\bar e,z)$ and $(J_1,J_2,\bar e',z')$ with $\bar e\neq \bar e'$,
	one can efficiently extract a witness
	\[
	r = (z-z')\cdot(\bar e-\bar e')^{-1}\bmod p
	\]
	satisfying $A=r\G$ and $B=r\h$.

\begin{lemma}[Unique responses for Chaum--Pedersen]\label{lem:cp-unique}
	The Chaum--Pedersen $\mathrm{\Sigma}$-protocol for $\cR_{\mathsf{DLEQ}}$ has unique responses.
\end{lemma}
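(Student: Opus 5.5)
I will mirror the argument given above for Schnorr (unique responses), adapted to the two-equation Chaum--Pedersen verifier. Fix an arbitrary statement $(\pp,A,B)$, a first message $(J_1,J_2)\in\mathbb G^2$, and a challenge $\bar e\in\{0,\dots,2^t-1\}\subset\Zp$. The goal is to show that at most one $z\in\Zp$ makes the verifier accept.

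If $A=0_{\mathbb G}$ or $B=0_{\mathbb G}$, the verifier rejects every response, so the set of accepting responses is empty and the claim holds trivially. Otherwise, acceptance requires in particular the first check,
\[
z\G = J_1+\bar e\cdot A .
\]
Suppose $z$ and $z'$ are both accepting. Subtracting the two instances of this equation gives $(z-z')\G=0_{\mathbb G}$. Since $\mathbb G$ has prime order $p$ and $\G$ is a generator, the map $\Zp\to\mathbb G$, $u\mapsto u\G$, is a bijection (as already used in Remark~\ref{rem:nxk-resharing-minentropy}). Hence $z=z'$ in $\Zp$.

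The second equation $z\h=J_2+\bar e\cdot B$ only shrinks the accepting set further, so it cannot create a second solution. It is not needed for uniqueness, although it matters for soundness.

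The one point needing care is the representation of responses: $z$ must be parsed as an element of $\Zp$, i.e.\ a canonical integer in $[0,p-1]$. Otherwise distinct bitstrings that are congruent mod $p$ would count as different responses. I will state explicitly that the verifier parses $z$ canonically, as is implicit in the Schnorr case. With that convention there is no real obstacle; the proof is a direct injectivity argument.
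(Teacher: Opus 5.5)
Your proposal is correct and follows essentially the same argument as the paper: restrict to the first verification equation $z\G = J_1 + \bar e\cdot A$ and use that $u \mapsto u\G$ is a bijection on $\Zp$ in the prime-order group. Your remarks on the degenerate $A=0_{\mathbb G}$ or $B=0_{\mathbb G}$ case and on canonical parsing of $z$ are harmless refinements that the paper leaves implicit.
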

\begin{proofsketch}
	Fix any statement $(A,B)$, first message $(J_1,J_2)$, and challenge $\bar e$.
	If an accepting response $z$ exists, it must satisfy $z\G=J_1+\bar e\cdot A$.
	Since $\G$ generates a prime-order group, the map $z\mapsto z\G$ is a bijection on $\mathbb{Z}_p$,
	so $z$ is uniquely determined. Hence, there is at most one accepting response.\qed
\end{proofsketch}

\paragraph{UC-NIZK-AoK for DLEQ from Fischlin in \gROCRP.}
Applying the (optimized) Fischlin transform to the above Chaum--Pedersen $\mathrm{\Sigma}$-protocol yields
a UC-NIZK-AoK for $\cR_{\mathsf{DLEQ}}$ in the \gROCRP\ model.
We denote it by $\mathrm{\Pi}_{\mathsf{DLEQ}}=(K_{\mathsf{DLEQ}},\cp_{\mathsf{DLEQ}},\cV_{\mathsf{DLEQ}}),$
with a straight-line extractor $\Ext_{\mathsf{DLEQ}}$ that may inspect the adversary's \gROCRP\
query/answer log under the DLEQ-proof context(s).

\paragraph{UC-NIZK-AoK for DL from Fischlin in \gROCRP.}
Applying the (optimized) Fischlin transform to the above Schnorr $\mathrm{\Sigma}$-protocol yields a
UC-NIZK-AoK for $\cR_{\mathsf{DL}}$ in the \gROCRP\ model.
We denote this system by $\mathrm{\Pi}_{\mathsf{DL}}=(K_{\mathsf{DL}},\cp_{\mathsf{DL}},\cV_{\mathsf{DL}})$ and summarize its properties (which also apply to $\mathrm{\Pi}_{\mathsf{DLEQ}})$:
\begin{itemize}
	\item Completeness: honest proofs produced using witness $m$ verify.
	\item (Computational) ZK: real proofs are computationally indistinguishable from simulated proofs.
	\item Simulation-extractability (fresh statement) in the sense of Definition~\ref{def:simext} \cite[Thm. 3]{Fischlin[05]}.
	\item Knowledge soundness / extraction: there exists a straight-line PPT extractor $\Ext_{\mathsf{DL}}$ such that for any PPT
	adversarial prover $\cp^\ast$ that outputs $(M,\pi)$ with $\cV_{\mathsf{DL}}(\pp,M,\pi_\DL)=1$, the extractor outputs
	$m\leftarrow \Ext_{\mathsf{DL}}(\pp,M,\pi;\mathsf{Log}_{\cp^\ast})$ satisfying $M=m\G$, except with negligible probability,
	where $\mathsf{Log}_{\cp^\ast}$ is the prover's logged \gROCRP\ query/answer transcript under the proof context.
\end{itemize}

Our use of Fischlin-style straight-line compilation of DL-based $\mathrm{\Sigma}$-protocols into UC‑NIZK-PoK/AoK is closely aligned with \cite{Anna[25]}, which targets adaptive UC security in global random-oracle models; we further tailor the model/usage to gRO‑CRP and KeyBox local-call semantics.

\subsection{Affine DL relation}
Let $\Open_M(\pp,C,\zeta)$ denote the public opening derived from a USV certificate as in Definition~\ref{def:USV-cert}. Fix $\gamma\in\mathbb{Z}_p$ and public points $X,M,B,\mathrm{\Delta}\in\mathbb G$. Define affine DL NP relation 
\[
\cR_{\mathsf{aff}} :=
\Big\{ \big((X,\gamma,M,B,\mathrm{\Delta}),(\alpha,\delta)\big)\ :\
\alpha\G = M+\gamma B\ \wedge\ \delta\G = X-\mathrm{\Delta}-(M+\gamma B) \Big\}.
\]
Equivalently, define $Y:=M+\gamma B$ and $D:=X-\mathrm{\Delta}-Y$. Then $(\alpha,\delta)$ is a witness iff
$Y=\alpha\G$ and $D=\delta\G$, where $Y$ and $D$ are deterministically derived from the public tuple $(X,\gamma,M,B,\mathrm{\Delta})$. 

\begin{remark}[Realizing a UC-NIZK-AoK for $\cR_{\mathsf{aff}}$ from $\mathrm{\Pi}_{\mathsf{DL}}$]
	Let $\mathrm{\Pi}_{\mathsf{DL}}$ be the UC-NIZK-AoK for $\cR_{\mathsf{DL}}$.
	A UC-NIZK-AoK for $\cR_{\mathsf{aff}}$ is obtained by parallel composition of two instances of $\mathrm{\Pi}_{\mathsf{DL}}$:
	\begin{itemize}
		\item prove knowledge of $\alpha$ such that $Y=M+\gamma B=\alpha \G$;
		\item prove knowledge of $\delta$ such that $D=X-\mathrm{\Delta}-(M+\gamma B)=\delta \G$.
	\end{itemize}
	Concretely, the proof is $\pi_\aff:=(\pi_Y,\pi_D)$ where
	$\pi_Y\leftarrow \cp_{\mathsf{DL}}(\pp,Y,\alpha)$ and $\pi_D\leftarrow \cp_{\mathsf{DL}}(\pp,D,\delta)$.
	Verification of $\pi_\aff$ checks both $\pi_Y \text{ and }\pi_D$. The corresponding extractor outputs $(\alpha,\delta)$ by running the two
	$\mathrm{\Pi}_{\mathsf{DL}}$ extractors.
\end{remark}

As noted earlier, we instantiate UC-NIZK(-AoK)s via the Fischlin transform in the \gROCRP\ model. For efficiency we assume the optimized prover/verifier organization and batch-verification techniques of \cite{Lindell[24]}. Accordingly, we omit low-level optimization details and refer to \cite{Lindell[24]} for the optimized algorithms, while relying on \cite{Fischlin[05]} for the core transform security proof.

\section{Star DKG (SDKG)}\label{SDKG}
Fix Fischlin parameter functions $t=t(\lambda),b=b(\lambda),r=r(\lambda),S=S(\lambda)$ as in Definition~\ref{FSdef}.
Whenever the protocol refers to $t,b,r,S$, it means their values at the current security parameter $\lambda$. Let $H_{\mathrm{s32}}$ denote the \gROCRP\ $H$ under context $\CtxSDKG \in \CtxTEE$, i.e., $H_{\mathrm{s32}}(u):=H(\CtxSDKG,u)$. All invocations of $H_{\mathrm{s32}}$ are on an encoded tuple, i.e., $H_{\mathrm{s32}}(\langle\cdot\rangle)$.

For the sake of readability, we begin by providing our protocol, $\mathrm{\Psi}^{(3)}_{\mathsf{SDKG}}$, for $\mathrm{\Gamma}_0 = \left\{ \{P_1, P_2\}, \{P_1, P_3\} \right\}$ and then extend it to a 1+1-out-of-$n$ SDKG scheme. Hence, we first present $\mathrm{\Psi}^{(3)}_{\mathsf{SDKG}}$ in the
$(\FKeyBox,\Fusv,\Fchan,\Fpub)$-hybrid and \gROCRP\ models. Accordingly, the protocol description contains explicit invocations of the ideal certificate functionality $\Fusv$.
The corresponding real-world protocol $\widehat{\mathrm{\Psi}}^{(3)}_{\mathsf{SDKG}}$ is obtained by replacing each call to $\Fusv$ with an execution of its concrete realization $\mathrm{\Pi}_{\mathsf{USV}}$ (Corollary~\ref{cor:compile-out-fusv}). In Section \ref{subsec:SDKG-UC}, Theorem~\ref{thm:SDKG-UC} establishes UC security of the hybrid protocol $\mathrm{\Psi}^{(3)}_{\mathsf{SDKG}}$; Corollary~\ref{cor:compile-out-fusv} then compiles out $\Fusv$ via UC composition to obtain UC security of $\widehat{\mathrm{\Psi}}^{(3)}_{\mathsf{SDKG}}$ in the $(\FKeyBox,\Fchan,\Fpub)$-hybrid (and \gROCRP) model.

\begin{remark}
	We use three proof contexts in $\CtxUC$ (all mutually disjoint) for domain separation:
	\begin{itemize}
		\item $\mathrm{\Pi}^{\mathsf{UC}}_{\mathrm{DL}}$: UC-NIZK-AoK for DL-based consistency statements,
		instantiated under $\mathsf{ctx}_{\mathsf{UC}}\in\CtxUC$. This is the only proof type from which the
		UC proof extracts witnesses.
		\item $\mathrm{\Pi}^{\mathsf{KeyBox}}_{\mathrm{DL}}$: the sealed one-shot DL prover inside a KeyBox,
		instantiated under the disjoint context $\mathsf{ctx}_{\mathsf{KeyBox}}\in\CtxUC$. These proofs are
		treated under ZK/simulation only; the UC proof never extracts from them.
		\item $\mathrm{\Pi}_{\mathsf{DLEQ}}$: the DLEQ NIZK-AoK embedded in USV certificates, instantiated under
		the disjoint context $\mathsf{ctx}_{\mathsf{DLEQ}}\in\CtxUC$.
	\end{itemize}
	In the UC simulations for $\Fusv$ and SDKG, the simulator never programs
	$\mathsf{ctx}_{\mathsf{DLEQ}}$.
\end{remark}

\begin{figure}[t]
	\centering
	\setlength{\fboxrule}{0.2pt} 
	\fbox{
		\parbox{\dimexpr\linewidth-2\fboxsep-2\fboxrule\relax}{%
			            \ding{169}\ \textsf{Fixed Fischlin parameters:}
			Let $(t,b,r,S):=(t(\lambda),b(\lambda),r(\lambda),S(\lambda))$ be the canonical parameter tuple
			fixed for this KeyBox profile (Definition~\ref{FSdef}). Callers do not supply (and cannot influence) these values.
			
			\smallskip
			\ding{169}\ \textsf{Static seed:}
			Let $\seed\in\{0,1\}^\lambda$ be the KeyBox-resident static PRF seed
			(Assumption~\ref{assump:seed-integrity}), provisioned independently
			of any resident signing share.
			Let $\PRF:\{0,1\}^\lambda\times\{0,1\}^*\to\{0,1\}^{\lceil\log_2 p\rceil+\lambda}$
			be a secure pseudorandom function.
			
			\smallskip
			\ding{169} \textsf{FS.Start}$(\mathsf{sid},K)$:
			Check $K = \Pub(k)$; allocate a fresh handle $\muFS\in\mathsf{slot}_{fs}$, where $\mathsf{slot}_{fs} \in \{0,1\}^\lambda$.
			For each $i\in[r]$, derive the prover nonce deterministically:
			\[
			j_i\ \coloneqq\ \PRF\!\bigl(\seed,\,
			\langle \texttt{"LinOS/nonce/v1"},\,\mathsf{sid},\,K,\,i\rangle\bigr)
			\bmod p.
			\]
			Set $J_i\coloneqq j_i\mathcal G$, $a_i\coloneqq J_i$ for $i\in[r]$; bind the tuple
			$(\muFS\mapsto(\mathsf{sid},K,\{j_i\}_{i\in[r]},\{a_i\}_{i\in[r]},t,b,r,S))$ and return $(\muFS,\mathbf a),$
			where $\mathbf a\coloneqq (a_1,\ldots,a_r)$.
			
			\smallskip
			\ding{169} \textsf{FS.Prove}$(\muFS)$:
			If $\muFS$ is sealed, return $\perp$. Else, for each $i\in[r]$, the KeyBox runs an early-break rarity search:
			initialize $\widehat s_i\gets 2^b-1$ and $(\widehat e_i,\widehat z_i)\gets(\perp,\perp)$.
			For $e=0,1,\ldots,2^t-1$ it internally computes
			\[
			z\gets j_i+e\,k \bmod p,\quad
			s\gets H_b\!\big(\mathsf{ctx}_{\mathsf{KeyBox}},\langle \mathsf{sid},K,\mathbf a,i,e,z\rangle\big).
			\]
			If $s=0$, it sets $(e_i,z_i)\gets(e,z)$ and breaks; otherwise, if $s \le \widehat s_i$ it updates
			$\widehat s_i\gets s$ and $(\widehat e_i,\widehat z_i)\gets(e,z)$.
			If the loop ends without $s=0$, it sets $(e_i,z_i)\gets(\widehat e_i,\widehat z_i)$.
			It discards all per-$e$ temporary values other than the selected $(e_i,z_i)$, outputs the optimized-Fischlin proof
			$\pi_\DL=((a_i,e_i,z_i))_{i=1}^r$, and seals $\muFS$. Subsequent calls on $\muFS$ return $\perp$.
			
			\smallskip
			\ding{169} \textsf{FS.Verify}$(\mathsf{sid},K,\pi_\DL)$:
			Accept iff
			(i) $\forall i$, $z_i \mathcal{G}=a_i+e_i K$, and
			(ii) $\sum_{i=1}^r H_b(\mathsf{ctx}_{\mathsf{KeyBox}},\langle \mathsf{sid},K,\mathbf a,i,e_i,z_i\rangle)\le S.$
	}}
	\caption{LinOS-Fischlin API with profile-fixed parameters.}
	\label{LinOS}
\end{figure}

\begin{definition}[PRF for nonce derivation]\label{def:linos-prf}
	\emph{Let $\PRF:\{0,1\}^\lambda\times\{0,1\}^*\to\{0,1\}^{\lceil\log_2 p\rceil+\lambda}$
		be a keyed function family. We say $\PRF$ is a secure pseudorandom
			function if for every PPT distinguisher~$\cal D$ that makes at most
		$Q=\poly(\lambda)$ adaptive queries,
		\[
		\left|\Pr_{K\leftarrowdollar\{0,1\}^\lambda}\!\bigl[\mathcal{D}^{\PRF(K,\cdot)}(1^\lambda)=1\bigr] -
		\Pr_{F\leftarrowdollar \mathsf{Func}}\!\bigl[\mathcal{D}^{F(\cdot)}(1^\lambda)=1\bigr]\right|
		\;\le\; \negl(\lambda),
		\]
		where $\mathsf{Func}$ is the set of all functions from $\{0,1\}^*$
		to $\{0,1\}^{\lceil\log_2 p\rceil+\lambda}$.}
\end{definition}

Fig. \ref{LinOS} defines LinOS (Linear One-Shot), a sealed,
handle-bound prover for Schnorr DL that runs the optimized Fischlin transform inside the KeyBox. In LinOS, the PRF is keyed by the static seed
$\seed$ (Assumption~\ref{assump:seed-integrity}), and the derived
nonce is reduced modulo~$p$:
$$j_i := \PRF(\seed,\langle\texttt{"LinOS/nonce/v1"},\sid,K,i\rangle)\bmod p.$$

\begin{remark}[Injective encoding and domain separation for LinOS nonces]
	\label{rem:linos-encode}
	The PRF input in Fig.~\ref{LinOS} is the injective tuple encoding
	$\langle\texttt{"LinOS/nonce/v1"},\sid,K,i\rangle$ under the paper's
	fixed self-delimiting encoding $\langle\cdot\rangle$
	(Section~\ref{subsec:nxk-notation}).
	Because $\langle\cdot\rangle$ is injective and self-delimiting,
	distinct tuples $(\sid,K,i)\neq(\sid',K',i')$ map to distinct PRF
	inputs.  The domain-separation tag \texttt{"LinOS/nonce/v1"} prevents
	collisions with any other use of $\seed$ within the KeyBox profile.
	
	\smallskip\noindent
	Session-identifier requirement:
	The rollback-robustness argument (Lemma~\ref{lem:rollback-robust})
	relies on the property that distinct signing/proving instances use
	distinct $\sid$ values.  In the SDKG protocol of this paper, each
	base-run session has a unique $\sid$ by construction, and LinOS is
	invoked at most once per $(\sid,K)$ pair.
	In LinOS, the statement being proved is $K=\Pub(k)$ (the public key
	bound to the resident share), and $K$ is already part of the PRF
	input.  Therefore, for the uses in this paper, distinct signing
	instances always produce distinct PRF inputs, and no additional
	message binding beyond $(\sid,K)$ is needed.
	
	\smallskip\noindent
	If LinOS were adapted for a setting in which the same $(\sid,K)$
	could be used to produce proofs for different external messages~$m$
	(e.g., as part of a threshold signing protocol), then the PRF input
	must additionally include a commitment to~$m$---for instance,
	$\langle\texttt{"LinOS/nonce/v1"},\sid,K,i,H(m)\rangle$---to prevent
	nonce reuse across different messages.  This extension is
	straightforward but is not required by our SDKG construction.
\end{remark}

In contrast to a na\"{\i}ve design that
samples prover nonces $j_i\leftarrowdollar\Zp$ and stores them in
mutable KeyBox state, LinOS derives each nonce deterministically as
$j_i:=\PRF(\seed,\langle\texttt{"LinOS/nonce/v1"},\sid,K,i\rangle)\bmod p$
from a static KeyBox-resident seed~$\seed$
(Assumption~\ref{assump:seed-integrity}) and session-bound inputs.
This deterministic derivation follows the design principle of
RFC~6979~\cite{RFC6979} and EdDSA~\cite{RFC8032} (deterministic
nonces for (EC)DSA/Schnorr signatures) and of resettable
zero-knowledge~\cite{CGGM00}, adapted to the Fischlin-transform
setting. LinOS ensures:
(i) all per-challenge linear evaluations $j_i+e k$ remain internal; (ii) at most one $(e_i,z_i)$ ever leaves the KeyBox per commitment $a_i$, enforced by sealing and state continuity; and (iii) all rare-structure oracle queries are issued internally under a dedicated \gROCRP\ context.
In the UC proof, LinOS proofs are treated only as $\mathrm{\Pi}^{\mathsf{KeyBox}}_{\mathsf{DL}}$ proofs, whereas all protocol-consistency proofs
that require straight-line extraction use $\mathrm{\Pi}^{\mathsf{UC}}_{\mathsf{DL}}$ under a disjoint context.
For corrupted parties, the simulator mediates \gROCRP\ access. We get the following security semantics:
\begin{itemize}
	\item The statement $K$ is bound at \textsf{FS.Start} to the KeyBox’s resident key (share); the host cannot request a proof for any other statement.
	\item For each commitment $a_i$ produced by \textsf{FS.Start}, at most one $(e_i,z_i)$ ever leaves the KeyBox (enforced by sealing), preventing an external party from obtaining two responses to the same $a_i$.
	\item No intermediate $z_{i,e}$ is exposed beyond the selected $(e_i,z_i)$; hence an adversary cannot solve for $k$ by computing the difference of two responses with the same $a_i$.
\item The ``one-shot'' sealing of $\muFS$ assumes state continuity
(Assumption~\ref{assump:tee-continuity}).
Even if the one-shot seal is circumvented by a rollback of
mutable KeyBox state, the deterministic nonce derivation
(Lemma~\ref{lem:rollback-robust}) ensures that replaying the
same $(\sid,K)$ regenerates the same nonces $j_i$ and hence the
same commitments and (given the same \gROCRP\ oracle answers) the
same proof transcript---no new information is obtained.
For distinct sessions with different~$\sid$, the derived nonces
are pseudorandomly independent under PRF security, so the
probability of a commitment collision is negligible.
Consequently, the catastrophic ``one rollback reveals~$k$'' path
via Schnorr special soundness is eliminated, and key-opacity is preserved even
under approximate state continuity.
This hardening follows the established principle of deterministic
nonce generation~\cite{RFC6979,RFC8032} and resettable
ZK~\cite{CGGM00}.
Note that this does not fully close the
$\varepsilon_{\mathsf{sc}}$ gap for all rollback-affected state;
see Remark~\ref{rem:rollback-scope} for the precise scope.
	\item All \gROCRP\ oracle calls $H_b(\mathsf{ctx},\cdot)$ made during \textsf{FS.Prove} are local to the KeyBox ITM and are not visible at the host/API boundary. The host observes only $(\mathbf a,\pi_\DL)$ and the public verification outcome.
\end{itemize}

When LinOS is run on a KeyBox-resident share stored in a local KeyBox slot $\mu\in\SlotSpace$
(with public key $K=\Pub(k)$), the interfaces are invoked via the KeyBox API:
\[
(\muFS,\mathbf a)\leftarrow \FKeyBox^{(P)}.\Use(\mu,\textsf{FS.Start},\langle \sid,K\rangle),
\qquad
\pi_{\DL}\leftarrow \FKeyBox^{(P)}.\Use(\mu,\textsf{FS.Prove},\muFS).
\]
The verifier-side predicate $\textsf{FS.Verify}(\sid,K,\pi_{\DL})$ is public and is evaluated outside the KeyBox.

\begin{lemma}[LinOS preserves key-opacity]\label{lem:linos-key-opacity}
	Work in the \gROCRP-hybrid model. Assume the baseline KeyBox profile is key-opaque w.r.t.\ $\Pub$
	(Definition~\ref{def:key-opacity-alt}), and extend the admissible operations by adding
	\textsf{FS.Start}/\textsf{FS.Prove}/\textsf{FS.Verify} as in Fig.~\ref{LinOS}. Instantiate LinOS under a dedicated
	Fischlin proof context $\mathsf{ctx}_{\mathsf{KeyBox}}\in\CtxUC$ that is disjoint from all other contexts. Then the
	extended profile remains key-opaque w.r.t.\ $\Pub$.
\end{lemma}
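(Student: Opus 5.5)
To establish Lemma~\ref{lem:linos-key-opacity}, I will extend the baseline key-opacity simulator $\s$ (from Definition~\ref{def:key-opacity-alt}, applied to the baseline profile) by a LinOS sub-simulator $\s^{\mathsf{LinOS}}_{K}$. It answers \textsf{FS.Start}/\textsf{FS.Prove} queries knowing only $K=\Pub(k)$, and uses the simulator-only hook $\SimProgramRO(\mathsf{ctx}_{\mathsf{KeyBox}},\cdot,\cdot)$ to make the proofs verify. \textsf{FS.Verify} is a public predicate of $(\sid,K,\pi_\DL)$, so the simulator evaluates it honestly with $\Query$. Calls to baseline operations are routed to the baseline simulator. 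The LinOS state lives in its own family $\digamma(\textsf{FS.Start})=\digamma(\textsf{FS.Prove})$, so it shares no state with baseline operations. Real and simulated LinOS outputs must agree in how they are handled: handles $\muFS$ are sampled identically, and repeated \textsf{FS.Prove} calls on a sealed handle return $\perp$ in both worlds. Requests with $K\neq\Pub(k)$ are rejected in both worlds, since the simulator knows $\Pub(k)$.

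The simulated prover works on each repetition $i\in[r]$. It samples $z_i\leftarrowdollar\Zp$ and a challenge $e_i$, which it draws from the distribution of the first index at which an honest early-break search would succeed. It sets $a_i:=z_i\G-e_iK$. For every $e<e_i$ it queries $H$ honestly at fictitious points $\langle\sid,K,\mathbf a,i,e,\cdot\rangle$, or equivalently leaves them unqueried, since those points are KeyBox-internal and never observed. At the selected point it programs $H$ so that the lowest $b$ bits of the output are $0$ and the remaining bits are uniform. The proof then verifies with $\sum_i s_i=0\le S$, except in the honest miss event, which has probability $r(1-2^{-b})^{2^t}=\negl(\lambda)$ by Lemma~\ref{lem:fischlin-negl}; the simulator samples that miss event too.

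The indistinguishability argument is a hybrid sequence. $\Game_0$ is the real LinOS run with $j_i=\PRF(\seed,\langle\texttt{"LinOS/nonce/v1"},\sid,K,i\rangle)\bmod p$. $\Game_1$ replaces the PRF with a random function; this costs PRF advantage, because $\seed$ is independent of $k$ by Assumption~\ref{assump:seed-integrity}, so the reduction can simulate $k$-dependent values itself. $\Game_2$ uses $j_i$ uniform in $\Zp$; the statistical cost is at most $2^{-\lambda}$ per nonce because the PRF output has $\lceil\log_2 p\rceil+\lambda$ bits. Repeated $(\sid,K)$ pairs give identical nonces in both worlds, so the simulator caches its answers per $(\sid,K)$. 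This caching also covers replay after rollback, consistent with Lemma~\ref{lem:rollback-robust}. $\Game_3$ is the standard HVZK switch for Schnorr: each transcript $(a_i,e_i,z_i)$ with $e_i$ fixed is identically distributed to $(z_i\G-e_iK,e_i,z_i)$. The only oracle discrepancy is that real KeyBox-internal queries at non-selected $e$ become unqueried points in simulation. Under local-call semantics (Remark~\ref{rem:oracle-tape}) those queries are invisible to $\cA$, so they matter only if $\cA$ later queries one of them. That event is bounded by the min-entropy of $z_{i,e}$, which is $\Theta(\lambda)$ because $z_{i,e}$ is masked by the uniform $j_i$ and $k$ is unknown. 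The same entropy bound controls $\Game_4$, where the single selected point is programmed. Its failure probability is bounded by Lemma~\ref{lem:grocrp-prequery} together with Table~\ref{tab:grocrp-entropy}: the programmed input contains the fresh $z_i$, so $h_j=\Theta(\lambda)$.

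Domain separation is handled by Lemma~\ref{lem:grocrp-noninterference} and Remark~\ref{rem:grocrp-domain-sep}. Programming $\mathsf{ctx}_{\mathsf{KeyBox}}$ does not affect any other context, so the baseline simulator's guarantee composes: a hybrid replacing baseline answers by $\s$ is justified by the baseline key-opacity assumption, with the LinOS part simulated inside the reduction from $K$ alone. The main obstacle is the oracle-consistency step in $\Game_3$–$\Game_4$, namely that the simulated $e_i$ and its programmed hash jointly match the real distribution of the early-break search. I would first argue this by coupling: in the real world, given uniform $H$ values on fresh points, the index $e_i$ and the $b$-bit values are distributed exactly as in a geometric trial, and the simulator samples from that same law.
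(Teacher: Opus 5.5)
Your route is the paper's: a PRF-to-random hop, with the useful observation that independence of $\seed$ from $k$ lets the PRF reduction sample $k$ itself. It then uses a per-$(\sid,K)$ cache so that repeated \textsf{FS.Start} calls replay the same $\mathbf a$ and proof, Fischlin-style simulation in $\mathsf{ctx}_{\mathsf{KeyBox}}$ via $\SimProgramRO$, the pre-query bound of Lemma~\ref{lem:grocrp-prequery}, and local-call semantics to hide the KeyBox's internal queries. The difference is that the paper invokes the Fischlin ZK simulator as a black box (Lemma~\ref{lem:gnpro-uc-nizk}(i)), whereas you unpack it. That is more informative, but one step of the unpacked argument is justified incorrectly.

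\textbf{The flawed step.} The non-selected points $\langle\sid,K,\mathbf a,i,e,z_{i,e}\rangle$ with $e<e_i$ do not have $\Theta(\lambda)$ min-entropy given the adversary's view. Once $(a_i,e_i,z_i)$ is released, $z_{i,e}=z_i+(e-e_i)k$. The nonce $j_i$ cancels, so $z_{i,e}$ is a function of public data and $k$, and $K$ fixes $k$ information-theoretically.

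\textbf{The fix.} The bound has to be computational. A query to such a point, recognizable via the public check $z\G=a_i+eK$, together with the released transcript gives two accepting Schnorr transcripts with the same $a_i$ and distinct challenges. Special soundness then yields $k$. This is a DL reduction, and it must run in a world whose whole view is computable from $K$, i.e., with the baseline answers also simulated. The bad event is efficiently checkable, so its probability transfers across the baseline key-opacity hop.

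So DL is a real hypothesis here, as it is implicitly in the paper through Lemma~\ref{lem:gnpro-uc-nizk}. This is also where the prover's search pattern becomes visible: a distinguisher knowing $k$ could check that $H_b\neq 0$ at every $e<e_i$, which your simulator does not ensure.

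Two smaller points of the same kind:
\begin{itemize}
\item The entropy claim for the programmed point holds only if programming happens inside \textsf{FS.Start}, before $\mathbf a$ is released. Afterwards $z_i$ is fixed by $(a_i,K)$ up to the at most $2^t$ values of $e_i$. The paper's simulator is arranged this way, since it generates the whole proof at \textsf{FS.Start}.
\item The reduction to baseline key-opacity that simulates LinOS is itself an adversary, so it cannot call $\SimProgramRO$. It must emulate $\mathsf{ctx}_{\mathsf{KeyBox}}$ locally for the inner adversary. This is sound precisely because that context is dedicated, and it is the actual content your domain-separation paragraph needs to supply.
\end{itemize}
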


\begin{proofsketch}
	Fix any KeyBox slot holding secret $k$ with public key $K=\Pub(k)$.
	Extend the slot-wise key-opacity simulator $\s_{K}$ (Remark~\ref{rem:key-opacity-multislot})
	to answer LinOS queries as follows.
	
\smallskip
\noindent Simulating \textsf{FS.Start}:
Maintain a cache $\mathsf T$ keyed by $(\mathsf{sid},K)$ whose entries store an accepting Fischlin transcript
$\pi_{\DL}$ and its first-message tuple $\mathbf a$.
On input $\textsf{FS.Start}(\mathsf{sid},K')$, if $K'\neq K$ output $\perp$.
Otherwise sample a fresh handle $\muFS$.
If $(\mathsf{sid},K)\in\mathsf T$, retrieve $(\mathbf a,\pi_{\DL})\gets \mathsf T[\mathsf{sid},K]$.
Else run the Fischlin-based UC-NIZK simulator for the DL statement
``$\exists k:\ K=\Pub(k)$'' in context $\mathsf{ctx}_{\mathsf{KeyBox}}$ to obtain an accepting proof
$\pi_{\DL}=((a_i,e_i,z_i))_{i=1}^r$ (with $\mathbf a:=(a_1,\ldots,a_r)$), and set
$\mathsf T[\mathsf{sid},K]\gets(\mathbf a,\pi_{\DL})$.
This simulation may invoke $\SimProgramRO$ only in $\mathsf{ctx}_{\mathsf{KeyBox}}$.
Return $(\muFS,\mathbf a)$ and store $\pi_{\DL}$ under $\muFS$.
	
	\smallskip
	\noindent Simulating \textsf{FS.Prove}:
	On input $\textsf{FS.Prove}(\muFS)$, if $\muFS$ is unknown or already sealed output $\perp$; otherwise output the stored
	$\pi_{\DL}$ and mark $\muFS$ sealed, matching LinOS one-shot semantics.
	
	\smallskip
	\noindent Simulating \textsf{FS.Verify}:
	On input $\textsf{FS.Verify}(\mathsf{sid},K',\pi)$, output the deterministic verification result, which is simulatable
	without $k$.
	
\smallskip
\noindent Indistinguishability:
In the real LinOS execution the prover nonces $j_1,\ldots,j_r$ are derived deterministically via
$\PRF(\seed,\cdot)$ (Fig.~\ref{LinOS}). In particular, for fixed $(\mathsf{sid},K)$ the first message $\mathbf a$
(and thus the full transcript released by \textsf{FS.Prove}) repeats across repeated calls.
We argue indistinguishability in two steps:\\
\emph{Step~(a):} Replace all $\PRF(\seed,\cdot)$ evaluations with outputs of a truly random
function~$F$. By PRF security (Definition~\ref{def:linos-prf}) and secrecy of $\seed$
(Assumption~\ref{assump:seed-integrity}), this introduces at most $\negl(\lambda)$ distinguishing advantage.
After this replacement, for each fixed $(\mathsf{sid},K)$ the tuple $(j_1,\ldots,j_r)$ is (up to the negligible
statistical distance of mod-$p$ reduction) uniform over~$\Zp^r$ and independent across distinct inputs
$\langle \texttt{"LinOS/nonce/v1"},\mathsf{sid},K,i\rangle$; repeated evaluations on the same input repeat the same
nonce, matching the real correlation across repeated \textsf{FS.Start} queries.\\
\emph{Step~(b):} In the resulting hybrid, for each distinct pair $(\mathsf{sid},K)$ queried, the externally visible
output is a Fischlin-based UC-NIZK transcript for the DL statement $K=\Pub(k)$ under context
$\mathsf{ctx}_{\mathsf{KeyBox}}$ with truly random prover nonces (and replay on repeats of $(\mathsf{sid},K)$).
By the computational ZK guarantee of the Fischlin transform in \gROCRP\ (Lemma~\ref{lem:gnpro-uc-nizk}),
we can replace each such real transcript with a simulated one generated by the Fischlin simulator, while preserving
replay behavior by caching the simulator's output per $(\mathsf{sid},K)$ as above (up to the simulator's programming
failure event). Combining Steps~(a) and~(b) yields the claim.
	
	\smallskip
	\noindent Simulator failure event:
	The only simulator failure mode is that a $\SimProgramRO(\mathsf{ctx}_{\mathsf{KeyBox}},x,y)$ attempt returns $\perp$
	because the host/adversary pre-queried $(\mathsf{ctx}_{\mathsf{KeyBox}},x)$. By Lemma~\ref{lem:grocrp-prequery},
	this pre-query event is negligible because the programmed inputs $x$ include fresh high-entropy components. Finally,
	by \gROCRP\ local-call semantics (Fig.~\ref{fig:GgROCRP}), the host/API boundary never observes the KeyBox's internal
	oracle-query trace; only the released transcript must be simulated. Hence, adding LinOS preserves key-opacity. \qed
\end{proofsketch}

\begin{lemma}[Rollback robustness of LinOS commitments]
	\label{lem:rollback-robust}
	Work in the \gROCRP-hybrid model.
	Consider the LinOS interface (Fig.~\ref{LinOS}) with deterministic
	nonce derivation via $\PRF$ keyed by $\seed$
	(Definition~\ref{def:linos-prf}).
	Assume that $\PRF$ is a secure pseudorandom function
	(Definition~\ref{def:linos-prf}), and that $\seed$ satisfies the
	seed integrity invariant (Assumption~\ref{assump:seed-integrity}).
	Let $\cA$ be a PPT adversary that may:
	\begin{enumerate}[leftmargin=*,nosep]
		\item invoke $\textsf{FS.Start}(\sid,K)$ and
		$\textsf{FS.Prove}(\muFS)$ via the KeyBox API;
		\item roll back or reset the mutable operation state
		$\mathsf{st}[\cdot]$ of the KeyBox to any prior value
		(violating state continuity for mutable state, but
		not altering $\seed$ or the resident key~$k$);
		\item make arbitrary \gROCRP\ queries.
	\end{enumerate}
	Then $\cA$ cannot obtain two accepting Fischlin transcripts
	$\pi=(a_i,e_i,z_i)_{i=1}^r$ and $\pi'=(a_i',e_i',z_i')_{i=1}^r$
	for the same $(\sid,K)$ with $a_j=a_j'$ and $e_j\neq e_j'$ for
	some~$j$, except with probability negligible in~$\lambda$.
\end{lemma}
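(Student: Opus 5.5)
The plan is to show that, for a fixed $(\sid,K)$, the LinOS interface is a deterministic function of KeyBox state that rollback cannot touch. Then $a_j=a_j'$ forces $e_j=e_j'$, except when the \gROCRP\ replies seen in the two runs differ, and that case will be shown not to help.

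\textbf{Step 1 (determinism of commitments).} By Fig.~\ref{LinOS}, \textsf{FS.Start}$(\sid,K)$ computes each nonce as $j_i=\PRF(\seed,\langle\texttt{"LinOS/nonce/v1"},\sid,K,i\rangle)\bmod p$. Here $\seed$ and the resident key $k$ are unaffected by rollback (Assumption~\ref{assump:seed-integrity}, item 1), and the check $K=\Pub(k)$ binds the statement. So every \textsf{FS.Start} on $(\sid,K)$, before or after any rollback of $\mathsf{st}[\cdot]$, returns the same vector $\mathbf a=(j_1\G,\ldots,j_r\G)$.

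\textbf{Step 2 (determinism of responses).} For fixed $(\sid,K,\mathbf a,i)$, \textsf{FS.Prove} scans $e=0,1,\ldots$ in a fixed order. It computes $z=j_i+ek$ and evaluates $H_b(\mathsf{ctx}_{\mathsf{KeyBox}},\langle\sid,K,\mathbf a,i,e,z\rangle)$, selecting the first zero (or else the running minimum). \gROCRP\ is a lazily sampled table that is never overwritten. Programming is simulator-only and fails on defined points, and in the real execution there is no programming at all. Hence every oracle reply on these inputs is fixed once first queried, and rollback of KeyBox mutable state does not reset $\GgROCRP$. The selected $(e_i,z_i)$ is therefore a deterministic function of $(\seed,k,\sid,K,i)$ and the fixed oracle table, and any two proofs produced for the same $(\sid,K)$ are identical, giving $e_j=e_j'$ for all $j$. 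The adversary's own \gROCRP\ queries only populate table entries and cannot change a defined value, so item 3 adds nothing.

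\textbf{Step 3 (remaining cases and commitment collisions).} The lemma concerns transcripts for the same $(\sid,K)$. The only way to get two different transcripts sharing some $a_j$ would be via \textsf{FS.Start} calls on a different $(\sid',K)$ whose derived nonce collides with $j_j$ (the answer from $(\sid,K)$). I will bound this by a hybrid that replaces $\PRF(\seed,\cdot)$ with a truly random function $F$. The PRF game is played purely over $\seed$, which is secret and independent of $k$ (Assumption~\ref{assump:seed-integrity}, items 2--3), so no DL entanglement arises; this is the point of Remark~\ref{rem:independent-seed}. Under $F$, the nonces for distinct encoded inputs (injective by Remark~\ref{rem:linos-encode}) are independent and close to uniform mod $p$, with statistical distance $2^{-\lambda}$ from the extra $\lambda$ output bits. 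A birthday bound over $Q=\poly(\lambda)$ queries, i.e.\ $Q^2r^2/p$ plus the reduction slack, shows that a cross-instance collision of $a$-values is negligible. The main obstacle is making this step rigorous: one must argue that the adversary's view while choosing queries is independent of the unqueried $F$-values, even though it sees released proofs $(a_i,e_i,z_i)$ that involve $k$. Here the order matters: apply the PRF hybrid first, then observe that each $a$-value is $F(\cdot)\G$ for a fresh point, so collisions are bounded information-theoretically regardless of $k$. Summing the PRF advantage with the collision and statistical terms gives the negligible bound.
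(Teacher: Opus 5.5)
Your proposal is correct and follows essentially the same route as the paper. For a fixed $(\sid,K)$, rollback-invariance of $\seed$ and $k$, together with the non-resettable lazily sampled oracle table, make both \textsf{FS.Start} and the \textsf{FS.Prove} rarity search deterministic, so replays are identical; for distinct sessions, a PRF-to-random hybrid followed by a commitment-collision bound finishes the argument. Your Step 3 is somewhat more careful than the paper's, since it uses a birthday bound over all $\textsf{FS.Start}$ calls, adds the mod-$p$ statistical term, and handles the dependence of the adversary's view on released proofs, but the decomposition is the same.
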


\begin{proofsketch}
	We argue via a two-step hybrid.
	
	\smallskip\noindent
	Step 1:
	Suppose $\cA$ rolls back the KeyBox's mutable state and re-invokes
	$\textsf{FS.Start}(\sid,K)$ with the same $(\sid,K)$.
	Because $\seed$ is rollback-invariant
	(Assumption~\ref{assump:seed-integrity}) and $k$ is unchanged, the
	PRF derivation $$j_i=\PRF(\seed,\langle\texttt{"LinOS/nonce/v1"},\sid,K,i\rangle)\bmod p$$
	produces the same nonces $j_1,\ldots,j_r$ and hence the same
	commitments $a_1,\ldots,a_r$.
	In $\textsf{FS.Prove}$, the rarity-search queries
	$H_b(\mathsf{ctx}_{\mathsf{KeyBox}},\langle\sid,K,\mathbf a,i,e,z_{i,e}\rangle)$
	are deterministic functions of $(j_i,k,H)$; since $j_i$ and $k$ are
	unchanged, and $H$ is a (lazy-sampled) global random oracle whose
	table is not affected by KeyBox rollback, the rarity search produces
	the same selected $(e_i,z_i)$ for each~$i$.
	Therefore, the replayed proof transcript is identical to the
	original---no new information is obtained.
	
	\smallskip\noindent
	Step 2:
	Suppose $\cA$ invokes $\textsf{FS.Start}(\sid',K)$ with $\sid'\neq\sid$.
	We show that the commitments $a_i'=j_i'\G$ are independent of
	$a_i=j_i\G$ by a PRF-to-random replacement.
	Consider the hybrid experiment $\Game_1$ in which all PRF outputs
	$\PRF(\seed,\cdot)$ are replaced by outputs of a truly random
	function~$F$.
	By PRF security (Definition~\ref{def:linos-prf}) and secrecy of
	$\seed$ (Assumption~\ref{assump:seed-integrity}), $\Game_0\approx_c\Game_1$.
	In $\Game_1$, because $\langle\cdot\rangle$ is injective,
	$\langle\texttt{"LinOS/nonce/v1"},\sid,K,i\rangle\neq
	\langle\texttt{"LinOS/nonce/v1"},\sid',K,i'\rangle$ whenever
	$(\sid,i)\neq(\sid',i')$, so the derived nonces are independent
	uniform values.
	Hence, the probability that $a_j=a_j'$ for any~$j$ (i.e., a
	commitment collision $j_j\G=j_j'\G$) is at most $r^2/p=\negl(\lambda)$.
	Without a shared commitment, Schnorr special-soundness extraction
	cannot be applied, so the adversary cannot recover~$k$.
	
	\smallskip\noindent
	In either case (same or different $\sid$),
	$\cA$ cannot obtain two transcripts sharing a commitment $a_j$ with
	distinct challenges $e_j\neq e_j'$ except with negligible probability.
	Therefore, the Schnorr witness-extraction attack via nonce reuse is
	thwarted. \qed
\end{proofsketch}

Lemma~\ref{lem:rollback-robust} rules out Schnorr witness extraction on the resident share $k$ via rollback-induced nonce reuse, except with negligible probability.

\begin{remark}[Fischlin-transform compatibility]
	\label{rem:fischlin-compat}
	Because the repetition index~$i$ is an explicit input to the PRF
	derivation, the nonces $j_1,\ldots,j_r$ within a single session
	remain mutually independent (under PRF security).
	Therefore, the Fischlin rarity-search procedure is unchanged: each
	repetition~$i$ independently searches over challenges
	$e\in\{0,\ldots,2^t-1\}$ using its own independently derived $j_i$.
	The completeness, soundness, and knowledge-extraction analyses of
	the Fischlin transform (Definition~\ref{FSdef},
	Lemma~\ref{lem:fischlin-negl}) apply without modification.
	Only the \emph{source} of prover randomness changes---from
	``sample uniformly and store in mutable state'' to ``derive on demand
	from a static seed''---while the statistical properties required by
	the transform (near-uniform, mutually independent nonces) are
	preserved.
\end{remark}

\begin{remark}[Scope of the rollback-robustness hardening]
	\label{rem:rollback-scope}
	Lemma~\ref{lem:rollback-robust} is a surgical hardening
	that removes one specific catastrophic failure mode: the path from
	``single KeyBox rollback'' to ``full secret-key disclosure'' via
	Schnorr nonce reuse / special-soundness extraction.
	It does not, by itself, close the broader
	$\varepsilon_{\mathsf{sc}}$ gap (Definition~\ref{def:bad-sc}) or
	address all rollback/state-continuity issues for other protocol
	state (e.g., monotonic counters, sealed records, session metadata,
	or the one-shot sealing flag~$\muFS$).
	A rollback that restores the mutable one-shot seal could still allow
	the adversary to obtain a second proof transcript; however,
	by Lemma~\ref{lem:rollback-robust}, this second transcript is
	identical to the first, and therefore reveals no new information about~$k$.
	General state-continuity remains an engineering requirement for
	other protocol-level invariants (e.g., preventing replay of
	registration or re-execution of protocol rounds), and
	$\varepsilon_{\mathsf{sc}}$ continues to parameterize those
	residual failure modes as before.
\end{remark}

\begin{remark}[Scope of un-hardened randomness inside \textsf{USV.Cert} and \textsf{SDKG.LeafInit}]
	\label{rem:unhardened-cert-nonces}
	The PRF-based nonce derivation of LinOS (Fig.~\ref{LinOS}) is applied specifically to the interface whose
	rollback failure mode is catastrophic: reuse of a Schnorr commitment across two distinct challenges enables
	special-soundness extraction of the long-term resident share~$k$. Two other KeyBox-internal randomized operations are not hardened in this way and remain covered by the general	$\varepsilon_{\mathsf{sc}}$ accounting (Definition~\ref{def:bad-sc}):
	\begin{enumerate}[leftmargin=*,nosep]
		\item DLEQ prover randomness inside \textsf{USV.Cert} (Section~\ref{subsec:USV-inst}):
		The USV certificate generator $\Cert$ invokes the UC-NIZK prover $\cp_{\mathsf{DLEQ}}$ for the Chaum--Pedersen
		relation using fresh randomness sampled from the KeyBox DRBG (both for the witness~$r$ and for the internal Fischlin
		nonces of $\cp_{\mathsf{DLEQ}}$). A rollback that resets the DRBG can therefore cause reuse or replay of this internal
		randomness, resulting in repeated certificates and/or in multiple certificates being generated from the same
		pre-rollback state.
		In our Fischlin-based instantiation (Definition~\ref{FSdef}), re-executing $\cp_{\mathsf{DLEQ}}$ with the same
		randomness yields an identical transcript. However, in any alternative deployment in which the DLEQ prover could output two accepting Chaum--Pedersen transcripts
		for the same statement and first message but with different challenges (e.g., via an interactive DLEQ interface under
		rollback), special soundness would recover the witness~$r$; combined with the public~$\nu$ in~$\zeta$, this recovers
		$m=\nu r$. This does not by itself expose the long-term signing share~$k$, but may matter for applications that require
		confidentiality of ephemeral USV witnesses.
		
		\item Ephemeral sampling in \textsf{SDKG.LeafInit}:
		The key-independent leaf routine samples $m_2\leftarrowdollar \Zp^*$ and $b_2\leftarrow \Zps$ using ordinary KeyBox
		DRBG randomness. A rollback that resets the DRBG and re-invokes \textsf{SDKG.LeafInit} can produce different
		$(m_2,b_2)$ and hence a different Round~1 tuple, leading to inconsistent protocol state if the rest of the session
		proceeds using the original values. As above, this is subsumed by $\varepsilon_{\mathsf{sc}}$ and does not by itself
		disclose~$k$.
	\end{enumerate}
	In summary, the PRF-based hardening is targeted at the single interface whose rollback failure mode is catastrophic, while other KeyBox-internal randomized operations remain under the general $\varepsilon_{\mathsf{sc}}$ accounting as before.
\end{remark}

For the star access structure $\mathrm{\Gamma}_0 := \bigl\{\{P_1,P_2\},\{P_1,P_3\}\bigr\}$, our end-to-end claim is UC realization of an \emph{NXK-star DKG} interface for $\mathrm{\Gamma}_0$:
applications should see only a public key (or abort) and the induced installation of
non-exportable long-term shares inside the parties' KeyBoxes.

\begin{figure}[t]
	\centering
	\setlength{\fboxrule}{0.2pt}
	\fbox{
		\parbox{\dimexpr\linewidth-2\fboxsep-2\fboxrule\relax}{%
			\ding{169}\ \textsf{SDKG.LeafInit}$(\sid)$ \textsf{(key-independent)}:
			sample $m_2\leftarrowdollar \Zp^*$ and $b_2\leftarrow \Zps$; define $f_2(x):=m_2+b_2x$.
			Compute $B_2:=b_2\G$ and
			$\sigma_{2,1}:=f_2(2)$, $\sigma_{2,2}:=f_2(3)$, $\sigma_{2,3}:=f_2(1)$ in $\Zp$.
			Compute $(C_2,\zeta_2)\leftarrow \Cert(\pp,m_2)$.
			Erase $(m_2,b_2,f_2)$ and return $(C_2,\zeta_2,B_2,\sigma_{2,1},\sigma_{2,2},\sigma_{2,3})$.}}
	\caption{Key-independent leaf routine for hardened/minimal NXK deployments of SDKG.}
	\label{fig:SDKGLeafInit}
\end{figure}

\begin{readerbox}[reader: app-vis]{Application-visible functionality}
	The functionality exposed to applications is $\FDKGstarNXK$ (Definition~\ref{def:fdkg-star-nxk}), which is obtained by wrapping the proof-oriented transcript-driven functionality $\FSDKG$ (Fig. \ref{fig:FSDKG}) as $\FDKGstarNXK = \WDKG \circ \FSDKG.$ We prove UC realization for $\FSDKG$ and then lift it to $\FDKGstarNXK$ via interface restriction (Lemma \ref{lem:sdkg-interface-refinement}). $\FDKGstarNXK$ exposes only:
	\begin{itemize}[leftmargin=*,nosep]
		\item DKG output: upon successful completion, make a public key $K\in\mathbb G$ available to all parties ($P_1$ and $P_2$ receive $K$ directly at finalization; $P_3$ receives $K$ via the real protocol's $\Fpub$ publications---preserved under UC emulation---and/or the registration payload from $P_1$); as usual in UC-DKG, no output is provided on abort or selective-abort.
		\item NXK share installation: install long-term shares into the parties' local KeyBoxes (via $\FKeyBox$),
		with no export interface for the underlying scalars or caller-invertible affine images.
		\item Optional post-finalization registration (RDR): if invoked, output the corresponding
		$\mathsf{registered}$ event(s) and install the recovery-role share in the joining device's KeyBox.
	\end{itemize}
	All transcript bookkeeping and the simulator-only \textsf{Program} hook in $\FSDKG$
	are not part of the application interface and are hidden by the wrapper $\WDKG$ (Definition~\ref{def:fdkg-star-nxk}).
\end{readerbox}

We prove UC realization first for the proof-oriented functionality $\FSDKG$
(Fig.~\ref{fig:FSDKG}), and then obtain the application-visible interface $\FDKGstarNXK$
by local interface restriction (Lemma~\ref{lem:sdkg-interface-refinement}). In the protocol and definition below, $\sigma_{i,j}$ denote the scalar polynomial evaluations for SDKG and KeyBox installation/registration, respectively.

\begin{definition}[SDKG KeyBox derivation routines]\label{def:sdkg-deriv}
	\emph{Fix public parameters $\pp=(\mathbb{G},p,\G,\h)$ and let $\langle\cdot\rangle$ be a tuple encoding. All arithmetic below is in $\mathbb{Z}_p$.
		\begin{itemize}
			\item $g_{1,2}(1^\lambda,l)$:
			parse $l=\langle \sigma_{1,1},\sigma_{2,1},\sigma_{3,1}\rangle$ with each $\sigma_{i,j}\in\mathbb{Z}_p$.
			If parsing fails, output $\perp$.
			Set $x_1 := \sigma_{1,1}+\sigma_{2,1}+\sigma_{3,1}$ and output $k_{1,2} := 3x_1.$
			\item $g_{1,3}(1^\lambda,l)$:
			parse $l=\langle \sigma_{1,1},\sigma_{2,1},\sigma_{3,1},\sigma_{1,3}\rangle$ with each component in $\mathbb{Z}_p$.
			If parsing fails, output $\perp$.
			Set $x_1 := \sigma_{1,1}+\sigma_{2,1}+\sigma_{3,1}$ and output $k_{1,3} := 2\sigma_{1,3}-x_1.$
			\item $g_{2}(1^\lambda,l)$:
			parse $l=\langle \sigma_{1,2},\sigma_{2,2},\sigma_{3,2}\rangle$ with each component in $\mathbb{Z}_p$.
			If parsing fails, output $\perp$.
			Set $x_2 := \sigma_{1,2}+\sigma_{2,2}+\sigma_{3,2}$ and output $k_{2} := -2x_2.$
			\item $g_{3}^{(P_r,P_{\mathsf{sp}},\sid)}(1^\lambda,l)$ (parameterized by joiner $P_r$, sponsor $P_{\mathsf{sp}}$, and session $\sid$):\\
			For the base case, set $P_r=P_3$ and $P_{\mathsf{sp}}=P_2$. For the $n$-party extension, $P_r=P_i$ and $P_{\mathsf{sp}}=P_j$ for
			arbitrary joiner $P_i$ ($i\ge 3$) and sponsor $P_j\in\{2\}\cup\mathsf{Reg}$.
			Parse $l$ as either
			\begin{enumerate}[label=(\alph*),nosep]
				\item $\langle \sigma_{2,3},\sigma_{3,2},\sigma_{3,1},K,K_{1,3}\rangle$
				with $\sigma_{2,3},\sigma_{3,2},\sigma_{3,1}\in\mathbb{Z}_p$ and $K,K_{1,3}\in\mathbb{G}$, or
				\item $\langle \langle \mathsf{ad}_{23},\sigma_{2,3}\rangle,\ \langle \mathsf{ad}_{32},\sigma_{3,2}\rangle,\ \langle \mathsf{ad}_{31},\sigma_{3,1}\rangle,\ K,\ K_{1,3}\rangle$
				where each $\mathsf{ad}_{\cdot}\in\{0,1\}^*$ and each $\sigma_{\cdot}\in\mathbb{Z}_p$.
				In case (b), additionally parse each $\mathsf{ad}_{xy}$ as
				\[
				\mathsf{ad}_{xy}=\langle \texttt{SDKG.reg},\sid',P_s',P_r',\texttt{tag}\rangle
				\]
				and require the associated-data fields are mutually consistent and match the closure parameters:
				all three $\sid'$ fields equal $\sid$, all three $P_r'$ fields equal $P_r$, and
				$P_s'=P_{\mathsf{sp}}$ with $\texttt{tag}=\texttt{k23}$ for $\mathsf{ad}_{23}$,
				$P_s'=P_{\mathsf{sp}}$ with $\texttt{tag}=\texttt{k32}$ for $\mathsf{ad}_{32}$,
				and $P_s'=P_1$ with $\texttt{tag}=\texttt{k31}$ for $\mathsf{ad}_{31}$.
			\end{enumerate}
If parsing fails, including the case-(b) associated-data checks, output $\perp$.
			Compute $k_3 := 2\cdot(\sigma_{2,3}+2\sigma_{3,1}-\sigma_{3,2}) \bmod p.$
			Let $K_3 := k_3\G$. If $K_{1,3}+K_3 \neq K$, output $\perp$, else output $k_3$.
			\item For each role pair $(a,b)\in\{(\mathsf{c},\mathsf{sp}),(\mathsf{sp},\mathsf{c2}),(\mathsf{sp2},\mathsf{c3})\}$
			and corresponding concrete base-case instantiation
			$(i,j)\in\{(3,1),(3,2),(2,3)\}$, define $g^{\textsf{reg},(P_r,P_{\mathsf{sp}},\sid)}_{i,j}(1^\lambda,l)$
			(parameterized by joiner $P_r$, sponsor $P_{\mathsf{sp}}$, session $\sid$) as:
			parse $l$ as either
			\begin{enumerate}[label=(\alph*),nosep]
				\item $\langle \sigma_{i,j}\rangle$ with $\sigma_{i,j}\in\mathbb{Z}_p$; in this case output $k^{\textsf{reg}}_{i,j}:=\sigma_{i,j}$, or
				\item $\langle \langle \mathsf{ad},\sigma_{i,j}\rangle\rangle$ with $\mathsf{ad}\in\{0,1\}^*$ and $\sigma_{i,j}\in\mathbb{Z}_p$.
				In this case, parse $\mathsf{ad}=\langle \texttt{SDKG.reg},\sid',P_s',P_r',\texttt{tag}\rangle$ and require the
				fields match the closure parameters:
				$\sid'=\sid$, $P_r'=P_r$, and
				$\texttt{tag}=\texttt{k31}$ with $P_s'=P_1$ if $(i,j)=(3,1)$,
				$\texttt{tag}=\texttt{k32}$ with $P_s'=P_{\mathsf{sp}}$ if $(i,j)=(3,2)$, and
				$\texttt{tag}=\texttt{k23}$ with $P_s'=P_{\mathsf{sp}}$ if $(i,j)=(2,3)$.\\
If the $\mathsf{ad}$ parse or associated-data check fails, output $\perp$; else output $k^{\textsf{reg}}_{i,j}:=\sigma_{i,j}$.
			\end{enumerate}
			If parsing fails, output $\perp$.
			\smallskip
			\noindent For the base case ($P_3$ joining with sponsor $P_2$), we write $g_3$ for $g_{3}^{(P_3,P_2,\sid)}$
			and $g^{\textsf{reg}}_{i,j}$ for $g^{\textsf{reg},(P_3,P_2,\sid)}_{i,j}$ when the closure parameters are clear from context.
	\end{itemize}}
\end{definition}

\begin{remark}[Associated-data fields vs.\ $g$-routine checks]
	\label{rem:sdkg-ad-fields}
	In SDKG registration we use AEAD associated data strings of the form
	$\mathsf{ad}=\langle \texttt{SDKG.reg},\sid,P_s,P_r,\texttt{tag}\rangle$.
	The AEAD layer (via $\SealToPeer/\OpenFromPeer$ in $\FKeyBox$) binds the entire tuple
	$(\sid,P_s,P_r,\texttt{tag})$ to the ciphertext: in Fig.~\ref{Fdskg}, $\OpenFromPeer$ computes
	$s\leftarrow \Dec_{\sk_{\mathrm{seal}}}(\mathsf{ad},c)$ and returns a handle that later resolves to
	$\langle \mathsf{ad},s\rangle$ only if the ciphertext authenticates under that exact $\mathsf{ad}$.
	
	\noindent Importantly, the protocol does not accept $\mathsf{ad}$ strings from the network.
	In Algorithm~\ref{alg:reg}, the receiver $P_3$ deterministically recomputes the intended
	$\mathsf{ad}_{31},\mathsf{ad}_{32},\mathsf{ad}_{23}$ from $(\sid,P_s,P_r,\texttt{tag})$ and supplies them to
	$\OpenFromPeer$. Therefore, when $P_3$ is honest, any cross-session/cross-peer mix-and-match attempt would require opening a ciphertext
	under an $\mathsf{ad}$ that does not match its sealing $\mathsf{ad}$, which causes $\OpenFromPeer$ to return $\perp$.
	Hence, our derivation routines $g_3$ and $g^{\textsf{reg}}_{i,j}$ verify that the parsed associated-data tuple
	$(\sid',P_s',P_r',\texttt{tag})$ matches the intended session/peers/slot. The \texttt{tag} component enforces \emph{slot disjointness}, while the extra
	$\sid,P_s,P_r$ checks provide interface-level robustness if a buggy/adversarial host supplies inconsistent associated data to
	$\OpenFromPeer$.
	
	\smallskip
	\noindent If the sponsor is corrupted during registration, it may send arbitrary ciphertexts. Any in-transit modification of
	a ciphertext (or cross-session/slot substitution) is rejected by $\OpenFromPeer$ due to AEAD integrity under the
	receiver-supplied associated data (Fig.~\ref{Fdskg} and Algorithm~\ref{alg:reg}).
	Moreover, even if a corrupted sponsor generates fresh ciphertexts under the correct associated data but encrypting
	incorrect scalars, the joiner’s recovery-share derivation $g_3$ uniquely determines $k_3$ via
	$K_{1,3}+K_3=K$ where $K_3=k_3\G$ (Definition~\ref{def:sdkg-deriv}). Because $g_3$ constrains only the linear
	combination $\sigma_{2,3}-\sigma_{3,2}$ (not the individual values), the sponsor may shift both slots by a common
	offset without affecting $k_3$.
	Hence, the sponsor can at most force an abort (denial of registration) or shift the per-device sponsor-state
	slots, but cannot cause installation of a recovery share inconsistent with~$K$.
\end{remark}

	Fix the KeyBox ideal functionality $\FKeyBox$ (Fig.~\ref{Fdskg}) and the LinOS interface (Fig.~\ref{LinOS}).
	We fix the admissible KeyBox profile
	$(\chi_{\mathrm{adm}}^{\mathsf{SDKG}},\mathcal F_{\mathrm{adm}}^{\mathsf{SDKG}},\mathcal F_{\mathrm{KI}}^{\mathsf{SDKG}},\digamma^{\mathsf{SDKG}})$
	(Definition~\ref{def:keybox-profile}) as follows: the only derivation routines invoked via $\Load$ are $	\chi_{\mathrm{adm}}^{\mathsf{SDKG}}
	:= \{ g_{1,2}, g_{1,3}, g_{2}, g_{3}, g_{3,1}^{\textsf{reg}}, g_{3,2}^{\textsf{reg}}, g_{2,3}^{\textsf{reg}} \}.$ The only operations invoked via $\Use$ are
\begin{align*}
&\mathcal{F}_{\mathrm{adm}}^{\mathsf{SDKG}}
:= \{ \GetPub,\ \textsf{SDKG.LeafInit},\ \textsf{USV.Cert},\ \textsf{FS.Start},\ \textsf{FS.Prove},\ \SealToPeer,\ \OpenFromPeer \},
\\
&\mathcal{F}_{\mathrm{KI}}^{\mathsf{SDKG}}
:= \{ \OpenFromPeer,\ \textsf{USV.Cert},\ \textsf{SDKG.LeafInit}\},
\end{align*}
	where $\textsf{FS.*}$ are as in Fig.~\ref{LinOS} and $\SealToPeer/\OpenFromPeer$ are the KeyBox-to-KeyBox sealing operations specified as part of $\FKeyBox$ in Fig.~\ref{Fdskg}. Recall that the predicate $\textsf{FS.Verify}(\sid,K,\pi)$ is public/deterministic and is evaluated outside the KeyBox. We fix the profile’s family map $\digamma^{\mathsf{SDKG}}$ by identifying the LinOS start/prove pair:
	\[
	\digamma^{\mathsf{SDKG}}(\textsf{FS.Start})=\digamma^{\mathsf{SDKG}}(\textsf{FS.Prove})=: \textsf{FS},
	\qquad
	\digamma^{\mathsf{SDKG}}(f)=f\ \text{for all other } f\in\mathcal{F}_{\mathrm{adm}}^{\mathsf{SDKG}}.
	\]
	
\begin{note}
	The profile intentionally omits any operation for producing the UC-extractable
	consistency AoKs (e.g., $\mathrm{\Pi}^{\mathsf{UC}}_{\mathsf{DL}}$ / $\cR_{\mathsf{aff}}$) inside the KeyBox:
	those proofs are generated by the host/party ITM so that the simulator can log the prover's
	\gROCRP\ queries for straight-line extraction.
\end{note}	

\subsection{The Protocol}
Our $\mathrm{\Psi}^{(3)}_{\mathsf{SDKG}}$ protocol is inspired by the DKG from \cite{Batta[22]}. Specifically, the high-level idea of mapping polynomial evaluations into compatible shares is shared by the two solutions. However, the settings and techniques differ: \cite{Batta[22]} constructs a $(2,3)$ threshold ECDSA scheme with an offline party in a standalone, game-based model, and relies on Feldman VSS, Paillier‑based MtA/MtAwc, and interactive ZK proofs over freely manipulable shares. By contrast, our scheme targets a star-shaped access structure in the NXK model. We obtain UC security in this setting by combining USV, NXK, and UC-NIZK-AoKs to replace the role of VSS and homomorphic share manipulations in \cite{Batta[22]}.

\paragraph{\textbf{Placement / profile convention.}}
The protocol steps below are written at the level of algebraic relations (e.g., ``sample $m_2$ and set
$f_2(x)=m_2+b_2x$'') and should not be read as fixing where a scalar is materialized.
In the minimal/hardened NXK profile that motivates USV
(Section~\ref{subsec:why-usv-straightline}),
the leaf’s Round~1 linear-polynomial setup is realized by the \emph{key-independent} KeyBox operation
$\textsf{SDKG.LeafInit}\in\mathcal F_{\mathrm{KI}}^{\mathsf{SDKG}}$ (Fig.~\ref{fig:SDKGLeafInit}), which samples
$m_2$ and $b_2$ inside the KeyBox boundary, outputs only $(C_2,\zeta_2,B_2,\sigma_{2,1},\sigma_{2,2},\sigma_{2,3})$,
and erases $(m_2,b_2)$ as named variables before returning. Note that $m_2$ remains trivially recoverable from the returned $\sigma$-values (e.g., $m_2=2\sigma_{2,3}-\sigma_{2,1}$); however, these values are NXK-restricted transient host state (Remark~\ref{rem:transport-vs-nxk}), not adversary-visible, and are securely erased after their last use. No generic ``export $m_2$'' or ``export $m_2\G$'' interface is assumed.
If, instead, one permits a transient-host-RAM deployment in which $m_2$ is sampled/held outside the KeyBox as
NXK-restricted material with secure erasure, then one may simplify Round~1 by publishing $M_2:=m_2\G$ directly and
omitting the USV instance; we keep USV so the same transcript format covers the hardened profile in which
the adversary-visible transcript must still deterministically fix $M_2$.

\smallskip
\noindent Even in the minimal (KeyBox-hardened long-term-share) profile, $\textsf{SDKG.LeafInit}$ returns
$\sigma_{2,1},\sigma_{2,2},\sigma_{2,3}$ to the host. These values are share-deriving and therefore NXK-restricted
(Remark~\ref{rem:transport-vs-nxk}); they are held only as transient host state needed to (i) generate the UC-context
consistency AoKs and (ii) invoke the internal $\FKeyBox.\Load$ calls, and are then securely erased. This is
intentional: the UC-context AoKs are host-generated so that the simulator can observe the prover's \gROCRP\ query log
(Remark~\ref{rem:no-hidden-uc-prover}).

\smallskip\noindent\textbf{Reconciliation with implementation guidance.}
The implementation note in Section~\ref{subsubsec:keybox-candidates} instructs the profile adapter to ``forbid share-deriving outputs.'' This refers to vendor-API operations that would export caller-invertible affine images of an \emph{already-installed} resident share. It does not apply to $\textsf{SDKG.LeafInit}$, which is a key-independent operation ($\textsf{SDKG.LeafInit}\in\mathcal{F}_{\mathrm{KI}}^{\mathsf{SDKG}}$) that samples fresh scalars internally: its outputs become share-deriving only prospectively, after the subsequent $\Load$ calls install the corresponding long-term shares, and are erased immediately thereafter.

Next, we detail our $\mathrm{\Psi}^{(3)}_{\mathsf{SDKG}}$ protocol. Session setup designates $P_2$ as the initiating leaf (primary role). All computations are performed in $\mathbb{Z}_p.$ Let $\mathsf{pp} \coloneqq (\mathbb{G},p,\mathcal G,\mathcal H)$ be the set of public parameters. Each step of the protocol is atomic for the local state. Detailed description follows:

\medskip
\noindent\textbf{Round 1} $P_2 \to P_1:$
\begin{itemize}
\item Leaf setup (hardened/minimal profile):
$P_2$ obtains its Round~1 values by invoking the key-independent KeyBox operation $\textsf{SDKG.LeafInit}$:
\[
(C_2,\zeta_2,B_2,\sigma_{2,1},\sigma_{2,2},\sigma_{2,3})
\leftarrow \FKeyBox^{(2)}.\Use(\KBsid{\texttt{ki}},\textsf{SDKG.LeafInit},\langle \sid\rangle).
\]
Define the leaf’s transcript-defined group element $M_2 \coloneqq \Open_M(\pp,C_2,\zeta_2)$ (by USV correctness,
$M_2=m_2\G$ for the internal $m_2$ sampled inside $\textsf{SDKG.LeafInit}$).
In the transient-host-RAM deployment, this step may equivalently be implemented by sampling $(m_2,b_2)$ in host state,
setting $f_2(x)=m_2+b_2x$, and computing the same outputs, treating all $\sigma$-scalars as NXK-restricted and
securely erased after their last use.
	\item For a fresh commitment identifier $\cid_2$, $P_2$ sends $(\Commit, \sid, \cid_2, P_1, C_2, \zeta_2)$ to $\Fusv$ and receives receipt digest $d$ from $\Fusv$. 
	\item $P_2$ generates $\sigma_{3,2} \leftarrowdollar \Zp$; computes $h_{3,2} \coloneqq H_{\mathrm{s32}}(\langle \sid, \cid_2, \sigma_{3,2}\G \rangle)$ and sends \((C_2, \zeta_2, B_2, h_{3,2}, \sigma_{2,1},\sid,\cid_2,d)\) to \(P_1\).
\end{itemize}
\textbf{Round 2} $P_1 \to P_2:$
\begin{itemize}
	\item $P_1$ verifies that $\sid$ and $\cid_2$ are fresh, calls $\Fusv.\Verify(\sid,\cid_2,P_2,C_2,\zeta_2)$, and requires it returns 1; else, $P_1$ aborts. 
	\item $P_1$ computes $M_2 \gets \Open_M(\pp,C_2,\zeta_2)$ (abort if $M_2=\perp$) and checks that
	\[
	d = H\!\bigl(\USVrcpt, \langle \sid, \cid_2, P_2, P_1, C_2, M_2\rangle\bigr).
	\]
	It aborts if the check fails.
	\item $P_1$ parses $\zeta_2 = (\nu_2, \upsilon_2, \pi_{\DLEQ_2})$, checks $$(i) ~\nu_2 \neq -1 \bmod p \quad \text{ and } \quad (ii)~ \sigma_{2,1}\mathcal{G}-2B_2 = M_2.$$ It aborts if any verification fails. 
	\item $P_1$ samples \(m_1 \leftarrowdollar \mathbb Z_p^* \text{ and } b_1, \sigma_{3,1}\!\leftarrowdollar \!\mathbb{Z}_p\); sets \(f_1(x)\coloneqq m_1+b_1x\), and computes $$M_1 \coloneqq m_1 \G, B_1 \coloneqq b_1\G, \sigma_{1,1}\coloneqq f_1(2), \sigma_{1,2}\coloneqq f_1(3), \sigma_{1,3}\coloneqq f_1(1), X_1 \coloneqq \sigma_{1,1}\G + \sigma_{2,1}\G + \sigma_{3,1}\G.$$ Equivalently, $X_1=x_1\G$ for the conceptual scalar $x_1:=\sigma_{1,1}+\sigma_{2,1}+\sigma_{3,1}$, which is never materialized in host memory. Then, deletes $m_1, b_1, \text{and } f_1$.
	\item $P_1$ sends $(\sid, X_1,M_1,B_1,\sigma_{1,2},\pi_{\aff_1})$ to $P_2$, where $\pi_{\aff_1}$ is a UC-NIZK-AoK for the affine relation $\cal R_{\mathsf{aff}}$ on statement
	\[
	(X_1,\gamma_1,M_1,B_1,\mathrm{\Delta}_1)\quad\text{with}\quad \gamma_1:=2,\ \mathrm{\Delta}_1:=\sigma_{2,1}\G,
	\]
	using witness\footnote{The witnesses $(\alpha_i,\delta_i)$ are $\sigma$-values and hence share-deriving material
		(Remark~\ref{rem:transport-vs-nxk}). They are held only transiently for UC-proof generation and the
		subsequent KeyBox installation steps, then securely erased.} $(\alpha_1,\delta_1):=(\sigma_{1,1},\sigma_{3,1})$, i.e., $P_1$ generates UC-NIZK-AoKs for the DL relation $\cR_{\mathsf{DL}}$:
	\begin{align*}
	&\pi_{Y_1}\leftarrow \cp_{\mathsf{DL}}(\pp,\DLstmt{\sid}{\LblAffOneY}{Y_1},\alpha_1)\ \text{with witness } \alpha_1:=\sigma_{1,1}; \\
	&\pi_{D_1}\leftarrow \cp_{\mathsf{DL}}(\pp,\DLstmt{\sid}{\LblAffOneD}{D_1},\delta_1)\ \text{with witness } \delta_1:=\sigma_{3,1},
\end{align*}
	then $\pi_{\aff_1} = (\pi_{Y_1}, \pi_{D_1}).$ 
\end{itemize}
\textbf{Round 3} $P_2 \to P_1:$
\begin{itemize}
	\item $P_2$ verifies that $\sigma_{1,2}\G = M_1 + 3 B_1$, and aborts if the check fails. 
	\item $P_2$ computes $Y_1= M_1 + 2B_1, D_1 = X_1 - \sigma_{2,1}\G - Y_1$, and accepts iff
\begin{align*}
	\cV_{\mathsf{DL}}(\pp,\DLstmt{\sid}{\LblAffOneY}{Y_1},\pi_{Y_1})=1\ \wedge\
	\cV_{\mathsf{DL}}(\pp,\DLstmt{\sid}{\LblAffOneD}{D_1},\pi_{D_1})=1
\end{align*}
	\item $P_2$ computes the public group element
	$X_2 \coloneqq \sigma_{1,2}\G + \sigma_{2,2}\G + \sigma_{3,2}\G.$ Equivalently, $X_2=x_2\G$ for the conceptual scalar $x_2:=\sigma_{1,2}+\sigma_{2,2}+\sigma_{3,2}$, which is never materialized in host memory. Compute $M_2 \gets \Open_M(\pp,C_2,\zeta_2)$; define $Y_2 := M_2 + 3B_2$ and $D_2 := X_2 - \sigma_{1,2}\G - Y_2.$
	\item $P_2$ generates two UC-NIZK-AoKs for the discrete-log relation $\cR_{\mathsf{DL}}$:
	\begin{align*}
	&\pi_{Y_2}\leftarrow \cp_{\mathsf{DL}}(\pp,\DLstmt{\sid}{\LblAffTwoY}{Y_2},\alpha_2)\ \text{ with witness }\ \alpha_2:=\sigma_{2,2};\\
	&\pi_{D_2}\leftarrow \cp_{\mathsf{DL}}(\pp,\DLstmt{\sid}{\LblAffTwoD}{D_2},\delta_2)\ \text{ with witness }\ \delta_2:=\sigma_{3,2}.
\end{align*}
	Equivalently, $\pi_{\aff_2}:=(\pi_{Y_2},\pi_{D_2})$ is a UC-NIZK-AoK for the affine relation $\cR_{\mathsf{aff}}$
	on statement $(X_2,\gamma_2,M_2,B_2,\mathrm{\Delta}_2)$ with $\gamma_2:=3$ and $\mathrm{\Delta}_2:=\sigma_{1,2}\G$.
	\item $P_2$ computes the public key $K = 3 X_1 - 2 X_2,$ and sends $(X_2,\pi_{\aff_2}, K)$ to $P_1$. Additionally, $P_2$ publishes $K$ via the authenticated public broadcast functionality:
	it sends $(\mathsf{Publish},\sid,K)$ to $\Fpub$.
\end{itemize}
\paragraph{Verification by $P_1$:}
Parse $(X_2,\pi_{\aff_2},K)$ as $(X_2, (\pi_{Y_2},\pi_{D_2}), K_{\mathsf{rec}})$ and compute
\[M_2 \gets \Open_M(\mathsf{pp},C_2,\zeta_2), \quad Y_2 := M_2 + 3B_2, \quad D_2 := X_2 - \sigma_{1,2}\G - Y_2, \quad K = 3X_1 - 2X_2,\]
and accepts iff: 
\begin{align*}
&\cV_{\mathsf{DL}}(\pp,\DLstmt{\sid}{\LblAffTwoY}{Y_2},\pi_{Y_2})=1\ \wedge \\ &\cV_{\mathsf{DL}}(\pp,\DLstmt{\sid}{\LblAffTwoD}{D_2},\pi_{D_2})=1 \wedge \\ &h_{3,2} = H_{\mathrm{s32}}(\langle \sid, \cid_2, D_2 \rangle) \wedge \\ &K = K_{\mathsf{rec}}.
\end{align*}

\paragraph{Post-accept public-key confirmation.}
If $P_1$ accepts the transcript above, $P_1$ also publishes the verified key via $\Fpub$:
it sends $(\mathsf{Publish},\sid,K)$ to $\Fpub$.
This ensures that $P_3$ (and any other non-base-run party) can learn $K$ from an honest
source regardless of $P_2$'s behavior, binding the privately verified key to the public broadcast.
In the protocol specification, $P_3$ obtains its reference~$K$ from the $\Fpub$ publications for~$\sid$ as follows:
$P_3$ requires that both $P_1$'s and $P_2$'s authenticated $\Fpub$ publications for $\sid$ agree;
if $P_3$ receives $(\mathsf{Recv},\sid,P_1,K^{(1)})$ and $(\mathsf{Recv},\sid,P_2,K^{(2)})$ from $\Fpub$,
it sets $K:=K^{(1)}$ only if $K^{(1)}=K^{(2)}$, and aborts registration otherwise.
This ensures that a corrupted sender alone cannot install a shifted key in the honest joiner's
KeyBox; the check $K^{(1)}=K^{(2)}$ enforces consistency with the honest sender's publication.

\paragraph{Post-accept KeyBox installation.}
Each honest party finalizes by installing its long-term NXK shares and the retained registration scalars
inside its local KeyBox via $\Load$ calls over the internal channel to $\FKeyBox^{(i)}$,
as soon as it has locally verified transcript consistency:

\begin{itemize}[leftmargin=*]
	\item $P_2$ installs immediately after completing its Round~3 verification and sending $(X_2,\pi_{\aff_2},K)$ to $P_1$ (within the same activation). At this point, $P_2$ has verified $P_1$'s affine AoKs and holds all required $\sigma$-scalars. $P_2$ invokes: $\FKeyBox^{(2)}.\Load(\KBsid{k2},g_{2},\langle \sigma_{1,2},\sigma_{2,2},\sigma_{3,2}\rangle)$, $	\FKeyBox^{(2)}.\Load(\KBsid{k32},g_{3,2}^{\textsf{reg}},\langle \sigma_{3,2}\rangle)$,\\ $	\FKeyBox^{(2)}.\Load(\KBsid{k23},g_{2,3}^{\textsf{reg}},\langle \sigma_{2,3}\rangle).$

	\item $P_1$ installs after receiving $P_2$'s Round~3 message and verifying the full transcript via $\AccSDKG$ (Algorithm~\ref{alg:AccSDKG}). $P_1$ invokes: $\FKeyBox^{(1)}.\Load(\KBsid{k12},g_{1,2},\langle \sigma_{1,1},\sigma_{2,1},\sigma_{3,1}\rangle)$, $\FKeyBox^{(1)}.\Load(\KBsid{k13},g_{1,3},\langle \sigma_{1,1},\sigma_{2,1},\sigma_{3,1},\sigma_{1,3}\rangle)$, \\$\FKeyBox^{(1)}.\Load(\KBsid{k31},g_{3,1}^{\textsf{reg}},\langle \sigma_{3,1}\rangle).$
	$P_1$ additionally publishes the verified key via $\Fpub$ at this point.
\end{itemize}
The two parties install at different adversary-scheduled times; $\FSDKG$'s atomic \textsf{TryFinalize} models the logical conjunction. If $P_1$ rejects, $P_2$'s installed shares are inert under NXK (unusable without $P_1$'s cooperation, since the access structure requires $P_1$ in every authorized pair).

\paragraph{On exposure of share-deriving material.}
The scalars $\sigma_{i,j}$ are protocol-internal and share-deriving in the sense of
Remark~\ref{rem:transport-vs-nxk}: learning the appropriate tuple(s) of $\sigma$-values suffices to recompute a party’s
installed long-term share via the public derivation routines (Definition~\ref{def:sdkg-deriv}).
Accordingly, $\sigma$-values are never placed in the adversary-visible transcript: whenever they are transmitted,
they are sent only over $\Fchan$, so the adversary learns at most the explicit leakage
modeled by $\Fchan$, unless it corrupts an endpoint. Beyond transport confidentiality, the protocol enforces an explicit erasure discipline compatible with adaptive corruptions
with secure erasures (Definition~\ref{KeyBox}): each honest party keeps share-deriving scalars in host memory only until their
last use, and then securely erases them. Concretely, the only times share-deriving material is intentionally
fed into long-term state are the internal $\FKeyBox.\Load$ calls that install signing shares and the three registration
scalars used for RDR; immediately after the corresponding $\Load$ calls return, the host erases the consumed values.
Per party, (i) ephemerals erased immediately, (ii) the precise $\sigma$-tuples passed into
$\FKeyBox.\Load$ and then erased, and (iii) the values retained only inside KeyBox slots $\KBsid{k31},\KBsid{k32},\KBsid{k23}$
for later use via $\SealToPeer/\OpenFromPeer$ (Algorithm~\ref{alg:reg}). Optionally, to avoid any exposure of ephemeral scalars in host memory, the KeyBox can be extended with a small non-exporting interface that (i) samples ephemerals internally and (ii) performs the required group/field operations on opaque handles and, for RDR, seals payloads TEE-to-TEE via $\SealToPeer/\OpenFromPeer$; more details in Appendix \ref{App1}.

\paragraph{Erasure discipline (adaptive corruptions).}
All local steps in the base run are atomic w.r.t.\ adaptive corruptions: whenever an honest
party outputs an outgoing message (including one carrying a UC-context NIZK), it immediately performs
the explicit secure erasures listed below before yielding control to the adversary/scheduler.
Consequently, any corruption that occurs after the message is emitted reveals none of the erased values
(Definition~\ref{KeyBox}).
\begin{itemize}[leftmargin=*]
	\item $P_1$ (Round 2): After computing $\pi_{\aff_1}=(\pi_{Y_1},\pi_{D_1})$ and sending
	$(\sid,X_1,M_1,B_1,\sigma_{1,2},\pi_{\aff_1})$, $P_1$ securely erases all randomness and
	intermediate state used by the UC-NIZK prover(s). It retains only the
	$\sigma$-values needed for the later $\FKeyBox^{(1)}.\Load(\cdot)$ calls.
	\item $P_2$ (Round 3): After computing $\pi_{\aff_2}=(\pi_{Y_2},\pi_{D_2})$ and sending
	$(X_2,\pi_{\aff_2},K)$, $P_2$ securely erases all UC-NIZK prover randomness and Fischlin prover scratch
	state for $\pi_{Y_2}$ and $\pi_{D_2}$. This includes per-trial rarity-search temporaries,
	which need not be retained beyond the current iteration under a streaming implementation,
	and retains only the $\sigma$-values needed for the later
	$\FKeyBox^{(2)}.\Load(\cdot)$ calls.
	\item All parties (post-install): Immediately after the final KeyBox installation step returns,
	each honest party securely erases all share-deriving scalars that were consumed by its $\Load$ calls
	(cf.\ Remark~\ref{rem:transport-vs-nxk}). The only share-derived values retained after this point are
	the NXK-confined KeyBox slots themselves.
\end{itemize}
In particular, after an honest party has sent its protocol messages, an adaptive corruption reveals at
most the retained $\sigma$-values (until installation) and never reveals any per-proof randomness beyond
what is already encoded in the public proof strings.

\paragraph{Secure erasure in practice (non-normative).}
Our UC execution model uses the standard idealization of adaptive corruption with secure erasures:
once the protocol explicitly erases a buffer, a later corruption reveals only the remaining (non-erased)
local state (Definition~\ref{KeyBox}). We do not claim that commodity host platforms provide this
property as a default. Rather, the intended interpretation is that NXK-restricted host-side material
(including UC-context AoK witnesses and the prover’s ephemeral randomness/scratch state) is handled in
a hardened process boundary and is engineered to (i) never reach persistent storage (swap/hibernation,
crash dumps, logs), and (ii) be overwritten promptly with compiler-resistant zeroization primitives.

The optimized Fischlin prover need not store a large rarity-search ``trace'' in memory:
for each repetition it can maintain only the current best candidate and overwrite per-challenge
temporaries immediately, so the long-lived working set is $O(1)$ scalars/group elements per repetition.
Operational mitigations commonly used to approximate the erasure assumption include locking sensitive
pages to avoid paging, disabling core dumps, and avoiding runtimes that may transparently copy or
intern sensitive buffers. If such controls cannot be ensured, then the host should be treated as
effectively ``always-on'' adversarial (outside our corruption-with-erasures model), or the design
should move witness-bearing computation into a hardened boundary with an explicit interface that
supports the required simulation/extraction view.

\subsubsection{Hardened deployment: what runs where (KeyBox vs.\ host).}
\label{subsubsec:hardened-placement}
Our security arguments rely on a precise boundary between computations that may run inside a
state-continuous KeyBox (or an enclave-resident profile adapter that is part of the same trusted boundary)
and computations that must remain outside on the host. By ``hardened'' we mean that long-term shares are
confined to the KeyBox and remain API-non-exportable even if the host is later corrupted (Definition~\ref{KeyBox}).
We still permit NXK-restricted ephemerals (including witnesses for UC-context AoKs) to exist transiently in host RAM
during an atomic local step and to be protected only by secure erasures against adaptive corruptions; we do not aim
to protect against an ``always-on'' RAM adversary on an otherwise honest host.
The key distinction below is extractability: proofs that the UC simulator must extract from
(via oracle-log inspection) cannot be generated inside a KeyBox, because \gROCRP\ has local-call semantics
and the simulator does not observe KeyBox-internal $\Query$ traces (Remark~\ref{rem:oracle-tape} and
Definition~\ref{def:GNPRO}). Table~\ref{tab:placement-hardened} summarizes the intended placement in hardened deployments and
the reason each placement is necessary.
\begin{table}[t]
	\centering
	\small
	\setlength{\tabcolsep}{4pt}
	\renewcommand{\arraystretch}{1.15}
	\begin{tabular}{p{0.30\linewidth}p{0.13\linewidth}p{0.52\linewidth}}
		\hline
		\textbf{Component} & \textbf{Runs in} & \textbf{Why this placement is necessary} \\
		\hline
		Long-term share generation/storage; $\Load$ derivations; $\GetPub$; $\SealToPeer/\OpenFromPeer$ &
		KeyBox &
		Enforces NXK non-exportability and key-opacity: long-term shares and any caller-invertible affine images
		never cross the KeyBox boundary; only restricted API outputs are visible. \\[3mm]
		
		NXK-restricted witnesses for UC-context consistency AoKs
		&
		Host
		&
		Must remain outside to enable oracle-log-based straight-line extraction for Fischlin-based UC-NIZK-AoKs.
		These scalars are treated as NXK-restricted material and are securely erased after proof generation and $\Load$.\\[0.4em]
		
		USV certificate generation $\Cert(\pp,m)$ &
		KeyBox &
		$\Cert$ is implemented inside the hardened boundary and
		only $(C,\zeta)$ is exported.\\[0.4em]
		
		USV verification/opening $\Vcert,\Derive,\Open_M$ and receipt binding
		$H(\USVrcpt,\langle \sid,\cid,P_s,P_r,C,M\rangle)$ where $M:=\Open_M(\pp,C,\zeta)$ &
		Host &
		These are public deterministic checks/derivations and must be transcript-defined for verifiers and the UC
		simulator; no secrets are required. Receipt binding is in a non-programmable context $\USVrcpt\in\CtxTEE$. \\[0.4em]
		
		UC-context consistency AoKs
		&
		Host &
		Must be outside to enable straight-line extraction from corrupted-party proofs by inspecting the
		adversary's \gROCRP\ query log.\\[0.4em]
		
		KeyBox-resident one-shot DL proofs &
		KeyBox &
		Used only under ZK/simulation; sealing targets one-shot responses (Assumption~\ref{assump:tee-continuity}), and deterministic nonce derivation ensures that even under rollback an attacker cannot obtain two different responses for the same commitment (Lemma~\ref{lem:rollback-robust}). \\[0.4em]
		
		\gROCRP\ programming $\SimProgramRO$ &
		Simulator only &
		Only the ideal-world simulator programs in contexts in $\CtxUC$; parties and KeyBoxes never program $H$. \\
		\hline
	\end{tabular}
	\caption{Threat/placement summary for hardened SDKG deployments (host vs. KeyBox).}
	\label{tab:placement-hardened}
\end{table}
This placement is not merely an implementation preference: it is required by the
combination of (i) state continuity (no rewinding/forking inside the KeyBox; Assumption~\ref{assump:tee-continuity})
and (ii) oracle-log-based straight-line extraction for Fischlin-based UC-NIZK-AoKs (Definition~\ref{def:NIZK-AoK} and
Remark~\ref{rem:oracle-tape}). If one wishes to move UC-extractable AoKs into a hardened boundary, the model must be
strengthened to expose an extractable interface (e.g., a simulator-visible oracle-log/receipt interface) or one must
switch to a proof mechanism whose UC extraction does not rely on observing the prover's \gROCRP\ queries.

\begin{algorithm}[t]
	\DontPrintSemicolon
	\caption{One-shot registration of recovery device $P_3$.}
	\label{alg:reg}
	\KwIn{$\sid$, public $K$ for this session (obtained by $P_3$ from the authenticated $\Fpub$ publications for~$\sid$: $P_3$ requires that both $P_1$'s and $P_2$'s $\Fpub$ publications agree, i.e., $P_3$ receives $(\mathsf{Recv},\sid,P_1,K^{(1)})$ and $(\mathsf{Recv},\sid,P_2,K^{(2)})$ from $\Fpub$ and sets $K:=K^{(1)}$ only if $K^{(1)}=K^{(2)}$, aborting otherwise), and parties $P_1,P_2$ with sealed registration scalars installed inside their respective KeyBoxes---i.e., $\KBsid{k13}, \KBsid{k31}$ for $P_1$ and $\KBsid{k32},\KBsid{k23}$ for $P_2$. $P_1$ and $P_2$ (along with any already-registered parties/devices) additionally hold the public session metadata point $K_{1,3}$ derived from the base-run transcript (Definition \ref{def:k13-derived}).}
	
	\smallskip
	\textbf{Associated data.} $P_3$ ignores any associated-data strings received from the network. Else, it recomputes the following locally from $(\sid,P_s,P_r,\texttt{tag})$ and supplies them to $\OpenFromPeer$:
	\[
	\mathsf{ad}_{31}:=\ADSDKGreg{\sid}{P_1}{P_3}{\texttt{k31}}, \quad
	\mathsf{ad}_{32}:=\ADSDKGreg{\sid}{P_2}{P_3}{\texttt{k32}}, \quad
	\mathsf{ad}_{23}:=\ADSDKGreg{\sid}{P_2}{P_3}{\texttt{k23}}.
	\]
		
	\smallskip
	$P_1$ retrieves the session metadata $K_{1,3}$; computes
	$\varpi_{1}\leftarrow \FKeyBox^{(1)}.\Use(\KBsid{k31},\SealToPeer,\langle P_3,\mathsf{ad}_{31}\rangle)$,
	and sends $(\sid,\varpi_1,K_{1,3},K)$.\;
		
	\smallskip
$P_2$ uses the session metadata $K_{1,3}$ and computes two ciphertexts: \\
$\varpi_{2a}\leftarrow \FKeyBox^{(2)}.\Use(\KBsid{k32},\SealToPeer,\langle P_3,\mathsf{ad}_{32}\rangle)$ and \\
$\varpi_{2b}\leftarrow \FKeyBox^{(2)}.\Use(\KBsid{k23},\SealToPeer,\langle P_3,\mathsf{ad}_{23}\rangle)$,
and sends $(\sid,\varpi_{2a},\varpi_{2b},K_{1,3})$.\;
	
	\smallskip
	\textbf{Transport.}
	The two registration messages ($P_1\!\to\!P_3$ and $P_2\!\to\!P_3$) are delivered over authenticated channels
	(modeled by $\Fchan$).
	
	\smallskip
	\textbf{Input consistency.}
	($P_3$ buffers whichever registration message arrives first and applies the following checks once both have been received.)
	Upon receiving $(\sid,\varpi_1,K_{1,3}^{(1)},K^{\mathsf{net}})$ from $P_1$, require $K^{\mathsf{net}}=K$ (otherwise abort). Upon receiving $(\sid,\varpi_{2a},\varpi_{2b},K_{1,3}^{(2)})$ from $P_2$, require $K_{1,3}^{(1)}=K_{1,3}^{(2)}$ (otherwise abort), and set $K_{1,3}\gets K_{1,3}^{(1)}$.
	
	\smallskip
	$P_3$ forwards $(\varpi_1,\varpi_{2a},\varpi_{2b},\mathsf{ad}_{31},\mathsf{ad}_{32},\mathsf{ad}_{23},K,K_{1,3})$ into $\FKeyBox^{(3)}$ which executes the following
	installation procedure (modeled as an atomic transaction):
	\begin{enumerate}[leftmargin=1.5em, label=(\arabic*), nosep]
		\item \textbf{Install sponsor state.} Open $\varpi_{2a}$ with $\mathsf{ad}_{32}$ and $\varpi_{2b}$ with $\mathsf{ad}_{23}$ internally via $\OpenFromPeer$ to obtain opaque handles
		$\tau_{3,2}^{\textsf{reg}}$ and $\tau_{2,3}^{\textsf{reg}}$, then invoke
		$\Load(\KBsid{k32},g_{3,2}^{\textsf{reg}},\langle \tau_{3,2}^{\textsf{reg}}\rangle)$ and
		$\Load(\KBsid{k23},g_{2,3}^{\textsf{reg}},\langle \tau_{2,3}^{\textsf{reg}}\rangle).$
		\item \textbf{Install the recovery share.} (Re-)open $\varpi_1$ with $\mathsf{ad}_{31}$, $\varpi_{2a}$ with $\mathsf{ad}_{32}$, and $\varpi_{2b}$ with $\mathsf{ad}_{23}$ internally via $\OpenFromPeer$
		to obtain fresh opaque handles $\tau_{3,1},\tau_{3,2},\tau_{2,3}$, and then invoke
		$\Load(\KBsid{k3},g_3,\langle \tau_{2,3},\tau_{3,2},\tau_{3,1},K,K_{1,3}\rangle).$
	\end{enumerate}
	Accept iff all $\Load$ invocations above return $\ok$.
\end{algorithm}

\begin{definition}[Transcript-derived $K_{1,3}$]\label{def:k13-derived}
	\emph{In any accepted base-run transcript with $\mathsf{T}_2=(X_1,M_1,B_1,\sigma_{1,2},\pi_{\aff_1})$, define the
	public metadata point
	\[
	K_{1,3}\ :=\ \DeriveK(\mathsf{T}_2)\ :=\ 2(M_1+B_1)-X_1 \in \mathbb G.
	\]
	Equivalently, $K_{1,3}=\Pub(k_{1,3})$ for $k_{1,3}:=2\sigma_{1,3}-x_1$, but the protocol uses only the public
	derivation above and never invokes $\GetPub$ on slot $\KBsid{k13}$.}
\end{definition}

The one-shot registration protocol in Algorithm~\ref{alg:reg} is modeled in $\FSDKG$ (Fig.~\ref{fig:FSDKG}) by
(i) leaking only the lengths of the two registration messages, denoted by $\ellregone$ for the $P_1\!\to\!P_3$ message and
$\ellregtwo$ for the sponsor-to-$P_3$ message, and (ii) installing the corresponding recovery-role share and sponsor-state
slots inside $\FKeyBox^{(3)}$ upon successful completion. When the receiver $P_3$ is corrupted, an authenticated confidential-channel functionality reveals the entire delivered message to the adversary. Therefore, in $\FSDKG$ we do not model registration transport by uniform dummy strings.
Instead, the registration payloads are treated as syntactically valid encodings of the same tuple-structured messages as in
Algorithm~\ref{alg:reg}, with ciphertext components sampled from the same distribution as real $\SealToPeer$ outputs. Concretely,
$\FSDKG$ conceptually samples sealing ciphertexts
\[
\varpi_1 \leftarrow \Enc_{\pk_{\mathrm{seal}}^{(P_3)}}(\mathsf{ad}_{31},\sigma_{3,1}),\quad
\varpi_{2a} \leftarrow \Enc_{\pk_{\mathrm{seal}}^{(P_3)}}(\mathsf{ad}_{32},\sigma_{3,2}),\quad
\varpi_{2b} \leftarrow \Enc_{\pk_{\mathrm{seal}}^{(P_3)}}(\mathsf{ad}_{23},\sigma_{2,3}),
\]
for the slot-bound associated data $(\mathsf{ad}_{31},\mathsf{ad}_{32},\mathsf{ad}_{23})$ defined in Algorithm~\ref{alg:reg}, and then
forms the delivered network messages as the self-delimiting encodings
\[
\langle \sid,\varpi_1,K_{1,3},K\rangle,\qquad
\langle \sid,\varpi_{2a},\varpi_{2b},K_{1,3}\rangle.
\]
This ensures that if $P_3$ is corrupted, the adversary observes well-formed messages and can invoke $\OpenFromPeer$ on the received
ciphertexts. When $P_3$ is honest, the externally visible
distribution of these sealed blobs remains simulatable from public information under our KeyBox key-opacity assumption
(Assumption~\ref{assump:keybox-opacity}), so the only explicit leakage retained by $\FSDKG$ is message length. Registration is not modeled as an atomic two-message transaction w.r.t. corruption: if $P_3$ is corrupted after receiving only one of the two registration payloads, then the adversary learns the already-delivered payload(s) upon corruption and may complete or abort registration at will by (not) delivering and/or forwarding the remaining payload into $\FKeyBox^{(3)}$.
 
\begin{observation}[Lagrange weights]\label{obs:sec7}
	\emph{It follows from direct computation that for some scalars $\alpha, \beta \in \Zp,$ it holds that \begin{align*}
			K &= k\mathcal{G} = k_{1,2} \mathcal{G} + k_2 \mathcal{G} = k_{1,3} \mathcal{G} + k_3 \mathcal{G} =\alpha x_1 \mathcal{G} - \beta x_2 \mathcal{G} \\&= (\alpha - \beta)(m_1 + m_2) \mathcal{G} + (2 \alpha - 3 \beta)(b_1 + b_2) \mathcal{G} + \alpha \sigma_{3,1}\mathcal{G} - \beta \sigma_{3,2}\mathcal{G}.
		\end{align*} For $u=2$ and $v=3$, the unique $\alpha,\beta$ satisfying
		$\alpha-\beta=1$ and $\alpha u-\beta v=0$ are $\alpha=3$ and $\beta=2$.
		Consequently, $
		\alpha x_1 - \beta x_2
		= (m_1+m_2) + \alpha\sigma_{3,1} - \beta\sigma_{3,2}.
		$
		More generally, for distinct $u,v\in\mathbb Z^*_p$, the unique choice is
		$\alpha=\frac{v}{v-u}$ and $\beta=\frac{u}{v-u}$.}
\end{observation}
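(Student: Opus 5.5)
The plan is a direct algebraic verification over $\Zp$, carried out on the honest-execution values (the $\sigma_{i,j}$ produced by $f_1$, $f_2$ and the $\Load$ routines of Definition~\ref{def:sdkg-deriv}). I will first expand the two aggregate scalars.

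\textbf{Step 1: expand $x_1$ and $x_2$.} Use $\sigma_{1,1}=f_1(2)$, $\sigma_{2,1}=f_2(2)$, $\sigma_{1,2}=f_1(3)$ and $\sigma_{2,2}=f_2(3)$, with $f_i(x)=m_i+b_ix$. This gives
\[
x_1=(m_1+m_2)+2(b_1+b_2)+\sigma_{3,1},\qquad x_2=(m_1+m_2)+3(b_1+b_2)+\sigma_{3,2}.
\]
For arbitrary $\alpha,\beta$, taking $\alpha x_1-\beta x_2$ and multiplying by $\G$ yields the displayed expansion with coefficients $(\alpha-\beta)$ on $m_1+m_2$ and $(2\alpha-3\beta)$ on $b_1+b_2$. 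This is pure linearity.

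\textbf{Step 2: fix the weights.} Impose $\alpha-\beta=1$, so that the secret $m_1+m_2$ survives with weight one, and $u\alpha-v\beta=0$, so that the slope term cancels. Substituting $\beta=\alpha-1$ gives $\alpha(u-v)=-v$. Hence $\alpha=v/(v-u)$ and $\beta=u/(v-u)$, and these are unique because $u\neq v$ makes the $2\times 2$ system nonsingular. Specializing to $u=2$, $v=3$ gives $\alpha=3$, $\beta=2$. This yields $3x_1-2x_2=(m_1+m_2)+3\sigma_{3,1}-2\sigma_{3,2}$, and $k:=3x_1-2x_2$ matches $K=3X_1-2X_2$ as computed in Round~3.

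\textbf{Step 3: identify the three installed decompositions.}
\begin{itemize}
\item From $g_{1,2}$ and $g_2$: $k_{1,2}+k_2=3x_1-2x_2=k$ immediately.
\item From $g_{1,3}$: $k_{1,3}=2\sigma_{1,3}-x_1$ with $\sigma_{1,3}=f_1(1)=m_1+b_1$.
\item From $g_3$: $k_3=2(\sigma_{2,3}+2\sigma_{3,1}-\sigma_{3,2})$ with $\sigma_{2,3}=f_2(1)=m_2+b_2$.
\end{itemize}
Summing the last two and substituting the expansion of $x_1$ from Step~1 gives
\[
k_{1,3}+k_3=2(m_1+m_2)+2(b_1+b_2)-x_1+4\sigma_{3,1}-2\sigma_{3,2}=(m_1+m_2)+3\sigma_{3,1}-2\sigma_{3,2}=k.
\]
Multiplying each identity by $\G$ gives the chain of equalities in the statement.

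There is no real obstacle here. The only care needed is bookkeeping of which evaluation point each $\sigma_{i,j}$ uses ($2$ for index $1$, $3$ for index $2$, $1$ for index $3$). One should also note that the identity is claimed for consistent (honest or extracted) $\sigma$-values: adversarial deviations are handled elsewhere by the affine AoKs and the check in $g_3$, not by this observation.
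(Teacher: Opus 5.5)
Your proposal is correct and follows essentially the same direct computation as the paper: expand $x_1,x_2$ via $f_i(x)=m_i+b_ix$, solve the $2\times 2$ system for $(\alpha,\beta)$, and check the decompositions $k_{1,2}+k_2=k$ and $k_{1,3}+k_3=k$. The only cosmetic difference is in the last identity: the paper regroups using the interpolation identities $m_i=2\sigma_{i,3}-\sigma_{i,1}$, whereas you substitute $\sigma_{i,3}=m_i+b_i$ together with the expansion of $x_1$, which is the same algebra.
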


For $\alpha=3,\beta=2$ (the Lagrange weights for $u=2,v=3)$, in any accepting transcript $\mathcal T$, the UC-NIZK(-AoK)s and the USV opening
uniquely determine $(x_1,x_2)$. Hence, the only $\mathcal T$-consistent shares
with $k=k_{1,2}+k_2$ are $(k_{1,2},k_2)=(3x_1,-2x_2)$. Since $f_1,f_2$ are linear, $m_1=f_1(0)=2f_1(1)-f_1(2)=2\sigma_{1,3}-\sigma_{1,1}$ and
$m_2=f_2(0)=2f_2(1)-f_2(2)=2\sigma_{2,3}-\sigma_{2,1}$. By Observation~\ref{obs:sec7}, $k=(m_1+m_2)+3\sigma_{3,1}-2\sigma_{3,2}$. Substituting and regrouping yields $k=(2\sigma_{1,3}-x_1)+2(\sigma_{2,3}-\sigma_{3,2}+2\sigma_{3,1}) = k_{1,3} + k_3$.

\begin{algorithm}[t]
	\caption{Acceptance predicate $\AccSDKG(\sid, P_s, P_r, \TSDKG)$}
	\label{alg:AccSDKG}
\textbf{Input:} session identifier $\sid$; USV committer $P_s$ and relying party $P_r$;
transcript $\TSDKG=(\mathsf{T}_1,\mathsf{T}_2,\mathsf{T}_3)$, $$\text{ where } \mathsf{T}_1 = (\cid_2, C_2, \zeta_2, B_2, \sigma_{2,1}, h_{3,2}, d), \mathsf{T}_2 = (X_1, M_1, B_1, \sigma_{1,2}, \pi_{\aff_1}), \mathsf{T}_3 = (X_2, \pi_{\aff_2}, K_{\mathsf{rec}}).$$
\textbf{Output:} $1$ iff all checks below pass; otherwise $0$.

\smallskip
\textbf{Parse / derive.} If parsing fails, return $0$.
\begin{enumerate}[leftmargin=*,label=\textbf{D\arabic*:}, nosep]
	\item Compute $M_2 \gets \Open_M(\pp,C_2,\zeta_2)\text{ (if }M_2=\perp,\text{ return }0\text{)} \text{ and }	d^\star \gets H\!\bigl(\USVrcpt,\langle \sid,\cid_2,P_s,P_r,C_2,M_2\rangle\bigr).$
	\item Derive affine-check auxiliaries: $$Y_1 \gets M_1 + 2B_1; \quad D_1 \gets X_1 - \sigma_{2,1}\G - Y_1; \quad Y_2 \gets M_2 + 3B_2; \quad D_2 \gets X_2 - \sigma_{1,2}\G - Y_2.$$
		\item Derive $K_{1,3}\gets 2(M_1+B_1)-X_1$ and key $\widehat K \gets 3X_1 - 2X_2$.
	\end{enumerate}
	
	\smallskip
	\textbf{Checks.} Accept iff all of the following hold:
	\begin{enumerate}[leftmargin=*,label=\textbf{C\arabic*:}, nosep]
		\item \textbf{USV receipt consistency:} $d = d^\star$.
		\item \textbf{USV certificate linkage:} $\sigma_{2,1}\G - 2B_2 = M_2$.
		\item \textbf{Affine AoK for $X_1$:} write $\pi_{\aff_1}=(\pi_{Y_1},\pi_{D_1})$ and require
		\[
		\sigma_{1,2}\G = M_1 + 3B_1
		\quad\wedge\quad
		\cV_{\mathsf{DL}}(\pp,\DLstmt{\sid}{\LblAffOneY}{Y_1},\pi_{Y_1})=1
		\quad\wedge\quad
		\cV_{\mathsf{DL}}(\pp,\DLstmt{\sid}{\LblAffOneD}{D_1},\pi_{D_1})=1.
		\]
		\item \textbf{Affine AoK for $X_2$:} write $\pi_{\aff_2}=(\pi_{Y_2},\pi_{D_2})$ and require $$	\cV_{\mathsf{DL}}(\pp,\DLstmt{\sid}{\LblAffTwoY}{Y_2},\pi_{Y_2})=1 \wedge \cV_{\mathsf{DL}}(\pp,\DLstmt{\sid}{\LblAffTwoD}{D_2},\pi_{D_2})=1.$$
		\item \textbf{Digest check for $h_{3,2}$:}  $h_{3,2} = H_{s32}(\langle \sid,\cid_2,D_2\rangle)$.
		\item \textbf{Key consistency:} check $K_{\mathsf{rec}} = \widehat K$.
	\end{enumerate}

	return $1$.
\end{algorithm}

\begin{readerbox}[rn:broadcast-guard]{$\Fpub$ agreement as a formal protocol requirement}
In the base run, $P_1$ additionally verifies that $\widehat K$ equals the value received from $\Fpub$ for this session,
i.e., the $(\mathsf{Recv},\sid,P_2,K_{\mathsf{pub}})$ delivered by $\Fpub$ satisfies $K_{\mathsf{pub}}=\widehat K$.
This supplementary check is a non-blocking local consistency guard evaluated by $P_1$ when the $\Fpub$ delivery arrives;
it does not gate $\AccSDKG$ acceptance and does not introduce an additional protocol round.
For registration, the $\Fpub$-agreement check is a \emph{formal protocol requirement}: any party $P_i$ initiating
registration must require that both $P_1$'s and $P_2$'s authenticated $\Fpub$ publications
for $\sid$ agree (Algorithm~\ref{alg:reg}). This prevents a corrupted sender from installing a shifted key
in an honest joiner's KeyBox by publishing $K_{\mathrm{fake}}\neq K$ via $\Fpub$.
For the $n$-device extension, all joiners check $P_1$'s and $P_2$'s base-run $\Fpub$ publications, regardless of the sponsor's identity. The immutability of $\Fpub$ (which provides no retract or re-publish interface) ensures that an honest base-run publication cannot be retroactively altered by a subsequent adaptive corruption.
The base-run acceptance predicate $\AccSDKG$ itself operates on the transcript $\TSDKG$ alone
(Theorem~\ref{thm:SDKG-UC}).
\end{readerbox}

\begin{lemma}[$\AccSDKG$ need not call $\Fusv.\Verify$]\label{lem:usv-verify-implicit}
	In any real execution of $\mathrm{\Psi}^{(3)}_{\mathsf{SDKG}}$ in the $(\FKeyBox,\Fusv,\Fchan)$-hybrid model,
	if $P_1$ is honest and the Round~2 message
	$\mathsf{T}_2=(X_1,M_1,B_1,\sigma_{1,2},\pi_{\aff_1})$ is delivered to $P_2$ over $\Fchan$, then
	$\Fusv.\Verify(\sid,\cid_2,P_2,C_2,\zeta_2)=1$ and the receipt-digest consistency check holds:
	\[
	d = H\!\bigl(\USVrcpt,\langle \sid,\cid_2,P_2,P_1,C_2,M_2\rangle\bigr)
	\quad\text{where }M_2:=\Open_M(\pp,C_2,\zeta_2).
	\]
\end{lemma}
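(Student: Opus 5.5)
The argument traces the causal chain of an honest $P_1$'s Round~2 behavior. The key fact is that an honest $P_1$ emits $\mathsf{T}_2$ only after every Round~2 check has passed: it first runs $\Fusv.\Verify(\sid,\cid_2,P_2,C_2,\zeta_2)$ and requires output $1$, then computes $M_2\gets\Open_M(\pp,C_2,\zeta_2)$ (aborting on $\perp$), then checks $d=H(\USVrcpt,\langle\sid,\cid_2,P_2,P_1,C_2,M_2\rangle)$, and only afterwards samples $m_1,b_1,\sigma_{3,1}$ and sends $(\sid,X_1,M_1,B_1,\sigma_{1,2},\pi_{\aff_1})$ over $\Fchan$. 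By the semantics of $\Fchan$ (Fig.~\ref{fig:Fchan}), a message delivered to $P_2$ on the channel from $P_1$ must have been submitted by $P_1$ via $\mathsf{Send}$; authenticity rules out injection by $\cA$ while $P_1$ is honest. Hence the existence of a delivered $\mathsf{T}_2$ implies that $P_1$ executed the Round~2 activation to completion without aborting.

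Second, I would use atomicity of honest activations (Definition~\ref{def:ace}). The $\Fusv.\Verify$ call and the receipt check occur within the same activation that produces the outgoing message, so no corruption of $P_1$ can intervene between them. Thus both predicates held at the moment of sending. For $\Fusv.\Verify$, I would note that the output is a deterministic function of the stored entry $(M,d,\status,\mathsf{del})$ and $(C_2,\zeta_2)$. After delivery, the only state changes are $\mathsf{Invalidate}$ (on a duplicate commit) or $\mathsf{del}$ being set. So the value returned at that moment is the claimed value $1$; the lemma's statement is read as referring to the call $P_1$ actually made.

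Third, the receipt equality is a deterministic statement about $H$ in the non-programmable context $\USVrcpt\in\CtxTEE$. By Lemma~\ref{lem:grocrp-noninterference}, its value is fixed once queried and cannot be altered by simulator programming. The equality checked by $P_1$ therefore remains true. The same $(C_2,\zeta_2,d)$ are the ones $P_1$ received in Round~1 and recorded into $\mathsf{T}_1$, so the equality holds for the transcript values $\AccSDKG$ uses.

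The main obstacle is making precise which $(C_2,\zeta_2,d)$ the lemma refers to when $P_2$ is corrupted and could have sent different values to $P_1$ than those committed to $\Fusv$. I would resolve this by fixing the statement to the values in $P_1$'s received Round~1 message, which are exactly those passed to $\Fusv.\Verify$ and recorded in $\mathsf{T}_1$, so no cross-party consistency claim is needed. The conclusion then follows that C1 of $\AccSDKG$ and the $\Fusv$ check are implied by $P_1$'s prior acceptance.
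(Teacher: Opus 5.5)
Your proposal is correct and follows the same route as the paper: an honest $P_1$ emits $\mathsf{T}_2$ only after the $\Fusv.\Verify$ gate returns $1$ and the receipt-digest check passes (otherwise it aborts before sending), and authenticity of $\Fchan$ means a delivered $\mathsf{T}_2$ must have been sent by $P_1$. Your added remarks spell out what the paper's sketch leaves implicit: atomicity of activations, fixedness of $H$ in $\USVrcpt\in\CtxTEE$, and which $(C_2,\zeta_2,d)$ the statement refers to. They are harmless but not needed; the non-programmability point is in fact moot in a real execution, where no simulator exists.
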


\begin{proofsketch}
	An honest $P_1$ sends $\mathsf{T}_2$ only if $\Fusv.\Verify(\sid,\cid_2,P_2,C_2,\zeta_2)=1$ and the receipt digest
	check succeeds; otherwise it aborts before sending. Since $\Fchan$ is authenticated, an adversary cannot
	forge $\mathsf{T}_2$ on behalf of honest $P_1$. \qed
\end{proofsketch}

For any transcript that can arise with honest $P_1$ and authenticated channels, the receipt-digest check \textbf{C1}
	in Algorithm~\ref{alg:AccSDKG} already captures the effect of the explicit $\Fusv.\Verify$ performed by honest $P_1$
	in Round~2.

The KeyBox/$\Load$ step executes within a single honest-party activation (Definition~\ref{def:ace}), which is atomic with respect to corruption. When both senders are honest, the derivation routines are deterministic on well-formed inputs and all $\Load$ calls succeed; if a sender is corrupted and a $\Load$ fails, registration is not recorded as complete and any partially installed sponsor-state slots are inert (NXK-restricted and unusable for key recovery without the recovery-share slot $\KBsid{k3}$). In practice, this can be implemented by delaying the sealed-storage write until after validation. Registering devices $P_i$ for $i>3$ follows the same pattern: $P_1$ always seals $\sigma_{3,1}$ from slot $\KBsid{k31}$, and the sponsor leaf seals the sponsor-state scalars from its local slots $\KBsid{k32}$ and $\KBsid{k23}$. The joining device installs (i) its long-term recovery share in slot $\KBsid{k3}$ via $g_3$, and (ii) the sponsor-state slots $\KBsid{k32}$ and $\KBsid{k23}$ via
$g_{3,2}^{\textsf{reg}}$ and $g_{2,3}^{\textsf{reg}}$. Consequently, every already-registered leaf can serve as sponsor for future registrations.

\paragraph{Parameterized registration (general case).}
For any joiner $P_i$ ($i\ge 3$) with sponsor $P_j\in\{2\}\cup\mathsf{Reg}$, the real-world registration executes Algorithm~\ref{alg:reg} with the following parameter substitutions: $P_3\mapsto P_i$ and $P_2\mapsto P_j$ (as sponsor), associated-data strings
\[
\mathsf{ad}_{1i}:=\ADSDKGreg{\sid}{P_1}{P_i}{\texttt{k31}},\quad
\mathsf{ad}_{ji}^{32}:=\ADSDKGreg{\sid}{P_j}{P_i}{\texttt{k32}},\quad
\mathsf{ad}_{ji}^{23}:=\ADSDKGreg{\sid}{P_j}{P_i}{\texttt{k23}},
\]
and derivation routines $g_3,g_{3,2}^{\textsf{reg}},g_{2,3}^{\textsf{reg}}$ parameterized by $(P_i,P_j,\sid)$ per Definition~\ref{def:sdkg-deriv}. The $\Fpub$-agreement check uses $P_1$'s and $P_2$'s base-run publications (not the sponsor's). Transport uses $\Fchan$. The KeyBox installation procedure within $\FKeyBox^{(i)}$ is identical to Algorithm~\ref{alg:reg}'s steps~(1)--(2).

In the setting of Proposition~\ref{prop:nxk-resharing-impossible-updated}, any enrollment mechanism that requires a party to
export or externally derive a fresh share-deriving material from the public/external
view conflicts with NXK under key-opacity. Proposition~\ref{prop:nxk-resharing-impossible-updated} bounds the success
probability of any PPT strategy that computes $\widehat{k}_{\isnew}$ from $(\pp,\tau_{\mathrm{ext}})$.
SDKG's one-shot registration of $P_3$ does not attempt to compute $f(x_{\isnew})$ (or any share-derived plaintext) from
$\tau_{\mathrm{ext}}$. Instead, existing devices transfer the required registration scalars to $P_3$ only via
(attested) KeyBox-to-KeyBox sealing using $\SealToPeer/\OpenFromPeer$, and $P_3$'s KeyBox derives and installs $k_3$ internally
(via $g_3$). Externally, the transported ciphertexts are part of $\tau_{\mathrm{ext}}$ and remain simulatable under key-opacity. Beyond $i=3$, our scalable NXK-compatible RDR enrollment
registers additional devices as redundant front-ends for the recovery role
(with the same base-run-derived share guaranteed whenever at least one registration sender is honest).

\begin{figure}[!ph]
	\centering
	\setlength{\fboxrule}{0.2pt} 
	\fbox{%
		\parbox{\dimexpr\linewidth-2\fboxsep-2\fboxrule\relax}{%
			\ding{169} \textsf{State (per session $\sid$):}
			corruption set $\mathsf{Cor}\subseteq\{1,2,3\}$; flags $\mathsf{finalized},\mathsf{RegPending},\mathsf{RegDone}\in\{0,1\}$
			(init $0$); approvals $\mathsf{Auth}\subseteq\{1,2\}$ (init $\emptyset$); transcript slots
			$\mathsf{T}_1,\mathsf{T}_2,\mathsf{T}_3$ (init $\perp$); optional programmed values
			$(x_1,x_2,\sigma_{3,1},\sigma_{3,2})\in(\Zp\cup\{\perp\})^4\ \text{(init} \perp)$;
			retain the registration scalars $(\sigma_{3,1},\sigma_{2,3}, \sigma_{3,2})\in\Zp^3$ and $K_{1,3}\in\mathbb G$ after finalization;
			registration-channel state multiset $\mathsf Q_{\mathsf{reg}}$ of $(\rho,P_s,P_r,w,\phi)$, and delivered set $\mathsf D_{\mathsf{reg}}$ (init empty); delivered registration payload buffers $w_{\mathsf{reg},1}^{\mathsf{del}}, w_{\mathsf{reg},2}^{\mathsf{del}}\in\{0,1\}^*\cup\{\perp\}$ (init $\perp$).
						
			\smallskip						
			\ding{169} \textsf{Init:} Upon $(\mathsf{init},2,\sid)$ from $P_2$, send $(\mathsf{init},\sid)$ to $P_1,P_2,P_3$ and $\cA$.
			
		\smallskip
		\ding{169} \textsf{Transcript feed (ideal adversary/simulator interface; bookkeeping):}
		Upon receiving one of the following messages from $\cA$, if the corresponding transcript slot is empty then store it and run \textsf{TryFinalize}:
		\begin{itemize}[leftmargin=*,nosep]
			\item $(\mathsf{Tin}_1,\sid,\cid_2,C_2,\zeta_2,B_2,\sigma_{2,1},h_{3,2},d)$:
			if $\mathsf{T}_1=\perp$ then set $\mathsf{T}_1\gets(\cid_2,C_2,\zeta_2,B_2,\sigma_{2,1},h_{3,2},d)$.
			\item $(\mathsf{Tin}_2,\sid,X_1,M_1,B_1,\sigma_{1,2},\pi_{\aff_1})$:
			if $\mathsf{T}_2=\perp$ then set $\mathsf{T}_2\gets(X_1,M_1,B_1,\sigma_{1,2},\pi_{\aff_1})$.
			\item $(\mathsf{Tin}_3,\sid,X_2,\pi_{\aff_2},\widehat K)$:
			if $\mathsf{T}_3=\perp$ then set $\mathsf{T}_3\gets(X_2,\pi_{\aff_2},\widehat K)$.
		\end{itemize}
			
			\smallskip
			\ding{169} \textsf{TryFinalize:} If $\mathsf{finalized}=1$ or some $\mathsf{T}_i=\perp$, do nothing.
			Otherwise let $\mathcal T:=(\mathsf{T}_1,\mathsf{T}_2,\mathsf{T}_3)$.
			If $\AccSDKG(\sid,P_2,P_1,\mathcal T)=0$ (Algorithm \ref{alg:AccSDKG}), do nothing.
			If $\AccSDKG(\sid,P_2,P_1,\mathcal T)=1$, then:
			\begin{enumerate}[leftmargin=*,nosep]
\item If $x_1=\perp$ or $x_2=\perp$ or $\sigma_{3,1}=\perp$ or $\sigma_{3,2}=\perp$, do nothing and return. Let $\mathsf{T}_1=(\cid_2,C_2,\zeta_2,B_2,\sigma_{2,1},h_{3,2},d)$,
$\mathsf{T}_2=(X_1,M_1,B_1,\sigma_{1,2},\pi_{\aff_1})$,
and $\mathsf{T}_3=(X_2,\pi_{\aff_2},K_{\mathsf{rec}})$.
Define $Y_1 := M_1 + 2B_1$ and $D_1 := X_1 - \sigma_{2,1}\G - Y_1$.
Require $X_1 = x_1\G, X_2 = x_2\G, h_{3,2} = H_{\mathrm{s32}}(\langle \sid,\cid_2,\sigma_{3,2}\G\rangle), D_1 = \sigma_{3,1}\G$. Define $\sigma_{1,1} := x_1-\sigma_{2,1}-\sigma_{3,1}\bmod p, \sigma_{1,3} := 2\sigma_{1,1}-\sigma_{1,2}\bmod p, k_{1,2}:=3x_1, k_2:=-2x_2, k_{1,3}:=2\sigma_{1,3}-x_1 \bmod p, K_{1,3}:=2(M_1+B_1)-X_1, k:=k_{1,2}+k_2 \bmod p, \sigma_{2,2} \coloneqq x_2-\sigma_{1,2}-\sigma_{3,2} \bmod p.$ 
\item Define recovery-role and sponsor-state scalars: $k_3 := (k-k_{1,3}) \bmod p, \sigma_{2,3} := k_3\cdot 2^{-1}-2\sigma_{3,1}+\sigma_{3,2} \bmod p.$
				Output $K:=k\G$ to $P_1,P_2$ and $\cA$. Set $\mathsf{finalized}\gets 1$.
				(In the real protocol, $P_3$ obtains $K$ by requiring agreement between both $P_1$'s and $P_2$'s base-run authenticated $\Fpub$ publications for $\sid$ before initiating registration; since $\Fpub$ publications are immutable once made, at least one honest base-run publication anchors $K$ to the correct value. The registration payload from $P_1$ also carries $K$ as a redundant consistency check.)
			\item If $1\notin\mathsf{Cor}$, send to $P_1$ the internal KeyBox command\\
			$			\left(\KBcmd,\sid,\big[
			(\Load,\KBsid{k12},g_{1,2},\langle \sigma_{1,1},\sigma_{2,1},\sigma_{3,1}\rangle), 
			(\Load,\KBsid{k13},g_{1,3},\langle \sigma_{1,1},\sigma_{2,1},\sigma_{3,1},\sigma_{1,3}\rangle),
			(\Load,\KBsid{k31},g_{3,1}^{\textsf{reg}},\langle \sigma_{3,1}\rangle)
			\big]\right).
			$
		\item If $2\notin\mathsf{Cor}$, send to $P_2$ the internal KeyBox command
			\[
			\left(\KBcmd,\sid,\big[
			(\Load,\KBsid{k2},g_{2},\langle \sigma_{1,2},\sigma_{2,2},\sigma_{3,2}\rangle), 
			(\Load,\KBsid{k32},g_{3,2}^{\textsf{reg}},\langle \sigma_{3,2}\rangle), 
			(\Load,\KBsid{k23},g_{2,3}^{\textsf{reg}},\langle \sigma_{2,3}\rangle)
			\big]\right).
			\]
			\end{enumerate}
			
			\smallskip
\ding{169} \textsf{Registration of $P_3$:}
Upon first $(\mathsf{register},3,\sid)$ from $P_3$:
require $\mathsf{finalized}=1$ and $\mathsf{RegDone}=0$; 
set $\mathsf{RegPending}\gets 1$, $\mathsf{Auth}\gets\emptyset$;
set $w_{\mathsf{reg},1}^{\mathsf{del}}\gets\perp$ and $w_{\mathsf{reg},2}^{\mathsf{del}}\gets\perp$;
notify $P_1,P_2$ and $\cA$ with $(\mathsf{RegReq},3,\sid)$.

\medskip
Upon $(\mathsf{approve},3,\sid)$ from $P_j$ with $j\in\{1,2\}$ and $\mathsf{RegPending}=1$,
set $\mathsf{Auth}\gets \mathsf{Auth}\cup\{j\}$ and notify $\cA$.

\medskip
Upon $(\mathsf{RegGo},3,\sid,w_1^\star,w_2^\star)$ from $\cA$:
if $\mathsf{RegPending}=1$, $\mathsf{Auth}=\{1,2\}$, $\mathsf{RegDone}=0$, and $\mathsf Q_{\mathsf{reg}}=\emptyset$, then:
define slot-bound \textsf{ad} as in Algorithm \ref{alg:reg}:
$\mathsf{ad}_{31}:=\ADSDKGreg{\sid}{P_1}{P_3}{\texttt{k31}}$,
$\mathsf{ad}_{32}:=\ADSDKGreg{\sid}{P_2}{P_3}{\texttt{k32}}$,
$\mathsf{ad}_{23}:=\ADSDKGreg{\sid}{P_2}{P_3}{\texttt{k23}}$.
Sample $\rho_1,\rho_2\leftarrowdollar \{0,1\}^\lambda$.

\begin{itemize}[leftmargin=*,nosep]
	\item If $1\notin\mathsf{Cor}$, sample 
	$\varpi_1 \leftarrow \Enc_{\pk_{\mathrm{seal}}^{(P_3)}}(\mathsf{ad}_{31},\sigma_{3,1})$
	and set $w_{\mathsf{reg},1}:=\langle \sid,\varpi_1,K_{1,3},K\rangle$.
	If $1\in\mathsf{Cor}$, set $w_{\mathsf{reg},1}:=w_1^\star$.
	
	\item 
	If $2\notin\mathsf{Cor}$, sample 
	$\varpi_{2a} \leftarrow \Enc_{\pk_{\mathrm{seal}}^{(P_3)}}(\mathsf{ad}_{32},\sigma_{3,2})$ and
	$\varpi_{2b} \leftarrow \Enc_{\pk_{\mathrm{seal}}^{(P_3)}}(\mathsf{ad}_{23},\sigma_{2,3})$. Set $w_{\mathsf{reg},2}:=\langle \sid,\varpi_{2a},\varpi_{2b}, K_{1,3}\rangle$.
	If $2\in\mathsf{Cor}$, set $w_{\mathsf{reg},2}:=w_2^\star$.
\end{itemize}

For each $j\in\{1,2\}$ with $w_{\mathsf{reg},j}\neq\perp$:
let $\phi_j:=|w_{\mathsf{reg},j}|$; insert $(\rho_j,P_j,P_3,w_{\mathsf{reg},j},\phi_j)$ into $\mathsf Q_{\mathsf{reg}}$;
send $(\mathsf{Leak},\sid,P_j,P_3,\rho_j,\phi_j)$ to $\cA$.
If $j\in\mathsf{Cor}$, additionally reveal $w_{\mathsf{reg},j}$ to $\cA$.

\medskip
Upon receiving $(\mathsf{Deliver},\sid,\rho)$ from $\cA$:
if $(\rho,P_s,P_3,w,\phi)\in\mathsf Q_{\mathsf{reg}}$ and $\rho\notin\mathsf D_{\mathsf{reg}}$, delete it from $\mathsf Q_{\mathsf{reg}}$,
add $\rho$ to $\mathsf D_{\mathsf{reg}}$, and deliver $(\mathsf{Recv},\sid,P_s,w)$ to $P_3$.
If $3\in\mathsf{Cor}$, reveal $w$ to $\cA$ at delivery time.
If $P_s=P_1$, set $w_{\mathsf{reg},1}^{\mathsf{del}}\gets w$. If $P_s=P_2$, set $w_{\mathsf{reg},2}^{\mathsf{del}}\gets w$.

\medskip
If $\mathsf{RegPending}=1$, $\mathsf{Auth}=\{1,2\}$, $\mathsf{RegDone}=0$,
$\mathsf Q_{\mathsf{reg}}=\emptyset$, and $w_{\mathsf{reg},1}^{\mathsf{del}}\neq\perp$ and $w_{\mathsf{reg},2}^{\mathsf{del}}\neq\perp$, then:
\begin{itemize}[leftmargin=*,nosep]
   \item If $3\in\mathsf{Cor}$: do nothing further as after corruption, completion/abort is controlled by $\cA$.
	\item If $3\notin\mathsf{Cor}$:
	parse $w_{\mathsf{reg},1}^{\mathsf{del}}=\langle \sid,\varpi_1,K_{1,3}^{(1)},K^\star\rangle$ and
	$w_{\mathsf{reg},2}^{\mathsf{del}}=\langle \sid,\varpi_{2a},\varpi_{2b},K_{1,3}^{(2)}\rangle$.
	Require $K_{1,3}^{(1)}=K_{1,3}^{(2)}$ and set $K_{1,3}^\star\gets K_{1,3}^{(1)}$. If parsing fails then do nothing further.
	Require $K^\star = K$ (the stored finalized public key).
	In the real protocol, the honest joiner additionally requires agreement of both senders' $\Fpub$ publications before initiating registration; the simulator matches this by conditioning $\mathsf{RegGo}$ on $\Fpub$ consistency (see proof of Theorem~\ref{thm:SDKG-UC}). The $K^\star = K$ check here is a secondary payload-level guard; otherwise do nothing further.
	Otherwise, have $P_3$ attempt the KeyBox installation procedure from Algorithm~\ref{alg:reg} using $(\varpi_1,\varpi_{2a},\varpi_{2b},\mathsf{ad}_{31},\mathsf{ad}_{32},\mathsf{ad}_{23},K^\star,K_{1,3}^\star)$,
	i.e., by invoking $\OpenFromPeer$ and $\Load(\cdot)$ exactly as in Algorithm~\ref{alg:reg}.
	If all invoked $\Load$ calls return $\ok$, then set $\mathsf{RegDone}\gets 1$ and $\mathsf{RegPending}\gets 0$ and output $(\mathsf{registered},3,\sid)$ to $P_3$ and $\cA$. Otherwise do nothing further.
\end{itemize}
						
			\smallskip			
			\ding{169} \textsf{Corruptions:} Upon $(\mathsf{Corrupt},i)$ from $\cA$, add $i$ to $\mathsf{Cor}$ and reveal $P_i$'s local state.
			Thereafter, $\cA$ controls $P_i$ and may invoke $\FKeyBox^{(i)}.\Load/\Use$ directly;
			$\FSDKG$ does not mediate these calls.
						
			\smallskip
\smallskip
\ding{169} \textsf{Programming (simulator-only):} Before finalization, $\s$ may once send
$(\mathsf{Program},\sid,x_1^\star,x_2^\star,\sigma_{3,1}^\star,\sigma_{3,2}^\star)$; set
$x_1\gets x_1^\star$, $x_2\gets x_2^\star$, $\sigma_{3,1}\gets\sigma_{3,1}^\star$, $\sigma_{3,2}\gets\sigma_{3,2}^\star$.
Then run \textsf{TryFinalize}. Finalization occurs only if the transcript slots are filled and the programmed values satisfy the \textsf{TryFinalize} consistency checks.}}
	\caption{Transcript-driven ideal functionality $\FSDKG$.}
	\label{fig:FSDKG}
\end{figure}

\begin{remark}[$\Fpub$-agreement enforcement is simulator-mediated]
\label{rem:fpub-agreement-sim}
In the real protocol, honest~$P_3$ gates registration on receiving
consistent $\Fpub$ publications from both $P_1$ and~$P_2$
(Algorithm~\ref{alg:reg}).
$\FSDKG$ does not enforce this condition directly because $\Fpub$ is a
shared hybrid functionality whose state $\FSDKG$ cannot inspect.
Instead, the UC proof of Theorem~\ref{thm:SDKG-UC} shows that the
simulator~$\s$ conditions the $\mathsf{RegGo}$ message to~$\FSDKG$ on
$\Fpub$ consistency: if $\s$ detects that a corrupted sender published a
value on~$\Fpub$ for~$\sid$ different from the finalized~$K$, then $\s$
does not forward $\mathsf{RegGo}$ and registration does not complete in
the ideal world, matching the real-world abort behavior.
This delegation to the simulator is not a specification gap but a standard
UC modeling pattern for conditions involving shared hybrid resources.
\end{remark}

\section{UC Security}\label{subsec:SDKG-UC}
In the ideal functionality $\FSDKG$ (Fig.~\ref{fig:FSDKG}), to avoid distributional mismatches under adaptive corruption,
we use a transcript-driven finalization rule. In any accepting transcript, the UC-extractable affine AoKs and the
$s32$-digest check pin down unique values $(x_1,x_2)$, the value $\sigma_{3,2}$ bound into $h_{3,2}$, and (via the affine AoK for $X_1$) the value $\sigma_{3,1}$ bound by $D_1=X_1-\sigma_{2,1}\G-(M_1+2B_1)=\sigma_{3,1}\G$.
Accordingly, $\FSDKG$ does not sample $(x_1,x_2,\sigma_{3,1},\sigma_{3,2})$ at finalization time. Instead, finalization is gated on a single simulator-only \textsf{Program} message supplying these transcript-consistent values; $\FSDKG$ checks consistency against the stored transcript before outputting $K$ and installing KeyBox shares. In \(\FSDKG\), the base-run record \(\TSDKG=(\mathsf{T}_1,\mathsf{T}_2,\mathsf{T}_3)\) is supplied via the
ideal functionality’s adversary interface (i.e., by the ideal-world simulator) through the
\(\mathsf{Tin}_1/\mathsf{Tin}_2/\mathsf{Tin}_3\) messages. This is purely bookkeeping used to couple the
ideal execution to the simulated real execution; it should not be interpreted as adversary-visible leakage.
The slot tuples \(\mathsf{T}_i\) may include fields that are transmitted only over \(\Fchan\)
in honest executions (e.g., \(\sigma\)-scalars), even though the adversary learns at most the explicit
\(\Fchan\) leakage unless it corrupts an endpoint.

KeyBox installation effects in $\FSDKG$ are realized via the KeyBox-driver wrapper mechanism of Remark~\ref{rem:kb-wrapper}.

\begin{lemma}[Latent (marginal) uniformity of the SDKG key]
	\label{lem:sdkg-latent-unif}
	Fix any session identifier $\sid$ of $\widehat{\mathrm{\Psi}}^{(3)}_{\mathsf{SDKG}}$.
	Let $x_1:=\sigma_{1,1}+\sigma_{2,1}+\sigma_{3,1}$ and $x_2:=\sigma_{1,2}+\sigma_{2,2}+\sigma_{3,2}$ denote the
	(conceptual) scalars defined by the parties' sampled $\sigma$-values in the base run, and define the
	candidate key scalar and group element $\hat{k} := 3x_1-2x_2 \bmod p, \widehat K := \hat{k}\G \in \mathbb G.$
If at least one of $\sigma_{3,1}$ or $\sigma_{3,2}$ is sampled uniformly, then
$\widehat K$ is uniform in $\mathbb G$ marginally over that honest pad randomness.
\end{lemma}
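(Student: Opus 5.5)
The plan is a one-time-pad argument on the linear form $\hat k=3x_1-2x_2$, written out in terms of the individual $\sigma$-values. Expanding the definitions of $x_1$ and $x_2$ gives
\[
\hat k \;=\; 3\sigma_{1,1}+3\sigma_{2,1}+3\sigma_{3,1}-2\sigma_{1,2}-2\sigma_{2,2}-2\sigma_{3,2}\pmod p .
\]
So $\hat k$ is an affine function of $\sigma_{3,1}$ with coefficient $3$, and of $\sigma_{3,2}$ with coefficient $-2$. Because $p>3$ (the paper fixes a prime-order group with $p>3$), both coefficients are units in $\Zp$.

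\emph{Case $\sigma_{3,1}$ honest.} Here $P_1$ is honest and samples $\sigma_{3,1}\leftarrowdollar\Zp$ in Round~2. We first show that every other summand is fixed before, or independently of, this pad.
\begin{itemize}[nosep]
\item $\sigma_{2,1}$, $\sigma_{3,2}$ and the leaf values come from Round~1. They reach $P_1$ before it samples $\sigma_{3,1}$, so they are independent of it.
\item $\sigma_{1,1},\sigma_{1,2}$ come from $P_1$'s own independent sampling of $(m_1,b_1)$.
\item $\sigma_{2,2}$ comes from $P_2$'s Round~1 \textsf{SDKG.LeafInit} output, so it was also fixed before $\sigma_{3,1}$ was sampled.
\end{itemize}
Condition on all randomness other than $\sigma_{3,1}$. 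Then $\hat k=3\sigma_{3,1}+c$ for a constant $c$, and the map $s\mapsto 3s+c$ is a bijection of $\Zp$. Hence $\hat k$ is uniform on $\Zp$, and averaging over the conditioning preserves this.

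\emph{Case $\sigma_{3,2}$ honest.} This case is symmetric, with $P_2$ sampling $\sigma_{3,2}\leftarrowdollar\Zp$ in Round~1. The other $\sigma$-values are either produced by $P_2$ independently of this pad, or chosen by $P_1$ later. In the second situation we must check that $P_1$'s choices can depend on $\sigma_{3,2}$ only through information that is ``marginalized'' appropriately.

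Finally, since $a\mapsto a\G$ is a bijection $\Zp\to\mathbb G$, uniformity of $\hat k$ transfers to uniformity of $\widehat K$ in $\mathbb G$.

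\emph{Main obstacle.} The delicate point is the second case when $P_1$ is adversarial. $P_1$ sees $h_{3,2}=H_{\mathrm{s32}}(\langle\sid,\cid_2,\sigma_{3,2}\G\rangle)$, and $\sigma_{3,2}$ is never sent in Round~1. I would argue as follows.
\begin{itemize}[nosep]
\item The statement claims only \emph{marginal} uniformity over the honest pad. I therefore read it with the other party's contributions fixed as arbitrary functions of its view that excludes the pad, which is exactly the mechanism the $h_{3,2}$ digest and the AoKs are meant to fix.
\item Under this reading, $h_{3,2}$ is a strict-context random-oracle digest of $\sigma_{3,2}\G$. A $P_1$ making polynomially many queries can correlate with it only by hitting $\sigma_{3,2}\G$, which happens with probability $\le Q/p$ and is negligible. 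This uses the non-programmable context behaviour of Lemma~\ref{lem:grocrp-noninterference}.
\end{itemize}
The result is that $\widehat K$ is uniform up to negligible statistical distance, and exactly uniform when the other summands are fixed independently of the pad.
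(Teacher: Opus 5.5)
Your route is the same as the paper's. You write $\hat k$ as an affine function of the honest pad with a unit coefficient ($3$ for $\sigma_{3,1}$, $-2$ for $\sigma_{3,2}$), apply the one-time-pad bijection on $\Zp$, and transfer to $\mathbb G$ via $a\mapsto a\G$. The paper obtains the same form through Observation~\ref{obs:sec7} as $\hat k=(m_1+m_2)+3\sigma_{3,1}-2\sigma_{3,2}$. The paper simply takes the independence of the remaining summands for granted. You try to justify it, and your justification fails in exactly the case where the hypothesis ``at least one honest pad'' has content.

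Your first case (only $\sigma_{3,1}$ honest) is needed precisely when $P_2$ is corrupted. If $P_2$ were honest, $\sigma_{3,2}$ would be an honest pad and your second case would apply.
\begin{itemize}
\item With $P_2$ corrupted, your bullet ``$\sigma_{2,2}$ comes from $P_2$'s Round~1 \textsf{SDKG.LeafInit} output'' is unavailable. A corrupted $P_2$ moves last, and from the Round~2 message it can compute $D_1=X_1-\sigma_{2,1}\G-(M_1+2B_1)=\sigma_{3,1}\G$ before choosing $X_2$.
\item So under your own reading (the other party's contributions are functions of a view that excludes the pad), the argument breaks. The transcript quantity $3X_1-2X_2$ could be forced to any target $T$ by sending $X_2:=2^{-1}(3X_1-T)$; only the acceptance checks rule this out.
\end{itemize}

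The missing step is the Round-1 binding of $X_2$. In any transcript that honest $P_1$ accepts, $X_2=\sigma_{1,2}\G+(M_2+3B_2)+D_2$, where:
\begin{itemize}
\item $M_2=\Open_M(\pp,C_2,\zeta_2)$ and $B_2$ are deterministic functions of the Round-1 message;
\item by check \textbf{C5} and Lemma~\ref{lem:SDKG-key-transcript}(iii), $D_2$ equals the point $U$ bound by $h_{3,2}$ in the non-programmable context $\CtxSDKG$, except on $\Bad_{\mathsf{s32}}$;
\item $\sigma_{1,2}=m_1+3b_1$ is drawn by honest $P_1$ independently of $\sigma_{3,1}$.
\end{itemize}
Hence the would-be key is $3\sigma_{3,1}\G+\bigl[3(\sigma_{1,1}+\sigma_{2,1})\G-2(\sigma_{1,2}\G+M_2+3B_2+U)\bigr]$, and the bracket is independent of $\sigma_{3,1}$. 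This key coincides with $\widehat K$ whenever the session accepts, and it is uniform up to $\Pr[\Bad_{\mathsf{s32}}]$. Selective abort remains possible, since $P_2$ can compute this key before sending Round~3, as the paper notes. So the genuinely delicate case is the corrupted last mover $P_2$ in your first case, not a corrupted $P_1$ in the second. Your random-oracle argument that $h_{3,2}$ hides $\sigma_{3,2}\G$ from a corrupted $P_1$ is fine.
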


\begin{proof}
	For the Lagrange weights $u=2$ and $v=3$, Observation~\ref{obs:sec7} gives
	\[
	\hat{k} \;=\; (m_1+m_2) + 3\sigma_{3,1} - 2\sigma_{3,2}\; \bmod p,
	\]
	and hence $\widehat K=\hat{k}\G$.
	If $\sigma_{3,1}$ is sampled honestly, then $3\sigma_{3,1}$ is uniform in $\Zp$ and therefore
	$\hat{k}=( (m_1+m_2)-2\sigma_{3,2}) + 3\sigma_{3,1}$ is uniform in $\Zp$.
	If instead $\sigma_{3,2}$ is sampled honestly, then $-2\sigma_{3,2}$ is uniform in $\Zp$ and therefore
	$$\hat{k}=( (m_1+m_2)+3\sigma_{3,1}) - 2\sigma_{3,2}$$ is uniform in $\Zp$.
	In either case $\hat{k}$ is uniform in $\Zp$, and since $z\mapsto z\G$ is a bijection, $\widehat K=\hat{k}\G$ is uniform in $\mathbb G$. Since we make no fairness/guaranteed-output-delivery claim, the distribution of the delivered key
	conditioned on completion may be biased by selective abort (cf.\ Cleve~\cite{Cleve[86]}).
	\qed
\end{proof}

\subsection{From $\FSDKG$ to the standard NXK-DKG interface}
Fix the 1+1-out-of-3 star access structure
\[
\mathrm{\Gamma}_0 \ :=\ \bigl\{\{P_1,P_2\},\{P_1,P_3\}\bigr\}.
\]
The functionality $\FSDKG$ (Fig.~\ref{fig:FSDKG}) is deliberately transcript-driven: besides producing the
public key $K$ and installing non-exportable KeyBox shares, it (i) records the public transcript
items that determine acceptance and (ii) offers a simulator-only one-shot \textsf{Program} hook used
exclusively to couple real and ideal executions under adaptive corruptions and straight-line extraction.

\begin{definition}[Standard NXK-star DKG functionality $\FDKGstarNXK$]\label{def:fdkg-star-nxk}
	\emph{
		Fix the following deterministic polynomial-time wrapper ITM $\WDKG$ that sits between the environment and an instance of $\FSDKG$ and
		exposes only the interface that we regard as the standard NXK-DKG API for $\mathrm{\Gamma}_0$, namely:
		\begin{enumerate}
			\item the public key output $K$ for each accepting session, with no output on abort;
			\item the induced KeyBox installation effects (via the surrounding $\FKeyBox$ instances); and
			\item if invoked, the post-finalization \textsf{registered} events for device registration (RDR).
		\end{enumerate}
		The wrapper suppresses $\FSDKG$'s internal transcript bookkeeping (e.g., the transcript slots
		$\mathsf{T}_1,\mathsf{T}_2,\mathsf{T}_3$) and does not expose the simulator-only $\mathsf{Program}$ port to the
		environment. Define
		\[
		\FDKGstarNXK \ :=\ \WDKG \circ \FSDKG.
		\]}
\end{definition}

As usual, $\FDKGstarNXK$ makes no fairness guarantee: a corrupted party and/or the adversary-controlled
scheduler may delay or prevent completion (selective abort), and may even condition its decision to abort on
partial information about the (would-be) key.\footnote{Accordingly, while we can prove latent/marginal
	uniformity of the key prior to conditioning on completion, the distribution of the output key conditioned on
	completion may be biased via selective abort. This limitation is inherent in DKG/coin-flipping style tasks
	without guaranteed output delivery; see, e.g., Cleve~\cite{Cleve[86]}.}
Consequently, the uniformity guarantee we target is the conventional UC-DKG latent (marginal) uniformity (prior
to conditioning on completion), not uniformity conditioned on completion.

\begin{lemma}[Closure under interface restriction]\label{lem:sdkg-interface-refinement}
	Let $\WDKG$ be the fixed wrapper from Definition~\ref{def:fdkg-star-nxk} and let
	$\FDKGstarNXK := \WDKG \circ \FSDKG$.
	If a protocol $\mathrm{\Psi}$ UC-realizes $\FSDKG$ in some model $\mathcal M$, then $\mathrm{\Psi}$ UC-realizes
	$\FDKGstarNXK$ in the same model $\mathcal M$.
\end{lemma}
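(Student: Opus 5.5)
The plan is a standard wrapper-composition argument in the spirit of the proof sketch of Theorem~\ref{thm:dkg-implies-rvss}: given a simulator for $\FSDKG$, build one for $\FDKGstarNXK=\WDKG\circ\FSDKG$ by placing the wrapper on the ideal side and routing the suppressed ports to the simulator.

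Fix a PPT adversary $\cA$. By hypothesis there is a PPT simulator $\s$ such that $\Exec(\mathrm{\Psi},\cA,\cZ,\lambda)\approx_c\Ideal(\FSDKG,\s,\cZ,\lambda)$ for every PPT $\cZ$ in model $\mathcal M$.

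Define $\s'$ for $\FDKGstarNXK$ to be $\s$ run unchanged, with every message $\s$ addresses to $\FSDKG$'s adversary interface (the $\mathsf{Tin}_1/\mathsf{Tin}_2/\mathsf{Tin}_3$ feeds, the one-shot $\mathsf{Program}$, $\mathsf{RegGo}$, $\mathsf{Deliver}$, corruption requests) forwarded through $\WDKG$'s adversary-side port to the embedded $\FSDKG$.

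Key point to verify first: the wrapper of Definition~\ref{def:fdkg-star-nxk} only restricts the \emph{environment/party-facing} interface. It hides the transcript bookkeeping and the $\mathsf{Program}$ port from $\cZ$, but it must keep the adversary/simulator port to $\FSDKG$ intact. Otherwise $\FDKGstarNXK$ could never finalize, since \textsf{TryFinalize} is gated on $\mathsf{Program}$. I will read the definition this way, as "not exposed to the environment."

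Under that reading, the ideal execution $\Ideal(\FDKGstarNXK,\s',\cZ,\lambda)$ and $\Ideal(\FSDKG,\s,\cZ,\lambda)$ differ only in whether party-facing outputs pass through the deterministic relay $\WDKG$. The next step is to check that, on every party-facing message, $\WDKG$ is the identity. These messages are: $\mathsf{init}$; the outputs of $K$ to $P_1,P_2$; the $\KBcmd$ installation commands to the KeyBox-driver wrappers $\WKB^{(i)}$ (Remark~\ref{rem:kb-wrapper}); $\mathsf{RegReq}$, $\mathsf{approve}$, registration $\mathsf{Recv}$ deliveries, and $\mathsf{registered}$. The items that $\WDKG$ suppresses ($\mathsf{T}_i$ and $\mathsf{Program}$) are never delivered to parties by $\FSDKG$ in the first place, since they are written only on the adversary port. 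Hence the two ideal ensembles are identically distributed. Chaining with the hypothesis gives $\Exec(\mathrm{\Psi},\cA,\cZ,\lambda)\approx_c\Ideal(\FDKGstarNXK,\s',\cZ,\lambda)$.

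As an alternative route, I would also note a contrapositive: if some $\cZ$ distinguished the wrapped worlds, then $\cZ$ composed with a local copy of $\WDKG$ would distinguish the unwrapped ones, because $\WDKG$ is deterministic PPT. This is the same closure-under-post-processing argument as in Theorem~\ref{thm:dkg-implies-rvss}.

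The main obstacle is exactly this interface bookkeeping: confirming that nothing the protocol parties or $\cZ$ receive in the real world depends on a port that $\WDKG$ blocks. I will go through the list above and, in particular, check corruption. Upon $(\mathsf{Corrupt},i)$, $\FSDKG$ reveals $P_i$'s local state to the adversary. This message must go to $\s'$ (adversary side), not be filtered, so that $\s$'s simulation of adaptive corruptions is unchanged. Once this is settled, the rest is routine.
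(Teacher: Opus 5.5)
Your proof is correct, but it is organized differently from the paper's. The paper follows the generic post-processing template. It defines an adversary $\cA'$ that runs $\cA$ and applies the same filtering as $\WDKG$, invokes the hypothesis to get a simulator for $\cA'$ against $\FSDKG$, and composes that simulator with the filtering to obtain one for $\cA$ against $\FDKGstarNXK$.

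You instead reuse the simulator $\s$ for $\cA$ unchanged. You show that $\Ideal(\FDKGstarNXK,\s,\cZ,\lambda)$ and $\Ideal(\FSDKG,\s,\cZ,\lambda)$ are identically distributed. The reason is that everything $\WDKG$ hides (the $\mathsf{Tin}_i$ feeds, the slots $\mathsf{T}_i$, $\mathsf{Program}$) lives only on the simulator side. All party-facing traffic ($\mathsf{init}$, the output $K$, the $\KBcmd$ installation commands, the registration messages) is relayed untouched.

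Your route makes explicit something the paper leaves implicit. An adversary cannot filter what honest parties hand to $\cZ$, so the paper's $\cA'$ only makes sense if the suppressed ports are adversary-side. That is exactly the transparency condition you verify. Since $\mathrm{\Psi}$ is the same protocol on both sides, some such condition is unavoidable: a party output dropped by $\WDKG$ would let $\cZ$ distinguish. Your point that the simulator port must remain open (otherwise \textsf{TryFinalize} never fires) is likewise only implicit in the paper. The paper's version is shorter and parallels Theorem~\ref{thm:dkg-implies-rvss}.

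One caveat about your secondary, contrapositive route. Absorbing a copy of $\WDKG$ into $\cZ$ by itself only shows that $\WDKG\circ\mathrm{\Psi}$ realizes $\FDKGstarNXK$; in Theorem~\ref{thm:dkg-implies-rvss} the protocol is post-processed too. Getting back to $\mathrm{\Psi}$ itself again needs the transparency check, so that route is not an independent proof.
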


\begin{proofsketch}
	This is UC closure under efficient local post-processing.
	Fix any PPT adversary $\cA$ for $\mathrm{\Psi}$ when the ideal functionality is $\FDKGstarNXK$.
	Define an adversary $\cA'$ for $\mathrm{\Psi}$ when the ideal functionality is $\FSDKG$ that runs $\cA$
	and locally applies the same forwarding/output-filtering that $\WDKG$ applies between the
	environment and $\FSDKG$ (i.e., it suppresses exactly the bookkeeping outputs/ports hidden by $\WDKG$).
	By the hypothesis that $\mathrm{\Psi}$ UC-realizes $\FSDKG$, there exists a PPT simulator $\s'$ for $\cA'$ in the
	$\FSDKG$-ideal execution. Composing $\s'$ with the same local filtering yields a simulator for $\cA$
	in the $\FDKGstarNXK$-ideal execution. \qed
\end{proofsketch}

\begin{observation}[DKG properties captured by $\FDKGstarNXK$]
	\label{lem:fdkg-properties}
	\emph{For $\mathrm{\Gamma}_0 := \{\{P_1,P_2\},\{P_1,P_3\}\}$ and any session $\sid$ of $\FDKGstarNXK$, the following properties hold:
	\begin{enumerate}[leftmargin=*,label=(\roman*),nosep]
		\item NXK:
		No non-corruption interface ever outputs the secret scalar $k$ or any share-deriving plaintext
		(the pre-install adaptive-corruption window, in which retained $\sigma$-scalars may be revealed, is addressed
		in the proof of Theorem~\ref{thm:SDKG-UC}).
		All long-term shares remain confined to the parties' $\FKeyBox$ instances, and the adversary has only
		black-box access via the fixed admissible KeyBox profile.
		\item Uniqueness / consistency of the induced key:
		If the session completes (outputs $K$), then there exists a unique scalar $k\in\Zp$ such that
		$K=k\G$, and the KeyBox-installed shares are consistent with a single global key under $\mathrm{\Gamma}_0$
		(i.e., the two authorized sets induce the same $k$). Concretely, the accepting transcript uniquely
		determines $x_1,x_2$ and hence $k=(3x_1-2x_2)\bmod p$ (Lemma~\ref{lem:SDKG-key-transcript}).
		\item Secrecy against unauthorized corruption sets:
		For any corruption set $B\subseteq\{P_1,P_2,P_3\}$ with $B\notin\mathrm{\Gamma}$, the functionality does not enable
		recovery of $k$: the adversary never obtains enough long-term shares to reconstruct $k$, and any leakage from
		interacting with corrupted parties' KeyBoxes is limited to the admissible KeyBox profile and hence
		is simulatable from public information under key-opacity (Assumption~\ref{assump:keybox-opacity}).
		\item Latent (marginal) uniformity (no fairness):
		As usual for UC-DKG, no fairness is guaranteed; however, if at least one of the designated pad scalars is honestly
		sampled, then the resulting public key $K$ is uniform in $\mathbb G$ marginally (prior to conditioning on completion)
		(Lemma~\ref{lem:sdkg-latent-unif}).
	\end{enumerate}}
\end{observation}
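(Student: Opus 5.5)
We verify the four items of the observation one at a time. Each follows from the definition $\FDKGstarNXK:=\WDKG\circ\FSDKG$ together with results already established: the structure of \textsf{TryFinalize}, Observation~\ref{obs:sec7}, key-opacity (Assumption~\ref{assump:keybox-opacity}) and Lemma~\ref{lem:sdkg-latent-unif}. The wrapper only filters, so every property of the outputs and KeyBox effects of $\FSDKG$ carries over unchanged. We therefore argue about $\FSDKG$ and then note that $\WDKG$ removes only the bookkeeping slots and the \textsf{Program} port.

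For (i), we list the non-corruption interfaces of $\FSDKG$: the output of $K$, the \textsf{registered} event, length leakage, and the $\KBcmd$ messages. The $\KBcmd$ messages travel only over the internal wrapper port to honest parties' KeyBoxes (Remark~\ref{rem:kb-wrapper}). The registration payloads consist of $\Enc$ ciphertexts under $\pk_{\mathrm{seal}}^{(P_3)}$, and these are caller-decryptable only inside $\FKeyBox^{(3)}$. None of these outputs is a share-deriving value in the sense of Remark~\ref{rem:transport-vs-nxk}. The residual $\sigma$-exposure arises only through the corruption interface before installation, and we defer that window to the proof of Theorem~\ref{thm:SDKG-UC}, as the statement itself does.

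For (ii), finalization requires the programmed values to satisfy $X_1=x_1\G$ and $X_2=x_2\G$. Since $z\mapsto z\G$ is a bijection, $x_1$ and $x_2$ are unique given the accepted transcript. Hence $k=k_{1,2}+k_2=3x_1-2x_2$ is unique, and the \textbf{C6} check gives $K=k\G$. To get consistency of the second authorized set, we use the definition $k_3:=k-k_{1,3}$ in \textsf{TryFinalize}, so $k_{1,3}+k_3=k$. We then check that $g_3$, evaluated on $(\sigma_{2,3},\sigma_{3,2},\sigma_{3,1})$ with $\sigma_{2,3}=k_3/2-2\sigma_{3,1}+\sigma_{3,2}$, returns exactly $k_3$. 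This is a one-line algebra step, and the $g_3$ check $K_{1,3}+K_3=K$ then passes. The $g_{1,2}$, $g_{1,3}$ and $g_2$ routines recompute $3x_1$, $2\sigma_{1,3}-x_1$ and $-2x_2$ from the same $\sigma$-tuples.

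For (iii), we enumerate the unauthorized sets: $\emptyset$, $\{P_1\}$, $\{P_2\}$, $\{P_3\}$ and $\{P_2,P_3\}$. In each case the long-term slots available to the adversary span at most one side of an authorized pair. For $\{P_2,P_3\}$ we argue that $k_2$ and $k_3$, together with the sponsor slots, do not determine $k$ without $k_{1,2}$ or $k_{1,3}$. Access to those slots is only black-box via the profile, so by key-opacity and slot separability (Remark~\ref{rem:key-opacity-multislot}) everything the adversary sees can be simulated from the public keys alone. A reduction to DL then rules out recovering $k$. This item is the main obstacle: we must be careful that $\{P_2,P_3\}$ holding both leaf slots together with $K$ and the public $K_{1,3}$ yields only public points, and not $k_{1,3}$ itself.

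For (iv), we simply invoke Lemma~\ref{lem:sdkg-latent-unif}, remarking that the transcript-forced key coincides with $\hat k$ and that no fairness claim is made.
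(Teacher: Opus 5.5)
Your proposal is correct and follows the same item-by-item route as the paper's sketch: the NXK/$\FKeyBox$ model for (i), transcript uniqueness via the bijection $z\mapsto z\G$ plus the deterministic installation logic for (ii), key-opacity for (iii), and Lemma~\ref{lem:sdkg-latent-unif} for (iv); you simply spell out more of the algebra and add an explicit DL-reduction step. One detail to make explicit in (ii): the $g_3$ check needs the public point $K_{1,3}=2(M_1+B_1)-X_1$ to equal $k_{1,3}\G$, which does not follow from $k_3:=k-k_{1,3}$ alone but from check \textbf{C3} ($\sigma_{1,2}\G=M_1+3B_1$) together with the \textsf{TryFinalize} requirement $D_1=\sigma_{3,1}\G$ (which gives $\sigma_{1,1}\G=Y_1=M_1+2B_1$).
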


\begin{proofsketch}
	Item (i) follows from the NXK/$\FKeyBox$ model (Definition~\ref{KeyBox} and Remark~\ref{rem:transport-vs-nxk}).
	Item (ii) is Lemma~\ref{lem:SDKG-key-transcript} together with the deterministic KeyBox-installation logic
	in $\FSDKG$ / $\FDKGstarNXK$.
	Item (iii) is exactly the intended DKG secrecy statement under $\mathrm{\Gamma}$ in the NXK model:
	unauthorized sets lack a reconstructing share set, and admissible KeyBox interactions do not reveal
	share-deriving plaintexts under key-opacity.
	Item (iv) is Lemma~\ref{lem:sdkg-latent-unif}. \qed
\end{proofsketch}

For the star access structure $\mathrm{\Gamma}_0=\{\{P_1,P_2\},\{P_1,P_3\}\}$, in the $(\FKeyBox,\Fchan,\Fpub)$-hybrid model,
$\mathcal{W}_{\mathsf{DKG}}\circ \FSDKG$ matches the standard NXK-DKG semantics for $\mathrm{\Gamma}_0$ in the following concrete sense:
\begin{enumerate}
	\item Latent (marginal) uniformity under one honest pad:
	Define the transcript-derived candidate key $\widehat K := 3X_1-2X_2$
	whenever $X_1$ and $X_2$ are defined (cf. Algorithm~\ref{alg:AccSDKG}, Step~\textbf{D4}).
	If at least one of $\sigma_{3,1}$ or $\sigma_{3,2}$ is sampled by an honest party, then
	$\widehat K$ is uniform in $\mathbb G$, marginally over that honest pad randomness (i.e., prior to conditioning on session completion)
	(Lemma~\ref{lem:sdkg-latent-unif}).
	\item Consistency with a single key: Acceptance implies there exist unique transcript-defined $x_1,x_2\in\Zp$ and thus a unique
	$k\in\Zp$ such that $K=k\G$ and the KeyBox-installed shares are consistent with $k$ under $\mathrm{\Gamma}_0$
	(Lemma~\ref{lem:SDKG-key-transcript} and the definition of the installed slots in Fig.~\ref{fig:FSDKG}).
	\item Non-exportability: All long-term shares remain confined to $\FKeyBox$ by construction, and share-deriving material are NXK-restricted
	(Remark~\ref{rem:transport-vs-nxk}).
\end{enumerate}
As usual, no fairness guarantee is made: a corrupted party may selectively abort after learning enough to decide whether to proceed.

\begin{remark}[On transcript-driven finalization]
	\label{rem:why-transcript-driven-not-overfit}
	The simulator-only \textsf{Program} interface in $\FSDKG$ is a proof device used to handle
	adaptive corruptions with straight-line extraction: it delays committing to certain
	internal scalars until the simulator can derive them from the public transcript.
	\textsf{Program} does not give the ideal world extra freedom to choose outputs:
	finalization is gated by deterministic transcript checks, and in any accepting transcript
	the induced values are uniquely determined (Lemma~\ref{lem:SDKG-key-transcript}).
	The wrapper $\WDKG$ hides \textsf{Program} and all transcript slots from applications, yielding the
	standard NXK-DKG interface $\FDKGstarNXK$.
\end{remark}

\subsection{Formal necessity of USV for commit-only alternatives under hardened profiles}
\label{subsec:why-usv-straightline}
Section~\ref{subsec:usv-need} gave the short design-space motivation; this section
provides the formal obstruction and proof in the NXK/$\FKeyBox$ + \gROCRP\ model.

\noindent\emph{Scope.}
The concrete $\textsf{SDKG.LeafInit}$ of Section~\ref{SDKG}
(Fig.~\ref{fig:SDKGLeafInit}) returns $\sigma$-values from which $m_2$ is
host-recoverable (cf.\ the profile convention in Section~\ref{SDKG}).
In that profile, the host could in principle compute and publish $M_2=m_2\G$
directly (Option~1 below). The necessity argument below addresses a
design-space question: whether USV can be replaced by a plain hiding commitment
whose adversary-visible footprint is only $C_2=\Commit(m_2;r_2)$, without any
additional public material that determines $M_2$. This is relevant both to the
hypothetical hardened profile in which the KeyBox does not return share-deriving
$\sigma$-values at all, and---independently---to any deployment where the
host-recovered $m_2$ is erased before the transcript is finalized and no
equivalent opening-to-$\mathbb{G}$ material has been published.

The concrete SDKG base run already transmits a full USV instance $(C_2,\zeta_2)$, and hence
$M_2$ is deterministically computable from the transcript as $M_2=\Open_M(\pp,C_2,\zeta_2)$.
The purpose of this section is to address a natural alternative design in which one attempts to replace USV by a generic hiding commitment
to $m_2$---that is, a commit-only transcript that contains $C_2$ but no public material that fixes $m_2\G$---and then
tries to recover $M_2$ by straight-line extraction of $m_2$ under NXK/$\FKeyBox$. We show this fails unless the model
is strengthened or hiding is broken. Equivalently, in the NXK setting one must provide either explicit $M_2$
or USV-style public opening material that deterministically fixes $M_2$ without exporting $m_2$.

\subsubsection{What must be transcript-defined (and why).}
\label{subsec:design-space-M2}
In the SDKG base run, both honest verification and the UC simulator must be able
to compute certain group elements as deterministic functions of the public
transcript, in straight-line. Concretely, verification (Algorithm~\ref{alg:AccSDKG})
and transcript-driven programming of $\FSDKG$ (Fig.~\ref{fig:FSDKG}) require forming the
leaf-dependent auxiliary point
\[
Y_2 := M_2 + 3B_2 \qquad\text{where}\qquad M_2 := m_2\G,
\]
for a leaf scalar $m_2\in\Zp$ whose group image $M_2=m_2\G$ does not appear in the adversary-visible transcript $\tau_{\mathsf{pub}}$.
Equivalently, the transcript must determine $M_2$ via a deterministic PPT map
(an ``opening-to-$\mathbb G$'' map), so that $Y_2$ and the subsequent affine auxiliaries
are well-defined from the transcript alone.

\paragraph{Design options under NXK/$\FKeyBox$.}
Under our KeyBox/\gROCRP\ model, there are essentially three ways to make $M_2$
transcript-defined:

\begin{enumerate}
\item Publish $M_2$ directly:
If the leaf can compute $M_2=m_2\G$ outside the KeyBox, then it can simply transmit $M_2$.
This trivially makes $Y_2$ transcript-defined and removes the need for USV.
We treat this as a different design point: it bypasses the commit-shaped transcript setting and assumes the leaf
has a way to compute and publish $m_2\G$ without relying on straight-line extraction from a KeyBox-internal proof.
Our focus is the hardened profile-centric setting in which $m_2$ is generated inside the KeyBox/profile-adapter
and the admissible profile does not expose a generic ``export $m\G$'' interface for fresh ephemeral scalars. For example, when the admissible KeyBox profile corresponds to a cloud KMS role that permits signing/derivation but denies public-key retrieval~\cite{aws-kms-getpublickey,gcp-kms-roles-perms,azure-kv-keys-details}. In that
setting, an alternative attempt using only a hiding commitment (commit-only transcript) would fail.
(In the concrete $\textsf{SDKG.LeafInit}$ profile of Section~\ref{SDKG},
the returned $\sigma$-values do allow host-side recovery of $m_2$ and hence of
$m_2\G$; USV is retained so the transcript format is self-certifying without
relying on any host-side scalar-to-group computation.)

	\item Commit to $m_2$ and extract via an opening-AoK: One could have the leaf
	publish $C_2\gets\mathsf{Commit}(m_2;r_2)$ and an AoK $\pi_{\mathrm{open}}$ of an opening,
	then let the verifier/simulator extract $m_2$ and set $M_2=m_2\G$. In our setting this
	fails in straight-line: witness-bearing computation may be delegated to a state-continuous
	KeyBox (Assumption~\ref{assump:tee-continuity}), so rewinding/forking is unavailable; and
	straight-line extraction for our Fischlin-style UC-NIZK-AoKs relies on access to the prover's
	\gROCRP\ query log in the proof context (Remark~\ref{rem:oracle-tape}), which is hidden by
	local-call semantics (Definition~\ref{def:GNPRO}). Forcing $\pi_{\mathrm{open}}$ outside the KeyBox restores observability but requires the leaf to
	materialize $(m_2,r_2)$, or an equivalent caller-invertible affine image sufficient to derive
	$M_2$, in its non-KeyBox state, which contradicts the hardened/minimal-profile design point
	of this section: $m_2$ is not available outside the KeyBox/profile-adapter and the admissible
	profile exposes no $\mathsf{export}\text{-}m_2$ or $\mathsf{export}\text{-}(m_2\G)$ interface\footnote{If one allows such materialization, one can instead publish $M_2=m_2\G$ directly as mentioned previously.}.	Lemma~\ref{lem:commit-only-no-M2} formalizes this obstruction.
	
	\item Add public opening material that deterministically fixes $M_2$ without exporting $m_2$:
	in the hardened profile-centric setting, the leaf runs $(C_2,\zeta_2)\gets\Cert(\pp,m_2)$ inside the KeyBox boundary
	and releases only $(C_2,\zeta_2)$; everyone derives $M_2=\Open_M(\pp,C_2,\zeta_2)$ outside. USV provides this and makes $Y_2$
	(and the induced transcript-defined auxiliaries and shares) well-defined from the transcript,
	enabling both honest verification and straight-line UC simulation/programming.
\end{enumerate}

Hence, the SDKG verification predicate and the UC proof need $M_2$ (hence $Y_2$) to be
a deterministic function of the public transcript, while neither $m_2$ nor $m_2\G$ appears in $\tau_{\mathsf{pub}}$ (the $\sigma$-values from which $m_2$ is host-recoverable travel only over $\Fchan$ and are therefore transcript-private; cf.\ the profile convention in Section~\ref{SDKG}), and the KeyBox/local-call semantics prevent straight-line extraction of $m_2$ from a KeyBox-internal opening-AoK.
USV resolves this by providing a public ``opening-to-$\mathbb G$'' map
$M_2=\Open_M(\pp,C_2,\zeta_2)$ that is transcript-defined yet does not violate KeyBox's confidentiality assumptions. Our subsequent transcript analysis is stated in terms of $\Open_M(\pp,C_2,\zeta_2)$ rather
than in terms of extracted witnesses.

In our SDKG verification predicate (Algorithm~\ref{alg:AccSDKG}) and in the transcript-driven idealization (Fig.~\ref{fig:FSDKG}),
the value
\[
Y_2 := M_2 + 3B_2
\]
must be computable in straight-line from the transcript-visible view, where $B_2$ is public and $M_2:=m_2\G$ is induced by the leaf’s hidden scalar.
Accordingly, any design that replaces USV with a commit-only transcript must still enable a PPT straight-line derivation of $Y_2$ from the public transcript.

\begin{lemma}[Commit-only transcripts cannot define $M_2$ under NXK/$\FKeyBox$]
	\label{lem:commit-only-no-M2}
	Fix the NXK/$\FKeyBox$ setting in the \gROCRP\ model. Let $\mathsf{Com}=(\Commit,\Open)$ be a computationally
	hiding commitment scheme for messages in $\Zp$. Consider any protocol variant in which the leaf's \emph{transcript-visible} contribution contains only a commitment
	$C_2:=\Commit(m_2;r_2)$ to $m_2\in\Zp$ and a public point $B_2\in\mathbb G$, and the transcript contains no additional public material from which
	$M_2:=m_2\G$ is deterministically computable. Let $\tau_{\mathsf{pub}}$ denote the transcript-visible view.

	Assume further that the protocol variant satisfies \emph{transcript sampleability}:
	there exists a PPT algorithm $\mathsf{Sample}(\pp,C_2)$ that, given only public parameters
	and a commitment $C_2$, outputs $\tau_{\mathsf{pub}}^{\setminus C_2}$ with a distribution
	computationally indistinguishable from the real conditional distribution of
	$\tau_{\mathsf{pub}}^{\setminus C_2}$ given $C_2$ in an honest execution\footnote{This condition
	holds for commit-only variants where the non-$C_2$ transcript fields are generated independently
	of $m_2$, or where all $m_2$-dependence passes through zero-knowledge proofs or UC-NIZK
	instances for which a universal simulation interface is available.}.

	Suppose there exists a PPT straight-line algorithm $\Derive_Y$ such that
	\[
	\Pr\big[\Derive_Y(\tau_{\mathsf{pub}})= m_2\G + 3B_2\big]
	\]
	is non-negligible (over the protocol coins and $\Derive_Y$'s coins), where $\Derive_Y$ is given only $\tau_{\mathsf{pub}}$ together with black-box access to corrupted KeyBoxes via admissible profiles. Then the hiding property of $\mathsf{Com}$ is violated.
\end{lemma}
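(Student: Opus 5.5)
We will prove the lemma by a direct reduction to the hiding game of $\mathsf{Com}$. Suppose $\Derive_Y$ succeeds with non-negligible probability $\epsilon(\lambda)$. We build a hiding adversary $\mathcal B$ as follows. It samples two uniformly random messages $m^{(0)},m^{(1)}\leftarrowdollar\Zp$ and submits them to the hiding challenger. It receives $C^\star=\Commit(m^{(\beta)};r)$. It then samples $b_2\leftarrowdollar\Zps$ and sets $B_2:=b_2\G$. (If the protocol fixes $B_2$ differently, $B_2$ is a transcript field produced by $\mathsf{Sample}$ and $\mathcal B$ reads it off.) Next it runs $\mathsf{Sample}(\pp,C^\star)$ to obtain $\tau_{\mathsf{pub}}^{\setminus C_2}$ and assembles $\tau_{\mathsf{pub}}$ with $C_2:=C^\star$. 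It runs $\Derive_Y(\tau_{\mathsf{pub}})$ to get $\widehat Y$ and computes $\widehat M:=\widehat Y-3B_2$. It outputs $\beta'=0$ if $\widehat M=m^{(0)}\G$, outputs $\beta'=1$ if $\widehat M=m^{(1)}\G$, and outputs a random bit otherwise.

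Black-box access to corrupted KeyBoxes needs separate handling. We will have $\mathcal B$ answer those queries with the key-opacity simulators of Assumption~\ref{assump:keybox-opacity}. We apply these slot-wise as in the proof of Proposition~\ref{prop:nxk-resharing-impossible-updated} and Remark~\ref{rem:key-opacity-multislot}, so that only $\PubMap$ values, and not $m_2$, are needed. Key-independent operations such as $\textsf{USV.Cert}$ and $\textsf{SDKG.LeafInit}$ are simulated without resident keys. This incurs a $\negl(\lambda)$ hybrid loss.

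The analysis is a three-hybrid argument. First, in the real execution $\Derive_Y$ succeeds with probability $\epsilon$. Second, we replace the non-$C_2$ part of the transcript by $\mathsf{Sample}(\pp,C_2)$ and the KeyBox answers by the opacity simulators. Transcript sampleability and key-opacity show that the success probability changes by at most $\negl(\lambda)$, since success is an efficiently checkable event once $m_2$ is known to the hybrid. Third, in $\mathcal B$'s simulation with $\beta$ fixed, the view of $\Derive_Y$ is exactly this second hybrid with $m_2=m^{(\beta)}$. Consequently $\widehat M=m^{(\beta)}\G$ with probability at least $\epsilon-\negl$. Meanwhile $m^{(1-\beta)}$ is independent of the view, so $\widehat M=m^{(1-\beta)}\G$ with probability at most $1/p$, because $m\mapsto m\G$ is a bijection. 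It follows that $\Pr[\beta'=\beta]\ge \tfrac12+\tfrac12(\epsilon-\negl-1/p)$, which is a non-negligible hiding advantage and contradicts hiding.

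The main obstacle is the second hybrid. Transcript sampleability is stated only for $\tau_{\mathsf{pub}}^{\setminus C_2}$ given $C_2$. We must argue that the joint distribution with the simulated KeyBox oracle answers stays indistinguishable. The answers to corrupted KeyBoxes concern slots whose public keys may themselves depend on the base run. We will handle this by noting that the KeyBox slot contents relevant here are either independent of $m_2$ or pass through $\PubMap$ images that are either part of $\tau_{\mathsf{pub}}$ or, by hypothesis, absent. Any $\PubMap$ value depending on $m_2$ would be public material determining $M_2$, which the commit-only hypothesis excludes. We will treat that observation carefully, since it is where the hypothesis ``no additional public material fixing $M_2$'' is genuinely used. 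We will also invoke the locality of \gROCRP\ via Remark~\ref{rem:grocrp-domain-sep}, so that $\mathcal B$ can answer oracle queries by lazy sampling without any programming.
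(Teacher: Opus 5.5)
Your core reduction is the paper's Case~1: embed the hiding challenge as $C_2$, fill in $\tau_{\mathsf{pub}}^{\setminus C_2}$ with $\mathsf{Sample}(\pp,C^\star)$, run $\Derive_Y$, subtract $3B_2$, and compare the result with the images of the challenge messages. Your bookkeeping there is tighter than the paper's. Because you draw $m^{(0)},m^{(1)}$ from the protocol's distribution of $m_2$, you can bound the false-positive event $\widehat M=m^{(1-\beta)}\G$ by $1/p$. The paper's fixed-message test (``$b'=0$ iff $M^\star=m_0\G$'') controls neither that term nor the success probability at the particular $m_0$. You should state two things explicitly. First, you are using high min-entropy of $m_2$; the statement fails without it, e.g.\ for constant $m_2$. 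Second, your second hybrid needs sampleability to hold against distinguishers that also know $m_2$, since otherwise the success event cannot be tested there. The paper sidesteps the second point by treating $\mathsf{Sample}$'s output as exactly distributed.

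The genuine gap is how you handle $\Derive_Y$'s black-box access to corrupted KeyBoxes. The paper never simulates that oracle inside the reduction. It splits off a Case~2 and argues directly that the KeyBox route to $M_2$ is unavailable, using state continuity, local-call hiding of KeyBox oracle logs, and the hardened-profile premise. It then runs the hiding reduction only for transcript-only derivations. You instead answer the oracle with slot-wise key-opacity simulators, which need $\PubMap(k_{i,\mu})$ for every queried slot. Your justification is that any $\PubMap$ value depending on $m_2$ would be public material determining $M_2$. That conflates dependence with determination. Consider a corrupted slot holding $m_2+s$ for an honest mask $s$, installed say via $\SealToPeer$/$\OpenFromPeer$. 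Its public key $M_2+s\G$ does not fix $M_2$ and need not lie in $\tau_{\mathsf{pub}}$. It is also not excluded by the commit-only hypothesis, which constrains the transcript, not KeyBox public keys. $\mathcal B$ can supply such a value only if it can be sampled jointly with $\tau_{\mathsf{pub}}^{\setminus C_2}$ given $C^\star$. That holds in special cases (e.g.\ $s$ uniform with $s\G$ appearing nowhere else) but not in general. The stated sampleability hypothesis covers only $\tau_{\mathsf{pub}}^{\setminus C_2}$, so it does not give you this. To close the argument you need an additional hypothesis. One option is that $\mathsf{Sample}$ also outputs the public keys of all corrupted slots, which is natural since $\GetPub$ is admissible. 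Another is that corrupted KeyBox contents are computable from $\mathcal B$'s simulated view without $m^{(\beta)}$, in which case $\mathcal B$ just runs them. Failing that, adopt the paper's case split and confine the reduction to transcript-only derivations.
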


\begin{proof}
	Assume for contradiction that there exists a PPT straight-line simulator $\s$ that
	outputs the correct $Y_2$ with non-negligible probability.
	
	Let $\tau_{\mathsf{pub}}$ denote the \emph{transcript-visible} portion of an execution
	(\textit{Reader Note}~\ref{box:export-visibility}), i.e., everything outside honest KeyBoxes and outside
	authenticated confidential channels.
	In the \emph{commit-only} design point of this lemma, the leaf’s only $\tau_{\mathsf{pub}}$-dependence on $m_2$
	is through $C_2=\Commit(m_2;r_2)$ (by hypothesis: the transcript contains no additional public material that
	deterministically fixes $M_2=m_2\G$).
	By the transcript-sampleability hypothesis, the remaining public fields $\tau_{\mathsf{pub}}^{\setminus C_2}$ admit a PPT sampler
	$\mathsf{Sample}(\pp,C_2)$ that outputs $\tau_{\mathsf{pub}}^{\setminus C_2}$ distributed as in a real execution
	conditioned on the given $C_2$.
	
	\smallskip
	\noindent Case 1 (transcript-only derivation):
	Assume $\s$’s output $Y_2$ is determined by $\tau_{\mathsf{pub}}$ alone (i.e., it does not rely on extracting $m_2$
	from a KeyBox-internal interaction). Formally, fix the induced PPT map
	$\Derive_Y(\tau_{\mathsf{pub}})$ that outputs the same $Y_2$ value that $\s$ outputs on public view
	$\tau_{\mathsf{pub}}$.
	Since $B_2$ is public, define
	\[
	\Derive_M(\tau_{\mathsf{pub}})\ :=\ \Derive_Y(\tau_{\mathsf{pub}})\ -\ 3B_2 \ \in \mathbb G.
	\]
	By the non-negligible-success hypothesis,
	$\Derive_Y(\tau_{\mathsf{pub}})=Y_2=M_2+3B_2$, and thus $\Derive_M(\tau_{\mathsf{pub}})=M_2=m_2\G$,
	with non-negligible probability.
	
	\noindent
	We now build a hiding adversary $\mathcal B$ against $\mathsf{Com}$.
	On input a hiding challenge $C^\star=\Commit(m_b;r)$ for $b\in\{0,1\}$ (for chosen distinct messages $m_0,m_1\in\Zp$),
	$\mathcal B$ samples $\tau_{\mathsf{pub}}^{\setminus C_2}\leftarrow \mathsf{Sample}(\pp,C^\star)$ and sets
	$\tau_{\mathsf{pub}} := (C_2:=C^\star,\tau_{\mathsf{pub}}^{\setminus C_2})$.
	It computes $M^\star := \Derive_M(\tau_{\mathsf{pub}})$ and outputs $b'=0$ iff $M^\star=m_0\G$.
	Because $\tau_{\mathsf{pub}}$ is distributed as a real public transcript conditioned on $C_2=\Commit(m_b;r)$,
	the above correctness implies $\Pr[b'=b]\ge \tfrac12+\epsilon(\lambda)$ for some non-negligible $\epsilon$,
	contradicting computational hiding of $\mathsf{Com}$.
	
	\smallskip
	\noindent Case 2, $\s$ obtains $m_2$ (or $M_2$) by extraction from an opening-AoK:
	To avoid Case~1, the only generic route is to require an AoK $\pi_{\mathrm{open}}$ of an opening $(m_2,r_2)$ and extract
	$m_2$ to set $M_2=m_2\G$. If $\pi_{\mathrm{open}}$ is generated outside the KeyBox so that standard extraction applies, then
	the leaf must first obtain $(m_2,r_2)$, or any caller-invertible affine image of $m_2$ sufficient to
	compute $M_2$, in its \emph{non-KeyBox} state. This contradicts the hardened/minimal-profile premise
	of this subsection, in which $m_2$ is generated inside the KeyBox/profile-adapter and no admissible
	API exports $m_2$ (or $m_2\G$) in the clear\footnote{Again, if such materialization is allowed, the protocol can
	instead publish $M_2=m_2\G$ directly.}. If instead $\pi_{\mathrm{open}}$ is generated inside the KeyBox to preserve NXK,
	then straight-line extraction for our Fischlin-based UC-NIZK(-AoK) mechanisms requires access to the prover's \gROCRP\
	query log in the relevant proof context (Remark~\ref{rem:oracle-tape}), which is hidden by local-call semantics
	(Definition~\ref{def:GNPRO}); and Assumption~\ref{assump:tee-continuity} rules out rewinding/forking the KeyBox.
	Thus $\s$ cannot extract in straight-line in the NXK/$\FKeyBox$ model.
	
	\smallskip \noindent
	Combining the cases, Case~2 is ruled out in the NXK/$\FKeyBox$ model, so any successful straight-line derivation of $Y_2$ reduces to Case~1 and yields a non-negligible break of computational hiding for $\mathsf{Com}$. Therefore, an additional public-opening mechanism that
	deterministically maps commitment material to $M_2$ (e.g., USV or an explicit $M_2$) is necessary. \qed
\end{proof}

The above lemma is specific to the NXK/$\FKeyBox$ execution model, where local-call semantics hide the KeyBox's \gROCRP\ oracle-query tape (Remark~\ref{rem:oracle-tape}) and Assumption~\ref{assump:tee-continuity} forbids rewinding/forking a KeyBox. If one strengthens the model to permit such extraction capabilities, then the commit-only design may instead recover $M_2$ via the opening-AoK route discussed in Case~2 of the proof, and Lemma~\ref{lem:commit-only-no-M2} no longer applies.

\subsection{Main Theorem}
In $\FSDKG$, the state $\mathsf{Q}_{\mathsf{reg}}, \mathsf{D}_{\mathsf{reg}}$ and the (\textsf{Leak}/\textsf{Deliver}/\textsf{Recv}) scheduling are an inlined instance of $\Fchan$ (Fig. \ref{fig:Fchan}), specialized to the two registration payloads with leakage $\ellregone, \ellregtwo.$

\begin{lemma}[Transcript uniquely determines the key]\label{lem:SDKG-key-transcript}
	Let $\TSDKG$ be a base-run (1+1-out-of-3) transcript such that $\AccSDKG(\sid,P_2,P_1,\TSDKG)=1$, and let $M_2,Y_1,D_1,Y_2,D_2,\widehat K$ be the
	values derived by $\AccSDKG$ from $\TSDKG$. Then:
	\begin{enumerate}[label=(\roman*)]
		\item There exist unique $x_1,x_2\in\Zp$ such that $X_1=x_1\G$ and $X_2=x_2\G$.
		\item Except with negligible probability in $\lambda$, the (straight-line) AoK extractors
		applied to the verifying DL subproofs in $\pi_{\aff_1}=(\pi_{Y_1},\pi_{D_1})$ and
		$\pi_{\aff_2}=(\pi_{Y_2},\pi_{D_2})$ recover witnesses $\alpha_i,\delta_i\in\Zp$ such that
		$Y_i=\alpha_i\G$ and $D_i=\delta_i\G$ for $i\in\{1,2\}$, and in particular determine
		\[
		x_1 \;=\; \sigma_{2,1}+\alpha_1+\delta_1 \bmod p,\qquad
		x_2 \;=\; \sigma_{1,2}+\alpha_2+\delta_2 \bmod p.
		\]
		\item Except with negligible probability in $\lambda$,
		$\AccSDKG(\sid, P_2, P_1, \TSDKG)=1$ implies that $D_2$ equals the unique group element
		committed via $h_{3,2}$ in the non-programmable digest context $\CtxSDKG$.
	\end{enumerate}
\end{lemma}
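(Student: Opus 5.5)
The plan is to treat the three items separately. Items (i) and (iii) follow from group structure and random-oracle properties. Item (ii) is where the straight-line extraction guarantees of Lemma~\ref{lem:gnpro-uc-nizk} do the work, followed by a short algebraic identity.

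For (i), $\mathbb G$ has prime order $p$ with generator $\G$, so $z\mapsto z\G$ is a bijection $\Zp\to\mathbb G$. Hence any group elements $X_1,X_2$ that parse successfully in $\TSDKG$ have unique discrete logs $x_1,x_2$. Acceptance is used here only to guarantee that parsing succeeded, since \textbf{D1} demands $M_2\neq\perp$.

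For (ii), acceptance gives, via \textbf{C3} and \textbf{C4}, four verifying tagged DL proofs under $\mathsf{ctx}_{\mathsf{UC}}$ on the statements $\DLstmt{\sid}{\LblAffOneY}{Y_1}$, $\DLstmt{\sid}{\LblAffOneD}{D_1}$, $\DLstmt{\sid}{\LblAffTwoY}{Y_2}$ and $\DLstmt{\sid}{\LblAffTwoD}{D_2}$.
\begin{itemize}
\item For each proof produced by an adversary-controlled host prover, I would invoke item (ii) of Lemma~\ref{lem:gnpro-uc-nizk} with the prover's logged $\Query$ transcript (Remark~\ref{rem:oracle-tape}). This yields a witness except with error $\varepsilon_{\mathsf{sec}}(Q)$, which is negligible by Lemma~\ref{lem:fischlin-negl}.
\item For each proof produced honestly, the prover's witness already exists, namely $\sigma_{1,1},\sigma_{3,1},\sigma_{2,2},\sigma_{3,2}$.
\item When simulated proofs are present, I would instead cite simulation-extractability (item (iii)). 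The tags $(\sid,\ell)$ make adversarial statements fresh relative to any simulated statement.
\end{itemize}
A union bound over the four proofs keeps the total failure probability negligible. Uniqueness of the extracted values again follows from the bijection. The identity then comes from the definitions in \textbf{D2}: $D_1=X_1-\sigma_{2,1}\G-Y_1$ gives $X_1=(\sigma_{2,1}+\alpha_1+\delta_1)\G$, and by (i) this yields $x_1=\sigma_{2,1}+\alpha_1+\delta_1$. The same argument applied to $D_2=X_2-\sigma_{1,2}\G-Y_2$ gives the formula for $x_2$.

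For (iii), \textbf{C5} asserts $h_{3,2}=H_{\mathrm{s32}}(\langle\sid,\cid_2,D_2\rangle)$ in the context $\CtxSDKG\in\CtxTEE$. By Lemma~\ref{lem:grocrp-noninterference} this oracle is strict and non-programmable, even though the simulator programs other contexts. Consider the element committed via $h_{3,2}$, meaning the input first queried to produce $h_{3,2}$; in an honest execution this is $\sigma_{3,2}\G$. If $D_2$ differed from it, injectivity of $\langle\cdot\rangle$ would make the two inputs distinct, so we would have a collision or second preimage on $H_{\mathrm{s32}}$. Given $Q$ queries, the standard bound makes this probability at most $(Q^2+Q+1)/2^\lambda$. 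This is the same argument as in the proof of Lemma~\ref{lem:hb-nm}.

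I expect the main obstacle to be item (ii) under concurrency and adaptive corruptions. A proof may be produced by an honest party, in which case the simulator has no oracle log, or by a party that is corrupted mid-session. The plan is to condition on the prover's status at the moment the proof is emitted. Honest-at-emission proofs carry a known witness in the real world, and in the ideal world the simulator knows its programmed values. Adversarial proofs are handled by log-based extraction, with simulation-extractability ruling out mauling of simulated proofs.
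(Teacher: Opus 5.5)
Your proposal is correct and follows the paper's own proof. Item (i) is the bijection $z\mapsto z\G$. Item (ii) is straight-line AoK extraction combined with the \textbf{D2} identities. Item (iii) is a collision/second-preimage argument in the non-programmable context $\CtxSDKG$.

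Two of your steps are more explicit versions of what the paper does. Your honest-versus-adversarial prover split in (ii) matches how the paper handles provers in the proof of Theorem~\ref{thm:SDKG-UC}. Your first-query definition of the committed element in (iii) absorbs the paper's separate ``fresh guess'' case into the $Q$ and $+1$ terms of your bound.
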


\begin{proofsketch}
	Item (i) is unconditional because $\mathbb G$ is cyclic of prime order with generator $\G$, making $z\mapsto z\G$ a bijection on $\Zp$. Because $\AccSDKG(\sid, P_2, P_1, \TSDKG)=1$ (Algorithm~\ref{alg:AccSDKG}),
	the DL verifications for $Y_1,D_1,Y_2,D_2$ all accept. By the AoK property
	of $\mathrm{\Pi}_{\mathsf{DL}}$, except with negligible extraction error the straight-line extractor outputs
	$\alpha_i,\delta_i$ such that $Y_i=\alpha_i\G$ and $D_i=\delta_i\G$ for $i\in\{1,2\}$.
	From the definitions $Y_1:=M_1+2B_1$ and $D_1:=X_1-\sigma_{2,1}\G-Y_1$, we get
	$$X_1=\sigma_{2,1}\G+Y_1+D_1=(\sigma_{2,1}+\alpha_1+\delta_1)\G,$$ and similarly
	$$X_2=(\sigma_{1,2}+\alpha_2+\delta_2)\G,$$ yielding item (ii).
	
	For item (iii), consider the Round~1 value $h_{3,2}$ and the later derived point $D_2$ used in Check~\textbf{C5}.
	If the sender is honest, it computes
	$h_{3,2}=H_{\mathrm{s32}}(\langle \sid,\cid_2,\sigma_{3,2}\G\rangle)$ and thus fixes the point
	$U:=\sigma_{3,2}\G$ before $D_2$ is formed.
	If the sender is corrupted, there are two possibilities:
	(i) it queried $H_{\mathrm{s32}}(\langle \sid,\cid_2,U\rangle)$ for some point $U\in\mathbb{G}$ and set $h_{3,2}$ to the
	reply, or
	(ii) it outputs a fresh $\lambda$-bit guess for $h_{3,2}$ without querying.
	In case (ii), since $H_{\mathrm{s32}}(\langle \sid,\cid_2,D_2\rangle)$ is uniform conditioned on the adversary’s view,
	Check~\textbf{C5} holds with probability at most $2^{-\lambda}$.
	In case (i), Check~\textbf{C5} implies
	$H_{\mathrm{s32}}(\langle \sid,\cid_2,D_2\rangle)=H_{\mathrm{s32}}(\langle \sid,\cid_2,U\rangle)$.
	Thus, unless a collision/second-preimage occurs for $H_{\mathrm{s32}}$, we must have $D_2=U$.
	(In honest executions this specializes to $D_2=\sigma_{3,2}\G$.)
	Finally, since $\widehat K=3X_1-2X_2$, we obtain
	$\widehat K=(3x_1-2x_2)\G$ by algebra. \qed
\end{proofsketch}

\begin{theorem}[UC realization of \(\FSDKG\) by \(\mathrm{\Psi}^{(3)}_{\mathsf{SDKG}}\)]\label{thm:SDKG-UC}
	Fix Fischlin parameter functions $(t(\lambda),b(\lambda),r(\lambda),S(\lambda))$ satisfying
	Definition~\ref{FSdef}.
	Then assuming hardness of DL, PRF security of the nonce-derivation function (Definition~\ref{def:linos-prf}), and the seed integrity invariant (Assumption~\ref{assump:seed-integrity}), protocol $\mathrm{\Psi}^{(3)}_{\mathsf{SDKG}}$ UC-realizes the functionality $\mathcal{F}_{\mathsf{SDKG}}$
	in the $(\mathcal{F}_{\mathsf{KeyBox}},\Fusv,\Fchan,\Fpub)$-hybrid and \gROCRP\ models
	against arbitrary adaptive corruptions of any subset of parties in $\{P_i\}_{i \in [3]}$.
\end{theorem}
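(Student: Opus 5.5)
We construct, for every PPT adversary $\cA$, a simulator $\s$ that internally runs $\cA$ and emulates every hybrid resource towards it: $\FKeyBox^{(i)}$ instances, $\Fusv$, $\Fchan$, $\Fpub$, and the \gROCRP\ oracle. Emulating the oracle lets $\s$ record the $\Query$ log of every adversary-controlled host ITM in $\mathsf{ctx}_{\mathsf{UC}}$, and lets it call $\SimProgramRO$ in $\mathsf{ctx}_{\mathsf{UC}}$ and $\mathsf{ctx}_{\mathsf{KeyBox}}$. Indistinguishability is then argued through a hybrid sequence running from the real execution to the ideal one.

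\emph{Stage 1: bookkeeping and coupling.} We add no-op bookkeeping: $\s$ feeds the observed $\mathsf{T}_1,\mathsf{T}_2,\mathsf{T}_3$ into $\FSDKG$ via $\mathsf{Tin}_1,\mathsf{Tin}_2,\mathsf{Tin}_3$.

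\emph{Stage 2: honest-party proofs.} Each honest-party $\mathrm{\Pi}^{\mathsf{UC}}_{\mathsf{DL}}$ proof ($\pi_{Y_i},\pi_{D_i}$) is replaced by a simulated proof, using Lemma~\ref{lem:gnpro-uc-nizk}(i). Programming failures are bounded by Lemma~\ref{lem:grocrp-prequery} together with Remark~\ref{rem:grocrp-entropy-sources}.

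\emph{Stage 3: honest KeyBox outputs.} The external outputs of honest KeyBoxes (LinOS transcripts, $\SealToPeer$ ciphertexts, $\GetPub$) are replaced by slotwise key-opacity simulators. This step uses Assumption~\ref{assump:keybox-opacity} and Lemma~\ref{lem:linos-key-opacity}. That lemma is exactly where PRF security and the seed-integrity invariant enter, via Lemma~\ref{lem:rollback-robust}.

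\emph{Stage 4: extraction and programming.} Whenever a corrupted party supplies a verifying proof $\pi_{\aff_1}$ or $\pi_{\aff_2}$, $\s$ runs the straight-line extractor on the recorded log. By Lemma~\ref{lem:SDKG-key-transcript}, this recovers $(\alpha_i,\delta_i)$ and hence $x_1,x_2,\sigma_{3,1}=\delta_1$ up to negligible error. Lemma~\ref{lem:SDKG-key-transcript}(iii) ties $D_2$ to the $h_{3,2}$ digest, giving $\sigma_{3,2}=\delta_2$. For honest parties, $\s$ already knows these values from its own emulation. Simulation-extractability (Lemma~\ref{lem:gnpro-uc-nizk}(iii)), combined with the $(\sid,\ell)$ tagging, rules out mauling of simulated honest proofs into accepted corrupted ones. $\s$ then issues $\mathsf{Program}(x_1,x_2,\sigma_{3,1},\sigma_{3,2})$.

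The key claim for Stage 4 is this: whenever the real honest parties accept, the \textsf{TryFinalize} consistency checks pass. Moreover, the shares installed by $\FSDKG$ via $\KBcmd$ coincide with those the real honest parties load through $g_{1,2},g_{1,3},g_2,g^{\textsf{reg}}$. This follows from the algebra of Observation~\ref{obs:sec7} and the derivations of $\sigma_{1,1},\sigma_{1,3},\sigma_{2,2},\sigma_{2,3}$ in Fig.~\ref{fig:FSDKG}. Conversely, if $\AccSDKG$ fails, neither world outputs. The USV side is handled by $\Fusv$ plus C1, justified by Lemma~\ref{lem:usv-verify-implicit} and Lemma~\ref{lem:hb-nm}.

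We expect \textbf{adaptive corruption} to be the main obstacle. Upon $\mathsf{Corrupt}(i)$, $\s$ must produce a host state consistent with already-simulated proofs and $\Fchan$ length-only leakage. By the erasure discipline, the state at any point consists only of retained $\sigma$-values, with no prover randomness. So for each corruption point we would check that these $\sigma$'s are either already fixed in $\s$'s emulation or obtainable from $\FSDKG$'s revealed state. They must also be consistent with the public $X_i,M_i,B_i$ and the simulated proofs, which are simulatable for any witness; here the Program-after-transcript design is essential.

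\emph{Stage 5: registration.} Here $\s$ forwards $\mathsf{RegGo}$ only when the two $\Fpub$ publications agree with $K$, per Remark~\ref{rem:fpub-agreement-sim}. It delivers the sealed payloads in line with the inlined $\Fchan$. For honest $P_3$, correctness follows from the uniqueness of $k_3$ under the check $K_{1,3}+K_3=K$ in $g_3$ (Remark~\ref{rem:sdkg-ad-fields}).

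Summing the negligible terms (Fischlin extraction error, pre-query collisions, PRF and key-opacity advantages, and digest collisions) yields the theorem.
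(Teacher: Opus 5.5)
\textbf{Overall structure.} Your architecture is the paper's:
\begin{itemize}
\item ZK-simulate honest $\mathrm{\Pi}^{\mathsf{UC}}_{\mathsf{DL}}$ proofs while keeping the honestly sampled scalars as bookkeeping.
\item Extract only from proofs generated by parties corrupted at generation time, using simulation-extractability on $(\sid,\ell)$-tagged fresh statements.
\item Issue $\mathsf{Program}$ with the values of Table~\ref{tab:sdkg-program-cases}.
\item Answer corruptions from a shadow host state holding only the retained $\sigma$'s.
\item Gate $\mathsf{RegGo}$ on $\Fpub$ agreement.
\end{itemize}
Your value-level consistency claim is also correct. For example, $\FSDKG$'s $\sigma_{2,3}$ simplifies to $2\sigma_{2,1}-\sigma_{2,2}=f_2(1)$ whatever $P_1$ sent.

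\textbf{The gap: Stage~3.} The paper has no such step, and it fails as stated. In $\mathrm{\Psi}^{(3)}_{\mathsf{SDKG}}$ no honest KeyBox ever releases a LinOS transcript or a $\GetPub$ output. LinOS appears only in the optional external-verifiability add-on, and honest parties never call $\GetPub$. So the only honest key-dependent KeyBox outputs are the registration ciphertexts $\varpi_1,\varpi_{2a},\varpi_{2b}$. For these, replacement by key-opacity dummies is either vacuous or distinguishable.
\begin{itemize}
\item With honest $P_3$, $\Fchan$ already hides the ciphertexts. Replacing what $P_3$ actually receives makes the $K_{1,3}+K_3=K$ check in $g_3$ fail and suppresses $(\mathsf{registered},3,\sid)$. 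In the ideal world, by contrast, $\FSDKG$ itself computes $\Enc_{\pk_{\mathrm{seal}}^{(P_3)}}(\mathsf{ad},\sigma)$ of the actual programmed $\sigma$'s and has $P_3$'s KeyBox open and $\Load$ them.
\item With corrupted $P_3$, $\cA$ can feed the ciphertexts to $\OpenFromPeer$ and $\Load(\KBsid{k3},g_3,\cdot)$ on $\FKeyBox^{(3)}$ and read off $\ok$ or $\perp$. The IND-CCA reasoning behind Assumption~\ref{assump:keybox-opacity} does not cover ciphertexts the adversary can test through the recipient's KeyBox.
\end{itemize}
The paper needs no key-opacity step here. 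With honest $P_3$ only lengths leak through the inlined $\Fchan$; with corrupted $P_3$ the payloads are identically distributed (Lemma~\ref{lem:sdkg-reg-sim}). Drop Stage~3. The PRF and seed-integrity hypotheses then enter only through a standalone PRF-to-random replacement of LinOS nonces. Note also that Lemma~\ref{lem:linos-key-opacity} uses the PRF directly, not via Lemma~\ref{lem:rollback-robust}.

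\textbf{Modeling corrections.}
\begin{itemize}
\item $\GgROCRP$ is global, and $\cZ$ queries it directly, so $\s$ cannot emulate it. It must forward $\cA$'s queries to the real instance and record those of adversary-controlled ITMs, which local-call semantics exposes to $\cA$. It must also program through the real simulator-only $\SimProgramRO$.
\item In the paper's ideal world $\FKeyBox^{(i)}$, $\Fchan$, $\Fpub$ and $\Fusv$ persist. $\FSDKG$ writes honest shares into the actual $\FKeyBox^{(i)}$ via $\KBcmd$. $\s$ must issue honest parties' $\Fpub$ publications, $\Fchan$ sends, and $\Fusv$ Commit/Deliver/Verify calls to these instances, and route a later-corrupted party's KeyBox calls to them rather than to private copies.
\item You do not address installation timing. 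Real installation is asynchronous: $P_2$ loads in its Round-3 activation, $P_1$ only after Round~3 is delivered. $\FSDKG$'s \textsf{TryFinalize} is atomic. The paper reconciles this via the wrapper activation semantics of Remark~\ref{rem:kb-wrapper}; your Stage-4 claim covers values but not this timing alignment.
\end{itemize}
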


\begin{proof}
	Fix any PPT real-world adversary $\mathcal A$ and PPT environment $\mathcal Z$.
	We construct a PPT simulator $\s$ such that
	\[
	\mathsf{Exec}\!\left(\mathrm{\Psi}^{(3)}_{\mathsf{SDKG}},\mathcal A,\mathcal Z\right)
	\approx_c
	\mathsf{Ideal}\!\left(\FSDKG,\s,\mathcal Z\right)
	\]
	in the $(\FKeyBox,\Fusv,\Fchan,\Fpub)$-hybrid and \gROCRP\ models, against adaptive corruptions with secure erasures.
	
	\paragraph{Acceptance predicate.}
	A session is accepting iff the deterministic checks in the protocol description verify
	(USV validity/digest, affine consistency equations, and all UC-NIZK verifications).
	The simulator applies the same checks and mirrors aborts. By Lemma~\ref{lem:usv-verify-implicit}, when $P_1$ is honest through Round~2 the
	$\Fusv.\Verify$ gate is implied by the presence of $\mathsf{T}_2$, hence
	Algorithm~\ref{alg:AccSDKG} captures acceptance.
	If $P_1$ is corrupted, no such implication is required.
	
	\paragraph{UC-NIZK interfaces.}
	All UC-NIZK proofs that matter for extraction use the dedicated UC context(s).
	By the AoK property of the Fischlin-based UC-NIZK,
	there exists a straight-line PPT extractor $\Ext_{\mathsf{DL}}$ that, given a verifying proof
	for a DL statement and the prover's \gROCRP\ query/answer log under the UC proof context,
	outputs the corresponding witness except with negligible probability.
	For an affine proof $\pi_{\aff_i}=(\pi_{Y_i},\pi_{D_i})$, define
	$\Ext_{\mathsf{aff}}(\pi_{\aff_i}) := \bigl(\Ext_{\mathsf{DL}}(\pi_{Y_i}),\ \Ext_{\mathsf{DL}}(\pi_{D_i})\bigr)$.
	Moreover, by the ZK property of the same UC-NIZK in the (programmable) \gROCRP\ UC contexts, there exists a PPT
	simulator $\s_{\mathsf{UC}}$ that can produce accepting proofs without witnesses in those contexts.
	
\paragraph{Simulator \(\s\).}
The simulator \(\s\) runs \(\mathcal A\) as a subroutine and maintains, for each session identifier \(\sid\),
a session record containing the \emph{functionality transcript slots}
\(\TSDKG_\sid := (\mathsf{T}_1,\mathsf{T}_2,\mathsf{T}_3)\) (Fig.~\ref{fig:FSDKG}),
a stage variable, and (when defined) values \(x_1(\sid),x_2(\sid),\sigma_{3,1}(\sid),\sigma_{3,2}(\sid)\).
It forwards all of \(\mathcal A\)'s \gROCRP\ queries to the global oracle \(H\) and logs all
\((\mathsf{ctx},x,H(\mathsf{ctx},x))\) triples for the UC proof contexts used by
\(\mathrm{\Pi}^{\mathsf{UC}}_{\mathsf{DL}}\). For the non-programmable digest context \(\CtxSDKG\),
\(\s\) only forwards queries (no programming).
Whenever \(\s\)'s real-world emulation fixes the value that will populate one of the transcript slots
\(\mathsf{T}_1,\mathsf{T}_2,\mathsf{T}_3\) for session \(\sid\),
\(\s\) supplies it to \(\FSDKG\) by sending the corresponding
\((\mathsf{Tin}_i,\sid,\ldots)\) message on the ideal adversary/simulator interface.
These transcript-slot values are used only to drive \(\FSDKG\)'s deterministic acceptance/finalization logic and may
include fields that are not adversary-visible in honest executions (they can be carried only over \(\Fchan\));
they are never output by \(\FSDKG\) and are hidden from \(\cZ\) by the wrapper \(\WDKG\). $\FSDKG$’s stored transcript slots track the simulated real transcript under adversarial scheduling.

\paragraph{$\Fpub$ publications.}
Since $\Fpub$ is a hybrid functionality present in both real and ideal worlds, the simulator
must reproduce the $\Fpub$ publications that honest parties make in the real protocol.
Concretely, when $\s$ emulates an honest $P_2$ completing Round~3 (publishing $K$ via $\Fpub$), $\s$ sends
$(\mathsf{Publish},\sid,K)$ to $\Fpub$ on behalf of the simulated honest $P_2$.
Likewise, when $\s$ emulates an honest $P_1$ accepting the transcript, $\s$ sends
$(\mathsf{Publish},\sid,K)$ to $\Fpub$ on behalf of the simulated honest $P_1$.
These calls are to the real $\Fpub$ instance in the hybrid model, ensuring that $P_3$
(and any environment $\cZ$ listening on $\Fpub$ delivery ports) observes identical
$\Fpub$ publications in both real and ideal executions.

\paragraph{$\Fchan$ messages.}
Since $\Fchan$ is a hybrid functionality present in both real and ideal worlds, $\s$ must reproduce the $\Fchan$ interactions that honest parties perform in the real protocol.
Concretely, when $\s$'s honest-party emulation produces a Round~1 message from an honest~$P_2$, $\s$ sends $(\mathsf{Send},\sid,c_1)$ to the $\Fchan$ instance for the $(P_2,P_1)$ endpoint, where $c_1$ is the simulated Round~1 payload; this generates the corresponding $\mathsf{Leak}$ ticket that~$\cA$ observes.
Likewise for Round~2 (honest $P_1\to P_2$) and Round~3 (honest $P_2\to P_1$).
When $\cA$ issues $\mathsf{Deliver}$ commands, $\Fchan$ delivers the original honest-party messages, matching the real-world delivery semantics.
Since $\s$ sends the same payloads at the same simulation points, $\cA$ (and $\cZ$) observes identical $\Fchan$ leak tickets, scheduling metadata, and delivery patterns in both worlds.

\paragraph{$\Fusv$ interactions.}
Since $\Fusv$ is a hybrid functionality present in both real and ideal worlds, $\s$ must issue the corresponding $\Fusv$ calls as part of its honest-party emulation.
When $\s$ emulates an honest~$P_2$ in Round~1, $\s$ sends $(\Commit,\sid,\cid_2,P_1,C_2,\zeta_2)$ to~$\Fusv$ on behalf of the simulated honest~$P_2$ and records the receipt~$d$.
When $\cA$'s $\mathsf{Deliver}$ command causes $\Fchan$ to deliver $P_2$'s Round~1 message to the honest~$P_1$, $\s$ concurrently sends $(\mathsf{Deliver},\sid,\cid_2,P_2,P_1)$ to~$\Fusv$ to set $\mathsf{del}\gets 1$ for the corresponding entry, so that the subsequent $\Fusv.\Verify$ call does not return~$\perp$.
$\s$ then emulates honest~$P_1$ in Round~2 by calling $\Fusv.\Verify(\sid,\cid_2,P_2,C_2,\zeta_2)$ and checks the return value, aborting if it returns $0$ or~$\perp$ (matching the real-world $P_1$'s behavior).
These calls ensure that $\Fusv$'s internal table~$\mathsf{T}$ is populated and delivery-activated identically in both worlds, so any subsequent query by a corrupted counterparty to $\Fusv.\Verify$ or $\Fusv.\Open$ returns the same result as in the real execution.
Note that Lemma~\ref{lem:usv-verify-implicit} concerns only the acceptance predicate $\AccSDKG$: it says that $\AccSDKG$ need not separately check the $\Fusv.\Verify$ return value (because its success is implied by the existence of~$\mathsf{T}_2$).
It does not mean $\s$ omits the $\Fusv.\Verify$ call; the call is part of $\s$'s honest-party emulation and its result is guaranteed to match by construction.

	\paragraph{Hybrid argument.}
We define hybrids over the joint execution (potentially many concurrent sessions); $\s$ keeps a separate record per $\sid$.
Since the environment $\cZ$, adversary $\cA$, and all parties are PPT, the total number of sessions initiated and the total number of UC-context
proofs simulated/verified in the execution are bounded by $\poly(\lambda)$. Specifically, for each programmable proof context
$\mathsf{ctx}\in\CtxUC$, the simulator makes at most $m_{\mathsf{ctx}}(\lambda)=\poly(\lambda)$ total calls to
$\SimProgramRO(\mathsf{ctx},\cdot,\cdot)$ across all sessions. Therefore, Lemma~\ref{lem:grocrp-prequery} applies with
this global $m_{\mathsf{ctx}}(\lambda)$, and the resulting union bound over all programming attempts (across all sessions) remains negligible.
Indistinguishability is shown by a standard per-session hybrid argument together with this global union bound.
	
	\medskip
	\noindent PRF assumption for LinOS nonces: In the real execution, each honest KeyBox instance derives LinOS
	nonces via $\PRF(\seed,\cdot)$ (Fig.~\ref{LinOS}).
	By PRF security (Definition~\ref{def:linos-prf}) and the seed integrity
	invariant (Assumption~\ref{assump:seed-integrity}), these derived nonces
	are computationally indistinguishable from independently sampled uniform
	values in $\Zp$.
	Concretely, a standard PRF-to-random hybrid replaces all $\PRF(\seed,\cdot)$
	evaluations with outputs of a truly random function; the distinguishing
	advantage of this step is bounded by the PRF advantage $\negl(\lambda)$.
	After this replacement, the nonces have the same joint distribution as in
	the original (sample-and-store) construction, so all subsequent hybrids
	proceed identically.
	This step introduces an additive $\negl(\lambda)$ term from the PRF
	assumption, which is already absorbed into the asymptotic bound.
	By Lemma~\ref{lem:rollback-robust}, this deterministic derivation
	additionally ensures that even under a rollback of the KeyBox's mutable
	state (violating state continuity for the one-shot seal), the adversary
	cannot obtain two transcripts with the same commitment and different
	challenges, and therefore cannot extract the resident share via Schnorr
	special soundness.
	
	\medskip
	\noindent Hybrid $\Game_0$ (real execution): This is $\mathrm{\Psi}^{(3)}_{\mathsf{SDKG}}$ with honest parties, in the
		$(\FKeyBox,\Fusv,\Fchan,\Fpub)$-hybrid and \gROCRP\ models.
		
	\medskip
\noindent Hybrid $\Game_1$ (simulate honest UC-NIZKs):
		Modify $\Game_0$ as follows: whenever an honest party would output a UC-context UC-NIZK proof, replace it with a proof generated by $\s_{\mathsf{UC}}$ in the corresponding UC context.
		All other values/messages are unchanged.
		By the ZK property of the Fischlin-based UC-NIZK in the UC \gROCRP\ contexts, we have $\Game_1 \approx_c \Game_0$.
		Concretely, $\s$ uses $\SimProgramRO$ to realize the universal simulation interface; because $\SimProgramRO$ fails on already-defined points,
		the only divergence is the pre-query bad event of Lemma~\ref{lem:grocrp-prequery}, which occurs with negligible probability.
		
		\medskip
		\noindent\emph{Bookkeeping.}
		In $\Game_1$, $\s$ still samples and records (per session $\sid$) the honest-party
		local scalars that determine the public points $X_1,X_2$ and the digest check
(e.g., the values used to form $X_1=x_1\G$, $X_2=x_2\G$, $\sigma_{3,1}\G$, and $\sigma_{3,2}\G$),
		exactly as in $\Game_0$; only the sent UC-context proof strings are replaced by
		simulated proofs via $\s_{\mathsf{UC}}$.
		Importantly, $\s$ will never invoke the Fischlin/AoK extractor on any proof output by
		$\s_{\mathsf{UC}}$, which may have been produced using $\SimProgramRO$.
		
		\begin{lemma}[Fresh tagged statements for SDKG extraction]\label{lem:sdkg-fresh-tagged}
			Let $Q$ be the set of UC-context DL statements (across all sessions in the execution) for which the simulator invoked the UC-proof simulator
			$\s_{\mathsf{UC}}$. In any session $\sid$, SDKG verifies
			UC-context DL proofs only on tagged statements of the form $\DLstmt{\sid}{\ell}{M}$ where the label
			$\ell$ is fixed by the proof position (one of $\LblAffOneY,\LblAffOneD,\LblAffTwoY,\LblAffTwoD$).
			Then any UC-context DL proof contributed by a corrupted party in session $\sid$ and verified by an honest party
			is on a tagged statement $x$ satisfying $x\notin Q$.
		\end{lemma}
		
	\begin{proofsketch}
		Fix any session identifier $\sid$. By construction of Hybrid $\Game_1$, the simulator invokes the
		UC-proof simulator $\s_{\mathsf{UC}}$ only for UC-context DL proofs that are generated by honest parties
		in that session (i.e., in honest proof positions). Let
		\[
		Q_\sid \ :=\ \{\,x\in Q : x \text{ is of the form } \DLstmt{\sid}{\ell}{M}\,\}.
		\]
		Then $Q_\sid$ contains exactly the tagged statements for which $\s_{\mathsf{UC}}$ produced simulated
		UC-context proofs in session $\sid$. Consider any UC-context DL proof that is verified by an
		honest party and whose originating party is \emph{currently} corrupted at verification time.
		Two sub-cases arise:

		\smallskip\noindent
		\emph{Case~1: the party was corrupted at proof-generation time.}
		Then $\s$ did not invoke $\s_{\mathsf{UC}}$ for that proof position, so
		the corresponding tagged statement $x\notin Q_\sid$ and thus $x\notin Q$.
		Simulation-extractability applies and $\s$ extracts a valid witness.

		\smallskip\noindent
		\emph{Case~2: the party was honest at proof-generation time but was
		corrupted after the proof-bearing message was queued in $\Fchan$ and
		before delivery.}
		In this case $\s$ \emph{did} invoke $\s_{\mathsf{UC}}$ for the proof
		(since the party was honest at generation time), so the statement $x$
		is in $Q$.  However, $\s$ does not need to extract from this proof:
		it already holds the bookkeeping values
		$(x_1,x_2,\sigma_{3,1},\sigma_{3,2})$ that it committed during
		honest simulation, and these remain valid regardless of the party's
		later corruption status.  $\Fchan$ delivers the original message
		unchanged, so the transcript-defined group elements match the
		bookkeeping.  Therefore the simulator uses the bookkeeping
		path (not extraction) for such proofs, and freshness relative to $Q$
		is not required.

		\smallskip\noindent
		Since the tag $\sid$ is part of the statement, statements from different
		sessions cannot collide; injectivity of $\langle\cdot\rangle$ rules out
		any other collisions. \qed
	\end{proofsketch}
									
\medskip
\noindent Hybrid $\Game_2$ (extract from adversarial UC-NIZKs; abort on failure):
Modify $\Game_1$ as follows: whenever $\mathcal A$ delivers a verifying UC-context affine proof
$\pi_{\aff_i}=(\pi_{Y_i},\pi_{D_i})$ attributed to a party that was
corrupted \emph{at proof-generation time} (proofs generated by a
then-honest party that was later adaptively corrupted are handled via the
simulator's bookkeeping values without extraction; see
Lemma~\ref{lem:sdkg-fresh-tagged}, Case~2), $\s$ runs
$\Ext_{\mathsf{aff}}(\pi_{\aff_i})$ using $\mathsf{Log}_\cA$ in that UC context to obtain $(\alpha_i,\delta_i)$.
If extraction fails or the extracted witnesses do not satisfy the relation, abort the session.

Because $\Game_1$ may expose $\cA$ to simulated proofs produced using $\SimProgramRO$,
the appropriate guarantee here is \emph{simulation-extractability} (Definition~\ref{def:simext}),
not merely plain AoK soundness. Concretely, let $Q$ be the set of UC-context statements
for which $\s$ invoked the simulator $\s_{\mathsf{UC}}$.
By Lemma~\ref{lem:sdkg-fresh-tagged}, every corrupted-party UC-context DL statement on which $\s$ runs extraction
is fresh relative to the simulator-produced statement set $Q$. Therefore, the simulation-extractability
guarantee of Lemma~\ref{lem:gnpro-uc-nizk}(iii) applies to these proofs and we get the bad event
\[
\Bad_{\mathsf{ext/forge}} := \{\text{a corrupted party's UC-context proof verifies but no valid witness is extracted}\}
\]
occurs with probability at most $\negl(\lambda)$, and thus $\Game_2 \approx_c \Game_1$.
		
\medskip
\noindent Hybrid $\Game_3$ (switch to the ideal functionality via transcript-defined programming):
Modify $\Game_2$ by replacing the real-world key-derivation effect with interaction with $\FSDKG$ as follows.
$\s$ continues to simulate the network transcript towards $\mathcal A$ and $\mathcal Z$ and mirrors the
same accept/abort predicate. Whenever $\Game_2$ reaches an accepting transcript for some session $\sid$, $\s$ defines $(x_1(\sid),x_2(\sid),\sigma_{3,1}(\sid),\sigma_{3,2}(\sid))$ as follows,
without extracting from any simulated proof:

Fix a session $\sid$ and suppose the simulator's emulated transcript reaches a point where the
stored transcript slots $\TSDKG_\sid=(\mathsf{T}_1,\mathsf{T}_2,\mathsf{T}_3)$ satisfy
$\AccSDKG(\sid,P_2,P_1,\TSDKG_\sid)=1$ (Algorithm~\ref{alg:AccSDKG}).
Let $Y_1,D_1,Y_2,D_2$ be the derived points from $\TSDKG_\sid$, and write
$\pi_{\aff_i}=(\pi_{Y_i},\pi_{D_i})$. 

\smallskip
\noindent Define $\Bad_{\mathsf{ext}}(\sid)$ to be the event that, for some DL subproof in this session
that is attributed to a corrupted party and verifies under $\cV_{\mathsf{DL}}$,
straight-line extraction fails or yields a value not satisfying the DL relation.
Let $\Bad_{\mathsf{s32}}(\sid)$ be the event that the $s32$-digest check in $\AccSDKG$ holds by
a fresh guess without a prior query or by an oracle collision/second-preimage.
In Hybrid $\Game_2$, the simulator aborts the session on $\Bad_{\mathsf{ext}}(\sid)$, so below we
condition on $\neg\Bad_{\mathsf{ext}}(\sid)$; and by standard RO reasoning,
$\Pr[\Bad_{\mathsf{s32}}(\sid)]$ is negligible.

\smallskip
\noindent Conditioned on $\neg(\Bad_{\mathsf{ext}}(\sid)\lor\Bad_{\mathsf{s32}}(\sid))$, the simulator defines
$(x_1,x_2,\sigma_{3,1},\sigma_{3,2})$ by the following corruption-pattern analysis and then issues the
one-shot $\FSDKG$ programming command $(\mathsf{Program},\sid,x_1,x_2,\sigma_{3,1},\sigma_{3,2})$.

\begin{table}[t]
	\centering
	\small
	\setlength{\tabcolsep}{4pt}
	\renewcommand{\arraystretch}{1.15}
	\begin{tabular}{p{0.11\linewidth}p{0.45\linewidth}p{0.38\linewidth}}
		\hline
		\textbf{Pattern} &
		\textbf{How $\s$ fixes $(x_1,\sigma_{3,1})$} &
		\textbf{How $\s$ fixes $(x_2,\sigma_{3,2})$} \\
		\hline
		$P_1$ honest, $P_2$ honest &
		From the honest emulation record: $X_1=x_1\G$ and $D_1=\sigma_{3,1}\G$ were formed using sampled scalars; $\s$ sets $(x_1,\sigma_{3,1})$ to those recorded values. &
		From the honest emulation record: $X_2=x_2\G$ and $h_{3,2}=H_{\mathrm{s32}}(\langle\sid,\cid_2,\sigma_{3,2}\G\rangle)$ were formed using sampled $\sigma_{3,2}$; $\s$ sets $(x_2,\sigma_{3,2})$ to those recorded values. \\
		\hline
		$P_1$ corrupted, $P_2$ honest &
		Extract $(\alpha_1,\delta_1)$ from $\pi_{\aff_1}=(\pi_{Y_1},\pi_{D_1})$ and set
		$\sigma_{3,1}:=\delta_1$, $x_1:=\sigma_{2,1}+\alpha_1+\delta_1$. &
		As in the all-honest row (recorded from honest $P_2$ emulation). \\
		\hline
		$P_1$ honest, $P_2$ corrupted &
		As in the all-honest row (recorded from honest $P_1$ emulation). &
		Extract $(\alpha_2,\delta_2)$ from $\pi_{\aff_2}=(\pi_{Y_2},\pi_{D_2})$ and set
		$\sigma_{3,2}:=\delta_2$, $x_2:=\sigma_{1,2}+\alpha_2+\delta_2$. \\
		\hline
		$P_1$ corrupted, $P_2$ corrupted &
		Extract $(\alpha_1,\delta_1)$ and $(\alpha_2,\delta_2)$ and define
		$\sigma_{3,1}:=\delta_1$, $x_1:=\sigma_{2,1}+\alpha_1+\delta_1$ and
		$\sigma_{3,2}:=\delta_2$, $x_2:=\sigma_{1,2}+\alpha_2+\delta_2$. &
		(same as left cell) \\
		\hline
	\end{tabular}
	\caption{Explicit determination of the programmed values in Theorem~\ref{thm:SDKG-UC}.}
	\label{tab:sdkg-program-cases}
\end{table}

\smallskip
\noindent Hence, for corrupted parties, the simulator never needs to argue that the extracted witnesses equal some
hidden internal $\sigma$-variables of the adversary. It only needs them to satisfy
$Y_i=\alpha_i\G$ and $D_i=\delta_i\G$ for the transcript-defined points $Y_i,D_i$.
Because $z\mapsto z\G$ is a bijection, these witnesses (when extracted) are uniquely determined by
$Y_i$ and $D_i$, and therefore the programmed values are canonical given the accepting transcript.
Finally, by Lemma~\ref{lem:SDKG-key-transcript}, the resulting programmed key is
$K=(3x_1-2x_2)\G$, matching the real transcript-derived key $\widehat K$ whenever the session accepts.

Then $\s$ invokes the one-shot programming interface of $\FSDKG$:
\[
(\mathsf{Program},\sid,x_1(\sid),x_2(\sid),\sigma_{3,1}(\sid),\sigma_{3,2}(\sid)).
\]
By transcript mirroring, at the point $\s$ invokes $(\mathsf{Program},\sid,\ldots)$ the ideal $\FSDKG$ has already received the same public transcript items via $\mathsf{Tin}_1/\mathsf{Tin}_2/\mathsf{Tin}_3$ (up to adversarial scheduling), and since $\FSDKG$ runs \textsf{TryFinalize} on both transcript stores and \textsf{Program}, the ideal output event is triggered exactly when the simulated transcript becomes accepting.	

\paragraph{Ordering and slot consistency for registration.}
In Fig.~\ref{fig:FSDKG}, the registration handler is gated on $\mathsf{finalized}=1$.
By construction, $\mathsf{finalized}$ is set only inside \textsf{TryFinalize}, and
\textsf{TryFinalize} returns early unless the simulator has already supplied the one-shot
\textsf{Program} message fixing $(x_1,x_2,\sigma_{3,1},\sigma_{3,2})$.
Therefore, every registration attempt occurs
only after the above programming step has fixed these values.

Moreover, at the same point where \textsf{TryFinalize} sets $\mathsf{finalized}\gets 1$,
it deterministically defines the registration scalars
$(\sigma_{3,1},\sigma_{3,2},\sigma_{2,3})$ (Fig.~\ref{fig:FSDKG}, Steps~(1)--(2)) and, for each honest sender,
issues the corresponding KeyBox installation commands that load these scalars into the dedicated
slots $\KBsid{k31}$ (for $P_1$) and $\KBsid{k32},\KBsid{k23}$ (for $P_2$) (Fig.~\ref{fig:FSDKG}).
Consequently, in the real protocol the subsequent sealing calls in Algorithm~\ref{alg:reg}
encrypt exactly these same resident values, i.e.,
\[
\varpi_1 \overset{\mathrm d}{=} \Enc_{\pk_{\mathrm{seal}}^{(P_3)}}(\mathsf{ad}_{31},\sigma_{3,1}),\quad
\varpi_{2a} \overset{\mathrm d}{=} \Enc_{\pk_{\mathrm{seal}}^{(P_3)}}(\mathsf{ad}_{32},\sigma_{3,2}),\quad
\varpi_{2b} \overset{\mathrm d}{=} \Enc_{\pk_{\mathrm{seal}}^{(P_3)}}(\mathsf{ad}_{23},\sigma_{2,3}),
\]
which matches the ciphertext sampling performed by $\FSDKG$ on honest senders in the ideal world.
If a sender is corrupted, both the real protocol and $\FSDKG$ allow the adversary to supply the
delivered payload directly (via $w_1^\star,w_2^\star$), so no additional slot-consistency
condition is required in that case.	

	\noindent By the \gROCRP\ assumption for the non-programmable context $\CtxSDKG, \Pr[\Bad_{\mathsf{s32}}]\le \negl(\lambda)$.
	Conditioned on $\neg(\Bad_{\mathsf{ext/forge}}\lor \Bad_{\mathsf{s32}})$, the extracted scalars satisfy the relations in
	Lemma~\ref{lem:SDKG-key-transcript}, and in particular the real-world output satisfies
	$K_{\mathrm{real}}=(3x_1(\sid)-2x_2(\sid))\G$.
	After programming, $\FSDKG$ outputs
	\[
	K(\sid)=(3x_1(\sid)-2x_2(\sid))\G = K_{\mathrm{real}}.
	\]
	Thus, conditioned on no bad event, the public key output seen by $\mathcal Z$ is identical in $\Game_3$ and $\Game_2$,
	and the simulated transcript/abort behavior is unchanged. Hence $\Game_3 \approx_c \Game_2$.
	
	\paragraph{Registration of $P_3$.}
	In $\Game_3$, when $\FSDKG$ receives $(\mathsf{register},3,\sid)$ and $\mathsf{finalized}=1$, the resulting interaction with
	$\FKeyBox$ is exactly as specified by $\FSDKG$ (Fig.~\ref{fig:FSDKG}).
	The only observable outcome is whether the host obtains $K_3=\Pub(k_3)$ from $\FKeyBox^{(3)}.\Use$. Since $\FKeyBox$ is an ideal functionality whose internal state is never revealed, and since
	$\s$ forwards $\FKeyBox$ calls on behalf of corrupted parties exactly as in the real execution, $\mathcal Z$'s observable
	view of registration matches the real execution. Additionally, $\FSDKG$ emits two adversary-scheduled, length-leaking channel messages
	(via $\mathsf{Leak}/\mathsf{Deliver}/\mathsf{Recv})$ that model the sealed-payload transport in Algorithm~\ref{alg:reg}.
	Concretely, $\FSDKG$ sets the delivered payloads to
	$w_{\mathsf{reg},1}:=\langle \sid,\varpi_1,K_{1,3},K\rangle$ and
	$w_{\mathsf{reg},2}:=\langle \sid,\varpi_{2a},\varpi_{2b},K_{1,3}\rangle$,
	where $\varpi_1,\varpi_{2a},\varpi_{2b}$ are fresh sealing ciphertexts sampled as
	$\Enc_{\pk_{\mathrm{seal}}^{(P_3)}}(\mathsf{ad},\sigma)$.
	Lemma \ref{lem:sdkg-reg-sim} formalizes that: if $P_3$ is corrupted, $\cA$ observes well-formed ciphertexts and can invoke $\OpenFromPeer$ on them
	exactly as in the real protocol; and if $P_3$ is honest, the adversary learns only the explicit length
	leakage $\ellregone,\ellregtwo$.

	\smallskip\noindent\emph{$\Fpub$-disagreement abort.}
	If $\s$ detects that $\cA$ caused a corrupted sender to publish a value on $\Fpub$ for $\sid$ different from the finalized~$K$, then $\s$ does not forward $(\mathsf{RegGo},\ldots)$ to $\FSDKG$. This prevents registration from completing in the ideal world, matching the real-world abort behavior of honest~$P_3$ under $\Fpub$ disagreement (Algorithm~\ref{alg:reg}). More precisely, in the real protocol the honest joiner gates registration on receiving consistent $\Fpub$ publications from both senders; when the simulated $\Fpub$ publications disagree, $\s$ withholds $\mathsf{RegGo}$ so that $\FSDKG$ never reaches its completion check. Since $\Fpub$ is a shared hybrid functionality, the environment observes identical $\Fpub$ publications in both worlds, and in both worlds registration does not complete.
	
	\begin{lemma}[Registration/RDR simulation]\label{lem:sdkg-reg-sim}
		Fix a session $\sid$ after base finalization. Under $\Fchan$ and the KeyBox sealing interfaces
		$\SealToPeer/\OpenFromPeer$ (Fig.~\ref{Fdskg}), the registration phase of Algorithm~\ref{alg:reg} is
		indistinguishable from the registration subroutine implemented by $\FSDKG$ (Fig.~\ref{fig:FSDKG}),
		for any adaptive corruption pattern with secure erasures. 
		
		More precisely:
		(i) if the receiver $P_3$ is honest, the adversary learns only the explicit length leakage
		$\ellregone,\ellregtwo$, and $\FSDKG$ produces exactly the same leakage and
		(ii) if $P_3$ is corrupted, then the delivered payloads in $\FSDKG$ contain ciphertext components sampled
		as $\Enc_{\pk_{\mathrm{seal}}^{(P_3)}}(\mathsf{ad},\sigma)$, which matches exactly the distribution of real
		$\SealToPeer$ outputs, so the adversary’s view (including subsequent $\OpenFromPeer$ calls) is identically
		distributed.
	\end{lemma}
	
	\begin{proofsketch}
		By definition of $\FSDKG$ (Fig.~\ref{fig:FSDKG}), registration is enabled only once
		$\mathsf{finalized}=1$, which implies that $(\sigma_{3,1},\sigma_{3,2},\sigma_{2,3})$ have been
		fixed by \textsf{TryFinalize} (after the simulator’s one-shot \textsf{Program}) and, for honest senders,
		loaded into the corresponding KeyBox slots used by $\SealToPeer$.
		
		If a sender is corrupted, both the real protocol and $\FSDKG$ allow the adversary to supply arbitrary
		payloads (via $w_1^\star,w_2^\star$), yielding identical distributions. If the receiver is honest, $\Fchan$
		reveals only message lengths; $\FSDKG$ matches this via $(\mathsf{Leak},\cdot,\ellregone/\ellregtwo)$.
		If the receiver is corrupted, then in the real protocol each ciphertext is produced by $\SealToPeer$ as
		$c\leftarrow \Enc_{\pk_{\mathrm{seal}}^{(P_3)}}(\mathsf{ad},\sigma)$; $\FSDKG$ samples the same distribution
		explicitly, so the delivered messages are identically distributed. 
		
		If the sponsor (either $P_2$ or an already-registered leaf) is corrupted, it may try to disrupt registration by
		sending malformed or altered ciphertext components in $(\varpi_{2a},\varpi_{2b})$.
		However, $P_3$ recomputes the intended associated-data strings
		$\mathsf{ad}_{32},\mathsf{ad}_{23}$ locally (Algorithm~\ref{alg:reg}) and supplies them to $\OpenFromPeer$; by AEAD
		integrity in $\FKeyBox$ (Fig.~\ref{Fdskg}), any tampering or mix-and-match across sessions/slots causes
		$\OpenFromPeer$ to return $\perp$ and the KeyBox-side installation transaction aborts.
		If instead the corrupted sponsor crafts fresh ciphertexts under the correct associated data but encrypting arbitrary
		scalars, the recovery-share derivation $\Load(\KBsid{k3},g_3,\cdot)$ still uniquely determines the installed share
		$k_3$ via $K_{1,3}+k_3\G=K$ (Definition~\ref{def:sdkg-deriv}). Because $g_3$ constrains only the linear combination
		$\sigma_{2,3}-\sigma_{3,2}$ (not the individual values), a corrupted sponsor may shift both sponsor-state slots by a
		common offset $\mathrm{\Delta}$ without affecting $k_3$
		(cf.\ Theorem~\ref{thm:SDKG-UC-n}, cases~(b)--(c)).
		Thus, a malicious sponsor can at most cause denial of registration or shift the per-device sponsor-state slots, but
		cannot make an honest joiner install a recovery share inconsistent with the established public key~$K$.
		
		Finally, when $P_3$ is honest, both
		worlds attempt the same KeyBox-internal sequence of $\OpenFromPeer$ and $\Load$ calls, so the observable
		success/failure outcome (and any subsequent public $\GetPub$ output) matches. \qed
	\end{proofsketch}
				
\paragraph{Installation timing and \textsf{TryFinalize} activation semantics.}
In the real protocol, $P_2$ installs its KeyBox shares during its Round~3 activation (before $P_1$ receives the message), while $P_1$ installs after receiving and verifying~$P_2$’s Round~3 message. The simulator feeds $(\mathsf{Tin}_3,\sid,\ldots)$ when the emulated honest~$P_2$ produces its Round~3 message. This triggers \textsf{TryFinalize}, which issues $\KBcmd$ to both parties’ KeyBox-driver wrappers (Remark~\ref{rem:kb-wrapper}). Each wrapper processes the command upon the respective party’s activation: $P_2$’s wrapper executes within $P_2$’s current activation (matching $P_2$’s immediate real-world installation), and $P_1$’s wrapper executes when $P_1$ is activated to process the $\Fchan$ delivery of Round~3 (matching $P_1$’s real-world installation timing). Thus, the $\FKeyBox$ state at each party matches in both worlds at every corruption point.

\paragraph{Adaptive corruptions with secure erasures: explicit state consistency.}
We make explicit how the simulator answers adaptive corruptions under Definition~\ref{def:ace}.
For each session $\sid$ and each honest party $P_i$, the simulator maintains a \emph{shadow host state}
$\mathsf{st}_i^{\sid}$ consisting of exactly those host variables that the real protocol retains
(i.e., does not erase) after $P_i$’s most recent honest activation in that session.
Because honest activations are atomic w.r.t.\ corruption (Definition~\ref{def:ace}),
a real corruption reveals precisely the current $\mathsf{st}_i^{\sid}$ and nothing erased within
past activations. The simulator updates $\mathsf{st}_i^{\sid}$ in lockstep with its honest-party emulation
and, upon corruption of $P_i$, returns $\mathsf{st}_i^{\sid}$ (together with the already transcript-visible
messages) as the revealed host state.

Crucially, in the base run the only NXK-restricted values that remain in host RAM across activations
are the $\sigma$-scalars needed for the deferred post-accept $\FKeyBox.\Load$ calls; all UC-NIZK/Fischlin
prover randomness and rarity-search scratch state is erased in the same activation that emits
the proof-bearing message (as stated in the protocol’s erasure discipline), and therefore never appears
in $\mathsf{st}_i^{\sid}$ and is never revealed by later corruptions.

\begin{table}[t]
	\centering
	\small
	\setlength{\tabcolsep}{4pt}
	\renewcommand{\arraystretch}{1.15}
	\begin{tabular}{p{0.17\linewidth}p{0.42\linewidth}p{0.34\linewidth}}
		\hline
		& \textbf{Pre-install shadow state} & \textbf{Post-install shadow state} \\
		\hline
		Honest $P_1$ &
		$\{\sigma_{2,1},\sigma_{1,1},\sigma_{1,3},\sigma_{3,1}\}$ (plus public transcript items) &
		$\emptyset$ (all share-deriving scalars erased immediately after the $\Load$ calls return) \\
		Honest $P_2$ &
		$\{\sigma_{1,2},\sigma_{2,2},\sigma_{2,3},\sigma_{3,2}\}$ (plus public transcript items) &
		$\emptyset$ \\
		Honest $P_3$ & $\emptyset$ & $\emptyset$ \\
		\hline
	\end{tabular}
	\caption{NXK-restricted host variables retained across activations in $\mathrm{\Psi}^{(3)}_{\mathsf{SDKG}}$ under the stated erasure discipline.}
	\label{tab:sdkg-ace-shadow}
\end{table}
\medskip
\noindent\textbf{Consistency with $\FSDKG$.}
Whenever $\s$ supplies $\mathsf{Tin}_1/\mathsf{Tin}_2/\mathsf{Tin}_3$ to $\FSDKG$ for session $\sid$,
any scalar fields not transcript-visible in honest executions (e.g., $\sigma$-values carried over $\Fchan$)
are taken from the same shadow states $\mathsf{st}_i^{\sid}$.
Thus, if an honest party is corrupted before post-accept installation, the corruption reveals exactly the
values that (in both the real and ideal worlds) would be consumed by the pending $\Load$ calls; if it is
corrupted after installation, these values have been erased and both worlds expose only black-box access
to already-installed KeyBox slots (Definition~\ref{KeyBox}).
		
	Let $\Bad := \Bad_{\mathsf{ext/forge}} \cup \Bad_{\mathsf{s32}}$. We have $\Pr[\Bad]\le \negl(\lambda)$, and conditioned on
	$\neg\Bad$, the $\Game_2$-to-$\Game_3$ transition produces identical transcripts, accept/abort behavior, public key
	outputs, and identically distributed corruption views. Together with the
	computational losses absorbed in the earlier $\approx_c$ transitions---the PRF-to-random
	hybrid step (introducing $\Adv_{\PRF}(\lambda)$), the ZK simulation in
	$\Game_0\!\to\!\Game_1$ including the pre-query bound of
	Lemma~\ref{lem:grocrp-prequery}, and the simulation-extractability step in
	$\Game_1\!\to\!\Game_2$---the total simulation error is bounded by
	$\Adv_{\PRF}(\lambda)+\Pr[\Bad_{\mathsf{pre}}]+\Pr[\Bad_{\mathsf{ext/forge}}]+\Pr[\Bad_{\mathsf{s32}}]$,
	each term being $\negl(\lambda)$.
	The registration portion of the simulation follows from Lemma~\ref{lem:sdkg-reg-sim}. Therefore, $$\mathsf{Exec}\!\left(\mathrm{\Psi}^{(3)}_{\mathsf{SDKG}},\mathcal A,\mathcal Z\right)
	\approx_c
	\mathsf{Ideal}\!\left(\FSDKG,\s,\mathcal Z\right).$$
	This completes the proof for Theorem \ref{thm:SDKG-UC}. \qed
\end{proof}

\begin{remark}[Dominating bad events / failure terms]\label{rem:sdkg-bad-events}
	In the proof of Theorem~\ref{thm:SDKG-UC}, the simulation error is dominated by the union of the following explicit bad events:
	\begin{itemize}[leftmargin=*]
		\item $\Bad_{\mathsf{sc}}$: a violation of state continuity (rollback/fork) for any KeyBox instance relied upon by the
		protocol (Assumption~\ref{assump:tee-continuity}). In a concrete realization that approximates state continuity via
		a freshness mechanism that can fail (e.g., counter exhaustion or oracle unavailability), let
		$\varepsilon_{\mathsf{sc}}(\lambda)$ bound $\Pr[\Bad_{\mathsf{sc}}]$; then all distinguishing/forgery bounds in the paper
		should be read as $\negl(\lambda)+\varepsilon_{\mathsf{sc}}(\lambda)$.
		 \item $\Bad_{\mathsf{prf}}$: the PRF security of the nonce-derivation
		function keyed by $\seed$ (Definition~\ref{def:linos-prf}) is broken, enabling
		an adversary to predict or correlate LinOS nonces across sessions.
		Under PRF security and the seed integrity invariant
		(Assumption~\ref{assump:seed-integrity}), $\Pr[\Bad_{\mathsf{prf}}]\le\negl(\lambda)$.
		Notably, by Lemma~\ref{lem:rollback-robust}, even if $\Bad_{\mathsf{sc}}$
		occurs (a rollback of mutable KeyBox state), the deterministic nonce
		derivation prevents the catastrophic ``rollback $\Rightarrow$ key disclosure''
		path: replaying the same $(\sid,K)$ reproduces the same transcript and
		reveals no new information about~$k$.
		\item $\Bad_{\mathsf{rcpt}}$: a second-preimage/collision for a receipt-binding digest in a non-programmable context
		(e.g., for $H(\USVrcpt,\cdot)$ in $\Fusv$).
		\item $\Bad_{\mathsf{s32}}$: either (a) a successful $\lambda$-bit guess of
		$h_{3,2}=H_{\mathrm{s32}}(\langle \sid,\cid_2,D_2\rangle)$ without having queried that point, or (b) a collision/second-preimage
		under $H_{\mathrm{s32}}$ that allows an accepting transcript with $D_2\neq U$, where $U$ is the point committed by $h_{3,2}$.
		\item $\Bad_{\mathsf{pre}}$: a pre-query collision that causes a simulator programming attempt
		$\SimProgramRO(\mathsf{ctx},x,y)$ with $\mathsf{ctx}\in\CtxUC$ to return $\perp$
		(Lemma~\ref{lem:grocrp-prequery}).
		\item $\Bad_{\mathsf{fs}}$: a Fischlin knowledge/soundness failure for any verifying UC-context NIZK,
		i.e., acceptance of a proof for which straight-line extraction fails or yields no valid witness
		(Lemma~\ref{lem:fischlin-negl}).
		\item $\Bad_{\mathsf{hr}}$: honest rejection of an honest Fischlin proof due to a capped rarity-search miss. This is a liveness failure only (abort/retry)
		and is not a security break.
		\item $\Bad_{\mathsf{hdl}}$: an adversary guess of a live KeyBox opaque handle (e.g., in $\mathsf{buf}$),
		allowing unintended resolution/use; bounded by a union bound over polynomially many guesses as $\Pr[\Bad_{\mathsf{hdl}}]\le \poly(\lambda)\cdot 2^{-\lambda}$.
	\end{itemize}
	By a union bound, the security distinguishing advantage is upper-bounded by
	$$\Pr[\Bad_{\mathsf{rcpt}}]+\Pr[\Bad_{\mathsf{s32}}]+\Pr[\Bad_{\mathsf{pre}}]+\Pr[\Bad_{\mathsf{fs}}]+\Pr[\Bad_{\mathsf{hdl}}],$$
	and each term is negligible under the stated hypotheses of Theorem~\ref{thm:SDKG-UC}.
	When compiling out $\Fusv$ to the concrete USV protocol (Corollary~\ref{cor:compile-out-fusv}),
	additional negligible terms stemming from the USV/DLEQ-based instantiation are bounded under DDLEQ as in
	Theorem~\ref{thm:usv-real-ideal} and Lemmas~\ref{lem:usv-tag-sim} and \ref{lem:usv-eqv}.
\end{remark}

If it holds for the corrupted set that $B \in \mathrm{\Gamma}$,
no secrecy claims can be made on the conceptual signing key $k$.
Even then, the functionality and the real protocol guarantee non-exportability:
long-term secrets remain confined to the respective $\mathcal{F}^{(i)}_{\mathsf{KeyBox}}$ instances and the
adversary only obtains black-box access via $\mathsf{Use}$. In any session wherein at least one of $P_1$ or $P_2$ honestly samples its
auxiliary scalar $\sigma_{3,1}$ (resp.\ $\sigma_{3,2}$), the resulting key $K$ is uniformly distributed in
$\mathbb{G}$ (see Lemma~\ref{lem:sdkg-latent-unif}).

\begin{corollary}[Standard NXK-DKG interface]\label{cor:sdkg-standard-api}
	Under the hypotheses of Theorem~\ref{thm:SDKG-UC}, protocol $\mathrm{\Psi}^{(3)}_{\mathsf{SDKG}}$ UC-realizes
	$\FDKGstarNXK$ in the same model.
\end{corollary}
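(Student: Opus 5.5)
The corollary follows by combining Theorem~\ref{thm:SDKG-UC} with Lemma~\ref{lem:sdkg-interface-refinement}. No new simulation argument is needed; the plan is to check that the hypotheses of the two results line up exactly.

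\emph{Step 1.} Invoke Theorem~\ref{thm:SDKG-UC}. Under its stated hypotheses, $\mathrm{\Psi}^{(3)}_{\mathsf{SDKG}}$ UC-realizes $\FSDKG$ in model $\mathcal M$. These hypotheses are the Fischlin parameters of Definition~\ref{FSdef}, DL hardness, PRF security (Definition~\ref{def:linos-prf}) and the seed integrity invariant (Assumption~\ref{assump:seed-integrity}). The model $\mathcal M$ is the $(\FKeyBox,\Fusv,\Fchan,\Fpub)$-hybrid, \gROCRP\ model with adaptive corruptions and secure erasures.

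\emph{Step 2.} Apply Lemma~\ref{lem:sdkg-interface-refinement} with $\mathrm{\Psi}:=\mathrm{\Psi}^{(3)}_{\mathsf{SDKG}}$ and this same $\mathcal M$. The lemma then gives that $\mathrm{\Psi}^{(3)}_{\mathsf{SDKG}}$ UC-realizes $\FDKGstarNXK=\WDKG\circ\FSDKG$ (Definition~\ref{def:fdkg-star-nxk}). For a given adversary $\cA$, the simulator is the one from Theorem~\ref{thm:SDKG-UC}, built for the adversary $\cA'$ that runs $\cA$ behind the same local filtering applied by $\WDKG$. That simulator is then post-composed with the filtering.

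\emph{Step 3 (main check).} The only real care point is that the wrapper hides nothing the simulator of Theorem~\ref{thm:SDKG-UC} actually needs.
\begin{itemize}
\item The $\mathsf{Tin}_i$ slots and the one-shot $\mathsf{Program}$ port are simulator-side ports. $\WDKG$ removes them only from the environment's interface and leaves the simulator's access to $\FSDKG$ intact.
\item The environment-facing outputs are unchanged by the wrapper: the key $K$ or abort, the $\KBcmd$ installation effects through $\WKB^{(i)}$, and the $\mathsf{registered}$ events.
\item The shared hybrids $\Fpub$, $\Fchan$, $\FKeyBox$ and $\GgROCRP$ lie outside the wrapper and are identical in both worlds.
\end{itemize}
Given these three points, the filtered real and ideal executions are efficient post-processings of executions that Theorem~\ref{thm:SDKG-UC} already shows indistinguishable. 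The resulting distinguishing bound is therefore unchanged: the same negligible terms of Remark~\ref{rem:sdkg-bad-events}, read modulo $\varepsilon_{\mathsf{sc}}$ and $\varepsilon_{\mathsf{er}}$ as in Lemmas~\ref{lem:sc-additive} and~\ref{lem:er-additive}.
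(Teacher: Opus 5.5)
Your proposal is correct and matches the paper's proof, which is exactly the two-line combination of Theorem~\ref{thm:SDKG-UC} (realizing $\FSDKG$) with Lemma~\ref{lem:sdkg-interface-refinement} (closure under the wrapper $\WDKG$). Your Step~3 checks simply unpack why the lemma's local post-processing argument applies here, and they are consistent with Definition~\ref{def:fdkg-star-nxk}.
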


\begin{proof}
	By Theorem~\ref{thm:SDKG-UC}, $\mathrm{\Psi}^{(3)}_{\mathsf{SDKG}}$ UC-realizes $\FSDKG$ in the stated model.
	The claim follows by Lemma~\ref{lem:sdkg-interface-refinement}. \qed
\end{proof}

\begin{corollary}[Compiling out $\Fusv$]\label{cor:compile-out-fusv}
	Let $\mathrm{\Pi}_{\mathsf{USV}}$ be the concrete protocol of Section~\ref{subsec:USV-inst} implementing $\Fusv$, and define
	$\widehat{\mathrm{\Psi}}^{(3)}_{\mathsf{SDKG}} := \mathrm{\Psi}^{(3)}_{\mathsf{SDKG}}[\mathrm{\Pi}_{\mathsf{USV}}/\Fusv]$.
	Under the hypotheses of Theorem~\ref{thm:usv-real-ideal} and Theorem~\ref{thm:SDKG-UC},
	$\widehat{\mathrm{\Psi}}^{(3)}_{\mathsf{SDKG}}$ UC-realizes $\FSDKG$ in the $(\FKeyBox,\Fchan,\Fpub)$-hybrid and \gROCRP\ models.
\end{corollary}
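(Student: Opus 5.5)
The plan is to apply the UC composition theorem in its standard form. Theorem~\ref{thm:SDKG-UC} establishes that the hybrid protocol $\mathrm{\Psi}^{(3)}_{\mathsf{SDKG}}$ UC-realizes $\FSDKG$ in the $(\FKeyBox,\Fusv,\Fchan,\Fpub)$-hybrid and \gROCRP\ models. Theorem~\ref{thm:usv-real-ideal} establishes that $\mathrm{\Pi}_{\mathsf{USV}}$ UC-realizes $\Fusv$ in the $(\Fchan,\text{\gROCRP})$-hybrid model. Composition then gives that $\widehat{\mathrm{\Psi}}^{(3)}_{\mathsf{SDKG}}=\mathrm{\Psi}^{(3)}_{\mathsf{SDKG}}[\mathrm{\Pi}_{\mathsf{USV}}/\Fusv]$ UC-emulates $\mathrm{\Psi}^{(3)}_{\mathsf{SDKG}}$. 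Transitivity of UC emulation then yields realization of $\FSDKG$ in the $(\FKeyBox,\Fchan,\Fpub)$-hybrid and \gROCRP\ models. The hybrid set shrinks because $\mathrm{\Pi}_{\mathsf{USV}}$ uses only $\Fchan$ and the global oracle, both of which are already present.

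Since $\GgROCRP$ is a global, shared setup, I would invoke the generalized UC (GUC / UC-with-global-subroutines) composition theorem rather than plain UC. This theorem requires that both component proofs be carried out in the presence of the same global functionality and that the simulators access it only through its public interface plus the simulator-only hook. I would then check that this holds, citing the paper's own observations. The simulator of Theorem~\ref{thm:usv-real-ideal} never calls $\SimProgramRO$ and only forwards $\Query$ calls. Its receipt context $\USVrcpt$ lies in $\CtxTEE$, so it is non-programmable. The SDKG simulator programs only $\mathsf{ctx}_{\mathsf{UC}}$, and possibly $\mathsf{ctx}_{\mathsf{KeyBox}}$ through key-opacity. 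By Lemma~\ref{lem:grocrp-noninterference} and Remark~\ref{rem:grocrp-domain-sep}, these contexts are disjoint from $\mathsf{ctx}_{\mathsf{DLEQ}}$ and from $\USVrcpt$. Hence no programming in one component can perturb the oracle distribution seen by the other. Lemma~\ref{lem:gnpro-uc-nizk}(iv) supplies the accompanying statement that the NIZK guarantees survive arbitrary interleaving.

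The main obstacle is subroutine-respecting behaviour and session/handle scoping. $\Fusv$ is invoked once per SDKG session, with handle $(\sid,\cid_2,P_2,P_1)$. The concrete $\mathrm{\Pi}_{\mathsf{USV}}$ shares the $\Fchan$ instance for $(P_2,P_1)$ with SDKG's Round~1 message. I would argue that the USV payload $(C_2,\zeta_2,d)$ is exactly the component of the Round~1 message already carried over $\Fchan$. The composed protocol therefore adds no extra adversary-visible interface beyond what the environment of Theorem~\ref{thm:usv-real-ideal} already controls. The $\mathsf{Deliver}$ coupling performed by the SDKG simulator is then reproduced by the USV simulator's delivery activation. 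Where this identification is not literal, I would instead instantiate a separate $\Fchan$ sub-session for USV and rely on length leakage being simulatable.

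Finally, I would collect the error terms. The total distinguishing advantage is the SDKG bound of Remark~\ref{rem:sdkg-bad-events} plus the $\Fusv$-realization error. The latter consists of $\Bad_{\mathsf{rcpt}}$ from Lemma~\ref{lem:hb-nm} and the USV soundness terms of Lemma~\ref{lem:usv-eqv} under DL, with DDLEQ needed only where Lemma~\ref{lem:usv-tag-sim} is invoked. Each term is negligible, so the stated result follows.
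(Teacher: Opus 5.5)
Your proposal is correct and follows the paper's own argument. The paper gives it in one line: UC composition of Theorem~\ref{thm:usv-real-ideal} into Theorem~\ref{thm:SDKG-UC}, followed by transitivity, with the global \gROCRP\ oracle shared via domain-separated contexts. Your additional checks flesh out details the paper leaves implicit: the $\Fusv$ simulator never programs, $\mathrm{\Pi}_{\mathsf{USV}}$ should run its $\Fchan$ traffic in its own sub-session so that it is subroutine-respecting, and the error terms are collected explicitly. Note that the separate sub-session is exactly what the literal substitution $\mathrm{\Psi}^{(3)}_{\mathsf{SDKG}}[\mathrm{\Pi}_{\mathsf{USV}}/\Fusv]$ yields, so your message-merging identification is unnecessary.
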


\begin{proof}
	Immediate from UC composition: $\mathrm{\Pi}_{\mathsf{USV}}$ UC-realizes $\Fusv$ (Theorem~\ref{thm:usv-real-ideal}) and
	$\mathrm{\Psi}^{(3)}_{\mathsf{SDKG}}$ UC-realizes $\FSDKG$ in the $(\FKeyBox,\Fusv,\Fchan,\Fpub)$-hybrid (Theorem~\ref{thm:SDKG-UC}),
	with shared \gROCRP\ access via domain-separated contexts. \qed
\end{proof}

\section{UC Security of the 1+1-out-of-$n$ SDKG Extension}\label{sec:n-extension}
To enroll a new recovery device $P_i$ ($i\ge 3$), run the one-shot registration sub-protocol used for $P_3$,
sponsored by $P_2$ or any already registered leaf. When at least one of $P_1$ and the sponsor is honest,
the $\Fpub$-agreement check (Algorithm~\ref{alg:reg}) ensures that the honest joiner's reference~$K$
equals the stored finalized key, and $g_3$ then installs the replicated recovery-role share
$k_i=k_3$ satisfying $K_{1,3}+k_3\G=K$.
When both registration senders are corrupted,
$g_3$ (Definition~\ref{def:sdkg-deriv}) guarantees only that the installed share satisfies
$K_{1,3}+K_3=K$ for the adversary-supplied $K_{1,3}$;
since the corrupted senders already know~$k$, this confers no additional adversarial advantage,
but the installed $(k_{1,3},k_3)$ decomposition need not equal the transcript-derived one.
$\FSDKG^{(n)}$ (Fig. \ref{fig:FSDKG-n}) lifts $\FSDKG$ to 1+1-out-of-$n$ star access structure (for $n \ge 3)$ by keeping the same
transcript-driven base run for $(P_1,P_2)$ and allowing a polynomial number of post-finalization
RDRs for leaves $P_i$ with $i\in\{3,\ldots,n\}$, installing the same recovery-role share in $\FKeyBox^{(i)}$
whenever at least one registration sender is honest.

\begin{figure}[!ph]
	\centering
	\setlength{\fboxrule}{0.2pt} 	
	\fbox{%
		\parbox{\dimexpr\linewidth-2\fboxsep-2\fboxrule\relax}{%
			\ding{169} \textsf{State (per session $\sid$):}
			run the base-run state of $\FSDKG$ to completion.
			Maintain $\mathsf{Reg}\subseteq\{3,\ldots,n\}$ (init $\emptyset$), a pending table
			$\mathsf{Pend}[i]=(\mathsf{sponsor},\mathsf{ok}_1,\mathsf{ok}_s,\rho_1,\rho_s,w_1^{\mathsf{del}},w_s^{\mathsf{del}})$
			(init undefined, with $\rho_1=\rho_s=w_1^{\mathsf{del}}=w_s^{\mathsf{del}}=\perp$),
			registration-channel state multiset $\mathsf Q_{\mathsf{reg}}$ of $(\rho,i,P_s,P_r,w,\phi)$ and delivered set $\mathsf D_{\mathsf{reg}}$ (init empty).
			When an honest sponsor $P_j$ constructs a registration payload for a new joiner,
			$\FSDKG^{(n)}$ obtains the ciphertexts by commanding $P_j$'s KeyBox-driver wrapper
			(Remark~\ref{rem:kb-wrapper}) to invoke $\SealToPeer$ on
			$\FKeyBox^{(j)}$'s resident slots $\KBsid{k32}$ and $\KBsid{k23}$, rather than
			maintaining an explicit scalar table.
			
			\smallskip
			\ding{169}  \textsf{Base run:} Identical to $\FSDKG$ up to and including \textsf{TryFinalize}. 
			
			\smallskip
			\ding{169}  \textsf{Register:}
			Upon $(\mathsf{register},i,\sid,j)$ from $P_i$ with $i\in\{3,\ldots,n\}$:
			require $\mathsf{finalized}=1$, $i\notin\mathsf{Reg}$, and $j\in\{2\}\cup\mathsf{Reg}$.
			If $\mathsf{Pend}[i]$ is undefined, set $\mathsf{Pend}[i]\gets(j,0,0,\perp,\perp,\perp,\perp)$ and notify $\cA$ with $(\mathsf{RegReq},i,\sid,j)$.
			
			\medskip
			Upon $(\mathsf{approve},i,\sid)$ from $P_1$ (resp.\ from sponsor $P_j)$:
			if $\mathsf{Pend}[i]$ is defined then set $\mathsf{Pend}[i].\mathsf{ok}_1\gets 1$
			(resp.\ $\mathsf{Pend}[i].\mathsf{ok}_s\gets 1$); notify $\cA$.
			
			\medskip
			\ding{169} \textsf{RegGo (sender-controlled on corruption):}
			Upon $(\mathsf{RegGo},i,\sid,w_1^\star,w_s^\star)$ from $\cA$:
			if $\mathsf{Pend}[i]$ is defined, $\mathsf{Pend}[i].\mathsf{ok}_1=\mathsf{Pend}[i].\mathsf{ok}_s=1$, and
			$(\mathsf{Pend}[i].\rho_1,\mathsf{Pend}[i].\rho_s)=(\perp,\perp)$, then:
			\begin{itemize}[leftmargin=*,nosep]
				\item Let $j:=\mathsf{Pend}[i].\mathsf{sponsor}$.
				Define slot-bound associated data strings:
				$$\mathsf{ad}_{1i}:=\ADSDKGreg{\sid}{P_1}{P_i}{\texttt{k31}},
				\quad
				\mathsf{ad}_{ji}^{32}:=\ADSDKGreg{\sid}{P_j}{P_i}{\texttt{k32}},
				\quad
				\mathsf{ad}_{ji}^{23}:=\ADSDKGreg{\sid}{P_j}{P_i}{\texttt{k23}}.$$
				
				\item Reset delivered-payload buffers:
				set $\mathsf{Pend}[i].w_1^{\mathsf{del}}\gets\perp$ and $\mathsf{Pend}[i].w_s^{\mathsf{del}}\gets\perp$.
				
				\item Sample $\rho_1,\rho_s\leftarrowdollar \{0,1\}^\lambda$.
				
				\item If $1\notin\mathsf{Cor}$, sample
				$\varpi_1 \leftarrow \Enc_{\pk_{\mathrm{seal}}^{(P_i)}}(\mathsf{ad}_{1i},\sigma_{3,1})$
				and set $w_{\mathsf{reg},1}:=\langle \sid,\varpi_1,K_{1,3},K\rangle$ and $\phi_1:=\ellregone$.
				If $1\in\mathsf{Cor}$, set $w_{\mathsf{reg},1}:=w_1^\star$ and $\phi_1:=|w_{\mathsf{reg},1}|$.
				If $w_{\mathsf{reg},1}\neq\perp$, then:
				set $\mathsf{Pend}[i].\rho_1\gets \rho_1$;
				insert $(\rho_1,i,P_1,P_i,w_{\mathsf{reg},1},\phi_1)$ into $\mathsf Q_{\mathsf{reg}}$;
				send $(\mathsf{Leak},\sid,P_1,P_i,\rho_1,\phi_1)$ to $\cA$; and
				if $1\in\mathsf{Cor}$ additionally reveal $w_{\mathsf{reg},1}$ to $\cA$.
				
				\item If $j\notin\mathsf{Cor}$, have $P_j$ invoke
				$\varpi_{sa} \leftarrow \FKeyBox^{(j)}.\Use(\KBsid{k32},\SealToPeer,\langle P_i,\mathsf{ad}_{ji}^{32}\rangle)$
				and
				$\varpi_{sb} \leftarrow \FKeyBox^{(j)}.\Use(\KBsid{k23},\SealToPeer,\langle P_i,\mathsf{ad}_{ji}^{23}\rangle)$
				via the KeyBox-driver wrapper (Remark~\ref{rem:kb-wrapper}).
				Set $w_{\mathsf{reg},s}:=\langle \sid,\varpi_{sa},\varpi_{sb},K_{1,3}\rangle$ and $\phi_s:=\ellregtwo$.
				If $j\in\mathsf{Cor}$, set $w_{\mathsf{reg},s}:=w_s^\star$ and $\phi_s:=|w_{\mathsf{reg},s}|$.
				If $w_{\mathsf{reg},s}\neq\perp$, then:
				set $\mathsf{Pend}[i].\rho_s\gets \rho_s$;
				insert $(\rho_s,i,P_j,P_i,w_{\mathsf{reg},s},\phi_s)$ into $\mathsf Q_{\mathsf{reg}}$;
				send $(\mathsf{Leak},\sid,P_j,P_i,\rho_s,\phi_s)$ to $\cA$; and
				if $j\in\mathsf{Cor}$ additionally reveal $w_{\mathsf{reg},s}$ to $\cA$.
			\end{itemize}
			
			\medskip
			\ding{169} \textsf{Deliver:}
			Upon $(\mathsf{Deliver},\sid,\rho)$ from $\cA$:
			if $(\rho,i,P_s,P_i,w,\phi)\in\mathsf Q_{\mathsf{reg}}$ and $\rho\notin\mathsf D_{\mathsf{reg}}$, delete it from $\mathsf Q_{\mathsf{reg}}$,
			add $\rho$ to $\mathsf D_{\mathsf{reg}}$, and deliver $(\mathsf{Recv},\sid,P_s,w)$ to $P_i$.
			If $i\in\mathsf{Cor}$ reveal $w$ to $\cA$ at delivery time.
			If $\mathsf{Pend}[i]$ is defined then:
			\begin{itemize}[leftmargin=*,nosep]
				\item If $\rho=\mathsf{Pend}[i].\rho_1$, set $\mathsf{Pend}[i].\rho_1\gets\perp$ and $\mathsf{Pend}[i].w_1^{\mathsf{del}}\gets w$.
				If $\rho=\mathsf{Pend}[i].\rho_s$, set $\mathsf{Pend}[i].\rho_s\gets\perp$ and $\mathsf{Pend}[i].w_s^{\mathsf{del}}\gets w$.
				
				\item If $\mathsf{Pend}[i].\mathsf{ok}_1=\mathsf{Pend}[i].\mathsf{ok}_s=1$,
				$(\mathsf{Pend}[i].\rho_1,\mathsf{Pend}[i].\rho_s)=(\perp,\perp)$,
				and $\mathsf{Pend}[i].w_1^{\mathsf{del}}\neq\perp$ and $\mathsf{Pend}[i].w_s^{\mathsf{del}}\neq\perp$, then:
				\begin{itemize}[leftmargin=*,nosep]
					\item If $i\in\mathsf{Cor}$:
					set $\mathsf{Reg}\gets \mathsf{Reg}\cup\{i\}$, delete $\mathsf{Pend}[i]$, and output $(\mathsf{registered},i,\sid)$ to $P_i$ and $\cA$.
					
					\item If $i\notin\mathsf{Cor}$:
					let $j:=\mathsf{Pend}[i].\mathsf{sponsor}$ and define the same $\mathsf{ad}_{1i},\mathsf{ad}_{ji}^{32},\mathsf{ad}_{ji}^{23}$ as above.
				Parse
				$\mathsf{Pend}[i].w_1^{\mathsf{del}}=\langle \sid,\varpi_1,K_{1,3}^{(1)},K^\star\rangle$
				and
				$\mathsf{Pend}[i].w_s^{\mathsf{del}}=\langle \sid,\varpi_{sa},\varpi_{sb},K_{1,3}^{(2)}\rangle$.
				If parsing fails, do nothing further.
				Require $K_{1,3}^{(1)}=K_{1,3}^{(2)}$; otherwise do nothing further.
				Set $K_{1,3}^\star\gets K_{1,3}^{(1)}$.
				Require $K^\star = K$ (the stored finalized public key).
				In the real protocol, the honest joiner additionally requires agreement of $P_1$'s and $P_2$'s base-run $\Fpub$ publications; the simulator conditions $\mathsf{RegGo}$ on $\Fpub$ consistency. Otherwise do nothing further.
					Otherwise, have $P_i$ forward $(\varpi_1,\varpi_{sa},\varpi_{sb},\mathsf{ad}_{1i},\mathsf{ad}_{ji}^{32},\mathsf{ad}_{ji}^{23},K^\star,K_{1,3}^\star)$
					into $\FKeyBox^{(i)}$, which executes the same KeyBox-side installation procedure as Algorithm~\ref{alg:reg}
					(with sponsor $P_j$): open the ciphertexts under the corresponding $\mathsf{ad}$ strings and invoke the corresponding $\Load$ calls to install $\KBsid{k32},\KBsid{k23},\KBsid{k3}$.
					If all invoked $\Load$ calls return $\ok$, then set $\mathsf{Reg}\gets \mathsf{Reg}\cup\{i\}$;
					delete $\mathsf{Pend}[i]$, and output $(\mathsf{registered},i,\sid)$ to $P_i$ and $\cA$.
					(Future honest-sponsor payloads from $P_i$ are generated by invoking $\SealToPeer$ on
					$\FKeyBox^{(i)}$'s resident slots $\KBsid{k32}$ and $\KBsid{k23}$ directly via the
					KeyBox-driver wrapper, avoiding the need to record the installed scalar values explicitly.)
					Otherwise do nothing further.
				\end{itemize}
			\end{itemize}
			
			\smallskip
			\ding{169} \textsf{Corruptions/Programming:} As in $\FSDKG$ (Fig.~\ref{fig:FSDKG}).
		Before base finalization, $\s$ may once send
		$(\mathsf{Program},\sid,x_1^\star,x_2^\star,\sigma_{3,1}^\star,\sigma_{3,2}^\star)$; set
		$x_1\gets x_1^\star$, $x_2\gets x_2^\star$, $\sigma_{3,1}\gets\sigma_{3,1}^\star$, and $\sigma_{3,2}\gets\sigma_{3,2}^\star$.}}
	\caption{Transcript-driven ideal functionality $\FSDKG^{(n)}$.}
	\label{fig:FSDKG-n}
\end{figure}

\begin{readerbox}[rn:corr-joiner-upgrade]{Corrupted-joiner handling: $\FSDKG$ vs.\ $\FSDKG^{(n)}$}
In the base $\FSDKG$ (Fig.~\ref{fig:FSDKG}), if the joiner $P_3$ is corrupted after both payloads are
delivered, the functionality does nothing further and leaves completion to~$\cA$.
This suffices because the base case has a single joiner with no subsequent sponsor chain.
In $\FSDKG^{(n)}$, a corrupted joiner~$i$ is immediately added to $\mathsf{Reg}$ and
$(\mathsf{registered},i,\sid)$ is output, because in the real world the adversary
(controlling $\FKeyBox^{(i)}$) can signal completion and $i$ may subsequently serve as sponsor;
the ideal functionality must reflect this by granting sponsor eligibility.
This is a deliberate semantic upgrade, not an inconsistency.
\end{readerbox}

\begin{theorem}[UC security of $\FSDKG^{(n)}$]\label{thm:SDKG-UC-n}
	Fix Fischlin parameter functions $(t(\lambda),b(\lambda),r(\lambda),S(\lambda))$ satisfying Definition~\ref{FSdef}.
	Assume hardness of DL and DDLEQ, and the hypotheses used for Theorems~\ref{thm:SDKG-UC} and~\ref{thm:usv-real-ideal}.
	Then the compiled $n$-party protocol $\widehat{\mathrm{\Psi}}^{(n)}_{\mathsf{SDKG}}$ UC-realizes $\FSDKG^{(n)}$
	in the $(\FKeyBox,\Fchan,\Fpub)$-hybrid and \gROCRP\ models against adaptive corruptions with secure erasures.
\end{theorem}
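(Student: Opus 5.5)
The plan is to reduce Theorem~\ref{thm:SDKG-UC-n} to the base-case result plus a per-registration hybrid argument. First, we compile out $\Fusv$ exactly as in Corollary~\ref{cor:compile-out-fusv}. UC composition with Theorem~\ref{thm:usv-real-ideal} then lets us work with $\mathrm{\Psi}^{(n)}_{\mathsf{SDKG}}$ in the $(\FKeyBox,\Fusv,\Fchan,\Fpub)$-hybrid and \gROCRP\ models. The simulator $\s^{(n)}$ runs the simulator of Theorem~\ref{thm:SDKG-UC} verbatim for the base run between $P_1$ and $P_2$: Tin-feeding, honest UC-NIZK simulation, straight-line extraction under simulation-extractability via Lemma~\ref{lem:sdkg-fresh-tagged}, and the one-shot \textsf{Program} call fixed by Table~\ref{tab:sdkg-program-cases}. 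Since $\FSDKG^{(n)}$ is identical to $\FSDKG$ through \textsf{TryFinalize}, the hybrids $\Game_0$--$\Game_3$ of that proof carry over unchanged. Their error terms, namely PRF, $\Bad_{\mathsf{pre}}$, $\Bad_{\mathsf{ext/forge}}$ and $\Bad_{\mathsf{s32}}$, absorb all base-run losses. One point needs care: devices $P_4,\dots,P_n$ do not participate in the base run, so their corruptions before finalization reveal only empty shadow state, as in Table~\ref{tab:sdkg-ace-shadow}.

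Second, we handle the post-finalization registrations. Let $q=\poly(\lambda)$ bound the number of $(\mathsf{register},i,\sid,j)$ requests. We define hybrids $\Game_{3,0},\dots,\Game_{3,q}$, where $\Game_{3,\ell}$ answers the first $\ell$ registrations through $\FSDKG^{(n)}$'s \textsf{Register}/\textsf{RegGo}/\textsf{Deliver} logic and the remaining ones as in the real Algorithm~\ref{alg:reg}, parameterized by $(P_i,P_j,\sid)$. Each step $\Game_{3,\ell-1}\to\Game_{3,\ell}$ is an application of Lemma~\ref{lem:sdkg-reg-sim}, generalized to joiner $P_i$ and sponsor $P_j$. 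If $P_i$ is honest, only $\ellregone,\ellregtwo$ leak. If $P_i$ is corrupted, honest-sender ciphertexts are real $\SealToPeer$ outputs in both worlds: the functionality samples $\varpi_1$ from $\sigma_{3,1}$, and it obtains sponsor ciphertexts by commanding $P_j$'s KeyBox-driver wrapper (Remark~\ref{rem:kb-wrapper}) on the resident slots $\KBsid{k32},\KBsid{k23}$, so the distributions are identical. Slot consistency across the sponsor chain follows by induction on $\ell$. For an honest joiner, the values installed into $\KBsid{k32},\KBsid{k23},\KBsid{k3}$ are the ones opened from the sponsor's seals. So every later honest sponsor seals exactly what the ideal wrapper would seal, possibly shifted by a common offset $\mathrm{\Delta}$ when an earlier sponsor was corrupted. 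Because $g_3$ depends only on $\sigma_{2,3}-\sigma_{3,2}$, that offset leaves $k_3$ unchanged, as noted in Remark~\ref{rem:sdkg-ad-fields}. For a corrupted joiner, we match the ``deliberate semantic upgrade'' of $\FSDKG^{(n)}$ by adding $i$ to $\mathsf{Reg}$ immediately: in the real world the adversary controls $\FKeyBox^{(i)}$ and may act as sponsor thereafter.

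Third, we argue correctness of installed shares, case by case on the senders $(P_1,P_j)$. (a) If both are honest, the $\Fpub$ agreement check and the AEAD associated-data checks in $g_3$ force $k_i=k_3$ with $K_{1,3}+k_3\G=K$. (b) and (c) If exactly one sender is honest, the honest base-run $\Fpub$ publication anchors $K$, as in Reader Note~\ref{rn:broadcast-guard}. AEAD integrity under receiver-recomputed $\mathsf{ad}$ then reduces any tampering to an abort or a common-offset shift, and $g_3$'s check $K_{1,3}+K_3=K$ together with the $K_{1,3}^{(1)}=K_{1,3}^{(2)}$ check pins $k_3$. Here $\s$ withholds $\mathsf{RegGo}$ exactly when the simulated $\Fpub$ publications disagree, per Remark~\ref{rem:fpub-agreement-sim}. (d) If both senders are corrupted, $P_1\in\mathsf{Cor}$ and $P_j$ holds a primary-role or recovery-role share, so the corrupted set is authorized and already knows $k$. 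The functionality then lets $\cA$ supply $w_1^\star,w_s^\star$ freely, and the honest joiner runs the same KeyBox procedure in both worlds, so the views coincide. Key-opacity (Assumption~\ref{assump:keybox-opacity}) together with IND-CCA of sealing handles the honest-KeyBox outputs, and $\Bad_{\mathsf{hdl}}$ bounds handle guessing.

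The main obstacle is the second step under adaptive corruptions that interleave with many concurrent registrations. In particular, a sponsor or joiner may be corrupted after one payload is queued but before delivery, and a later registration may use a leaf whose sponsor-state slots were installed via an adversarially shifted path. The plan is to maintain an invariant: every honest registered leaf's pair $(\sigma_{2,3},\sigma_{3,2})$ differs from the finalized values only by a common offset. We would prove this invariant by induction over the order in which $\mathsf{Reg}$ grows, reusing the shadow-state corruption handling of Theorem~\ref{thm:SDKG-UC} for the (empty) host state of registration parties. The total distinguishing advantage is then the base-run bound plus $q\cdot(\negl(\lambda)+\Pr[\Bad_{\mathsf{hdl}}])$, which is negligible.
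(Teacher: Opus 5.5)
Your proposal follows the paper's proof sketch essentially step for step:
\begin{itemize}
\item the base run reuses the simulator of Theorem~\ref{thm:SDKG-UC} with its one-shot \textsf{Program} call;
\item registrations are handled by a hybrid over $q=\poly(\lambda)$ instances, using Lemma~\ref{lem:sdkg-reg-sim} generalized to joiner $P_i$ and sponsor $P_j$;
\item the case split (a)--(d) on the two senders is the same;
\item honest-sponsor ciphertexts come from sealing the resident slots $\KBsid{k32},\KBsid{k23}$ via the KeyBox-driver wrapper.
\end{itemize}
Two corrections are needed, however.

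First, the invariant you propose for the ``main obstacle'' is false. In your own case (d), the adversary supplies both payloads. It may choose $K_{1,3},\sigma_{3,1},\sigma_{3,2},\sigma_{2,3}$ freely, subject only to $K_{1,3}+k_3\G=K$; the paper notes this decomposition need not match the transcript-derived one. So an honest leaf registered this way, and any leaf it later sponsors, can hold sponsor-state values that do not differ from the finalized ones by a common offset. Your induction over the growth of $\mathsf{Reg}$ therefore breaks at the first such registration.

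The invariant is also unnecessary. The paper uses only two facts: $\FSDKG^{(n)}$ executes the identical KeyBox installation procedure of Algorithm~\ref{alg:reg}, and it seals from the same resident slots. Hence the real and ideal $\FKeyBox^{(i)}$ states coincide after every registration, whatever values they contain. That is exactly the slot-consistency claim in your second step once you drop the offset qualifier, and it is all the hybrid needs.

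Second, in cases (b)/(c) an honest sender does not by itself anchor $K$. The anchor is $P_1$'s or $P_2$'s base-run $\Fpub$ publication, and a registered-leaf sponsor makes none. When the only honest sender is such a leaf, you need the proviso the paper states explicitly: $P_1$ or $P_2$ must still have been honest when it published on $\Fpub$. The paper folds the remaining situation into its case (d).
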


\begin{proofsketch}
	Let $\mathcal A$ be any PPT real-world adversary and $\mathcal Z$ any PPT environment.
	We build a PPT simulator $\s^{(n)}$ for the ideal execution with $\FSDKG^{(n)}$.
	
	\smallskip\noindent
	\emph{Base run.}
	$\s^{(n)}$ simulates the base transcript exactly as in the proof of Theorem~\ref{thm:SDKG-UC}:
	it mediates $\mathcal A$'s \gROCRP\ queries under the UC-proof contexts, extracts witnesses from verifying
	UC-NIZK-AoKs, computes the unique transcript-defined $(x_1,x_2)$ in accepting sessions, computes the corresponding
	$\sigma_{3,1}$ and $\sigma_{3,2}$ as in Theorem~\ref{thm:SDKG-UC}, and programs	$\FSDKG^{(n)}$ once via $(\mathsf{Program},\sid,x_1,x_2,\sigma_{3,1},\sigma_{3,2})$ prior to finalization. This ensures that the public key output $K$ and the base-installed KeyBox shares for $P_1$ and $P_2$
	match the real execution distribution, up to negligible $\Bad$ events as in Theorem~\ref{thm:SDKG-UC}.
	
	\smallskip\noindent
	\emph{Registrations.}
	After finalization, $\FSDKG^{(n)}$ may receive any number of registration requests $(\mathsf{register},i,\sid)$
	for distinct $i\in\{3,\ldots,n\}$. In the real execution, each registration of an honest joiner affects only:
	(i) the transcript/messages of that registration session and
	(ii) the internal state of $\FKeyBox^{(i)}$ via three $\Load$ calls that install the two sponsor-state slots
	($\KBsid{k32}$ and $\KBsid{k23}$ via $g^{\textsf{reg}}_{3,2}$ and $g^{\textsf{reg}}_{2,3}$) and the replicated
	recovery-role share ($\KBsid{k3}$ via $g_3$), exactly as in Algorithm~\ref{alg:reg} with the
	appropriate joiner and sponsor parameters
	(Definition~\ref{def:sdkg-deriv}).
	For a corrupted joiner~$i$, the adversary controls $\FKeyBox^{(i)}$ directly and may or may not invoke $\Load$; correspondingly, $\FSDKG^{(n)}$ adds~$i$ to $\mathsf{Reg}$ and emits $(\mathsf{registered},i,\sid)$ without KeyBox installation, matching the real-world behavior where $\cA$ controls the completion signaling.
	Since $\FKeyBox^{(i)}$ state is never revealed upon corruption (only black-box access via $\Use$),
	$\s^{(n)}$ can simulate each device registration independently, while ensuring that $\FKeyBox^{(i)}$ ends up storing the correct
	shares and hence returns the same $\Pub(k_3)$ under $\Use$ as in the real protocol.
	Lemma~\ref{lem:sdkg-reg-sim} is stated for the $P_3$ base case; its proof extends to arbitrary joiner $P_i$ with sponsor $P_j$
	because the registration procedure, the associated-data binding, and the $g_3/g^{\textsf{reg}}$ routines are parameterized by
	$(P_r, P_{\mathsf{sp}}, \sid)$ (Definition~\ref{def:sdkg-deriv}).
	We distinguish four cases for the installed scalars:
	\begin{enumerate}[label=(\alph*),nosep,leftmargin=*]
		\item \emph{Both registration senders honest.}
		All installed scalars---including the individual sponsor-state
		slots $(\sigma_{3,2},\sigma_{2,3})$---are deterministic functions of the
		already-fixed values $(\sigma_{3,1},\sigma_{3,2},\sigma_{2,3},K,K_{1,3})$.
		\item \emph{$P_1$ honest, sponsor corrupted.}
		The recovery-role share $k_3$ is still uniquely determined by $g_3$
		(Definition~\ref{def:sdkg-deriv}), which enforces $K_{1,3}+k_3\G=K$;
		however, because $g_3$ constrains only the linear combination
		$\sigma_{2,3}-\sigma_{3,2}$ (not the individual values), a corrupted
		sponsor may shift both slots by a common offset $\mathrm{\Delta}$ without
		affecting~$k_3$.  The actually-installed values reside in $\FKeyBox^{(i)}$'s
		slots $\KBsid{k32}$ and $\KBsid{k23}$; the
		adversary supplies the sponsor payload $w_s^\star$ directly, so
		$\FSDKG^{(n)}$ correctly mirrors the real-world state. When $P_i$ later
		serves as an honest sponsor, $\FSDKG^{(n)}$ invokes $\SealToPeer$ on
		these resident slots via the KeyBox-driver wrapper, producing ciphertexts
		with the same distribution as in the real protocol.
		\item \emph{$P_1$ corrupted, sponsor honest.}
		The corrupted~$P_1$ controls $\FKeyBox^{(1)}$ and can seal an arbitrary scalar
		$\sigma_{3,1}'$ via $\SealToPeer$; it also controls its own $\Fpub$ publication.
		However, the honest joiner~$P_i$'s reference~$K$ is obtained by requiring
		agreement between $P_1$'s and $P_2$'s base-run authenticated
		$\Fpub$ publications for~$\sid$ (Algorithm~\ref{alg:reg}):
		$P_i$ sets $K:=K^{(1)}$ only if $K^{(1)}=K^{(2)}$, aborting otherwise.
		Since case~(c) presupposes that at least one of $P_1$ or $P_2$ was honest at the time of its
		base-run $\Fpub$ publication---if both were corrupted during the base run, we are in case~(d)---and
		$\Fpub$ publications are immutable once made (the functionality provides no retract interface),
		the honest publication anchors~$K$ to the correct value.
		Concretely, a corrupted~$P_1$ that published $K_{\mathrm{fake}}\neq K$ during the base run implies~$P_2$ was honest
		and published the correct~$K$, so $K^{(1)}\neq K^{(2)}$ and $P_i$ aborts.
		If $P_1$ was honest during the base run and corrupted later, $K^{(1)}=K$ is already committed.
		In either sub-case, if $P_1$ publishes the correct~$K$, then $g_3$ enforces $K_{1,3}+k_3\G=K$ with the
		correct $K$, and the recovery-role share is uniquely determined as in case~(a).
		As in case~(b), the sponsor-state slots may shift by a common offset without
		affecting~$k_3$; the actually-installed values reside in $\FKeyBox^{(i)}$'s
		slots and are accessed via $\SealToPeer$ when needed.
		The adversary supplies $w_1^\star$ directly, so $\FSDKG^{(n)}$ correctly mirrors
		the real-world state.
		\item \emph{Both senders corrupted} (requires $P_1\in\mathsf{Cor}$).
		The actually-installed values reside in $\FKeyBox^{(i)}$'s resident slots,
		and $\FSDKG^{(n)}$ invokes $\SealToPeer$ on these slots (via the KeyBox-driver
		wrapper) in subsequent honest-sponsor payloads, so real and ideal sponsor
		ciphertexts match.
	\end{enumerate}
	In every case, $\FSDKG^{(n)}$ performs the same $\FKeyBox$ installation
	procedure as Algorithm~\ref{alg:reg}, so the installed $\FKeyBox^{(i)}$
	state is identical in real and ideal
	executions.  Consequently, when a newly registered device later serves as
	sponsor, $\FSDKG^{(n)}$ invokes $\SealToPeer$ on the same resident slots
	that the real-world protocol would use, preserving the hybrid step's locality.
	
	\smallskip\noindent
	\emph{Composition over many registrations.}
	Registrations for different $i$ touch disjoint KeyBox instances $\FKeyBox^{(i)}$ and do not modify $(x_1,x_2,K)$.
	When a newly registered device later sponsors another device, $\FSDKG^{(n)}$ invokes $\SealToPeer$
	on the sponsor's $\FKeyBox$ resident slots via the KeyBox-driver wrapper; this dependency does not
	break hybrid locality because in every hybrid, $\FSDKG^{(n)}$ performs the same $\FKeyBox$-mediated
	installation as the real protocol, producing identical installed slot values.
	Therefore, switching registration $i$ from real to ideal leaves all state
	visible to subsequent registrations unchanged.
	We argue via a standard hybrid over $q = q (\lambda) \le \poly(\lambda)$ completed registrations: define hybrids $\Game^{(j)}$ for $j=0,\ldots,q$
	that replace the first $j$ real registrations by the $\FSDKG^{(n)}$-generated ones.
	Consecutive hybrids differ in only one registration instance, so Lemma~\ref{lem:sdkg-reg-sim}
	gives a negligible per-step change, and a union bound over $q$ steps keeps the total
	distinguishing advantage negligible.
	Thus, it follows from Theorem~\ref{thm:SDKG-UC} and Corollary~\ref{cor:compile-out-fusv} that $$\mathsf{Exec}(\widehat{\mathrm{\Psi}}^{(n)}_{\mathsf{SDKG}},\mathcal A,\mathcal Z)\approx_c
	\mathsf{Ideal}(\FSDKG^{(n)},\s^{(n)},\mathcal Z).$$ \qed
\end{proofsketch}

\begin{remark}[Why the two-party scoping is necessary]
	Under UC scheduling,
	a corrupted committer can equivocate by delivering distinct first commits to different relying parties.
	The compiled protocol $\mathrm{\Pi}_{\mathsf{USV}}$ necessarily yields recipient-local state in that case, so
	it can only UC-realize an ideal functionality with the same relying-party scoped semantics.
\end{remark}

To support scalable RDR with an arbitrary already-registered leaf as sponsor,
SDKG designates a small subset of $\sigma$-values as \emph{registration scalars} that are retained
inside KeyBoxes in dedicated slots (Definition~\ref{def:sdkg-deriv}).
These retained values remain NXK-restricted and are used only via
$\SealToPeer$. If an instantiation exposes remote-attestation transcripts in the clear, this can introduce extra visible structure; modeling such leakage is orthogonal to our NXK consistency enforcement.

\paragraph{Optional.} If external verifiability is desired: after KeyBox$_3$ installs $k_3$,
the host obtains $K_3:=\Pub(k_3)$ by invoking $\FKeyBoxOf{3}.\Use(\KBsid{k3},\GetPub,\cdot)$ and obtains a proof $\pi_{\DL_3}$ by running the in-KeyBox
optimized Fischlin prover:
\[
(\muFS,\mathbf a)\leftarrow \FKeyBoxOf{3}.\Use(\KBsid{k3},\textsf{FS.Start},\langle \sid,K_3\rangle),\qquad
\pi_{\DL_3} \leftarrow \FKeyBoxOf{3}.\Use(\KBsid{k3},\textsf{FS.Prove},\muFS).
\]
The proof $\pi_{\DL_3}=((a_i,e_i,z_i))_{i=1}^r$ is a UC-NIZK-AoK for the DL relation $\cR_{\mathsf{DL}}$
on statement $(\pp,K_3)$ in the \gROCRP\ model
under context $\mathsf{ctx}_{\mathsf{KeyBox}}$. By construction, $\mathsf{FS}.\mathsf{Prove}$ releases at most one distinct response per commitment---since even under rollback, deterministic nonce derivation prevents two different responses for the same commitment---and issues all rare-structure hash queries internally under $\mathsf{ctx}_{\mathsf{KeyBox}}$. Verification is public via $\mathsf{FS}.\Verify(K_3,\pi_{\DL_3})$.

\section{Complexity and Overhead}
\label{sec:complexity}

We summarize dominant computation and communication costs of SDKG (base $1{+}1$-out-of-$3$ run) and its
RDR extension to $1{+}1$-out-of-$n$. Let $\mathbb G$ be a prime-order group of size $p$ and write
$\kappa := \log p$ (e.g., $\kappa\approx 256$ for standard 128-bit ECC instantiations). All NIZK(-AoK)s are instantiated via the optimized Fischlin transform in the \gROCRP\ model with parameters
$(t,b,r,S)$ satisfying Definition~\ref{FSdef}. For our concrete estimates, we fix a target security level $\lambda=\lambda_0$ and instantiate
\[
(t,b,r,S) := (t(\lambda_0),b(\lambda_0),r(\lambda_0),S(\lambda_0)) = (13,8,32,32).
\]
Hence, for any prover making at most $Q$ distinct \gROCRP\ queries under the relevant proof context, the resulting
Fischlin knowledge/soundness error is $\le (Q+1)\cdot 2^{-195} + \negl(\lambda)$, which constitutes a security bound.
Separately, the early-break rarity search induces a per-proof honest-rejection probability bounded by
$p_{\mathrm{rej}}\le r\cdot(1-2^{-b})^{2^t}\approx 2^{-41}$ for these parameters
(liveness/completeness).
This $p_{\mathrm{rej}}$ is an operational failure probability of the proof-generation subroutine, not an
adversarial success probability; the prover retries proof generation upon rejection, with expected
attempts $1/(1-p_{\mathrm{rej}})\approx 1$. On rejection, the prover aborts and retries by choosing a fresh proof-session identifier $\sid'\neq\sid$ and invoking \textsf{FS.Start}$(\sid',K)$ to obtain a fresh handle and commitments because reusing the same $(\sid,K)$ deterministically regenerates the same commitments. Using standard compressed encodings, the dominant objects are:
\[
|\pi_{\mathsf{DL}}| \approx 2.1\text{ KiB},\qquad
|\pi_{\mathsf{aff}}| \approx 4.1\text{ KiB},\qquad
|(C,\zeta)| \approx 3.1\text{--}3.3\text{ KiB}.
\]
The base SDKG transcript contains one USV certificate and two affine AoKs, yielding a total transcript size of
$\approx 11$--$13$ KiB (excluding small constant headers). Verification of a Schnorr-DL Fischlin proof performs $\Theta(r)$ Schnorr checks / $\Theta(r)$ scalar multiplications: one fixed-base by $\G$, one variable-base by the statement element $K$, plus additions, up to standard multi-scalar optimizations. The SDKG base run verifies a constant number of proof objects---one DLEQ proof inside USV and two affine proofs implemented by two DL proofs---hence the verification and proving
costs are $\widetilde O(\kappa^{2.585})$ bit-operations in the Karatsuba model (dominated by a constant number of scalar multiplications).
Registration of an additional recovery device adds a constant-size exchange and an optional DL proof for external verifiability.

SDKG uses a constant number of proof objects in the base run and performs no resharing. Therefore, $\mathsf{Comm}_{\mathsf{SDKG}} = \widetilde{O}(\kappa)\ \text{bits}$ and $\mathsf{Comp}_{\mathsf{SDKG}} = \widetilde{O}(\kappa^{2.585})\ \text{bit-ops}.$ For the $1{+}1$-out-of-$n$ extension, RDR registers each additional device with $\mathsf{Comm}_{\mathsf{RDR,per\;device}} = \widetilde{O}(\kappa)\  \text{bits}$ and $\mathsf{Comp}_{\mathsf{RDR,per\;device}} = \widetilde{O}(\kappa^{2.585})\ \text{bit-ops},$
so in total: $\mathsf{Comm}(n)=\widetilde{O}(n\kappa)$ and $\mathsf{Comp}(n)=\widetilde{O}(n\kappa^{2.585})$.

\section{Conclusion}\label{sec:conclude}
We studied UC-secure Distributed Key Generation (DKG) in the Non-eXportable Key (NXK) setting wherein long-term signing shares are confined to state-continuous
trusted hardware (e.g., TEEs). In this NXK setting, the protocol cannot rely on classical VSE
(VSS/PVSS/AVSS or commitment-and-proof analogues) that exports and manipulates shares (or affine images of shares),
and state-continuity rule out rewinding/forking-style extraction once witness-bearing computation is delegated to
the hardware boundary. Our starting point is therefore to decouple confidentiality from consistency:
confidentiality can be delegated to the NXK/TEE boundary, while the protocol must still enforce transcript-defined
affine consistency and uniqueness of the induced sharing without any share opening or resharing. We introduced Unique Structure Verification (USV), a publicly verifiable certificate mechanism that yields a
deterministic public opening to a committed group element while keeping the underlying scalar hidden under DL and
DDLEQ hardness, providing canonical public structure without trapdoors or programmable setup. We combined USV with
UC-extractable NIZK arguments of knowledge (via an optimized Fischlin transform in our \gROCRP-hybrid model) to obtain
straight-line extraction without rewinding. Building on these tools, we constructed Star DKG (SDKG), a
constant-round UC-secure DKG for a $1{+}1$-out-of-$n$ star access structure motivated by threshold cryptocurrency wallets where a designated
service must co-sign yet cannot reconstruct the signing key alone (under key-opacity of the KeyBox profile).
More precisely, for the base authorized pairs $\{P_1,P_2\}$ and $\{P_1,P_3\}$, the transcript-derived shares always sum to~$k$ (Observation~\ref{lem:fdkg-properties}); for each subsequently registered device~$P_i$, the same holds whenever at least one registration sender ($P_1$ or the sponsoring leaf) is honest (Section~\ref{sec:n-extension}). When both senders are corrupted---which entails a corrupted~$P_1$ and hence a fully corrupted authorized pair---the installed decomposition satisfies $K_{1,3}+K_3=K$ but need not coincide with the transcript-derived split; because the adversary already controls an authorized pair, this confers no additional advantage. In all corruption scenarios, (ii)~no unauthorized set---including the center $P_1$ alone---can recover~$k$ from the KeyBox-confined shares under key-opacity (Observation~\ref{lem:fdkg-properties}).
The operational signing completeness guarantee---that authorized pairs can \emph{use} their NXK-confined shares to
produce valid signatures---depends on composing SDKG with a suitable threshold signing protocol compatible with
the KeyBox API and is left to future work.
SDKG supports post-DKG device enrollment by registering additional
recovery devices as replicated leaves, preserving the public key without resharing. Under DL and DDLEQ assumptions,
PRF security for deterministic nonce derivation, the seed integrity invariant (Assumption~\ref{assump:seed-integrity}),
and assuming KeyBox (TEE-resident NXK keystore) opacity, state continuity, and secure erasures, we prove that
SDKG UC-realizes a transcript-driven ideal functionality. The protocol achieves $\tilde O(n\log p)$ communication and
$\tilde O(n\log^{2.585}p)$ bit-operation cost in a prime-order group of size $p$, while each additional device
registration costs $\tilde O(\log p)$ communication and $\tilde O(\log^{2.585}p)$ computation (with a base transcript
of $\approx 11$--$13$~KiB in a 128-bit instantiation).

\section*{Acknowledgments} 
I thank Nolan Miranda for helpful discussions and feedback on early versions of the SDKG idea and protocol sketch.

\bibliographystyle{splncs04}
\bibliography{ref}

\appendix
\section{Programmable Secure Hardware Integration}\label{App1}
The main construction already assumes that USV certificate generation $\Cert(\pp, \cdot)$ is executed inside the KeyBox boundary and exports only $(C, \zeta)$. This appendix describes an optional additional hardening that further reduces exposure of other ephemeral scalars and group/field arithmetic in host memory by shifting (i) sampling of ephemerals and (ii) scalar multiplication with public points into the KeyBox/TEE. At a high level, the host is then restricted to routing public group elements, protocol transcripts, and ciphertexts, while the KeyBox performs ephemeral sampling and arithmetic and (optionally) authenticated encryption/decryption so plaintexts never leave the KeyBox boundary.

Although this hardening is mostly written for TEEs, because they are a convenient programmable substrate, the profile applies to any KeyBox realization that can expose the same non-exporting ephemeral-handle operations without violating key-opacity/state-continuity.

\paragraph{Scope / proof compatibility.}
This profile is an implementation hardening and does not change the protocol transcript or acceptance
predicate. Our UC analysis continues to apply to the baseline profile used in the main construction.
UC-extractable consistency AoKs (whose straight-line extraction relies on observing
the prover's \gROCRP\ query log) remain generated outside the KeyBox as in the main protocol;
moving those AoKs inside the KeyBox would require a different extraction mechanism/model.

\paragraph{Illustrative hardened interface (informal).}
Model this by augmenting the admissible KeyBox operation set
$\mathcal F_{\mathrm{adm}}$ (Definition~\ref{def:keybox-profile}) with ephemeral-handle arithmetic
that outputs only public group elements.
Let $\tau$ denote a type-tagged opaque handle to an ephemeral scalar $s\in\Zp$ stored inside the KeyBox, similar in spirit to internal $\mathsf{buf}$ handles in $\FKeyBox$ (Fig.~\ref{Fdskg}).
Handles refer only to KeyBox-internal ephemeral state and are not interchangeable with long-term key slots.

\begin{itemize}
	\item $\mathsf{GenScalar}() \rightarrow \tau$:
	sample $s\leftarrow \Zp$ internally and return only the handle $\tau$.
	
	\item $\mathsf{Mul}(\tau, P) \rightarrow Q$:
	for public $P\in\mathbb G$, output $Q:=sP$ (and optionally consume $\tau$).
	
	\item $\mathsf{MulGen}(\tau) \rightarrow J$:
	output $J:=s\G$ (a wrapper for $\mathsf{Mul}(\tau,\G)$).
	
	\item (Optional convenience) $\mathsf{LinComb}(\{(\tau_i,P_i)\}_{i=1}^t, \{\alpha_i\}_{i=1}^t) \rightarrow Q$:
	for public $\alpha_i\in\Zp$ and $P_i\in\mathbb G$, output
	$Q := \sum_{i=1}^t \alpha_i \cdot s_i P_i$.
	
	\item (Optional; needed for RDR / any KeyBox-to-KeyBox transport of share-derived payloads)
	retain the sealing interfaces $\SealToPeer$ and $\OpenFromPeer$ so the host supplies only peer identity
	and associated data and plaintexts never leave the KeyBox boundary.
\end{itemize}

\paragraph{Key-opacity intuition.}
Since the hardened interface returns only public group elements derived from fresh ephemerals and public inputs,
its externally visible outputs remain simulatable given public information, aligning with the key-opacity
assumption used throughout the paper.

\section{Candidate KeyBox Implementations and Profile-Capture Checklist}
\label{app:keybox-impls}
This appendix provides (i) concrete implementation classes that plausibly realize the KeyBox abstraction under a
pinned profile, and (ii) an operational checklist for deriving/enforcing such a profile from vendor APIs. Table \ref{tab:keybox-candidates} summarizes candidate deployment families.

\subsection{Candidate classes of implementations}
The following implementation patterns are plausible realizations of the abstraction in
Fig.~\ref{Fdskg}, provided they are configured and constrained to enforce the intended KeyBox
profile (Definition~\ref{def:keybox-profile}) and do not silently expose mechanisms that violate
key-opacity (Assumption~\ref{assump:keybox-opacity}) or state continuity
(Assumption~\ref{assump:tee-continuity}). 

\begin{table}[t]
	\centering
	\small
	\setlength{\tabcolsep}{3.5pt}
\begin{tabular}{@{}p{2.5cm} | p{5.9cm} | p{5.3cm} | p{3.4cm}@{}}
		\toprule
		\textbf{Family} &
		\textbf{What matches $\FKeyBox$} &
		\textbf{Profile adapter must forbid} &
		\textbf{Freshness / anti-rollback} \\
		\midrule
		TEE as restricted keystore &
		Minimal enclave exports only $\Load/\Use$; sealing to attestation-bound peer keys &
		Wrap/export under host-chosen keys; APIs returning share-deriving outputs of resident keys$^{\dagger}$; high-entropy failure channels &
		TPM/TEE counters, trusted time, or server freshness oracle \\
		
		Attested enclave$\leftrightarrow$KMS &
		KMS authorizes use/sealing conditioned on enclave measurement/attestation &
		KMS export/wrap features; host-selected recipients; policy gaps &
		KMS-side monotonic state + server oracle \\
		
		HSM (PKCS\#11 allowlist) &
		Non-extractable objects; sealing/wrap only to trusted/pinned recipients &
		Caller-decryptable wrapping and key-deriving ops &
		HSM counters or external freshness \\
		
		Endpoint keystore &
		Hardware-bound NXK keys + restricted built-in ops; optional attestation &
		Limited programmability; operation menu may not support custom subroutines &
		Platform counters/time + server oracle \\
		
		TPM-assisted freshness &
		Provides monotonic primitives to bind sealed state to freshness &
		Integration complexity; counter exhaustion/availability &
		TPM NV counters / PCR-bound state \\
		\bottomrule
	\end{tabular}
	\caption{Non-normative summary: common substrates for enforcing a KeyBox API profile.\\
	{\footnotesize $^{\dagger}$This refers to vendor-API operations that export caller-invertible affine images of an already-installed long-term share. Key-independent admissible operations (e.g., $\textsf{SDKG.LeafInit}\in\mathcal{F}_{\mathrm{KI}}$) that output fresh share-deriving material for transient protocol use are part of the protocol's own admissible profile and are not subject to this restriction; their outputs are NXK-restricted transient host state subject to secure erasure (Remark~\ref{rem:transport-vs-nxk}).}}
	\label{tab:keybox-candidates}
\end{table}

\begin{enumerate}[leftmargin=*,label=\arabic*.]
	\item Dedicated TEEs as restricted keystores:
	Implement the KeyBox as a minimal enclave exposing only $\Load/\Use$ endpoints.
	Approximate $\SealToPeer$ by pinning recipient keys to attested identities/measurements (e.g., HPKE-style sealing),
	and treat rollback/fork protection as an explicit subsystem (counters / trusted time / server freshness).
	
	\item Attested enclave$\leftrightarrow$KMS / KMS-backed enclaves:
	Split ``KeyBox logic'' (enclave) from ``policy enforcement'' (KMS) so that key usage and sealing are
	authorized by attestation evidence and measurement-based policy, rather than host-supplied keys.
	
	\item HSMs under a strict allowlist profile (PKCS\#11-style):
	Use HSM non-exportability, but restrict mechanisms to a small allowlist that rules out
	share-deriving outputs and caller-decryptable wrap/export; if wrapping exists, constrain it to
	trusted/pinned recipients only (to preserve key-opacity).
	
	\item Endpoint hardware-backed keystores:
	Platform keystores often match the ``NXK + restricted operations'' model, but typically expose a fixed
	operation menu; they therefore realize only those profiles reducible to built-in operations plus pinned sealing.
	
	\item TPM-assisted freshness/anti-rollback:
	Even when the KeyBox is a TEE/enclave, TPM-backed monotonic primitives can supply the freshness mechanism
	needed to approximate state continuity in hostile host environments.
\end{enumerate}

\section{Illustrative application: NXK-compatible commit--reveal randomness beacons}\label{App2}
Many practical protocols--- such as lotteries, leader election, and randomness beacons \cite{Cleve[86],Blum[83],Randao[19]}---require a commit--reveal
discipline and/or VSS/VDF to prevent adaptive choice/grinding: if parties were to publish their contribution $M_i := m_i \G$ immediately,
then a last-moving adversary can sample many candidate scalars $m_i$ after seeing the honest $M_j$'s and select one
that biases a downstream predicate of the beacon (e.g., a target prefix), where the beacon is
defined as $\pounds := \sum_i M_i$ (and typically $\rho := H(\mathsf{beacon},\langle \sid, \pounds\rangle)$). In a classical commit--reveal
design one would commit to $m_i$ and later open by revealing $m_i$, but in the NXK setting the scalar is non-exportable. USV provides an alternative compatible under the NXK model in which the opening is to the induced group element rather than to the
scalar. Concretely, each party samples $m_i \leftarrowdollar \Zp$ inside its KeyBox and computes a USV certificate
$(C_i,\zeta_i) \gets \textsf{Cert}(\pp,m_i)$.
\begin{itemize}\setlength\itemsep{0.25em}
	\item Commit: broadcast only $C_i$ (keep $\zeta_i$ private).
	\item Reveal: broadcast $\zeta_i$. Anyone checks $\Vcert(\pp,C_i,\zeta_i)=1$ and derives the canonical
	public contribution
	\[
	M_i := \Open_M(\pp,C_i,\zeta_i) = m_i \G .
	\]
	\item Output: set $\pounds := \sum_i M_i$ and $\rho := H(\mathsf{beacon},\langle \sid, \pounds\rangle)$.
\end{itemize}
Hence, each $M_i$ is a deterministic function of the public transcript via $(C_i,\zeta_i)$, so a verifier (and, in UC, the simulator/extractor) can compute the exact statement points used by the protocol in straight-line without ever extracting or exporting $m_i$. At the same time, withholding $\zeta_i$ until the
reveal phase restores the usual commit--reveal discipline, since the adversary cannot choose $m_i$ as a function of others'
revealed contributions, while remaining fully compatible with non-exportable scalars.

\end{document}